\documentclass[10pt,letterpaper]{article}

\usepackage[utf8]{inputenc}
\usepackage{fullpage}
\usepackage{authblk}
\usepackage{ulem}

\usepackage[bookmarks]{hyperref}
\hypersetup{colorlinks=true,citecolor=blue,linkcolor=blue,filecolor=blue,urlcolor=blue}

\usepackage{amsmath,amsthm,amsfonts,amssymb,mathtools}

\usepackage{graphicx}
\usepackage{caption}
\usepackage{tikz}
\usetikzlibrary{quantikz}
\usetikzlibrary{external}
\usetikzlibrary{shapes,decorations,arrows,calc,arrows.meta,fit,positioning,external}
\tikzset{
    terminal/.style={},
    event/.style={draw, circle, fill = black, minimum size = 0.12cm, inner sep=0pt},
    source/.style ={ellipse, draw, minimum width = 0.5 cm, color=classical_gray, fill=classical_gray!10, text=black},
    graph_node/.style ={circle, draw, minimum width = 0.5 cm, color=quantum_purple, fill=quantum_purple!20, text=black},
    qsource/.style ={ellipse, draw, minimum width = 0.5 cm, color=prep_green, fill=prep_green!20, text=black},
    dev/.style={rectangle, rounded corners, draw, minimum width = 0.7 cm, color=classical_gray, fill=classical_gray!10, text=black},
    prep_dev/.style={dev, color=prep_green, fill=prep_green!20, text=black},
    proc_dev/.style={dev, color=proc_red, fill=proc_red!20, text=black},
    meas_dev/.style={dev, color=meas_blue, fill=meas_blue!20, text=black},
    el/.style = {align=left},
    meas_gate/.style={color=meas_blue, fill=meas_blue!20},
    prep_gate/.style={color=prep_green, fill=prep_green!20},
    proc_gate/.style={color=proc_red, fill=proc_red!20},
    meas_gate_group/.style={dashed,rounded corners,color=meas_blue},
    proc_gate_group/.style={dashed,rounded corners,color=proc_red},
    prep_gate_group/.style={dashed,rounded corners,color=prep_green},
}

\newcommand{\cedge}{edge[double, line width=1pt, double distance=1.5pt, arrows = {-Latex[length=0.5pt 2.5 0]}]}
\newcommand{\qedge}{edge[line width=2pt, arrows = {-Latex[length=6pt 1.5 0]}]}

\definecolor{prep_green}{HTML}{669933}
\definecolor{proc_red}{HTML}{CC3333}
\definecolor{meas_blue}{HTML}{3366CC}
\definecolor{quantum_purple}{HTML}{663366}
\definecolor{classical_gray}{HTML}{666666}

\usepackage{tabularx}
\usepackage{hhline}

\usepackage{verbatim}
\usepackage{dsfont}

\usepackage{enumitem}

\usepackage{algorithm}

\makeatletter
\newenvironment{breakablealgorithm2}[1][htb]
{% \begin{breakablealgorithm}
	\begin{flushleft}
		\refstepcounter{protocol}% New algorithm
		\hrule height.8pt depth0pt \kern2pt% \@fs@pre for \@fs@ruled
		\renewcommand{\caption}[2][\relax]{% Make a new \caption
			{\raggedright\textbf{\fname@algorithm~\thealgorithm} ##2\par}%
			\ifx\relax##1\relax % #1 is \relax
			\addcontentsline{loa}{algorithm}{\protect\numberline{\thealgorithm}##2}%
			\else % #1 is not \relax
			\addcontentsline{loa}{algorithm}{\protect\numberline{\thealgorithm}##1}%
			\fi
			\kern2pt\hrule\kern2pt
		}
	}{% \end{breakablealgorithm}
		\kern2pt\hrule\relax% \@fs@post for \@fs@ruled
	\end{flushleft}
}
\makeatother

\newcounter{protocol}
\makeatletter
\newenvironment{protocoldesc}{%
	\renewcommand{\ALG@name}{Protocol}% Update algorithm name
		\let\c@algorithm\c@protocol
	\begin{breakablealgorithm2}%
	}{\end{breakablealgorithm2}
}

\usepackage{physics}

\usepackage{tikz}
\usetikzlibrary{quantikz}
\usetikzlibrary{external}

\usepackage[most]{tcolorbox}

\usepackage[sort&compress,numbers]{natbib}
\newcommand{\cC}{\mathcal{C}}
\newcommand{\cD}{\mathcal{D}}
\newcommand{\cE}{\mathcal{E}}
\newcommand{\cF}{\mathcal{F}}
\newcommand{\cG}{\mathcal{G}}

\newcommand{\cK}{\mathcal{K}}
\newcommand{\cL}{\mathcal{L}}
\newcommand{\cM}{\mathcal{M}}
\newcommand{\cN}{\mathcal{N}}

\newcommand{\cP}{\mathcal{P}}
\newcommand{\cQ}{\mathcal{Q}}
\newcommand{\cR}{\mathcal{R}}

\newcommand{\cT}{\mathcal{T}}
\newcommand{\cU}{\mathcal{U}}
\newcommand{\cV}{\mathcal{V}}
\newcommand{\cW}{\mathcal{W}}
\newcommand{\cX}{\mathcal{X}}
\newcommand{\cY}{\mathcal{Y}}
\newcommand{\cZ}{\mathcal{Z}}

\theoremstyle{definition}
\newtheorem{theorem}{Theorem}
\newtheorem{observation}{Observation}
\newtheorem{lemma}{Lemma}

\newtheorem{definition}[lemma]{Definition}

\newtheorem{corollary}[lemma]{Corollary}

\newtheorem{example}[lemma]{Example}
\newtheorem{remark}[lemma]{Remark}
\newtheorem{proposition}[lemma]{Proposition}

\newcommand{\msf}{\mathsf}
\newcommand{\mbf}{\mathbf}
\newcommand{\mbb}{\mathbb}
\newcommand{\wt}{\widetilde}
\newcommand{\ol}{\overline}

\newcommand{\ve}{\varepsilon}
\newcommand{\net}{\mathrm{net}}

\newcommand{\Lin}{\cL}

\newcommand{\Pos}{\mathrm{Pos}}

\newcommand{\Channel}{\mathrm{C}}
\newcommand{\Unitary}{\mathrm{U}}
\newcommand{\Density}{\mathrm{D}}
\newcommand{\id}{\mathrm{id}}

\newcommand{\BSM}{\mathrm{BSM}}

\begin{document}

\title{Communication Advantages from Quantum Dense Network Coding}

\author[a]{Ian George}
\author[b]{Brian Doolittle}

\affil[a]{Centre for Quantum Technologies, National University of Singapore, Singapore 117543, Singapore}
\affil[b]{Aliro Technologies, Inc., Brighton, Massachusetts, 02135, USA}

\date{\today}

\maketitle

\begin{abstract}
    A central problem in quantum information theory is understanding how quantum resources can be used to communicate information more efficiently than classical resources. We introduce quantum dense network coding--- a protocol that transmits the output of a non-Boolean function to a receiver using provably
    half as many qubits as bits for each sender by not transmitting the entirety of the function inputs. We show this advantage requires both shared entanglement and quantum communication, is robust to noise, and the gap in success probability between quantum and classical communication can be amplified exponentially in the number of senders.
    Finally, we show that dense network coding gives rise to a novel, information-theoretically secure, quantum cryptographic protocol, which we call measurement-device-independent quantum key growing.
\end{abstract}

\section{Introduction}
%Why we are doing this
Quantum communication resources can offer an advantage over classical communication in which information can be transmitted using fewer qubits than bits.
For example, when a pair of maximally entangled qubits (ebit) is shared between the sender and receiver of a quantum channel, a protocol known as superdense coding can be used to transmit two bits of information from sender to receiver using one qubit of communication \cite{bennet1992_dense_coding}. Superdense coding demonstrates that that one ebit combined with one qubit of communication is more powerful than two bits of communication \cite{Devetak-2008a}. Motivated by this communication advantage, strict limits on communication with quantum resources have been established in the point-to-point communication setting \cite{Holevo1973BoundsFT, EA-capacity, Cubitt2011_NS-capacity, Frenkel2015_classical_information_n-level_quantum_system,dallarno2017_no_hypersignaling,doolittle_2021_certifying_classical_simulation_cost, chitambar2023_communication_value}.

Much less is known about the fundamental limits on communication that arise in a \textit{network} when quantum resources are available. Nevertheless, prominent examples of communication advantages have been found in communication networks with multiple senders and a single receiver \cite{Cleve1999_ea_inner_product,Bowles2015_nonclassicality_communication_networks,Zhang2022_single_particle_mac}, which we refer to as multiaccess networks (MNs) (see Fig.~\ref{fig:MN_dags}).
Notably, Leditzky et al. \cite{Leditzky2020_mac_games_ea_cmac} studied the MN in which entanglement is used to preprocess two independent inputs to the receiver's classical multiple access channel (MAC). The authors show that if receiver's MAC penalizes incorrect answers to a non-local game, then when entanglement shared between senders improves the ability to win the non-local game, the amount of information that can be transmitted to the receiver using the specified MAC can be improved. Separately, Buhrman et al.~\cite{buhrman2001quantum} studied the MN in which two senders can transmit qubits to the receiver. The authors found that the senders need to transmit exponentially fewer qubits than bits to the receiver for the receiver to determine with high probability whether or not the two senders hold equivalent bit strings.
These results suggest a rich theory of communication advantages of networks using quantum resources over their classical counterparts.

Recently, Doolittle et al. \cite{doolittle2024operational_nonclassicality} developed a framework for studying communication advantages in quantum networks, and numerically surveyed communication advantages over a broad range of communication network topologies and quantum resource configurations. The survey found that the strongest communication advantages occurred in networks that have many senders and a single receiver. 
Specifically, the authors provide an example in which the receiver computes the bitwise XOR between each sender's two-bit inputs where the two senders are each allowed one qubit of communication to the receiver and may use a shared ebit to assist with their joint encoding. Remarkably, these quantum resources allow this computation to be performed without error, which would classically require two-bits of communication from each sender. 

In this work, we generalize the bitwise XOR protocol from reference \cite{doolittle2024operational_nonclassicality} to develop dense network coding, a protocol in which certain non-Boolean functions are computed over an entanglement-assisted quantum MN (see Fig.~\ref{fig:MN_dags}.d) without needing the senders to share their entire input with the receiver. 
Dense network coding offers greater communication efficiency than classical network coding \cite{ahlswede_2000_network_coding,ho2008_network_coding} because entanglement between the senders enables each sender to communicate half as many qubits as the number of bits classical schemes would require while still computing the function correctly. Although similar  advantages can be achieved by each sender utilizing superdense coding with the receiver, dense network coding is distinct because the receiver does not share any entanglement with the senders. Remarkably, we prove that the dense network coding requires entanglement, and that the communication advantage disappears if either quantum communication or shared entanglement is unavailable, leading to a large error in the receiver's output. We use this to further show the communication advantage is noise-robust and could be applied in early quantum communication networks. Finally, we show quantum dense network coding gives rise to the security of what we call measurement-device-independent quantum key growing, thereby establishing its relation to quantum cryptography.

\section{Results}
\subsection{Multiaccess Networks and Signaling Dimension}
Before stating the results, we briefly review multiaccess networks (MNs) and multiple access channels (MACs) where formal definitions are found in Appendix~\ref{app:preliminaries}.
As depicted in Fig.~\ref{fig:MN_dags}, a two-sender MN consists of two senders and a receiver that have a fixed set of communication resources available. We consider the following configurations of communication resources:
\begin{itemize}
    \item \textbf{Classical MN:} Sender 1 (resp.~2) is connected to the receiver by a noiseless $n_{1}$-bit (resp.~$n_{2}$-bit) classical channel. The receiver then processes the senders' signals. We denote the set of classical MNs $\mbf{C}(2^{n_{1}},2^{n_{2}})$ (see Fig.~\ref{fig:MN_dags}.a).
    \item \textbf{Quantum MN:} Sender 1 (resp.~2) is connected to the receiver by a noiseless $n_{1}$-qubit (resp.~$n_{2}$-qubit) quantum channel. The set of quantum MNs is denoted $\mbf{Q}(2^{n_{1}},2^{n_{2}})$ (see Fig.~\ref{fig:MN_dags}.b).
    \item \textbf{Entanglement-Assisted (resp.~Nonsignaling-Assisted)  Classical MN:} The two senders of a classical MN use a shared entangled quantum state (resp. bipartite non-signaling box \cite{popescu1994quantum_pr_box, on-quantum-ns-boxes}) to encode their transmitted classical messages. We denote the set of entanglement-assisted MNs by $\mbf{C}^{E}(2^{n_{1}},2^{n_{2}})$ (resp.~nonsignaling-assisted MNs by~$\mbf{C}^{N}(2^{n_{1}},2^{n_{2}})$) (see Fig.~\ref{fig:MN_dags}.c).
    \item \textbf{Entanglement-Assisted Quantum MN:} The two senders of a quantum MN use a shared entangled quantum state to encode their transmitted quantum states. We denote the set of entanglement-assisted quantum MNs as $\mbf{Q}^{E}(2^{n_{1}},2^{n_{2}})$ (see Fig.~\ref{fig:MN_dags}.d).
\end{itemize}

\begin{figure}
    \centering
    \small
    \begin{tabular}{l c l}
        {\normalsize (a)}  &  & {\normalsize (b) } \\
        \begin{tikzpicture} %CMN
            \node[terminal] (x) at (-1.25,1) {$\cX \ni x$};
            \node[terminal] (y) at (-1.25,-1) {$\cY \ni y$};    
            \node[dev] (A) at (0.75,1) {$P_{X' \vert X}$};
            \node[dev] (B) at (0.75,-1) {$Q_{Y' \vert Y}$};
            \node[dev, minimum height = 1cm, minimum width = 1cm] (R) at (2.75, 0) {$R_{Z \vert X' Y'}$};
            \node[terminal] (z) at (4.25, 0) {$z$};
    
            \path (x) \cedge (A);
            \path (y) \cedge (B);
            \path (A) \cedge node[below,pos=0.25] {$d_{1}$} (R);
            \path (B) \cedge node[above,pos=0.25] {$d_{2}$} (R);
            \path (R) \cedge (z);
        \end{tikzpicture} & & 
        \begin{tikzpicture} %QMN
            \node[terminal] (x) at (-1,1) {$\cX \ni x$};
            \node[terminal] (y) at (-1,-1) {$\cY \ni y$};    
            \node[qsource] (A) at (1.1,1) {$\rho^{x}_{A'}$};
            \node[qsource] (B) at (1.1,-1) {$\sigma^{y}_{B'}$};
            \node[meas_dev, minimum height = 1cm] (R) at (2.75, 0) {$\{\Pi_{z}\}_{z}$};
            \node[terminal] (z) at (4, 0) {$z$};
        
            \path (x) \cedge (A);
            \path (y) \cedge (B);
            \path (A) \qedge node[below,pos=0.25] {$d_{1}$} (R);
            \path (B) \qedge node[above,pos=0.25] {$d_{2}$} (R);
            \path (R) \cedge (z);
        \end{tikzpicture} \\
        \hfill \\
        {\normalsize (c)} & & {\normalsize (d)} \\
        \begin{tikzpicture} %NSFCMN
            \node[terminal] (x) at (-1,0.8) {$\cX \ni x$};
            \node[terminal] (y) at (-1,-0.8) {$\cY \ni y$};    
            \node[prep_dev, minimum height = 2.5cm] (AB) at (1.25,0) {$P_{X'Y' \vert XY}$}; 
            \node[dev, minimum height = 1cm] (R) at (3.5, 0) {$R_{Z \vert X'Y'}$};
            \node[terminal] (z) at (5, 0) {$z$};

            \path (x) \cedge (0.45,0.8) ;
            \path (y) \cedge (0.45,-0.8) ;
            \path (2.05,0.8) \cedge node[below,pos=0.25] {$d_{1}$} (R);
            \path (2.05,-0.8) \cedge node[above,pos=0.25] {$d_{2}$} (R);
            \path (R) \cedge (z);
        \end{tikzpicture} & &   \begin{tikzpicture}
            \node[terminal] (x) at (-1,1) {$\cX \ni x$};
            \node[qsource] (rho) at (0,0) {$\rho_{AB}$};
            \node[terminal] (y) at (-1,-1) {$\cY \ni y$};    
            \node[proc_dev] (A) at (1.1,1) {$\cE_{A \to A'}^{x}$};
            \node[proc_dev] (B) at (1.1,-1) {$\cF_{B \to B'}$};
            \node[meas_dev, minimum height = 1cm] (R) at (2.75, 0) {$\{\Pi_{z}\}_{z}$};
            \node[terminal] (z) at (4, 0) {$z$};
        
            \path (rho) \qedge (A);
            \path (rho) \qedge (B);
            \path (x) \cedge (A);
            \path (y) \cedge (B);
            \path (A) \qedge node[below,pos=0.25] {$d_{1}$} (R);
            \path (B) \qedge node[above,pos=0.25] {$d_{2}$} (R);
            \path (R) \cedge (z);
        \end{tikzpicture}\\
    \end{tabular}
    \caption{Examples of Multiaccess Networks with Classical Inputs and Outputs. Double-lined arrows denote classical communication and single-lined arrows denote quantum communication. Communication limited by signaling dimension $d_{i}=2^{n_i}$ is specified by the arrow being labeled by $d_{i}$. (a) Classical multiaccess network. (b) Quantum multiaccess network. (c) Nonsignaling-assisted classical multiaccess network. (d) Entanglement-assisted Quantum multiaccess network.
    }
    \label{fig:MN_dags}
\end{figure}

We parameterize each family of MNs by the signaling dimension pair $(2^{n_{1}},2^{n_{2}})$ where $n_1$ and $n_2$ specify the number of bits/qubits transmitted to the receiver from each respective sender. The signaling dimension of a channel is a resource agnostic quantity that specifies the amount of transmitted classical information. More precisely, given a quantum or classical channel with classical inputs and outputs, the signaling dimension of the channel is the minimum amount of classical communication needed to simulate the behavior of the channel for a uniformly random input where it is assumed that classical randomness is shared by the sender and receiver \cite{dallarno2017_no_hypersignaling,doolittle_2021_certifying_classical_simulation_cost}. When quantum communication networks have classical inputs and outputs, the signaling dimension serves as a natural way to establish constraints on the amount of communication, allowing quantum communication advantage to be witnessed as a violation of the resulting classical bounds \cite{Bowles2015_nonclassicality_communication_networks,doolittle2024operational_nonclassicality}. For intuition, the channel induced by superdense coding allows two-bits of classical information to be transmitted, using only a single qubit of communication from sender to receiver. Thus, when a maximally entangled pair assists a qubit channel with signaling dimension $d=2$, the signaling dimension of the induced channel increases to $d=2^2$. Alternatively, if a classical channel with signaling dimension $d=2$ attempted to transmit two-bits of information, it would do so with a probability of success of one-half. Therefore, the signaling dimension and success probability each play an important role in quantifying communication advantage.

To identify what resources and signaling dimension of a MN are necessary to implement certain functions in a distributed manner, we use the formalism of MACs. A two-sender MAC with input alphabets $\cX$ and $\cY$ and output alphabet $\cZ$ is represented by a conditional distribution $P_{Z \vert XY}$. Computing a function over two senders and a receiver is thus a special case of a MAC. Concretely, given a function $f: \cX \times \cY \to \cZ$, we define the MAC $P^{f}_{Z \vert XY}(z \vert x,y) = \delta_{z,f(x,y)}$ for all $x,y,z$. Similarly, for a two-sender MN where Senders 1 and 2 receive inputs $\cX$ and $\cY$ and the receiver outputs $\cZ$, the total implementation of the MN induces a MAC represented by a conditional distribution $Q_{Z|XY}$. For uniform inputs, the probability that a MN successfully implements a function $f: \cX \times \cY \to \cZ$ is $P^{f}_{S}(Q_{Z \vert XY}) \coloneq \frac{1}{\vert \cX \vert \vert \cY \vert}\sum_{x,y} \delta_{z,f(x,y)}Q_{Z \vert XY}(z \vert x,y)$ where $0 \leq P^{f}_{S}(Q_{Z \vert XY})\leq 1$ and this value is one if and only if $Q_{Z \vert XY} = P^{f}_{Z \vert XY}$ \cite{doolittle2024operational_nonclassicality}. Thus $P^{f}_{S}(Q_{Z \vert XY})=1$ means the MN perfectly implements a distributed computation of the function $f$. By optimizing over all MNs of a specified resource configuration and signaling dimension, we can determine how well on average a given choice of resources and signaling dimension can compute a function $f$ in a distributed manner. For example,
\begin{equation}\label{eq:main-text-channel-sim-measure}
    P_{S}^{f}(\mbf{Q}(2^{n_{1}},2^{n_{2}})) \coloneq \sup_{Q_{Z|XY}\in \mbf{Q}(n_{1},n_{2})} P_{S}^{f}(Q_{Z\vert XY})
\end{equation}
determines how well a quantum MN with signaling dimensions $(2^{n_{1}},2^{n_{2}})$ can implement a distributed computation of $f$. 

\subsection{Quantum Dense Network Coding}\label{sec:main-text-dense-network-coding}
Our first result is the protocol for dense network coding (DNC). While DNC supports various functions, in the main text
we present DNC for computing
addition modulo $2^{n}$ for two pairs of integers in $(q,r),(s,t)\in\mbb{Z}^{\times2}_{2^{n}}$. For integers $x,y\in\mbb{Z}_{2^{n}}$, we denote ditwise addition modulo $2^{n}$ by
the function
\begin{align}
    \oplus_{2^{n}} : \mbb{Z}_{2^{n}} \times \mbb{Z}_{2^{n}} \to \mbb{Z}_{2^{n}} \quad x \oplus_{2^{n}} y = x + y \mod 2^{n} \ , 
\end{align}
and we denote the application of the function to $k$ pairs of integers by $\oplus^{k}_{2^{n}}$. For example, $(q,r) \oplus^{2}_{2^{n}} (s,t) = (q \oplus_{2^{n}} s, r \oplus_{2^{n}} t)$ for $q,r,s,t \in \mbb{Z}_{2^{n}}$.

Protocol \ref{prot:main-text-dense-network-coding-example} describes DNC of $\oplus_{2^{n}}^{2}$ for $n \in \mbb{N}$. This protocol makes use of the set of discrete Weyl operators, a.k.a.~generalized Pauli operators, for integer $d$:
\begin{align}\label{eq:discrete-weyl-operators}
        \{W_{q,r} \coloneq X^{q}V^{r} \}_{(q,r) \in \mbb{Z}_{d} \times \mbb{Z}_{d}} \ , 
\end{align}
which are defined using the shift and phase operators:\begin{align}\label{eq:phase-and-shift-operators}
        X \coloneq \sum_{q \in \mbb{Z}_{d}} \ket{q \oplus_{d} 1}\bra{q} \quad Z \coloneq \sum_{q \in \mbb{Z}_{d}} \zeta^{q} \dyad{q} \ ,  
\end{align}
where $\zeta \coloneq \exp(2\pi i/d)$. Protocol \ref{prot:main-text-dense-network-coding-example} also makes use of the generalized Bell measurement, which is defined as the projectors onto the basis
\begin{align}
    \left\{\ket{\Phi_{q,r}} \coloneq W_{q,r} \otimes I_{d}\ket{\Phi}\right\}_{(q,r) \in \mbb{Z}_{d} \times \mbb{Z}_{d}} \ , 
\end{align}
where $\ket{\Phi} = \frac{1}{\sqrt{d}} \sum_{i \in \mbb{Z}_{d}} \ket{i}\ket{i}$ is the maximally entangled state on $\mbb{C}^{d} \otimes \mbb{C}^{d}$.

\begin{protocoldesc}[H]
\caption{Dense Network Coding for Ditwise Addition Modulo $2^{n}$}\label{prot:main-text-dense-network-coding-example}
		\textbf{Inputs:} \\
		\hspace{0.5cm}
		\begin{tabular}{ l l}
			$(q,r) \in \mbb{Z}_{2^{n}}^{\times 2}$ & Sender 1's Inputs  \\[2mm]
            $(s,t) \in \mbb{Z}_{2^{n}}^{\times 2}$ & Sender 2's Inputs \\[2mm]
            $\ket{\Phi} \in \mbb{C}^{2^{n}} \otimes \mbb{C}^{2^{n}}$ & Maximally Entangled State Shared by Senders
		\end{tabular}
		
		\vspace{0.5cm}
		
		\textbf{Protocol:} 
		\begin{enumerate}
            \item On input $(q,r)$, Sender 1 applies $W_{q,r}$ to their local system of $\ket{\Phi}$.
            \item On input $(s,t)$, Sender 2 applies $W_{s,t}^{T}$, to their local system of $\ket{\Phi}$.
            \item Senders 1 and 2 forward their systems to the receiver.
            \item The receiver applies the generalized Bell measurement to its received systems to obtain outcome $(x,y)$, which it outputs as the answer.
        \end{enumerate}
\end{protocoldesc}
\begin{theorem}\label{thm:main-text-dense-coding}
    Protocol \ref{prot:main-text-dense-network-coding-example} implements $\oplus^{2}_{2^{n}}$ perfectly, i.e. $P_{S}^{\oplus^{2}_{2^{n}}}(\mbf{Q}^{E}(2^n,2^n)) = 1$. Moreover, if either party communicates strictly less than $n$ qubits, the success probability is bounded away from 1.
\end{theorem}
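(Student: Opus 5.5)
The proof splits into two parts: a direct verification that Protocol~\ref{prot:main-text-dense-network-coding-example} is correct, and a dimension/counting argument for the converse.

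\textbf{Correctness.} With $d=2^{n}$, the key tool is the transpose trick $(I\otimes M)\ket{\Phi} = (M^{T}\otimes I)\ket{\Phi}$. Sender 2's operation $W_{s,t}^{T}$ acts on the second factor, so it can be moved to the first factor, where it becomes $(W_{s,t}^{T})^{T}=W_{s,t}$. The state reaching the receiver is therefore
\begin{equation*}
(W_{q,r}\otimes W_{s,t}^{T})\ket{\Phi} = (W_{q,r}W_{s,t}\otimes I)\ket{\Phi}.
\end{equation*}
Next I compute the product using the commutation relation $VX=\zeta XV$, where $V$ denotes the phase operator (called $Z$ in the paper's definition). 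This gives
\begin{equation*}
X^{q}V^{r}X^{s}V^{t}=\zeta^{rs}X^{q+s}V^{r+t}.
\end{equation*}
Because $X^{d}=V^{d}=I$, this equals $\zeta^{rs}W_{q\oplus_d s,\,r\oplus_d t}$. The received state is thus $\ket{\Phi_{q\oplus s,r\oplus t}}$ up to a global phase. Since the generalized Bell states form an orthonormal basis, the Bell measurement returns $(q\oplus s,r\oplus t)$ with certainty. Hence $P_S=1$, and the supremum in \eqref{eq:main-text-channel-sim-measure} over $\mbf{Q}^{E}(2^n,2^n)$ equals $1$.

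\textbf{Converse.} Suppose Sender 1 sends a system of dimension $d_1<2^{n}$; the other case is symmetric. The plan is to show that perfect success would let Sender 1 transmit $2n$ bits perfectly through a $d_1$-dimensional system assisted only by entanglement with Sender 2, who sits on the receiver's side.
\begin{enumerate}
\item Fix Sender 2's input to $(s,t)=(0,0)$. Merge Sender 2 and the receiver into a single party who holds the other half of $\rho_{AB}$.
\item The output then determines $(q,r)\in\mbb{Z}_{2^n}^{2}$ perfectly. This yields an entanglement-assisted point-to-point channel carrying $4^{n}$ perfectly distinguishable messages over a $d_1$-dimensional quantum system.
\item Apply the dense-coding bound: entanglement assistance at most squares the signaling dimension, i.e.\ at most $d_1^{2}$ perfectly distinguishable messages. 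Equivalently, the entanglement-assisted Holevo bound gives at most $2\log d_1$ bits.
\item Since $d_1^{2}<4^{n}$, this is a contradiction.
\end{enumerate}

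\textbf{Quantitative bound.} To make ``bounded away from 1'' explicit, I will use the success-probability version of the same bound. For uniformly random messages $m\in[M]$ sent through a $d$-dimensional entanglement-assisted channel, the standard dense-coding argument gives
\begin{equation*}
P_{\mathrm{succ}}\le \frac{d^{2}}{M}.
\end{equation*}
The argument writes the optimal decoding as measurements on $\mbb{C}^{d}\otimes B$ and uses $\sum_m \Pi_m = I$ together with the operator inequality $\rho_{A'B}\le d\,(I_{A'}\otimes \rho_B)$. Averaging over Sender 2's inputs for the original function then bounds the success probability for $\oplus^{2}_{2^n}$ by $d_1^{2}/4^{n}\le (2^{n}-1)^{2}/4^{n}<1$. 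The same bound holds for general (non-power-of-two) dimensions below $2^{n}$.

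\textbf{Main obstacle.} The hardest step is the reduction from average success over all input pairs to the single-sender bound. For each fixed $(s,t)$, the map $(q,r)\mapsto z$ is a bijection, so every fixed $(s,t)$ already gives a perfect-decoding task for $(q,r)$. Consequently, the average success is an average of quantities each bounded by $d_1^{2}/4^{n}$. This resolves the reduction, provided the operator inequality is applied with Sender 2's system (including its encoding $\cF^{(s,t)}$) absorbed into the decoder.
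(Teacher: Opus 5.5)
Your correctness argument is the paper's: the transpose trick, then $W_{q,r}W_{s,t}=\zeta^{rs}W_{q+s,r+t}$, then orthonormality of the generalized Bell basis. Your converse is correct but takes a genuinely different route.

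The paper (Theorem~\ref{thm:dense-network-coding} via Proposition~\ref{prop:upper-bound-on-EAQMN-by-classical-channel}) argues as follows:
\begin{enumerate}
\item It merges the two \emph{senders} into a single sender with a $d_1d_2$-dimensional quantum channel, so their entanglement becomes internal.
\item It invokes the Frenkel--Weiner theorem to replace this channel by a classical one with shared randomness.
\item It removes the shared randomness by convexity.
\item It applies the classical-side-information chain rule $p_g(Z\vert C)\le\vert C\vert\,p_g(Z)$, giving $P_S\le d_1d_2/4^n$.
\end{enumerate}

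You instead merge Sender~2 with the \emph{receiver}. You use that $(q,r)\mapsto(q,r)\oplus^2_{2^n}(s,t)$ is a bijection for each fixed $(s,t)$. Hence the pulled-back operators $(\id\otimes\cF^{(s,t)})^{\dagger}(\Gamma_{f((q,r),(s,t))})$ form a valid POVM decoding $(q,r)$. You then apply the elementary dense-coding bound from $\rho_{C_1B}\le d_1\,I_{C_1}\otimes\rho_B$, which follows from a Weyl twirl on $C_1$, and obtain $P_S\le d_1^2/4^n$.

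What your route buys:
\begin{itemize}
\item It avoids Frenkel--Weiner entirely.
\item Applied to either party, it gives $\min\{d_1,d_2\}^2/4^n\le d_1d_2/4^n$, so it is never weaker. With $d_2=2^n$ it is strictly sharper: $d_1^2/4^n$ versus $d_1/2^n$.
\item Most importantly, it does not depend on the other sender's dimension at all. The paper's $d_1d_2/4^n$ certifies failure only when $d_1d_2<4^n$. That is enough for the main-text statement, where the other party sends $n$ qubits. It does not justify the appendix claim \eqref{eq:sym-minimal} as written, with only $\min\{d_1,d_2\}<d$ assumed. For example, $d_1=d-1$, $d_2=d+2$ makes the paper's bound exceed $1$, while your bound still gives $(d-1)^2/d^2<1$.
\end{itemize}

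What the paper's route buys: the bound $d_1d_2\,p_g(Z)$ holds for arbitrary functions $f$ and arbitrary input distributions, with no conditional-bijectivity assumption. It also slots into the guessing-probability machinery the paper reuses elsewhere.
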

\begin{proof} 
As Protocol \ref{prot:main-text-dense-network-coding-example} is manifestly a strategy using entanglement-assistance and quantum communication, it suffices to show that the receiver always outputs the correct value. We do this by calculating the probability of the correct output. We first determine the state transmitted to the receiver when Senders 1 and 2 are given inputs $(q,r)$ and $(s,t)$ respectively. As follows from Items 1 through 3 of Protocol \ref{prot:main-text-dense-network-coding-example}, the transmitted state is
\begin{align}
    W_{q,r} \otimes W_{s,t}^{T}\ket{\Phi} = W_{q,r}W_{s,t} \otimes I_{B} \ket{\Phi} = \zeta^{rs}W_{q+s,r+t} \otimes I_{B} \ket{\Phi} \ , 
\end{align}
where the first equality follows from the transpose trick and the second equality follows from implicitly letting addition be modulo $2^n$ and applying \cite[Eq.~4.75]{WatrousBook}. In Item 4 of Protocol \ref{prot:main-text-dense-network-coding-example}, the receiver applies the generalized Bell state measurement. We can therefore calculate the probability of each outcome of the receiver on the state it receives when the input is $(q,r,s,t)$:
\begin{align}
    \Pr[x,y \vert q,r,s,t] &= \vert \bra{\Phi_{x,y}} (W_{q,r} \otimes W_{s,t}^{T})\ket{\Phi} \vert^{2} \\
    &= \vert \bra{\Phi} W^{\dagger}_{x,y}W_{q+s,r+t} \otimes I_{B} \ket{\Phi} \vert^{2} \\
    &= \frac{1}{2^{2n}} \vert \Tr[W^{\dagger}_{x,y}W_{q+s,r+t}] \vert^{2} \\
    &= \begin{cases}
        1 & (x,y) = (q+s,r+t) \\
        0 & \text{otherwise}
    \end{cases}
\end{align}
where the third equality is the fact $\bra{\Phi}X \otimes I \ket{\Phi} = \frac{1}{2^n}\Tr[X]$ as may be verified by direct calculation and the fourth uses (c.f.~\cite[Eq. 4.78]{WatrousBook}):
\begin{align}
    \Tr[W_{q,r}^{\dagger} W_{s,t}] = \begin{cases}
        2^n & (q,r) = (s,t) \\
        0 & \text{otherwise} \ .
    \end{cases}
\end{align}
Thus, the receiver's outcome is $(q+s, r+t) = (q,r) \oplus^{2}_{2^n} (s,t)$ always, so the function is computed. The proof that $\oplus^{2}_{2^{n}}$ cannot be computed if either party sends strictly less than $n$ qubits is shown in Appendix \ref{app:dense-network-coding}.
\end{proof}

\subsubsection{Dense Network Coding Other Functions and Algebraic Structure} % This paragraph basically just summarizes the rest of the insights of the appendix on dense network coding
Theorem \ref{thm:main-text-dense-coding} shows the ability to compute addition modulo $2^{n}$ for two pairs of integers in $\mbb{Z}_{2^{n}}$. Interestingly, with appropriate pre- and post-processing, Protocol \ref{prot:main-text-dense-network-coding-example} is used to achieve the rate 1 quantum private information retrieval protocol of Song and Hayashi \cite[Section III.B]{Song-QPIR-2021a}. This already shows DNC has further applications. However, as the proof highlights, the ability to compute this function stems from the relation between the discrete Weyl operators acting on vector space $\mbb{C}^{2^{n}}$ and the function $\oplus^{2}_{2^{n}}$. Mathematically, the relation being utilized is the discrete Weyl operators acting on $\mbb{C}^{d}$ are a projective unitary representation of the product group $\mbb{Z}_{d} \times \mbb{Z}_{d}$. We thus show in Appendix \ref{app:dense-network-coding} that one can dense network code the multiplication of group elements of any group $(G,\cdot)$ of order $d^{2}$ that admits a projective unitary representation acting on vector space $\mbb{C}^{d}$. We call such a group a `tightly network codeable' group,\footnote{The modifier `tightly' follows from sending less qubits would lead to error (Theorem \ref{thm:main-text-dense-coding}) as well as the observation that demanding the group be order $d^{2}$ and the vector space be of dimension $d$ is not necessary for dense network coding--- it merely allows us to enjoy a particularly large communication advantage between the quantum and classical network.} and we show another example of such a group corresponds to computing the bitwise XOR of bitstrings of even length. The algebraic identification of tightly network codeable groups allows us to separate dense network coding from point-to-point dense coding as Werner identified weaker necessary and sufficient algebraic conditions for the point-to-point case \cite{werner2001-dense-coding}. Interestingly, we also show that tightly network codeable groups are in exact correspondence with what Knill called nice unitary error bases \cite{knill1996group,klappenecker2002beyond} and thus have some relation to quantum error correcting codes through algebraic considerations.

\subsubsection{Alternative Topology of Shared Entanglement and Compression of Information} An alternative network setup to compute $f$ exists other than DNC: if each sender shares a maximally entangled state with the \textit{receiver}, then each sender could use superdense coding to send their respective input to the receiver. The receiver then knows all the inputs and can compute the target function. This alternative protocol differs significantly from DNC. First, it requires twice as much shared entanglement as DNC. Thus, it is more resource intensive to achieve the same goal. Second, we show in Appendix \ref{app:dense-network-coding} that when the inputs to the protocol are uniformly random, the receiver in dense network coding cannot guess the value of the inputs with high probability, so the inputs cannot have been transmitted. In contrast, if each sender does superdense coding with the receiver, the receiver learns all of the inputs. This difference in information that the receiver holds may then be seen as an explanation for the difference in entanglement: dense network coding communicates just enough information to output the correct value of the function rather than all information in the inputs.

\subsection{Communication Advantage of Dense Network Coding}\label{sec:comm-adv-in-dist-comp}
Theorem \ref{thm:main-text-dense-coding} gives a strategy for computing the function $\oplus^{2}_{2^{n}}$ in a distributed manner using entanglement-assistance and quantum communication. To truly be  `dense,' it ought to use less communication than would be needed using classical resources. Unlike superdense coding where the communication advantage over classical communication follows immediately from a counting argument, proving an advantage for dense network coding requires establishing new limits of information flowing through a network. By developing such tools (see Section \ref{sec:methods}), we establish not only that dense network coding uses less communication than classical resources, but that this advantage requires both shared entanglement and quantum communication.

\begin{theorem}[Communication Advantage]\label{thm:main-text-communication-advantage}
Let $\mbf{S} \in \{\mbf{C},\mbf{C}^{E},\mbf{C}^{N},\mbf{Q}\}$. For all $n \in \mbb{N}$,
    \begin{equation}\label{eq:comm-adv}
    \begin{aligned}
        P^{\oplus_{2^{n}}^{2}}_{S}(\mbf{S}(2^n,2^n))) & \leq \frac{1}{2^{n}} <1 = P_{S}^{\oplus_{2^{n}}^{2}}(\mbf{Q}^{E}(2^n,2^n))  \ .
    \end{aligned}
    \end{equation}
    That is, there is a there is a quadratic advantage in signaling dimension or, equivalently, a linear advantage in the number of qubits to communicate versus the number of bits for each sender. Furthermore, if $n$ is even, the first inequality in \eqref{eq:comm-adv} can be replaced with an equality, and the above claims do not change if the senders and receiver possess arbitrary shared randomness.
\end{theorem}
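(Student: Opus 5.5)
Theorem~\ref{thm:main-text-dense-coding} already gives the equality $P_S=1$ for $\mbf{Q}^E(2^n,2^n)$. So the work is to show an upper bound of $2^{-n}$ on every other class. I would reduce all of them to the nonsignaling-assisted classical class and then bound that class.

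\textbf{Step 1: reduce to $\mbf{C}^{N}$.} We have $\mbf{C}\subseteq \mbf{C}^{E}\subseteq \mbf{C}^{N}$, so it suffices to bound $\mbf{C}^{N}$ and $\mbf{Q}$.

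For $\mbf{Q}(2^n,2^n)$ the senders share no entanglement, so the state is a product $\rho^{x}\otimes\sigma^{y}$. By the known fact that the signaling dimension of a $d$-dimensional quantum channel equals $d$ \cite{Frenkel2015_classical_information_n-level_quantum_system,dallarno2017_no_hypersignaling}, I would simulate each sender's prepare-and-measure behaviour classically. The point is that the receiver's joint POVM on a product input is linear in each $\rho^x$ separately. Fixing Sender 2's state, Sender 1's part is a point-to-point channel whose behaviour is reproducible with $2^n$ classical messages plus shared randomness. Doing the same for Sender 2 and arguing convex-linearity shows that the induced MAC lies in the convex hull of classical MNs with shared randomness. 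The methods section presumably establishes this via a network version of Frenkel--Weiner; I would invoke that.

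Success probability is linear, so shared randomness never helps: the optimum is attained at an extreme point. This handles the shared-randomness clause.

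\textbf{Step 2: bound $\mbf{C}^{N}(2^n,2^n)$ by $2^{-n}$.} Write the output as $z=(z_1,z_2)$. Conditioned on Sender 2's input $(s,t)$ and message $b$, the receiver's output is determined by Sender 1's $n$-bit message $a$. Sender 1's input $(q,r)$ is uniform over $2^{2n}$ values, so the correct answer $(q+s,r+t)$ is uniform over $2^{2n}$ values and is a bijective function of $(q,r)$.

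Nonsignaling implies that the marginal distribution of $a$ given $(q,r)$ does not depend on $(s,t)$. I would then bound
\begin{align}
P_S \le \frac{1}{2^{4n}}\sum_{q,r,s,t}\sum_{a,b} P(a,b|qr,st)\,\delta_{R(a,b),(q+s,r+t)} .
\end{align}
For fixed $(s,t)$ and $b$, the map $a\mapsto R(a,b)$ has at most $2^n$ values. Since $(q,r)\mapsto(q+s,r+t)$ is injective, summing over $(q,r)$ counts at most $2^n$ successful pairs for each $(a,b)$, with weight bounded via nonsignaling. This needs care, because for NS boxes $P(a,b|\cdot)$ is not a product. I would rewrite $\sum_{q,r}P(a,b|qr,st)\,\delta$ and bound it by $\sum_{\text{$(q,r)$ correct for }(a,b)}P(b|st)$, which uses $P(a,b|qr,st)\le P(b|qr,st)=P(b|st)$ by nonsignaling. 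Summing over $a$ gives at most $2^n\cdot 2^n$ correct $(q,r)$ weighted by $P(b|st)$, and summing over $b$ then yields $2^{2n}$. Altogether this gives $2^{-4n}\cdot 2^{2n}\cdot 2^{2n}\cdot$(count)... I expect the exact counting to be the main obstacle. The crude bound gives $2^{2n}\cdot 2^{2n}/2^{4n}=1$, which is too weak. The fix is to use the symmetric argument on both senders and take the minimum over the receiver's guess, or to bound by the guessing probability of a uniform $2n$-bit variable given $n$ bits, which is $2^{-n}$. The cleaner route is the latter: the answer determines $(q,r)$ given $(s,t)$, and Sender 1 reveals only $n$ bits correlated through a nonsignaling box whose $b$-marginal is independent of $(q,r)$. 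The resulting min-entropy chain-rule-type bound $H_{\min}(QR|ABST)\ge 2n-n$ then gives $2^{-n}$.

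\textbf{Step 3: tightness for even $n$.} I would give an explicit deterministic classical strategy achieving $2^{-n}$. Sender 1 sends $q$'s top $n/2$ bits and $r$'s top $n/2$ bits, and Sender 2 does likewise. The receiver then cannot compute the full sums exactly due to carries. Instead I would use a deterministic scheme in which Sender 1 sends $q$ fully and Sender 2 sends $s$ fully, so that $q+s$ is correct and the receiver guesses $r+t$, which succeeds with probability $2^{-n}$. This works for any $n$. The evenness condition likely arises in the $\mbf{Q}$ case, where superdense-style splitting gives the matching bound. For the statement as written, this classical scheme already shows equality for all classes containing $\mbf{C}$.
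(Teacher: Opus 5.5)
\paragraph{Gap in the $\mbf{Q}(2^n,2^n)$ case.} Frenkel--Weiner simulates a point-to-point channel for one fixed POVM. In the network, the POVM that Sender~1's system effectively sees is $\Pi^{(y)}_z=\Tr_{C_2}[\Pi_z(I\otimes\sigma^{y})]$, and this depends on Sender~2's input $y=(s,t)$. So the classical code and shared randomness that Frenkel--Weiner produces for Sender~1 depend on $y$, which Sender~1 never sees. Repeating the construction for Sender~2 and appealing to convex-linearity does not remove this dependence. You have therefore not shown that an unentangled quantum MN lies in the convex hull of classical MNs with the same signaling dimensions. The paper proves no such network Frenkel--Weiner theorem, and you cannot simply invoke one: a sender-by-sender simulation cannot capture the receiver's joint, possibly entangled, measurement on $\rho^x\otimes\sigma^y$.

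\paragraph{Fix, and the remaining steps.} You do not need to simulate the network, only to bound success for a conditionally bijective function, and your own Step~2 device does this.
\begin{itemize}
\item Give the receiver $(s,t)$ for free: $p_g(Z|C_1C_2)\le p_g(Z|C_1C_2ST)=p_g(Z|C_1ST)$, because without entanglement $C_2$ can be re-prepared from $(s,t)$.
\item Conditional bijectivity turns this into $p_g(QR|C_1ST)$, and independence turns it into $p_g(QR|C_1)$.
\item For any POVM, $\sum_{q,r}2^{-2n}\Tr[\Gamma_{q,r}\rho^{q,r}]\le 2^{-2n}\Tr[I_{C_1}]=2^{-n}$.
\end{itemize}
This is Lemma~\ref{lem:no-ent-ass-reduc-to-point-to-point} and Theorem~\ref{thm:unassisted-QMN-bound}. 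The paper uses point-to-point Frenkel--Weiner for the last step, though the operator bound above already suffices.

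The rest of your proposal is sound.
\begin{itemize}
\item Your final min-entropy argument for $\mbf{C}^N$ is the paper's Theorem~\ref{thm:non-signaling-box-bound}. Your abandoned counting also works: for fixed $(s,t,a,b)$ exactly one $(q,r)$ is correct, so $P_S\le 2^{-4n}\sum_{s,t}\sum_{a,b}P(b|st)=2^{-n}$.
\item Linearity of the success probability handles shared randomness.
\item Your tightness strategy (send $q$ and $s$ in full, guess $r+t$ blindly) is valid, simpler than the paper's $2^{n/2}$-block strategy, and gives equality for every $n$.
\end{itemize}
The evenness hypothesis comes only from the paper's symmetric block split, not from the $\mbf{Q}$ case. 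Corollary~\ref{cor:CMAN-lb-for-2-xor-d} with the split $2^n=2^n\cdot 1$ already covers all $n$.
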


We remark that the inequality in \eqref{eq:comm-adv} being an equality for even $n$ shows our methods are tight and that both entanglement-assistance and quantum communication are necessary to gain a communication advantage over classical MNs under these signaling dimension constraints. We also highlight \eqref{eq:comm-adv} shows there is an exponential advantage in the success probability of implementing the function with quantum resources. 

\subsection{Noise Robustness of Communication Advantage}
Given the advantage in computing $\oplus_{2^{n}}^{2}$ using an entanglement-assisted quantum MN in Theorem \ref{thm:main-text-communication-advantage}, computing $\oplus_{2^{n}}^{2}$ in a distributed manner is a natural benchmark for successfully constructing nodes of a quantum network. Because of the exponential gap in success probability as a function of $n$ in Theorem \ref{thm:main-text-communication-advantage}, this could in principle provide a lot of leeway to outperform a weaker resource configuration with noisy quantum devices so long as the noise does not degrade advantage too quickly. The following theorem shows the advantage is indeed generally robust to noise.
\begin{theorem}[Noise-Robustness, Informal] \label{thm:main-text-noise-robustness} Let $\ve_{st}, \ve_{enc,i}, \ve_{tr,i}, \ve_{dec} \in [0,1]$ denote the deviation of the shared entangled state, encoding map of sender $i$, transmission map of sender $i$, and decoding map of the receiver from the corresponding ideal component respectively in Protocol \ref{prot:main-text-dense-network-coding-example}. Here deviation is measured according to trace or diamond distance and each component is assumed to be independent of the others. Then the probability of guessing the output of the function correctly using the implementation, $P_{G}$, satisfies 
 \begin{align}\label{eq:main-text-scaling}
        P_{G} \geq \max\{1 - (\ve_{st} + \ve_{enc,1} + \ve_{enc,2} + \ve_{tr,1} + \ve_{tr,2} + \ve_{dec}),0\} \ . 
    \end{align}
\end{theorem}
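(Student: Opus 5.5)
The argument is a standard hybrid (telescoping) bound. We compare the actual implementation with the ideal Protocol~\ref{prot:main-text-dense-network-coding-example}, which by Theorem~\ref{thm:main-text-dense-coding} outputs the correct value with probability one for every input $(q,r,s,t)$. Fix an input. Write the ideal output distribution of the receiver as
\[
p_{\mathrm{id}} = \mathcal{M}\circ(\mathcal{T}_1\otimes\mathcal{T}_2)\circ(\mathcal{E}^{q,r}\otimes\mathcal{F}^{s,t})(\Phi),
\]
and write $p_{\mathrm{re}}$ for the same expression with every component replaced by its noisy counterpart $\wt\Phi,\wt{\mathcal{E}},\wt{\mathcal{F}},\wt{\mathcal{T}}_i,\wt{\mathcal{M}}$. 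Here $\mathcal{M}$ is the generalized Bell measurement viewed as a quantum-to-classical channel, and the ideal transmission maps $\mathcal{T}_i$ are identities. The success probability satisfies $P_G(q,r,s,t) = p_{\mathrm{re}}(f(q,r,s,t))$. Since $p_{\mathrm{id}}(f)=1$, we get $P_G \ge 1 - \tfrac12\|p_{\mathrm{id}}-p_{\mathrm{re}}\|_1$, because for any event the probability differs by at most the total variation distance. So it suffices to bound $\tfrac12\|p_{\mathrm{id}}-p_{\mathrm{re}}\|_1$ by the sum of the $\ve$'s, uniformly in the input. Averaging over the uniform input then gives \eqref{eq:main-text-scaling}, and the $\max\{\cdot,0\}$ is trivial.

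We bound the distance by swapping one ideal component for its noisy version at a time, producing a chain of six hybrids. The first hybrid replaces $\Phi$ by $\wt\Phi$. The next two replace the encoders $\mathcal{E}$ and $\mathcal{F}$ one at a time, then the two transmission maps one at a time, and the last replaces the decoder. By the triangle inequality, the total distance is at most the sum of the distances between consecutive hybrids. For each consecutive pair we use two facts.
\begin{enumerate}
\item Trace distance is contractive under CPTP maps. This handles everything applied after the swapped component.
\item The diamond norm bounds the effect of a map acting on one tensor factor. For example, $\tfrac12\|(\mathcal{E}\otimes\mathcal{G})(\rho)-(\wt{\mathcal{E}}\otimes\mathcal{G})(\rho)\|_1 \le \tfrac12\|(\mathcal{E}-\wt{\mathcal{E}})\otimes \id(\rho')\|_1 \le \ve_{enc,1}$. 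Here we write $\mathcal{E}\otimes\mathcal{G} = (\id\otimes\mathcal{G})\circ(\mathcal{E}\otimes\id)$, pull $\id\otimes\mathcal{G}$ out by contractivity, and then apply the definition of the diamond norm to the (possibly noisy) state on which it acts.
\end{enumerate}
The state-swap step contributes $\ve_{st}$ directly by contractivity. The decoder step contributes $\ve_{dec}$ because $\tfrac12\|\mathcal{M}(\sigma)-\wt{\mathcal{M}}(\sigma)\|_1\le \tfrac12\|\mathcal{M}-\wt{\mathcal{M}}\|_\diamond$.

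The main subtlety is in the encoders: they depend on the input, so the hypothesis should be read as the diamond distance $\ve_{enc,i}$ bounding the deviation for every input (or the worst case over inputs). The independence assumption guarantees that the noisy global process factorizes as a tensor product and composition of the individual noisy components, so the hybrid argument applies. I expect the only real care needed is to apply the diamond norm, not the induced trace norm, at each encoder and transmission step, since these maps act on one half of an entangled state. Once that is done, each hybrid step is routine.
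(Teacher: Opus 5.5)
Your proposal is correct and is essentially the paper's argument. The paper also bounds the error probability by a trace distance to the ideal output, which is perfectly correlated with $Z$. It peels off $\ve_{st}$ via Proposition~\ref{prop:decompose-processing-and-shared-state-error}, then splits the remaining diamond distance using subadditivity under composition and tensor products (Proposition~\ref{prop:diamond-norm-properties}); this is your hybrid chain in packaged form.

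The only cosmetic difference is bookkeeping. You fix an input and use total variation distance, whereas the paper keeps the $Z$ register and invokes Lemma~\ref{lem:distance-from-perfect-correlation}. Your per-input version gives the bound for every input, so it is if anything slightly stronger.
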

The appeal of this result is its generality--- it makes no claim about the noise model beyond its deviation from the ideal resources under standard norms. Physically, the generality of this result stems from the idea that computing a function over a quantum network should degrade smoothly as the noise grows, and this is captured by the fact the success probability degrades linearly in each source of error. To visualize this result, in Fig.~\ref{fig:main-text-noise-robustness} we plot the bounds that follow from Theorem \ref{thm:main-text-noise-robustness} for a concrete noise model and compare it to the success probabilities for weaker resources according to Theorem \ref{thm:main-text-communication-advantage}. The noise model considered is when all components are ideal except each sender is connected to the receiver with an erasing depolarizing channel $\cD^{p,e}$ of noise parameter $p$ and erasure parameter $e$. This models the setting where a sender's signal is replaced with uniform randomness with probability $p$, outputs heralded loss with probability $e$, and must still guess a value of $z$. This is a special case of a more general noise model we consider in Appendix \ref{app:noise-robustness} where we derive our results on noise robustness. We note that if one has a concrete noise model, they need not use Theorem \ref{thm:main-text-noise-robustness} as they can simply compute the success probabilities.
\begin{figure}
    \centering
    \includegraphics[width=0.5\linewidth]{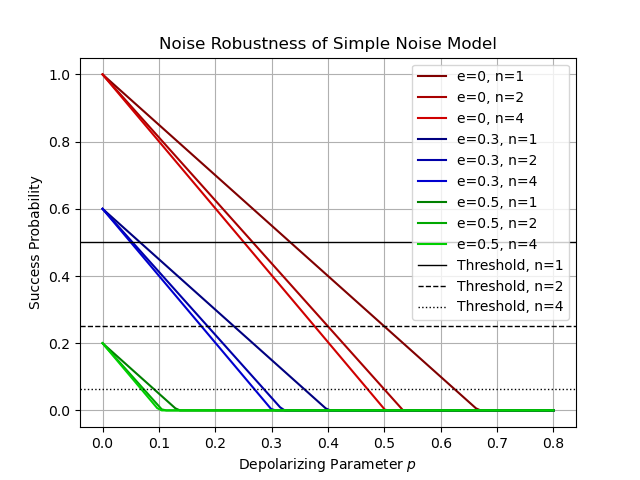}
    \caption{Noise Robustness of Communication Advantage with Heralded Loss and Depolarizing Noise. Bounds on the success probability of the receiver correctly guessing the value of $\oplus^{2}_{2^{n}}$ are plotted.  The colored lines show lower bounds on the success when each sender is allowed to send $n$ qubits over a lossy, depolarizing channel of loss parameter $e$ and depolarizing parameter $p$. These bounds follow from Theorem \ref{thm:main-text-noise-robustness}. The horizontal black lines show upper bounds on the probability of the receiver correctly guessing without both shared entanglement and quantum communication according to Theorem \ref{thm:main-text-communication-advantage}. For a chosen $n$, there is a quantum communication advantage so long as the colored line is above the corresponding black line. \sloppy The lower bound on the noisy case is $P_{S}^{f}(n,e,p) \coloneq \max\{1 - 2(e + p\frac{2^{2n}-1}{2^{2n}}), 0\}$. As $n$ grows, the lower bound approaches $1-2(e+p)$ exponentially fast in $n$, which is why the curves for fixed $e$ become increasingly similar as $n$ grows. For large $n$, a communication advantage can be achieved whenever  $(e + p) < \frac{1}{2}$ because the classical bound scales as $\frac{1}{2^n}\to 0$.}
    \label{fig:main-text-noise-robustness}
\end{figure}

\subsection{Exponential Amplification in the Number of Senders}
So far we have focused on two-sender MNs. We now show how the advantage of dense network coding is amplified over networks with more senders by introducing three additional classes of $2k$-sender MNs (formal details and definitions are found in Appendix \ref{app:amplification-of-advantage}):
\begin{itemize}
    \item \textbf{Pairwise-Entanglement-Assisted (resp.~Pairwise-Nonsignalling-Assisted) Classical MNs:} For every $i \in \{1,...,k\}$, Sender $2i-1$ and $2i$ are connected to the receiver by a noiseless $n_{2i-1}$-bit (resp.~$n_{2i}$-bit) classical channel and share arbitrary entanglement (resp.~an arbitrary fully classical non-signaling box) with each other. No other resources are available. We denote the set of such MNs $\mbf{C}^{P-E}(\vec{n})$ (resp.~$\mbf{C}^{P-N}(\vec{n})$) where the vector $\vec{n} = (2^{n}, \dots, 2^n)$ indexes the signaling dimension of the channel connecting each sender to the receiver.
    \item \textbf{Pairwise-Entanglement-Assisted Quantum MNs:} For every $i \in \{1,...,k\}$, Sender $2i-1$ and $2i$ are connected to the receiver by a noiseless $n_{2i-1}$-qubit (resp.~$n_{2i}$-qubit) quantum channel and share arbitrary entanglement with each other. No other resources are available. We denote the set of such MNs $\mbf{Q}^{P-E}(\vec{n})$.
\end{itemize}

In Appendix \ref{app:amplification-of-advantage}, we show that for the above MNs and appropriate product functions, the optimal strategy is for each function to be implemented between the corresponding senders and the receiver independently. It follows the total success probability of guessing the output is the product of the success probabilities, resulting in the success probability being exponentially suppressed in the number of senders. The following is a concrete example.
\begin{theorem}\label{thm:main-text-amplification-theorem}
    For $n,k \in \mbb{N}$, define the product function $f_{n,k} = \bigtimes_{i \in \{1,...,k\}} f_{i}$ where $f_{i} \coloneq \oplus^{2}_{2^{n}}$ for each $i$. Then,
    \begin{equation}
    \begin{aligned}
        P_{S}^{f_{n,k}}(\mbf{Q}^{P-E}(\vec{n}')) = 1 &>  \left(\frac{1}{2^{n}} \right)^{k} \\ &\geq \max\{P_{S}^{f_{n,k}}(\mbf{Q}(\vec{n}')),P_{S}^{f_{n,k}}(\mbf{C}^{P-N}(\vec{n}'))\} \geq P_{S}^{f_{n,k}}(\mbf{C}^{P-E}(\vec{n}')) \geq P_{S}^{f_{n,k}}(\mbf{C}(\vec{n}')) \ , 
    \end{aligned}
    \end{equation}
    where $\vec{n}' \coloneq (2^{n},...,2^{n})$ is the $2k$-long vector where every entry is $2^{n}$.
\end{theorem}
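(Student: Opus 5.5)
The statement splits into an achievability part (the equality $=1$) and a chain of upper bounds. For achievability, I would run $k$ independent copies of Protocol \ref{prot:main-text-dense-network-coding-example}: Senders $2i-1$ and $2i$ share $\ket{\Phi}\in\mbb{C}^{2^n}\otimes\mbb{C}^{2^n}$, which is permitted in $\mbf{Q}^{P-E}(\vec{n}')$. Each sender transmits $n$ qubits. The receiver performs the generalized Bell measurement on each pair of received systems. By Theorem \ref{thm:main-text-dense-coding}, every pair returns $f_i$ with certainty, so the product function is computed with probability $1$. The strict inequality $1>(2^{-n})^k$ holds trivially since $n,k\geq 1$.

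The last two inequalities are inclusions of the feasible sets. A classical MN is a pairwise-entanglement-assisted MN using a trivial shared state, so $\mbf{C}(\vec{n}')\subseteq\mbf{C}^{P-E}(\vec{n}')$. Measuring a shared entangled state locally yields a nonsignaling box, so $\mbf{C}^{P-E}(\vec{n}')\subseteq\mbf{C}^{P-N}(\vec{n}')$. Taking suprema gives the monotonicity.

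The substantive step is $P_S^{f_{n,k}}(\mbf{S}(\vec{n}'))\leq 2^{-nk}$ for $\mbf{S}\in\{\mbf{Q},\mbf{C}^{P-N}\}$, and I would prove it by reducing to the single-pair bound of Theorem \ref{thm:main-text-communication-advantage}. Inputs are uniform and independent across the $k$ pairs. Write the receiver's output as $(z_1,\dots,z_k)$, and let $E_i$ be the event that $z_i=f_i$ on pair $i$. The chain rule gives
\begin{align}
P(\text{success}) = \prod_{i=1}^k P(E_i \mid E_1,\dots,E_{i-1}).
\end{align}
The goal is to show each factor is at most $2^{-n}$. The cleanest route is a conditioning argument: fix the inputs of all pairs other than $i$. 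In $\mbf{C}^{P-N}$ there are no resources linking pair $i$ to the others, and in $\mbf{Q}$ no sender shares anything with any other. Hence, once the other inputs are fixed, the other pairs' signals become an input-independent state or random variable. The receiver can absorb this into its own decoding by preparing it locally, using shared randomness or ancilla states. The result is a valid two-sender network of type $\mbf{C}^{N}(2^n,2^n)$ or $\mbf{Q}(2^n,2^n)$ that computes $f_i$.

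That reduction alone gives only a bound on each marginal $P(E_i)$, not on the joint success. To get the product I would induct on $k$. For the joint event, fix the inputs of pairs $2,\dots,k$. Then the probability of succeeding on pair $1$ and on the rest equals the average, over the inputs and the simulated signals of pairs $2,\dots,k$, of a two-sender network for $f_1$. This network's receiver post-selects on the rest being correct, where "the rest" is determined by the decoding. That gives $P(\text{success})\leq 2^{-n}\sum P(\text{rest correct}\mid\cdot)$. The step I expect to be delicate is making this rigorous: the single-pair bound must hold in the strengthened form $\sum_{x,y}\max_z \Pr[z\mid x,y]$-type quantities, or equivalently as a bound on the number of distinguishable input classes with signaling dimension $(2^n,2^n)$. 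I would therefore rely on the structure of the appendix proof of Theorem \ref{thm:main-text-communication-advantage}, which presumably bounds the success of any MAC in the relevant polytope or cone by a linear functional. Tensoring such linear witnesses, which the appendix's "independent implementation is optimal" claim suggests is already established, gives multiplicativity, and hence $2^{-nk}$. The middle inequality then follows because both $\mbf{Q}(\vec{n}')$ and $\mbf{C}^{P-N}(\vec{n}')$ satisfy this bound.
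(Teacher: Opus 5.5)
Your achievability argument and the two inclusion-based inequalities ($\mbf{C}\subseteq\mbf{C}^{P-E}\subseteq\mbf{C}^{P-N}$) are correct and essentially what the paper does. The gap is the step you flag yourself as delicate: you never establish $P_S^{f_{n,k}}(\mbf{S}(\vec{n}'))\le 2^{-nk}$ for $\mbf{S}\in\{\mbf{Q},\mbf{C}^{P-N}\}$.

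The chain-rule factorization $\prod_i P(E_i\mid E_1,\dots,E_{i-1})$ cannot be fed into the single-pair theorem. Conditioning on the other pairs being correct is a post-selection through the receiver's joint measurement, which may be entangled across pairs. Your conditioning reduction only controls the marginals $P(E_i)$, as you note. The induction is then left unproven, and the closing appeal to linear witnesses that the appendix ``presumably'' provides and that ``tensor'' is a hope, not an argument. Your suspicion that you need a strengthened single-pair bound of $\sum_{x,y}\max_z$ type is also misdirected: nothing beyond the per-pair guessing bound is required.

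The missing ingredient is to pass to guessing probabilities first. This is how the paper argues: Lemma~\ref{lem:gen-structural-amplification} and Theorem~\ref{thm:product-theorem}, then Theorem~\ref{thm:one-shot-info-advantage} per pair. By Lemma~\ref{lem:reduction-to-guessing-probability}, optimizing over the receiver turns the success probability for a fixed encoding into $p_g(Z_1\cdots Z_k\vert C_1\cdots C_{2k})$ of the resulting cq state. The inputs are uniform and independent across pairs, and every encoding-and-transmission channel in $\mbf{Q}(\vec{n}')$ or $\mbf{C}^{P-N}(\vec{n}')$ is a tensor product over the pairs. So that state is $\bigotimes_i(\id_{Z_i}\otimes\cE_i)(\rho_{Z_i\wt{W}_i})$. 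The guessing probability is exactly multiplicative on such product states (Item 2 of Proposition~\ref{prop:guessing-prob-properties}), and this multiplicativity is precisely what rules out any gain from a joint decoder. Hence the supremum factors as $\prod_i\sup_{\cE_i}p_g(Z_i\vert\wt{C}_i)$, with each factor at most $2^{-n}$.

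If you prefer to keep your post-selection induction, the witness you are looking for is the dual of the guessing SDP, $p_g(Z\vert C)=\min\{\Tr\sigma:\sigma\ge p_Z(z)\rho^z_C\ \forall z\}$; strong duality makes its value equal to $p_g$. Take $\sigma_1$ for pair $1$ with $\Tr\sigma_1\le 2^{-n}$, and substitute $\sigma_1\ge p(z_1)\rho^{z_1}_{C_1}$ inside each $\Tr[\Gamma_{z_1,z_{-1}}(\cdot)]$, using positivity of $\Gamma$ and of the other pairs' weighted states. This bounds the joint success by $\Tr[\sigma_1]$ times the success of guessing the remaining pairs with the POVM $\{\sum_{z_1}\Gamma_{z_1,z_{-1}}\}_{z_{-1}}$, where pair $1$'s system is replaced by the fixed state $\sigma_1/\Tr\sigma_1$. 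That is exactly the inequality you were after, and iterating it gives $2^{-nk}$.
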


\subsection{Measurement-Device-Independent Quantum Key Growing}\label{sec:MDI-QKG}
%Motivation for privacy
As seen in Section \ref{sec:main-text-dense-network-coding}, DNC involves Senders 1 and 2 sharing the maximally entangled state. Since the maximally entangled state is pure, any other quantum system would have to be independent and therefore separable from external systems.
As such, if the shared entangled state is ideal, the receiver is trusted, and the quantum communication channels are secure, DNC provides a method for distributed computation that is private from eavesdroppers for functions that can be dense network coded.
Under these assumptions, no information is leaked during the distributed computation because  the initial state, and its subsequent measurement outcome, remain independent of all external quantum systems.
\begin{observation}\label{obs:private-distributed-computation}
    An entanglement-assisted quantum MN with trusted ideal resources and operations can implement a private distributed computation of $\oplus_{2^{n}}^{2}$.
\end{observation}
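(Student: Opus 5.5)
The plan is to formalize ``private'' as the following requirement. At the end of the protocol, the joint state of any external (eavesdropper) system $E$ with the classical register holding the receiver's output $(x,y)$ is a product state $\tau_{E}\otimes \rho_{XY}$, where $\tau_E$ does not depend on the inputs $(q,r,s,t)$. Equivalently, $E$ learns nothing about the inputs or the output. The argument proceeds in three steps, each relying on the trust assumptions: the shared state is ideal, the local operations are exactly as in Protocol \ref{prot:main-text-dense-network-coding-example}, the channels are noiseless and secure, and the receiver is trusted.

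\textbf{Step 1 (initial decoupling).} Let $\rho_{ABE}$ be any extension of the shared resource. The ideal assumption gives $\rho_{AB}=\dyad{\Phi}$, which is pure. Purity of a marginal forces the extension to factorize, $\rho_{ABE}=\dyad{\Phi}\otimes\rho_E$. One can see this from the Schmidt decomposition of a purification of $\rho_{ABE}$, or from the equality case of subadditivity, $I(AB:E)=0$. Hence $E$ starts independent of $AB$, and in particular independent of everything the senders will later touch.

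\textbf{Step 2 (propagation).} The senders apply the unitaries $W_{q,r}\otimes W^{T}_{s,t}$, which act only on $AB$. The systems are then forwarded over the secure channels, so by assumption no interaction with $E$ occurs. The global state is therefore $(W_{q,r}\otimes W^{T}_{s,t})\dyad{\Phi}(W_{q,r}\otimes W^{T}_{s,t})^{\dagger}\otimes\rho_E$. The trusted receiver then measures $AB$ in the generalized Bell basis and keeps the outcome locally. By the computation in the proof of Theorem \ref{thm:main-text-dense-coding}, the post-measurement classical-quantum state is $\dyad{(q+s,r+t)}_{XY}\otimes\rho_E$. Here $\rho_E$ does not depend on $(q,r,s,t)$, so $E$'s reduced state is identical for every input and every output. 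Any measurement on $E$ then has an outcome distribution independent of the inputs and outputs, which gives zero leakage.

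\textbf{Main obstacle.} The only nontrivial point is making the modeling precise: what ``external systems'' may correlate with, and at which stages. The plan is to fix the model so that $E$ can correlate only with the source state; the operations and channels are trusted. With that model fixed, the decoupling lemma of Step 1 (pure marginal implies product extension) carries the whole argument. Everything after it is bookkeeping of local operations, so it is this lemma that I would state and justify carefully.
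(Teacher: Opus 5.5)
Your argument is correct, and for privacy from eavesdroppers it is the same as the paper's. The paper also uses that the pure shared state forces any purification to be $\ket{\Phi}\otimes\ket{\psi}_{E}$, and that the local encodings and noiseless transmissions never touch $E$. It then concludes that after the honest Bell measurement $\rho_{X'Y'\hat{Z}E}=p_{X'Y'\hat{Z}}\otimes\rho_{E}$ for every input distribution, which it calls ``$0$-private-against-outsiders'' and which is equivalent to your input-independence of $E$'s state. Correctness is likewise just Theorem~\ref{thm:main-text-dense-coding}. Under the main-text reading (trusted receiver, privacy from eavesdroppers), your proof is complete.

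The paper's formal version adds a second clause that you do not address: privacy against the receiver. The appendix introduces Observation~\ref{obs:private-distributed-computation} as privacy ``from the receiver''. It asks that a receiver who need not perform the honest measurement, and who may even hold $E$, learn nothing about the inputs beyond the function value: $p_{g}(X'Y'\vert ABE)_{\rho}\leq p_{g}(X'Y'\vert Z)_{\rho}$. Your model assumes the receiver performs the honest measurement, so it does not cover this.

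The extra step uses the computation you already cite. Before measurement the receiver holds $\dyad{\Phi_{g\cdot h}}_{AB}\otimes\dyad{\psi}_{E}$, which depends on $(g,h)$ only through $g\cdot h$. Hence the isometry $V=\sum_{g}\ket{\Phi_{g}}\bra{g}$ carries $Z$ onto $AB$. Isometric invariance of min-entropy on the conditioning system, together with independence of $E$, then gives $p_{g}(X'Y'\vert ABE)=p_{g}(X'Y'\vert ZE)=p_{g}(X'Y'\vert Z)$. The paper also states the result for an arbitrary TNC group, and your argument carries over unchanged.
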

\noindent For clarity, this observation is formalized in Appendix \ref{app:MDI-QKG}. 

%Limitations of private distributed computation
The limitation of this private distributed computation is the number of independent parties that need to be trusted for it to be securely implemented. First, if the state is not sufficiently entangled, then an eavesdropper could be entangled with the shared state and extract information about the computation. 
If the senders distrust their shared entangled state, they would need to generate a trusted maximally entangled state by utilizing entanglement distillation and purification, which requires classical communication and many entangled pairs.
Second, the quantum channels between the senders and the receiver must be secure against tampering, as otherwise bad actors could simply replace the encoded state. 
Finally, the receiver has to be trusted as otherwise they can simply output the wrong answer.
DNC is also susceptible to attacks that make use of superdense coding.
For example, since the encoding unitaries match those of superdense coding, an eavesdropper with control of the source and measurement device could provide the two senders each with an entanglement-assisted quantum channel, causing the two senders to unknowingly transmit their entire input to the receiver. 

%Why we shouldn't give up and context
While the demands on the number of trusted components make the communication advantage of DNC less relevant for privacy guarantees, we still might expect the new quantum information-theoretic principle to give rise to some cryptographic protocol. This is because new (quantum-information-theoretic) physical principles generally imply new quantum cryptographic protocols. The natural examples of this are the forms of quantum key distribution and their corresponding physical principles, which we provide in Table \ref{tab:QKD-and-Physical-Principle} for motivation. 

\begin{table}[H]
    \centering
    \begin{tabular}{c|c}
        \textbf{Type of QKD} & \textbf{Physical Principle}  \\ \hline
         Prepare-and-Measure \cite{Bennett_2014} & Complementarity \cite{Bohr1928-nf, Bennett_92} \\ \hline
         Entanglement-Based \cite{ekert1991quantum} & Entanglement \cite{schrodinger1935discussion} \\ \hline
         Measurement-Device-Independent \cite{lo2012measurement} & Entangling Measurements \cite{lo2012measurement}
         \\ \hline
         Device-Independent \cite{Vazirani_2014} & Bell Nonlocality \cite{bell1964einstein} \\ \hline

    \end{tabular}
    \caption{Different forms of QKD and their corresponding physical principle. This motivates a QKD-adjacent protocol arising from dense network coding.}
    \label{tab:QKD-and-Physical-Principle}
\end{table}

%Our Result
Building on this history of new types of QKD protocols from different physical principles, we introduce measurement-device-independent (MDI) quantum key growing (QKG) in Protocol \ref{prot:main-text-MDI-Q-Key-Scheme}. Protocol \ref{prot:main-text-MDI-Q-Key-Scheme} presumes an authenticated classical channel between Alice and Bob, but otherwise uses insecure channels. Here we state the asymptotic rate of MDI QKG informally and refer the reader to Appendix \ref{app:MDI-QKG} for further technical details and the one-shot analysis.
\begin{theorem}\label{thm:main-text-MDI-QKG}[Asymptotic Rate, Informal]
    There exists a sequence of MDI QKG schemes as described in Protocol \ref{prot:main-text-MDI-Q-Key-Scheme} such that on input $\rho_{AB}^{\otimes n}$, the number of bits of key extracted per copy of $\rho_{AB}$ as $n$ goes to infinity satisfies
    \begin{align}\label{eq:main-text-MDI-QKG-rate}
        R \geq H(X \vert ABE)_{(\cE \otimes \id_{E})(\psi)} - f_{\text{ec}}H(W\vert Y)_{q} \ , 
    \end{align}
    where $H(\cdot \vert \cdot)$ is the conditional entropy, $\cE$ is Alice and Bob's encoding map, $f_{\text{ec}} \geq 1$ models the asymptotic inefficiency of the error correcting code, $\psi$ is any purification of $\rho_{AB}$, and $q_{WY}$ is the joint distribution induced by the protocol when the receiver is honest (i.e. applies an expected, possibly noisy, measurement $\cM_{AB \to Z}$ each round).
\end{theorem}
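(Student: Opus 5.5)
We will prove Theorem~\ref{thm:main-text-MDI-QKG} by reducing it to a standard asymptotic security analysis of a device-dependent, source-trusted QKD-type protocol with an untrusted measurement. The reduction has four steps.

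\textbf{Step 1 (adversarial model).} The receiver (measurement device) and the channels are adversarial, and the entanglement source may be provided by Eve. Because Alice and Bob hold $\rho_{AB}^{\otimes n}$, we take a purification $\psi_{ABE}^{\otimes n}$ and give $E$ to Eve. Each round, Alice and Bob apply the encoding $\cE$, which is the Weyl encoding of Protocol~\ref{prot:main-text-dense-network-coding-example} on their private random inputs. They then forward $A'B'$ to the untrusted relay. We model the relay as an arbitrary instrument acting on $A'B'E$ that announces a classical outcome. The key variable $X$ is a function of Alice and Bob's inputs; for example, Alice's input, which Bob reconstructs from the announced outcome $(q+s,r+t)$ and his own $(s,t)$.

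\textbf{Step 2 (reduction to i.i.d.).} The protocol is permutation-symmetric across rounds, so we apply de Finetti / postselection techniques, or the entropy accumulation theorem. This lets us restrict to collective attacks. It costs only sublinear terms, which vanish in the rate.

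\textbf{Step 3 (Devetak--Winter bound).} Under collective attacks, Devetak--Winter gives the key rate as $H(X\vert \text{Eve's total information})$ minus the error-correction leakage. Every quantum system Eve ever holds is obtained from $A'B'E$ by her own operation. Data processing for conditional entropy therefore gives $H(X\vert \text{Eve}) \ge H(X\vert A'B'E)_{(\cE\otimes\id_E)(\psi)}$, which is exactly the first term of \eqref{eq:main-text-MDI-QKG-rate} (with $A'B'$ written as $AB$).

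The announced outcome is classical and is known to Eve. Conditioning on it is already covered, because Eve held the full pre-measurement system $A'B'$ and the outcome is a function of that system.

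\textbf{Step 4 (error correction and privacy amplification).} When the receiver is honest, the announced outcome $Y$ has joint distribution $q_{WY}$ with Bob's target $W$. One-way error correction (information reconciliation) then leaks $f_{\mathrm{ec}}H(W\vert Y)_q$ bits per round asymptotically.

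Privacy amplification via two-universal hashing and the quantum leftover hash lemma converts smooth min-entropy into key. The asymptotic equipartition property then turns smooth min-entropy into von Neumann entropy per copy, and this yields the rate.

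\textbf{Main obstacle.} The delicate step is Step 3's justification that a dishonest relay gains no more than $A'B'E$. Any information the relay learns is a post-processing of $A'B'E$, so monotonicity suffices.

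A second delicate point is that the error-correction term must be evaluated for the honest distribution. The protocol handles this with a parameter-estimation or abort step: we abort unless the observed statistics are close to $q$. The one-shot details of this step are deferred to Appendix~\ref{app:MDI-QKG}.
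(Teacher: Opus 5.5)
Your Steps 1 and 3 do not analyze the raw key that Protocol~\ref{prot:main-text-MDI-Q-Key-Scheme} actually uses, and that is the gap. In the protocol the key is Alice's string $\mathbf{w}=\hat{\mathbf{z}}\oplus\mathbf{x}$, where $\hat{\mathbf{z}}$ is whatever the untrusted relay announces. It is therefore not ``a function of Alice's and Bob's inputs.'' Your monotonicity argument (everything Eve holds is a post-processing of $A'B'E$) says nothing by itself about $H(W\vert \hat Z E')$, because part of $W$ is produced by Eve.

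The missing ingredient is the conditional-bijectivity identity of Lemma~\ref{lem:cond-Y-func-entropy}. Since $x\mapsto \hat z\oplus x$ is a bijection for each $\hat z$, and $\hat Z$ is in Eve's conditioning, $H^{\ve}_{\min}(\hat Z\oplus X\vert \hat Z E)=H^{\ve}_{\min}(X\vert \hat Z E)$. Only after this can data processing (dephasing of the announcement, isometric invariance of the relay's attack) take you to $H^{\ve}_{\min}(X^{2n}\vert A^{n}B^{n}E_{0})_{\rho^{1}}$, with $r+t$ subtracted for the public transcript.

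Your choice of key (Alice's input, reconstructed by Bob from the announcement) defines a different scheme. Bob would need $\hat{\mathbf{z}}$, which in the protocol only Alice receives, and the error-correction cost would be $H(X\vert Y\hat Z)_q$ rather than $H(W\vert Y)_q$. The bound would survive, since $H(X\vert Y\hat Z)_q=H(W\vert Y\hat Z)_q\le H(W\vert Y)_q$, but it is not the stated protocol. Also, in the statement $Y$ is Bob's local randomness and $W$ is Alice's computed string, not ``the announced outcome'' and ``Bob's target'' as in your Step~4.

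Once that identity is in place, your Step~2 and your parameter-estimation step are unnecessary.
\begin{itemize}
\item The input is promised to be $\rho_{AB}^{\otimes n}$ with Eve holding only the purification; the protocol has no test that could handle an Eve-supplied source. The encoding acts i.i.d.\ before any adversarial action.
\item So the entropy is evaluated on the pre-attack state $\rho^{1}=[(\cE\otimes\id_E)(\psi)]^{\otimes n}$. An arbitrary coherent relay attack (one isometry on all of $A^{n}B^{n}E_{0}$) is absorbed by data processing.
\item The quantum AEP then gives $nH(X\vert ABE)-o(n)$ directly, with no de Finetti, postselection, or EAT.
\item No statistics need to be checked. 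The syndrome length $r_n$ is fixed in advance by a Slepian--Wolf code for the honest $q_{WY}$, and the one-shot bound subtracts $r_n+t_n$ against any relay behavior.
\item Hash verification with $t_n=\sqrt n$ guarantees correctness, and completeness concerns only the honest relay.
\end{itemize}
Choosing $\ve_n=1/\sqrt n$ and $\ve_{\text{sec},n}=2/\sqrt n$ in the leftover hash lemma then yields the claimed rate.
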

%Discussion on result
We remark that MDI QKG as presented in Protocol \ref{prot:main-text-MDI-Q-Key-Scheme} assumes an initial input state and thus is a key \textit{distillation} protocol like that of Devetak and Winter \cite{devetak2005distillation} rather than a QKD protocol. However, similar to how a QKD protocol is a key distillation protocol where the input state is tested in an online manner to determine how much key can be securely distilled, MDI QKG could be implemented with an untrusted input by having Alice and Bob test the input state. In this regard, the protocol is not particularly distinct from QKD. However, we identify this task as `key growing' rather than a new variant of `key distribution' for the following reason. A QKD scheme accounts for the distribution of the quantum state from which to distill key.\footnote{We feel obligated however to note that this is not why QKD is named such.} In contrast, the start of Protocol \ref{prot:main-text-MDI-Q-Key-Scheme} assumes that a (promised) entangled state is shared between Alice and Bob \textit{a priori}. As such, Alice and Bob may already distill a secret key. For example, if they share many copies of the maximally entangled state $\rho^{0}_{A^{n}B^{n}} = \dyad{\Phi^{+}}^{\otimes n}$, then they could already measure each copy in the computational basis to extract a secret key of length $n$. However, on input $\dyad{\Phi^{+}}^{\otimes n}$, Protocol \ref{prot:main-text-MDI-Q-Key-Scheme} converts the shared state to $2n$ bits of key, doubling the key length achievable from distillation (See Fig.~\ref{fig:KeyGrowingVsKeyDistillation}). Since Alice and Bob start with an entangled state from which they can distill key, but instead use Protocol~\ref{prot:main-text-MDI-Q-Key-Scheme} to distill more key, we view this protocol as `growing' the key. It is for this reason we call it `key growing.' \footnote{We acknowledge this may be confusing as sometimes QKD is called quantum key growing because it requires an initial short key to authenticate the classical channel. Here we are naming our scheme key growing because of its relation to key distillation.} 

\begin{protocoldesc}[H]
\caption{MDI Quantum Key Growing Scheme}\label{prot:main-text-MDI-Q-Key-Scheme}
		\textbf{Inputs:} \\
		\hspace{0.5cm}
		\begin{tabular}{ l l}
			$n \in \mathbb{N}$ & Number of rounds  \\
            $\rho^{0}_{A^{n}B^{n}}$ & Initial Quantum State \\
            $r$ & Bits of the error correction (EC) syndrome \\
            $t$ & Bits of EC verification 2-universal hash
		\end{tabular}
		
		\vspace{0.5cm}
		
		\textbf{Protocol:} 
		\begin{enumerate}
            \item \textbf{Growing Phase:}
            \begin{enumerate}[label=\alph*.]
			     \item For $i \in \{1,...,n\}$, Alice draws uniformly random bits $(x_{2i-1},x_{2i}) \leftarrow \{0,1\}^{\times 2}$, applies qubit discrete Weyl operator $W_{x_{2i-1},x_{2i}}$ to the $A_{i}$ system, and sends it to the receiver.
                \item For $i \in \{1,...,n\}$, Bob draws uniformly random bits $(y_{2i-1},y_{2i}) \leftarrow \{0,1\}^{\times 2}$, applies qubit discrete Weyl operator $W_{y_{2i-1},y_{2i}}$ to the $B_{i}$ system, and sends it to the receiver.
                \item Alice receives $\hat{\mbf{z}} \in \{0,1\}^{\times 2n}$ and computes the bitwise XOR $\mbf{w} := \hat{\mbf{z}} \oplus^{2n}_{n} \mbf{x}$.
            \end{enumerate}
            \item \textbf{Post-Processing Phase:}
            \begin{enumerate}[label=\alph*.]
                \item Alice and Bob apply some error correcting code so that \textit{Bob} now holds $\hat{\mbf{w}} \in \{0,1\}^{2n}$.
                \item Alice and Bob apply error verification for $\mbf{w},\hat{\mbf{w}}$ via a hash function and abort if it fails.
                \item Alice and Bob perform privacy amplification so that they hold their respective shares of the key $K_{A}$ and $K_{B}$.
            \end{enumerate}
		\end{enumerate}
	\end{protocoldesc}

While MDI QKG is appealing conceptually, its advantage over key distillation can be limited by the physical reality that it requires further transmission of the state, namely over the insecure quantum channels to the receiver. As the noise over the quantum channels increases, even the honest receiver's announcement becomes noisier, and thus the amount of error correction between Alice and Bob can increase. It follows that for sufficiently high noise in the transmissions to the receiver, the advantage of MDI QKG over standard key distillation disappears. To clarify both the advantage of MDI QKG and its limitation with respect to standard key distillation, in Fig.~\ref{fig:KeyGrowingVsKeyDistillation} we compare the asymptotic rates of MDI QKG and key distillation when the shared state is many copies of a Werner state and the transmission lines to the receiver in MDI QKG are independent depolarizing channels. We refer the reader to Appendix \ref{app:MDI-QKG} for details on how this is calculated.

\begin{figure}[H]
    \centering
    \includegraphics[width=0.7\columnwidth]{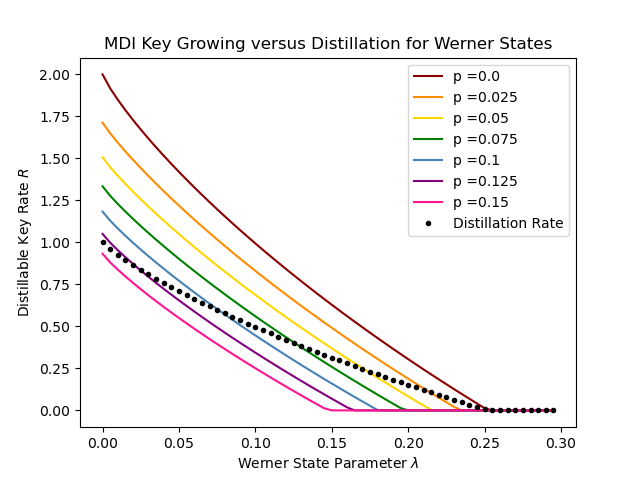}
    \caption{The asymptotic key rate of MDI QKG for Werner states $\rho_{\lambda} = (1-\lambda)\dyad{\Phi^{+}} + \frac{\lambda}{4}I$ using \eqref{eq:main-text-MDI-QKG-rate} when $f_{ec} = 1$ and the state the receiver measures is $(\cD_{p} \otimes \cD_{p})(\rho_{\lambda})$ where $\cD_{p}(X) = (1-p)X + p\Tr[X]\pi$. We compare this to optimally distilling key from the shared Werner states prior to growing using optimal ($f_{ec} = 1$) one-way error correction. As is intuitive, as the noise to the receiver increases, the cost of error correction removes the advantage of key growing over distillation.}
    \label{fig:KeyGrowingVsKeyDistillation}
\end{figure}

\section{Discussion}
In this work we introduce a protocol for quantum dense network coding of information using bipartite entanglement and quantum communication. Dense network coding provides a communication advantage because it enables a class of functions to be computed using one qubit for each two bits of classical communication. Although the main text focuses on ditwise modular addition as the key example of dense network coding, we consider generalizations that preserve the quadratic gap in signaling dimension between classical and quantum communication in the appendices. We expect that more functions will be able to be densely computed  by using states beyond Bell states and encoding operations beyond the Discrete-Weyl operators. For example, analogous to Theorem \ref{thm:main-text-communication-advantage}, one may be able to identify simple coding strategies for many-input linear computations over quantum MNs and establish error bounds for MNs with weaker resource configurations. Such results would strengthen recent work on linear computations over networks with entanglement-assistance (see \cite{Allaix-2023a,Yao-2024a,Yao-2025a,Meng-2026a}), making these computational tasks more practical for near-term quantum networks. Indeed, our introduced dense network coding protocol offers a basic primitive from which a broader concept of dense multiparty information processing emerges. In this paradigm, entanglement-assisted quantum communication leads to distributed information processing systems that offer greater efficiency and more privacy than any system lacking these resources. 

Beyond the development of a richer theory of functions that can be dense network coded, an important next step will be experimental demonstration of dense network coding. It should be noted that in practice as the size of the inputs to the function and needed quantum communication resources increase, there will likely be increased noise in the physical system \cite{Polozova2016_highdim_bell_inequalities}. Even with the noise-robustness of dense network coding, the increased noise from using higher-signaling dimensions or number of parties may prevent the advantage from being realized. Nonetheless, there has been very recent progress in the experimental demonstration of the quantum communication advantage of quantum fingerprinting \cite{yacoub2026experimentaldemonstrationquantumadvantage}, and thus we believe dense network coding may serve as a natural next benchmark for implementing quantum networks.

\section{Methods}\label{sec:methods}
The dense network coding protocol was presented in Section \ref{sec:main-text-dense-network-coding}. This section elucidates key key details of our techniques. At a high-level, we convert the problem of simulating the function to characterizing the min-entropy that the output of the function must have when conditioned on the output of a family of MNs. We then use physical principles of the network to control this min-entropy. We sketch this below and refer the reader to the appendices for further details.

\subsection{From Channel Simulation to Guessing Probability} The first step is to reduce our measure of channel simulation given in \eqref{eq:main-text-channel-sim-measure} to an optimization of the guessing probability over all joint states of the output function value and the senders' encoding. This reduction is performed in Appendix \ref{app:chan-sim-to-guessing-prob}. Here we state an informal version of the two-sender MN case here for clarity.
\begin{lemma}[Reduction to Guessing Probability, Informal \& Simplified]\label{lem:main-text-reduc-to-guessing-prob}
    Let $f: \cX \times \cY \to \cZ$ be a function. Consider a set $\mbf{S}$ of two-sender MNs that are always of the form of an `encoding-and-transmission channel' $\cE^{\net}_{XY \to C}$ where the $C$ register is the total message the receiver receives. Then,
    \begin{align}
        P_{S}^{f}(\mbf{S}) = \sup_{\cE^{\net}} p_{g}(Z \vert C)_{(\id_{Z} \otimes \cE^{\net}_{XY \to C})(\rho_{ZXY})} \ , 
    \end{align}
    where the maximization is over encoding-and-transmission channels possible according to the set $\mbf{S}$, $\rho_{ZXY} = \frac{1}{\vert X \vert \vert Y \vert} \sum_{x,y} \dyad{f(x,y)} \otimes \dyad{x}_{X} \otimes \dyad{y}_{Y}$, and the guessing probability is:
    \begin{align}
        p_{g}(Z \vert C)_{\rho} = \max_{\{\Gamma_{z}\}_{z} \text{is POVM}} \sum_{z} p_{Z}(z)\Tr[\Gamma_{z}\rho^{z}_{C}] \ ,
    \end{align}
    where $\rho_{ZC} = \sum_{z} p_{Z}(z) \dyad{z} \otimes \rho^{z}_{C}$ for some distribution over $\cZ$, $p_{Z}$, and set of quantum states $\{\rho^{z}_{C}\}_{z \in \cZ}$.
\end{lemma}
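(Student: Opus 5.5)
We prove the equality by two inequalities, using the fact that every MN in $\mbf{S}$ factors as an encoding-and-transmission channel $\cE^{\net}_{XY \to C}$ followed by a receiver measurement $\{\Pi_{z}\}_{z}$ on $C$. The identity then says that optimizing the receiver's measurement for a fixed $\cE^{\net}$ is exactly the state-discrimination problem defining $p_{g}(Z \vert C)$. The first step is to rewrite the success probability of a fixed MN. For $Q_{Z\vert XY}(z\vert x,y) = \Tr[\Pi_{z}\cE^{\net}(\dyad{x}\otimes\dyad{y})]$ we have
\begin{align}
    P^{f}_{S}(Q_{Z\vert XY}) = \frac{1}{\vert\cX\vert\vert\cY\vert}\sum_{x,y}\Tr\!\left[\Pi_{f(x,y)}\,\cE^{\net}(\dyad{x}\otimes\dyad{y})\right] .
\end{align}

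Next we regroup the sum by the value $z=f(x,y)$. Define $p_{Z}(z) = \vert f^{-1}(z)\vert/(\vert\cX\vert\vert\cY\vert)$ and
\begin{align}
    \rho^{z}_{C} = \frac{1}{\vert f^{-1}(z)\vert}\sum_{(x,y)\in f^{-1}(z)}\cE^{\net}(\dyad{x}\otimes\dyad{y}) .
\end{align}
Then $(\id_{Z}\otimes\cE^{\net})(\rho_{ZXY}) = \sum_{z}p_{Z}(z)\dyad{z}\otimes\rho^{z}_{C}$, and by linearity of the trace $P^{f}_{S}(Q_{Z\vert XY}) = \sum_{z}p_{Z}(z)\Tr[\Pi_{z}\rho^{z}_{C}]$. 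This is at most $p_{g}(Z\vert C)$ of that cq-state, which gives ``$\le$'' after taking the supremum over $\mbf{S}$. For ``$\ge$'', fix any $\cE^{\net}$ allowed by $\mbf{S}$ and let $\{\Gamma_{z}\}$ be an optimal POVM. By the bullet-point definitions, the receiver in every class $\mbf{C},\mbf{C}^{E},\mbf{C}^{N},\mbf{Q},\mbf{Q}^{E}$ may apply an arbitrary decoding to $C$: a POVM on the quantum register, or an arbitrary classical post-processing, which on a classical $C$ is equivalent to a POVM diagonal in the computational basis without loss of optimality. So the MN with measurement $\{\Gamma_{z}\}$ lies in $\mbf{S}$ and achieves $p_{g}$, and taking the supremum finishes the proof.

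The main obstacle is formalizing the hypothesis that $\mbf{S}$ is ``always of the form'' encoding-then-decoding with an unconstrained receiver. In particular, shared randomness, or in the classical case a receiver who uses randomness correlated with the senders, must be absorbed into $\cE^{\net}$. Our first approach is to append the shared randomness register to $C$. This keeps the signaling-dimension constraint on the encoding side intact, and convexity of $p_{g}$ then handles mixtures. Separately, we must check that restricting to classical $C$ loses nothing: for cq states with classical $C$, the optimal POVM can be taken diagonal, so the quantum and classical guessing probabilities coincide.
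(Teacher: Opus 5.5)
Your proposal is correct and follows essentially the same route as the paper's proof. Both fix $\cE^{\net}$, regroup the uniform average by the value $z=f(x,y)$ to get the cq-state $\sum_z p_Z(z)\dyad{z}\otimes\rho^z_C$, identify $P^f_S$ of the induced MAC with $\sum_z p_Z(z)\Tr[\Gamma_z\rho^z_C]$, and then optimize over the receiver's POVM and over $\cE^{\net}$. Your checks that the decoder is unconstrained (and that diagonal POVMs suffice for classical $C$) only make explicit what the paper builds into the SMP definition. The shared-randomness discussion is not needed, since the paper excludes such networks from the SMP hypothesis and handles them separately in Proposition~\ref{prop:no-need-for-SR}.
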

\noindent In \cite{konig2009operational}, it was shown that the guessing probability is proportional to the conditional min-entropy of the joint quantum state, i.e.~
\begin{align}\label{eq:main-text-guessing-prob-to-min-ent}
    p_{g}(X \vert C)_{\rho} = \exp(-H_{\min}(X \vert C)_{\rho}) \ ,
\end{align}
and thus Lemma \ref{lem:main-text-reduc-to-guessing-prob} is enough to need to control the conditional min-entropy given the structure of the MNs. Moreover, the guessing probability is multiplicative over tensor products, which formalizes independence of different systems. As formalized in Appendix \ref{app:amplification-of-advantage}, it follows that if one selects for networks that preserve independence in the encoding of different systems, then using Lemma \ref{lem:main-text-reduc-to-guessing-prob}, one can  obtain multiplicative bounds on computing functions in parallel over the network. Combining this with the subsequent bounds allows us to establish Theorem \ref{thm:main-text-amplification-theorem}.

\subsection{Controlling Guessing Probability of Output by the Inputs} As shown in Appendix \ref{app:preliminaries}, a straightforward implication of \cite{george2022finite} is that given a multivariate function $f:\cX \times \cY \to \cZ$, it is generally easier to guess the output of the function than its inputs even with quantum side-information, i.e. for the joint classical-quantum state 
\begin{align}
    \rho_{ZXYQ} = \sum_{x,y} p_{XY}(x,y)\dyad{f(x,y)}_{Z} \otimes \dyad{x}_{X} \otimes \dyad{y}_{Y} \otimes \rho_{Q}^{x,y} \ ,
\end{align}
it is the case $p_{g}(Z \vert YQ)_{\rho} \geq p_{g}(X \vert YQ)_{\rho}$. This makes controlling the guessing probability of the output of a function in terms of the randomness in the inputs difficult. To resolve this difficulty, we introduce conditionally bijective functions.
\begin{definition}\label{def:main-text-DCB}
    Let $\cX,\cY,\cZ$ be finite alphabets. A function $f: \cX \times \cY \to \cZ$ is $Y$-conditionally bijective if for each $y \in \cY$, the function $f_{y}:\cX \to \cZ$ defined by $f_{y}(x) = f(x,y)$ is a bijection. The function is doubly conditionally bijective if it is both $\cX$- and $Y$-conditionally bijective.
\end{definition}
\noindent Intuitively, $Y$-conditionally bijective functions are appealing because, conditioned on the value of $y \in \cY$, guessing the value of $z \in \cZ$ is the same as guessing the value of $x \in \cX$. This is because one can simply apply inverse of $f_{y}$ to the guessed value of $z$ to obtain a guessed value of $x$. Formalizing this idea, in Appendix \ref{app:entropy-chain-rule}, we establish that for a $Y$-conditionally bijective function, guessing the output is as hard as guessing the input:
\begin{align}
    p_{g}(Z \vert YQ)_{\rho} = p_{g}(X \vert YQ)_{\rho} \ . 
\end{align}
This equality allows us to control the guessing probability using the randomness in the inputs. Moreover, as we show this equality in fact holds for a large family of entropic quantities and includes quantum side-information, we are able to use it to establish the security of our MDI QKG protocol in Appendix \ref{app:MDI-QKG}.

\subsection{Limits of Information Flow through Networks}
The ability to guess a function of inputs at the output of a network should be limited by the ability for the relevant information to propagate through the network to the output. In Appendix \ref{app:bounds-on-guessing-cond-bij-over-MNs}, we formalize this physical intuition for $Y$-conditionally bijective functions. In particular, we limit two cases. The first case is for two-sender quantum MNs. A quantum MN is point-to-point quantum channels run in parallel. By considering a $Y$-conditionally bijective function and giving the receiver access to the variable $Y$, we limit the receiver's ability to guess the output of the function by how well the receiver can learn the variable $X$, i.e.~we reduce the network problem to point-to-point information transmission over a quantum channel. We then use the celebrated Frenkel-Weiner theorem \cite{Frenkel2015_classical_information_n-level_quantum_system} to limit the guessing probability by the size of the input and the signaling dimension. The second case is when two senders are connected to the receiver by a fully classical bipartite channel $\cN_{XY \to X'Y'}$ that does not signal from $X$ to $Y$. A fundamental fact is that such a channel may be decomposed into the composition of two channels such that information only flows from $Y$ to $X$ \cite{eggeling2002semicausal}. Using this decomposition in conjunction with the channel being fully classical, we show that for a $Y$-conditionally bijective function guessing the output of the function is limited by the randomness of the input $X$ and the dimension of the output register $X'$. The limitations of these two networks together give rise to the following lemma.
\begin{lemma}\label{lem:main-text-limits-of-guessing}
    Let $f:\cX \times \cY \to \cZ$ be a doubly conditionally bijective function. Then
    \begin{align}
        \max\{P_{S}^{f}(\mbf{Q}(2^n,2^m)), P_{S}^{f}(\mbf{C}^{N}(2^n,2^m))\} \leq \frac{\min\{2^{n},2^{m}\}}{\vert \cZ \vert} \ . 
    \end{align}
    Moreover, these bounds are tight for $f = \oplus_{2^{k}}$ when $2^{n}$ and $2^{m}$ divide $2^{k}$.
\end{lemma}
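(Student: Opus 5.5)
By symmetry it suffices to prove $P_S^f \le 2^n/|\cZ|$ for each of the two families. The same argument with the roles of the senders exchanged gives $2^m/|\cZ|$, and we then take the minimum. This exchange is legitimate because $f$ is doubly conditionally bijective, so it is also $\cX$-conditionally bijective.

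The first step is to apply Lemma~\ref{lem:main-text-reduc-to-guessing-prob}, which rewrites the success probability as $\sup p_g(Z\vert C)$, where $C$ is the receiver's total message. Giving the receiver additional side information can only increase the guessing probability, so we hand it $Y$ and bound $p_g(Z\vert C Y)$ instead. Since $f$ is $Y$-conditionally bijective, the equality from Appendix~\ref{app:entropy-chain-rule} gives $p_g(Z\vert CY)=p_g(X\vert CY)$. Note also that $|\cX|=|\cZ|$, because each $f_y$ is a bijection.

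\textbf{Quantum MN.} Here $C=A'B'$. Sender 2's system $B'$ depends only on $y$, so once $Y$ is given, the receiver can prepare $B'$ itself (the uniform mixture over the unknown $x$ leaves the $B'$ marginal unchanged). The problem therefore reduces to a point-to-point task: the receiver must guess a uniform $x\in\cX$ from the output $\rho^x_{A'}$ of a $2^n$-dimensional quantum system. The Frenkel–Weiner theorem says the induced classical channel is simulable with a $2^n$-symbol classical channel plus shared randomness. For such a channel, the success probability of identifying a uniform $x$ is at most $2^n/|\cX|$: at most $2^n$ values of $x$ can be correctly decoded for each value of the shared randomness. This gives the bound $2^n/|\cZ|$.

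\textbf{Nonsignaling-assisted classical MN.} Here $C=X'Y'$, with $X'\in[2^n]$, produced by a classical box $P_{X'Y'\vert XY}$ followed by the receiver. The box does not signal from $X$ to $Y'$. By the semicausal decomposition \cite{eggeling2002semicausal}, which holds for classical nonsignaling channels with a classical intermediate system, $P_{X'Y'\vert XY}(x',y'\vert x,y)=\sum_\lambda P(y',\lambda\vert y)\,P(x'\vert x,\lambda,y')$. Conditioned on $(y,y',\lambda)$, all of which the receiver may be given, $x'$ is produced by a classical channel from $x$ into $2^n$ symbols. Hence, for each fixed value of the conditioning variables, $\sum_x \max_{x'}$-style counting gives $\sum_{x} P(\hat x=x\vert x)\le 2^n$. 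Averaging over the uniform $x$ yields $p_g(X\vert X'Y'Y\lambda)\le 2^n/|\cX|$. The main obstacle is making this step rigorous. It requires checking that the decomposition applies to general nonsignaling boxes, not only quantum ones, with a possibly non-finite or quasi-probability $\lambda$. If that is delicate, I would instead argue directly: nonsignaling implies that $P(y'\vert x,y)$ is independent of $x$, so $p_g(X\vert X'Y'Y)=\sum_{y,y'}P(y'\vert y)\frac1{|\cX||\cY|}\sum_x\max\ldots\le 2^n/|\cX|$, using the conditional distribution $P(x'\vert x,y,y')$ together with the counting bound $\sum_x \max_{\hat x} \le$ the number of outputs.

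\textbf{Tightness.} Take $f=\oplus_{2^k}$ with $2^n, 2^m \mid 2^k$, and assume without loss of generality $n\le m$. Sender 1 sends $x \bmod 2^n$, and sender 2 sends $y \bmod 2^n$ (allowed, since $2^n\le 2^m$). The receiver outputs $(x+y)\bmod 2^n$ for the low part of the result and guesses the remaining high part uniformly. This protocol is classical, so it belongs to both families, and it succeeds with probability $2^{n}/2^k=\min\{2^n,2^m\}/|\cZ|$.
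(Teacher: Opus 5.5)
Your proposal is correct. The upper bound follows the paper's route: Proposition~\ref{prop:impossibility-of-weaker-resource-MNs}, via Theorems~\ref{thm:unassisted-QMN-bound} and~\ref{thm:non-signaling-box-bound}. You give the receiver $Y$, replace $Z$ by $X$ using Lemma~\ref{lem:cond-bij-p-guess-and-H-min}, and remove sender 2's contribution. In the quantum case, regenerating $B'$ from $Y$, applying Frenkel--Weiner, and counting per shared-randomness value are exactly Lemma~\ref{lem:no-ent-ass-reduc-to-point-to-point}, Corollary~\ref{cor:reduc-point-to-point-to-cl-with-SR}, and the classical chain rule.

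There are two real differences.

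\textbf{The $\mbf{C}^{N}$ case.} The paper removes $Y'$ by invoking semilocalizability (Propositions~\ref{prop:NS-equiv-semilocalizable} and~\ref{prop:copying-localizable}), which writes $p_{XYY'}$ as a channel acting on $Y$ alone. Your fallback reaches the same conclusion more directly. For a fully classical box, $X\not\to Y'$ literally says $P(y'\vert x,y)=P(y'\vert y)$. Hence $X$ is independent of $(Y,Y')$, and $p_{g}(X\vert X'Y'Y)\le 2^{n}p_{g}(X)$. You can therefore drop the $\lambda$-decomposition and the worry about quasi-probabilities entirely. Do fix the index in your counting step: it should read $\sum_{x'}\max_{x}P(x'\vert x,y,y')\le 2^{n}$, with the sum running over the $2^n$ messages rather than over $x$.

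\textbf{Tightness.} The paper (Proposition~\ref{prop:CMN-lower-bounds-for-xor-d}) partitions $\mbb{Z}_{2^k}$ into intervals and has each sender announce its interval index. It then counts, cell by cell, the input pairs whose sum hits the decoder's guess. This needs careful carry bookkeeping: the guess must be offset, e.g.\ to $iq_{1}+jq_{2}+\min\{q_{1},q_{2}\}-1$, to collect $\min\{q_1,q_2\}$ pairs per cell. In exchange, it gives an explicit lower bound when the dimensions do not divide $2^k$. Your low-digit strategy avoids that count altogether. Because $2^n\mid 2^k$, the residue of $x+y$ modulo $2^n$ is computed exactly. A uniformly random guess of the remaining $2^{k-n}$-valued part then succeeds with probability $2^{n-k}$, which gives $\min\{2^n,2^m\}/\vert\cZ\vert$ at once.
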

\noindent By using that the multiplication operation of any group is a doubly conditionally bijective function, Lemma \ref{lem:main-text-limits-of-guessing} allows us to place limits on computing the functions considered in dense network coding for other network structures, thus allowing us to conclude Theorem \ref{thm:main-text-communication-advantage}.

\paragraph*{Data Availability} No data sets were generated during this study.

\paragraph*{Code Availability} MATLAB and Python code used to generate Figures \ref{fig:main-text-noise-robustness} and \ref{fig:KeyGrowingVsKeyDistillation} are publicly accessible at \href{https://github.com/qit-george/QuantumDenseNetworkCoding}{this GitHub repository.}

\bibliography{references.bib}

@article{doolittle2024operational_nonclassicality,
  title={Operational Nonclassicality in Quantum Communication Networks},
  author={Doolittle, Brian and Leditzky, Felix and Chitambar, Eric},
  journal={arXiv preprint arXiv:2403.02988},
  url= {https://arxiv.org/abs/2403.02988},
  year={2024}
}

@Article{bennet1992_dense_coding,
  author        = {Bennett, Charles H. and Wiesner, Stephen J.},
  title         = {Communication via one- and two-particle operators on Einstein-Podolsky-Rosen states},
  journal       = {Phys. Rev. Lett.},
  year          = {1992},
  volume        = {69},
  pages         = {2881--2884},
  month         = {Nov},
  archiveprefix = {arXiv},
  doi           = {10.1103/PhysRevLett.69.2881},
  issue         = {20},
  numpages      = {0},
  publisher     = {American Physical Society},
  url           = {https://link.aps.org/doi/10.1103/PhysRevLett.69.2881},
}

@article{Holevo1973BoundsFT,
  title={Bounds for the quantity of information transmitted by a quantum communication channel},
  author={Alexander S. Holevo},
  year={1973},
  journal = {Problemy Peredachi Informatsii},
  volume = {9},
  url={https://api.semanticscholar.org/CorpusID:118312737}
}

@Article{Frenkel2015_classical_information_n-level_quantum_system,
  author        = {P{\'{e}}ter E. Frenkel and Mih{\'{a}}ly Weiner},
  title         = {Classical Information Storage in an n-Level Quantum System},
  journal       = {Communications in Mathematical Physics},
  year          = {2015},
  volume        = {340},
  number        = {2},
  pages         = {563--574},
  month         = sep,
  archiveprefix = {arXiv},
  eprint		= {1304.5723},
  primaryclass	= {quant-ph},
  doi           = {10.1007/s00220-015-2463-0},
  publisher     = {Springer Science and Business Media {LLC}},
  url           = {https://doi.org/10.1007/s00220-015-2463-0},
}

@article{dallarno2017_no_hypersignaling,
  title = {No-Hypersignaling Principle},
  author = {Dall'Arno, Michele and Brandsen, Sarah and Tosini, Alessandro and Buscemi, Francesco and Vedral, Vlatko},
  journal = {Phys. Rev. Lett.},
  volume = {119},
  issue = {2},
  pages = {020401},
  numpages = {7},
  year = {2017},
  month = {Jul},
  publisher = {American Physical Society},
  doi = {10.1103/PhysRevLett.119.020401},
  url = {https://link.aps.org/doi/10.1103/PhysRevLett.119.020401},
  archiveprefix = {arXiv},
  primaryclass = {quant-ph},
  eprint = {1609.09237}
}

@article{doolittle_2021_certifying_classical_simulation_cost,
  title = {Certifying the classical simulation cost of a quantum channel},
  author = {Doolittle, Brian and Chitambar, Eric},
  journal = {Phys. Rev. Res.},
  volume = {3},
  issue = {4},
  pages = {043073},
  numpages = {25},
  year = {2021},
  month = {Oct},
  publisher = {American Physical Society},
  doi = {10.1103/PhysRevResearch.3.043073},
  url = {https://link.aps.org/doi/10.1103/PhysRevResearch.3.043073}
}

@ARTICLE{chitambar2023_communication_value,
  author={Chitambar, Eric and George, Ian and Doolittle, Brian and Junge, Marius},
  journal={IEEE Transactions on Information Theory}, 
  title={The Communication Value of a Quantum Channel}, 
  year={2023},
  volume={69},
  number={3},
  pages={1660-1679},
  doi={10.1109/TIT.2022.3218540},
  eprint = {2109.11144},
  archiveprefix = {arXiv},
  primaryclass = {quant-ph},
}

@article{Polozova2016_highdim_bell_inequalities,
  title = {Higher-dimensional Bell inequalities with noisy qudits},
  volume = {93},
  ISSN = {2469-9934},
  url = {http://dx.doi.org/10.1103/PhysRevA.93.032130},
  DOI = {10.1103/physreva.93.032130},
  number = {3},
  journal = {Physical Review A},
  publisher = {American Physical Society (APS)},
  author = {Polozova,  Elena and Strauch,  Frederick W.},
  year = {2016},
  month = mar 
}

@article{popescu1994quantum_pr_box,
  title={Quantum nonlocality as an axiom},
  author={Popescu, Sandu and Rohrlich, Daniel},
  journal={Foundations of Physics},
  volume={24},
  number={3},
  pages={379--385},
  year={1994},
  publisher={Springer}
}

@article{Leditzky2020_mac_games_ea_cmac,
  title = {Playing games with multiple access channels},
  volume = {11},
  ISSN = {2041-1723},
  url = {http://dx.doi.org/10.1038/s41467-020-15240-w},
  DOI = {10.1038/s41467-020-15240-w},
  number = {1},
  journal = {Nature Communications},
  publisher = {Springer Science and Business Media LLC},
  author = {Leditzky,  Felix and Alhejji,  Mohammad A. and Levin,  Joshua and Smith,  Graeme},
  year = {2020},
  month = mar 
}

@article{Zhang2022_single_particle_mac,
  title = {Building Multiple Access Channels with a Single Particle},
  volume = {6},
  ISSN = {2521-327X},
  url = {http://dx.doi.org/10.22331/q-2022-02-16-653},
  DOI = {10.22331/q-2022-02-16-653},
  journal = {Quantum},
  publisher = {Verein zur Forderung des Open Access Publizierens in den Quantenwissenschaften},
  author = {Zhang,  Yujie and Chen,  Xinan and Chitambar,  Eric},
  year = {2022},
  month = feb,
  pages = {653},
  eprint = {2006.12475},
  primaryClass = {quant-ph},
  archivePrefix = {arXiv},
}

@inbook{Cleve1999_ea_inner_product,
  title = {Quantum Entanglement and the Communication Complexity of the Inner Product Function},
  ISBN = {9783540492085},
  ISSN = {0302-9743},
  url = {http://dx.doi.org/10.1007/3-540-49208-9_4},
  DOI = {10.1007/3-540-49208-9_4},
  booktitle = {Quantum Computing and Quantum Communications},
  publisher = {Springer Berlin Heidelberg},
  author = {Cleve,  Richard and van Dam,  Wim and Nielsen,  Michael and Tapp,  Alain},
  year = {1999},
  pages = {61–74}
}

@Article{Bowles2015_nonclassicality_communication_networks,
  author        = {Bowles, Joseph and Brunner, Nicolas and Paw\l{}owski, Marcin},
  title         = {Testing dimension and nonclassicality in communication networks},
  journal       = {Phys. Rev. A},
  year          = {2015},
  volume        = {92},
  pages         = {022351},
  month         = {Aug},
  archiveprefix = {arXiv},
  eprint		= {1505.01736},
  primaryclass	= {quant-ph},
  doi           = {10.1103/PhysRevA.92.022351},
  issue         = {2},
  numpages      = {10},
  publisher     = {American Physical Society},
  url           = {https://link.aps.org/doi/10.1103/PhysRevA.92.022351},
}

@book{Wilde-Book,
  title={Quantum information theory},
  author={Wilde, Mark M},
  year={2013},
  DOI = {10.1017/CBO9781139525343},
  publisher={Cambridge University Press}
}

@book{Tomamichel-Book,
  title={Quantum information processing with finite resources: mathematical foundations},
  author={Tomamichel, Marco},
  volume={5},
  year={2015},
  publisher={Springer},
  doi = {10.1007/978-3-319-21891-5},
  note = {All references are to version 5 on the arXiv: https://arxiv.org/abs/1504.00233v5}
}

@phdthesis{tomamichel-thesis,
      title={A Framework for Non-Asymptotic Quantum Information Theory}, 
      author={Marco Tomamichel},
      year={2013},
      eprint={1203.2142},
      archivePrefix={arXiv},
      primaryClass={quant-ph},
      school = {ETH Zurich},
      url={https://arxiv.org/abs/1203.2142}, 
}

@article{Tomamichel-2017a,
  title={A largely self-contained and complete security proof for quantum key distribution},
  author={Tomamichel, Marco and Leverrier, Anthony},
  journal={Quantum},
  volume={1},
  pages={14},
  year={2017},
  publisher={Verein zur F{\"o}rderung des Open Access Publizierens in den Quantenwissenschaften}
}

@article{slepian1973noiseless,
  title={Noiseless coding of correlated information sources},
  author={Slepian, David and Wolf, Jack},
  journal={IEEE Transactions on information Theory},
  volume={19},
  number={4},
  pages={471--480},
  year={1973},
  publisher={IEEE}
}

@article{george2022finite,
  title={Finite-Key Analysis of Quantum Key Distribution with Characterized Devices Using Entropy Accumulation},
   volume={9},
   ISSN={2521-327X},
   url={http://dx.doi.org/10.22331/q-2025-12-12-1941},
   DOI={10.22331/q-2025-12-12-1941},
   journal={Quantum},
   publisher={Verein zur Forderung des Open Access Publizierens in den Quantenwissenschaften},
   author={George, Ian and Lin, Jie and van Himbeeck, Thomas and Fang, Kun and Lütkenhaus, Norbert},
   year={2025},
   month=Dec, pages={1941} }

@BOOK{Horst1994,
  title     = "Handbook of global optimization",
  editor    = "Horst, Reiner and Pardalos, Panos M",
  publisher = "Springer",
  series    = "Nonconvex Optimization and Its Applications",
  edition   =  1995,
  month     =  nov,
  year      =  1994,
  address   = "Dordrecht, Netherlands",
  language  = "en"
}

@article{Marwah_2022,
   title={Uniform continuity bound for sandwiched Rényi conditional entropy},
   volume={63},
   ISSN={1089-7658},
   url={http://dx.doi.org/10.1063/5.0088507},
   DOI={10.1063/5.0088507},
   number={5},
   journal={Journal of Mathematical Physics},
   publisher={AIP Publishing},
   author={Marwah, Ashutosh and Dupuis, Frédéric},
   year={2022},
   month=May }

@article{Bluhm_2024,
   title={Unified Framework for Continuity of Sandwiched Rényi Divergences},
   volume={27},
   ISSN={1424-0661},
   url={http://dx.doi.org/10.1007/s00023-024-01519-x},
   DOI={10.1007/s00023-024-01519-x},
   number={1},
   journal={Annales Henri Poincaré},
   publisher={Springer Science and Business Media LLC},
   author={Bluhm, Andreas and Capel, Ángela and Gondolf, Paul and Möbus, Tim},
   year={2024},
   month=Dec, pages={1–50} }

@misc{khatri-book,
      title={Principles of Quantum Communication Theory: A Modern Approach}, 
      author={Sumeet Khatri and Mark M. Wilde},
      year={2024},
      eprint={2011.04672},
      archivePrefix={arXiv},
      primaryClass={quant-ph},
      url={https://arxiv.org/abs/2011.04672}, 
}

@inproceedings{klappenecker2003unitary,
  title={Unitary error bases: Constructions, equivalence, and applications},
  author={Klappenecker, Andreas and R{\"o}tteler, Martin},
  booktitle={International Symposium on Applied Algebra, Algebraic Algorithms, and Error-Correcting Codes},
  pages={139--149},
  year={2003},
  organization={Springer}
}

@book{dinitz2007handbook,
  title={Handbook of combinatorial designs},
  author={Dinitz, Jeffrey H},
  year={2007},
  publisher={Chapman \& Hall/CRC}
}

@article{werner2001-dense-coding,
  title={All teleportation and dense coding schemes},
  author={Werner, Reinhard F},
  journal={Journal of Physics A: Mathematical and General},
  volume={34},
  number={35},
  pages={7081--7094},
  year={2001}
}

@article{knill1996group,
  title={Group representations, error bases and quantum codes},
  author={Knill, Emanuel},
  journal={arXiv preprint quant-ph/9608049},
  year={1996}
}

@article{klappenecker2002beyond,
  title={Beyond stabilizer codes. I. Nice error bases},
  author={Klappenecker, Andreas A and Rotteler, M},
  journal={IEEE Transactions on Information Theory},
  volume={48},
  number={8},
  pages={2392--2395},
  year={2002},
  publisher={IEEE}
}

@article{eggeling2002semicausal,
  title={Semicausal operations are semilocalizable},
  author={Eggeling, Tilo and Schlingemann, Dirk and Werner, Reinhard F},
  journal={EPL (Europhysics Letters)},
  volume={57},
  number={6},
  pages={782--788},
  year={2002}
}

@article{on-quantum-ns-boxes,
  title = {Properties of quantum nonsignaling boxes},
  author = {Piani, M. and Horodecki, M. and Horodecki, P. and Horodecki, R.},
  journal = {Phys. Rev. A},
  volume = {74},
  issue = {1},
  pages = {012305},
  numpages = {13},
  year = {2006},
  month = {Jul},
  publisher = {American Physical Society},
  doi = {10.1103/PhysRevA.74.012305},
  url = {https://link.aps.org/doi/10.1103/PhysRevA.74.012305}
}

@article{chiribella2009theoretical,
  title={Theoretical framework for quantum networks},
  author={Chiribella, Giulio and D’Ariano, Giacomo Mauro and Perinotti, Paolo},
  journal={Physical Review A—Atomic, Molecular, and Optical Physics},
  volume={80},
  number={2},
  pages={022339},
  year={2009},
  publisher={APS}
}

@article{konig2009operational,
  title={The operational meaning of min-and max-entropy},
  author={Konig, Robert and Renner, Renato and Schaffner, Christian},
  journal={IEEE Transactions on Information theory},
  volume={55},
  number={9},
  pages={4337--4347},
  year={2009},
  publisher={IEEE}
}

@article{Portmann_2022,
   title={Security in quantum cryptography},
   volume={94},
   ISSN={1539-0756},
   url={http://dx.doi.org/10.1103/RevModPhys.94.025008},
   DOI={10.1103/revmodphys.94.025008},
   number={2},
   journal={Reviews of Modern Physics},
   publisher={American Physical Society (APS)},
   author={Portmann, Christopher and Renner, Renato},
   year={2022},
   month={June} }

@book{WatrousBook,
  title={The Theory of Quantum Information},
  author={Watrous, John},
  year={2018},
  publisher={Cambridge University Press},
  eprint = {https://cs.uwaterloo.ca/~watrous/TQI/},
  doi = {10.1017/9781316848142}
}

@article{Bennett_2014,
   title={Quantum cryptography: Public key distribution and coin tossing},
   volume={560},
   ISSN={0304-3975},
   url={http://dx.doi.org/10.1016/j.tcs.2014.05.025},
   DOI={10.1016/j.tcs.2014.05.025},
   journal={Theoretical Computer Science},
   publisher={Elsevier BV},
   author={Bennett, Charles H. and Brassard, Gilles},
   year={2014},
   month=Dec, pages={7–11} }

@article{Tomamichel-2009a,
  title={A fully quantum asymptotic equipartition property},
  author={Tomamichel, Marco and Colbeck, Roger and Renner, Renato},
  journal={IEEE Transactions on information theory},
  volume={55},
  number={12},
  pages={5840--5847},
  year={2009},
  publisher={IEEE}
}

@article{devetak2005distillation,
  title={Distillation of secret key and entanglement from quantum states},
  author={Devetak, Igor and Winter, Andreas},
  journal={Proceedings of the Royal Society A: Mathematical, Physical and engineering sciences},
  volume={461},
  number={2053},
  pages={207--235},
  year={2005},
  publisher={The Royal Society}
}

@article{ekert1991quantum,
  title={Quantum cryptography based on Bell’s theorem},
  author={Ekert, Artur K},
  journal={Physical review letters},
  volume={67},
  number={6},
  pages={661},
  year={1991},
  publisher={APS}
}

@article{lo2012measurement,
  title={Measurement-device-independent quantum key distribution},
  author={Lo, Hoi-Kwong and Curty, Marcos and Qi, Bing},
  journal={Physical review letters},
  volume={108},
  number={13},
  pages={130503},
  year={2012},
  publisher={APS}
}

@article{Vazirani_2014,
   title={Fully Device-Independent Quantum Key Distribution},
   volume={113},
   ISSN={1079-7114},
   url={http://dx.doi.org/10.1103/PhysRevLett.113.140501},
   DOI={10.1103/physrevlett.113.140501},
   number={14},
   journal={Physical Review Letters},
   publisher={American Physical Society (APS)},
   author={Vazirani, Umesh and Vidick, Thomas},
   year={2014},
   month={Sept} }

@ARTICLE{Bohr1928-nf,
  title     = "The quantum postulate and the recent development of atomic
               Theory1",
  author    = "Bohr, Niels",
  journal   = "Nature",
  publisher = "Springer Science and Business Media LLC",
  volume    =  121,
  number    =  3050,
  pages     = "580--590",
  month     =  apr,
  year      =  1928,
  language  = "en"
}

@article{Bennett_92,
  title = {Quantum cryptography using any two nonorthogonal states},
  author = {Bennett, Charles H.},
  journal = {Phys. Rev. Lett.},
  volume = {68},
  issue = {21},
  pages = {3121--3124},
  numpages = {0},
  year = {1992},
  month = {May},
  publisher = {American Physical Society},
  doi = {10.1103/PhysRevLett.68.3121},
  url = {https://link.aps.org/doi/10.1103/PhysRevLett.68.3121}
}

@inproceedings{schrodinger1935discussion,
  title={Discussion of probability relations between separated systems},
  author={Schr{\"o}dinger, Erwin},
  booktitle={Mathematical proceedings of the cambridge philosophical society},
  volume={31},
  pages={555--563},
  year={1935},
  organization={Cambridge University Press}
}

@article{bell1964einstein,
  title={On the einstein podolsky rosen paradox},
  author={Bell, John S},
  journal={Physics Physique Fizika},
  volume={1},
  number={3},
  pages={195},
  year={1964},
  publisher={APS}
}

@article{Bae_2015,
   title={Quantum state discrimination and its applications},
   volume={48},
   ISSN={1751-8121},
   url={http://dx.doi.org/10.1088/1751-8113/48/8/083001},
   DOI={10.1088/1751-8113/48/8/083001},
   number={8},
   journal={Journal of Physics A: Mathematical and Theoretical},
   publisher={IOP Publishing},
   author={Bae, Joonwoo and Kwek, Leong-Chuan},
   year={2015},
   month=Jan, pages={083001} }

@article{Kitaev_1997,
doi = {10.1070/RM1997v052n06ABEH002155},
url = {https://doi.org/10.1070/RM1997v052n06ABEH002155},
year = {1997},
month = {dec},
publisher = {},
volume = {52},
number = {6},
pages = {1191},
author = {A Yu Kitaev},
title = {Quantum computations: algorithms and error correction},
journal = {Russian Mathematical Surveys}
}

@ARTICLE{Devetak-2008a,
  author={Devetak, Igor and Harrow, Aram W. and Winter, Andreas J.},
  journal={IEEE Transactions on Information Theory}, 
  title={A Resource Framework for Quantum Shannon Theory}, 
  year={2008},
  volume={54},
  number={10},
  pages={4587-4618},
  doi={10.1109/TIT.2008.928980}}

@misc{tan2024prospectsdeviceindependentquantumkey,
      title={Prospects for device-independent quantum key distribution}, 
      author={Ernest Y. -Z. Tan},
      year={2024},
      eprint={2111.11769},
      archivePrefix={arXiv},
      primaryClass={quant-ph},
      url={https://arxiv.org/abs/2111.11769}, 
}

@ARTICLE{EA-capacity,
  author={Bennett, C.H. and Shor, P.W. and Smolin, J.A. and Thapliyal, A.V.},
  journal={IEEE Transactions on Information Theory}, 
  title={Entanglement-assisted capacity of a quantum channel and the reverse Shannon theorem}, 
  year={2002},
  volume={48},
  number={10},
  pages={2637-2655},
  doi={10.1109/TIT.2002.802612}}

@ARTICLE{Cubitt2011_NS-capacity,
  author={Cubitt, Toby S. and Leung, Debbie and Matthews, William and Winter, Andreas},
  journal={IEEE Transactions on Information Theory}, 
  title={Zero-Error Channel Capacity and Simulation Assisted by Non-Local Correlations}, 
  year={2011},
  volume={57},
  number={8},
  pages={5509-5523},
  doi={10.1109/TIT.2011.2159047}}

@inproceedings{yao1979_communication_complexity,
    author = {Yao, Andrew Chi-Chih},
    title = {Some complexity questions related to distributive computing(Preliminary Report)},
    year = {1979},
    isbn = {9781450374385},
    publisher = {Association for Computing Machinery},
    address = {New York, NY, USA},
    url = {https://doi.org/10.1145/800135.804414},
    doi = {10.1145/800135.804414},
    booktitle = {Proceedings of the Eleventh Annual ACM Symposium on Theory of Computing},
    pages = {209–213},
    numpages = {5},
    location = {Atlanta, Georgia, USA},
    series = {STOC '79}
}

@article{buhrman2010_nonlocality_communication_complexity,
  title = {Nonlocality and communication complexity},
  author = {Buhrman, Harry and Cleve, Richard and Massar, Serge and de Wolf, Ronald},
  journal = {Rev. Mod. Phys.},
  volume = {82},
  issue = {1},
  pages = {665--698},
  numpages = {0},
  year = {2010},
  month = {Mar},
  publisher = {American Physical Society},
  doi = {10.1103/RevModPhys.82.665},
  url = {https://link.aps.org/doi/10.1103/RevModPhys.82.665}
}

@article{buhrman2001quantum,
  title={Quantum fingerprinting},
  author={Buhrman, Harry and Cleve, Richard and Watrous, John and De Wolf, Ronald},
  journal={Physical review letters},
  volume={87},
  number={16},
  pages={167902},
  year={2001},
  publisher={APS}
}

@misc{yacoub2026experimentaldemonstrationquantumadvantage,
      title={Experimental demonstration of quantum advantage in communication complexity for Euclidean distance problem}, 
      author={Verena Yacoub and Niraj Kumar and Iordanis Kerenidis and Eleni Diamanti},
      year={2026},
      eprint={2605.31516},
      archivePrefix={arXiv},
      primaryClass={quant-ph},
      url={https://arxiv.org/abs/2605.31516}, 
}

@ARTICLE{ahlswede_2000_network_coding,
  author={Ahlswede, R. and Ning Cai and Li, S.-Y.R. and Yeung, R.W.},
  journal={IEEE Transactions on Information Theory}, 
  title={Network information flow}, 
  year={2000},
  volume={46},
  number={4},
  pages={1204-1216},
  doi={10.1109/18.850663}}

@book{ho2008_network_coding,
  title={Network coding: an introduction},
  author={Ho, Tracey and Lun, Desmond},
  year={2008},
  publisher={Cambridge University Press}
}

@ARTICLE{Song-QPIR-2021a,
  author={Song, Seunghoan and Hayashi, Masahito},
  journal={IEEE Transactions on Information Theory}, 
  title={Capacity of Quantum Private Information Retrieval With Multiple Servers}, 
  year={2021},
  volume={67},
  number={1},
  pages={452-463},
  doi={10.1109/TIT.2020.3022515}}

@INPROCEEDINGS{Allaix-2023a,
  author={Allaix, Matteo and Lu, Yuxiang and Yao, Yuhang and Pllaha, Tefjol and Hollanti, Camilla and Jafar, Syed},
  booktitle={GLOBECOM 2023 - 2023 IEEE Global Communications Conference}, 
  title={N-Sum Box: An Abstraction for Linear Computation over Many-to-one Quantum Networks}, 
  year={2023},
  volume={},
  number={},
  pages={5457-5462},
  doi={10.1109/GLOBECOM54140.2023.10437170}}

@ARTICLE{Yao-2024a,
  author={Yao, Yuhang and Jafar, Syed A.},
  journal={IEEE Transactions on Information Theory}, 
  title={The Capacity of Classical Summation Over a Quantum MAC With Arbitrarily Distributed Inputs and Entanglements}, 
  year={2024},
  volume={70},
  number={9},
  pages={6350-6370},
  doi={10.1109/TIT.2024.3397917}}

@ARTICLE{Yao-2025a,
  author={Yao, Yuhang and Jafar, Syed A.},
  journal={IEEE Transactions on Quantum Engineering}, 
  title={On the Capacity of Vector Linear Computation Over a Noiseless Quantum Multiple-Access Channel With Entangled Transmitters}, 
  year={2025},
  volume={6},
  number={},
  pages={1-20},
  doi={10.1109/TQE.2025.3620628}}

@inproceedings{Meng-2026a,
    author = {Meng, Ruoyu and Ramamoorthy, Aditya},
    title = {Precoding based protocols for entanglement assisted liner computation over a quantum MAC},
    booktitle = {ISIT 2026},
    year = {2026}, 
    note ={To appear}
}

\paragraph*{Acknowledgments} The authors thank Haneul Kim, Felix Leditzky, and Eric Chitambar for insightful discussions. This project is supported by Aliro Technologies, Inc. This project is supported by the Ministry of Education, Singapore, through grant T2EP20124-0005. This project is supported by the National Research Foundation, Singapore under the NRF Postdoctoral award.

\paragraph*{Author Contributions} IG and BD both contributed extensively to the paper.

\paragraph*{Author Information} The authors declare no competing interests. Correspondence may be addressed to qit.george@gmail.com or bdoolittle@aliroquantum.com.

\appendix

\section{Preliminaries}\label{app:preliminaries}
\paragraph{Basic Notation and Terminology} For basic notation we follow standard (quantum) information theory texts, e.g.~\cite{Wilde-Book,Tomamichel-Book} and refer to such for details beyond what we provide here. To index over the numbers, we will define the notation $[n] = \{0,1,...,n-1\}$.  We use capital Roman calligraphic letters to represent finite alphabets $\cW,\cX,\cY,\cZ$. We will consider multi-variable functions $f:\cW^{k} \to \cZ$ where $\cW^{k} \coloneq \bigtimes_{i \in [k]} \cW_{i}$. We recall the pre-image of a function $f:\cW^{k} \to \cZ$, $f^{-1}(z) \coloneq \{w^k \in \cW^{k} : f(w^{k}) = z\}$. When there are only two input variables and it improves presentation, we use $\cX$ and $\cY$ rather than $\cW_{1}$ and $\cW_{2}$. As an example of notation for probability distributions, $P_{X Y \vert Z}$ denotes a conditional distribution over $\cX \times \cY$ conditioned on $\cZ$ and $P_{XY \vert Z}(x,y \vert z)$ denotes the joint probability $(X = x, Y=y)$ conditioned on $Z=z$ according to said conditional distribution. Note that classical channels are identified with conditional probability distributions.

\paragraph{Quantum States} We denote finite-dimensional Hilbert spaces with capital letters e.g.~$A,B$, and $E$. The dimension of a Hilbert space is expressed as $\vert A \vert$. We denote the set of endomorphisms on $A$ by $\Lin(A)$ and the set of unitaries on $A$ by $\Unitary(A)$. The L\"{o}wner order on linear operators is denoted $X \geq Y$ if and only if $X-Y$ is positive semidefinite. We denote the set of positive semidefinite operators on $A$ by $\Pos(A)$. We denote the identity linear operator on $A$ by $I_{A}$. Quantum states (density matrices) are trace one positive semidefinite operators and we denote the set of density matrices on Hilbert space $A \otimes B$ by $\Density(A \otimes B)$. These represent quantum systems. We often will specify quantum states with subscripts denoting the spaces they act on, e.g. $\rho_{AB} \in \Density(A \otimes B)$. Note that a probability distribution on a finite alphabet $\cX$, $p_{X}$, can be represented as a quantum state on $\mbb{C}^{\vert \cX \vert} \eqqcolon X$, e.g. $\rho_{X} = \sum_{x} p_{X}(x) \dyad{x}$. This allows us to specify classical-quantum (CQ) states on $X$ and $A$ respectively via 
\begin{align}
    \rho_{XA} = \sum_{x \in \cX} p_{X}(x)\dyad{x} \otimes \rho^{x}_{A} \ ,
\end{align}
where $p_{X}$ is a probability distribution and $\{\rho^{x}_{A}\}_{x \in \cX} \subset \Density(A)$. Such states will be central to this work. We also let $\Density_{\leq}(A \otimes B)$ denote `sub-normalized' density matrices, which are positive semidefinite operators with trace at most one.

We denote the maximally mixed state on $A$ by $\pi_{A} = I_{A}/\vert A \vert$.  We denote the perfectly correlated state on $\cX \otimes \cX'$ distributed according to probability distribution $p_{X}$ as $\chi^{\vert p}_{XX'} = \sum_{x} p_{X}(x)\dyad{x}_{X} \otimes \dyad{x}_{X'}$. When the distribution is uniform, we simplify the notation to $\chi_{XX'}$. We define the maximally entangled state on $A \otimes A'$ where $A' \cong A$ by $\ket{\Phi}_{AA'} \coloneq \frac{1}{\sqrt{\vert A \vert}} \sum_{i \in [\vert A \vert ]} \ket{i}_{A}\ket{i}_{A'}$. We denote the isotropic state on $A \otimes A'$ of parameter $\lambda \in [0,1]$ by 
\begin{align}\label{eq:isotropic-state-defn}
    \rho_{\lambda} \coloneq (1-\lambda)\dyad{\Phi^{+}} + \lambda \pi_{AA'} \ . 
\end{align}

\paragraph{Quantum Channels} Quantum channels are completely positive, trace-preserving (CPTP) maps from the endomorphisms on one Hilbert space to another and represent dynamics on quantum systems. We denote the set of CPTP maps from endomorphisms on $A$ to endomorphisms on $B$ by $\Channel(A,B)$. We often add subscripts to make clear what the space the linear operators that the channels acts on are, e.g. $\cE_{A \to B} \in \Channel(A,B)$. As classical channels are a subset of quantum channels when one represents probability distributions as quantum states, we let $P_{Y \vert X}$ represent both the joint distribution $P_{Y \vert X}$ and the quantum channel representation $P_{X \to Y}$. Particularly important channels for this work are:
\begin{enumerate}
    \item the noiseless \textit{quantum} channel on a $d$-dimensional quantum system, which we denote $\id_{d} \coloneq \id_{\mbb{C}^{d} \to \mbb{C}^{d}}$, and 
    \item the noiseless \textit{classical} channel on a $d$-dimensional quantum system, which we denote $\Delta_{d}$.
\end{enumerate} 
In the above, $\Delta_{d}$ is really the $d$-dimensional completely dephasing channel with respect to some implicit computational basis. This is the appropriate choice for classical communication as it will transmit all classical information in that basis noiselessly but dephase any quantum state into that classical basis.

Many of the channels we will consider will have at least two input spaces or output spaces, which is straightforward to specify using the subscripts, e.g. $\cE_{AB \to CD}$ maps linear operators on $A \otimes B$ to $C \otimes D$. Such channels have further properties. For example, we will need the following notions of non-signaling for multipartite quantum channels.
\begin{definition}\cite{eggeling2002semicausal,on-quantum-ns-boxes}
    Let $\cE_{AB \to A'B'}$ be a (bipartite) quantum channel.
    \begin{enumerate}[itemsep=0pt]
        \item $\cE$ is $A \to B$ non-signaling, denoted $A \not \to B$, if there exists a channel $\cG_{A \to A'}$ such that $\Tr_{B'}[\Gamma^{\cE}] = \Gamma^{\cG} \otimes I_{B}$ where $\Gamma^{\cE}$ denotes the Choi operator $\cE$ and similarly for $\Gamma^{\cG}$.
        \item $\cE$ is non-signaling if it is both $A \not \to B$ and $B \not \to A$.
        \item $\cE$ is $A \to B$ semilocalizable if there exist channels $\cF^{0}_{A \to A'M}$, $\cF^{1}_{MB \to B'}$ such that 
        $$\cE = (\id_{A'} \otimes \cF^{1}) \circ (\cF^{0} \otimes \id_{B}) \ . $$
    \end{enumerate}
\end{definition}

\begin{proposition}\cite{eggeling2002semicausal} \cite[Exercise 2.5]{WatrousBook} \label{prop:NS-equiv-semilocalizable}
    A quantum channel $\cE_{AB \to A'B'}$ is $A \not \to B$ if and only if it is $B \to A$ semilocalizable. Moreover the channels $\cF^{0}_{B \to B'M}$ and $\cF^{1}_{AM \to A'}$ are such that $\vert M \vert \leq \vert B \vert \cdot \vert B' \vert$.
\end{proposition}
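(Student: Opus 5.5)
The statement has an easy direction and a substantive one. For $B \to A$ semilocalizable $\Rightarrow$ $A \not\to B$, I would just compute. If $\cE = (\id_{B'} \otimes \cF^{1}_{AM\to A'}) \circ (\cF^{0}_{B \to B'M} \otimes \id_{A})$, then tracing out $A'$ and using that $\cF^{1}$ is trace preserving gives
\begin{align}
\Tr_{A'} \circ \cE(\rho_{AB}) = \Tr_{M}\circ\cF^{0}(\Tr_{A}\rho_{AB}) \ .
\end{align}
Thus the marginal channel to $B'$ is $\cG \otimes \Tr_{A}$ with $\cG \coloneq \Tr_{M}\circ \cF^{0}$. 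Translating this to Choi operators gives exactly the non-signaling condition $\Tr_{A'}[\Gamma^{\cE}] = \Gamma^{\cG} \otimes I_{A}$ (with the roles of the labels as in the definition).

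For the converse, I would use the Stinespring dilation and the uniqueness of dilations up to an isometry or channel on the environment. Assume $\cE$ is $A \not\to B$. Then there is a channel $\cG_{B\to B'}$ with $\Tr_{A'}\circ\cE = \cG \otimes \Tr_{A}$. The construction proceeds in four steps.
\begin{enumerate}
\item Take a \emph{minimal} Stinespring isometry $V: B \to B' \otimes E$ for $\cG$. Its environment dimension equals the Kraus rank of $\cG$, so $|E| \le |B|\,|B'|$.
\item Take any Stinespring isometry $W: A\otimes B \to A' \otimes B' \otimes F$ for $\cE$.
\item Observe that the channel $\rho_{AB} \mapsto \Tr_{A'F}[W\rho W^\dagger]$ equals $\cG\otimes\Tr_A$.
\item Observe that $I_A \otimes V$, viewed as an isometry $AB \to B' \otimes (EA)$ with environment $EA$, is also a dilation of $\cG\otimes \Tr_A$.
\end{enumerate}
The key fact I will use is that if two isometries dilate the same channel and one of them is minimal, the other is obtained from it by an isometry on the environment. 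More generally, a (possibly non-minimal) dilation is related to any other dilation by a channel acting on the environment alone. Applying this here gives a channel $\cU_{EA \to A'F}$ with
\begin{align}
W(\cdot)W^\dagger = (\id_{B'}\otimes \cU_{EA\to A'F})\big((I_A\otimes V)(\cdot)(I_A\otimes V)^\dagger\big) \ .
\end{align}
I would then set $M \coloneq E$, $\cF^{0}(\cdot) \coloneq V(\cdot)V^\dagger$, and $\cF^{1}_{AM\to A'} \coloneq \Tr_F\circ\,\cU$. Tracing out $F$ on both sides of the displayed identity yields $\cE = (\id_{B'}\otimes \cF^{1})\circ(\cF^{0}\otimes \id_A)$. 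The dimension claim $|M| = |E| \le |B|\,|B'|$ follows from the minimality of $V$ in step 1.

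I expect the main obstacle to be justifying the environment-channel relation between the two dilations in the form needed. The direct route is to argue that the minimal dilation of $\cG\otimes\Tr_A$ is $I_A\otimes V$ with environment $EA$: the Kraus rank of $\Tr_A$ is $|A|$, and Kraus ranks multiply under tensor products. Given that, the standard uniqueness theorem supplies a partial isometry $EA \to A'F$ intertwining the two dilations. If this bookkeeping becomes delicate, I would instead pass to the purification picture. There I would apply Uhlmann's theorem to the Choi states of $\cG\otimes\Tr_A$, which have two purifications on reference $RAB$: one via $W$, and one via $I_A\otimes V$ together with a maximally entangled copy of $A$. Those two purifications are then related by an isometry acting only on the purifying systems.
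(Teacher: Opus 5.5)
Your proof is correct and follows the standard Eggeling--Schlingemann--Werner argument (also the intended solution of Watrous, Exercise~2.5), which the paper cites rather than reproves. The steps match: a minimal Stinespring dilation $V$ of the marginal channel $\cG_{B\to B'}$, minimality of $I_A\otimes V$ as a dilation of $\cG\otimes\Tr_A$, uniqueness of dilations to obtain an isometry $EA\to A'F$, and $|M|$ equal to the Kraus rank of $\cG$. You also read $A\not\to B$ correctly as $\Tr_{A'}\circ\cE=\cG_{B\to B'}\otimes\Tr_A$, which is what the proposition and its later use in Proposition~\ref{prop:copying-localizable} require, even though the paper's written Choi condition in the definition has the labels swapped.
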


\paragraph{Multiple-Access Channels}
A multiple-access channel (MAC) is a classical channel with multiple input alphabets and a single output alphabet. A canonical physical example is a cell tower which receives messages from many cell phones. Mathematically, a MAC is represented by a conditional distribution over a joint input alphabet and a single output alphabet. For example, a MAC may take inputs $(x,y) \in \cX \times \cY$ and output a value $z \in \cZ$. Such an example is thus represented by a conditional distribution $P_{Z \vert XY}$. The above example straightforwardly generalizes to $k$ senders where the $i^{th}$ input is an alphabet $\cW_{i}$ and the total MAC is represented by a conditional distribution $P_{Z \vert W^{k}}$ where $\cW^{k} \coloneq \bigtimes_{i} \cW_{i}$ (note it is allowed that $\vert\cW_{i}\vert \neq \vert\cW_{i'}\vert$ for $i \neq i'$).

\paragraph{Quantum Multiaccess Networks and Signaling Dimension} For our purpose of thinking about (quantum) communication, we take a network as a set of nodes that may be identified as users. These nodes are then connected through various resources such as shared entanglement, classical or quantum communication channels, non-signaling boxes, and local processing which we will define as needed subsequently. The specification of these connections and resources, we call a `configuration.' In principle, certain nodes may take inputs which may be classical or quantum. Rather than formally define such a generic network and add too much notation, we simply define the specific network configurations relevant to this work as we go along (see e.g.~\cite{chiribella2009theoretical} for a development of generic quantum networks). 

For any $k \in \mbb{N}$, a $k$-sender multiple-access network (MN) is any configuration of $k+1$ nodes and resources such that we identify the first $k$ nodes as senders and the $k+1^{th}$ node. Such a MN can be used to simulate a MAC if sender $i$ takes an input from $\cW_{i}$ and the receiver ultimately outputs a value in $\cZ$ so that the entire MN defines a MAC from $\cW^{k}$ to $\cZ$. Note this allows us to define parameterized \textit{families} of MNs, which is important as our interest ultimately is in understanding the resources a MN needs to simulate specific MACs. As an example of a parameterized family: if there are two senders and we only specify that sender $1$ (resp.~$2$) is connected to the receiver by a noiseless quantum channel of dimension $d_{1}$ (resp.~$d_{2}$), then we have not specified what the senders do to process their inputs nor what the receiver does to process what it receives to output a value $z \in \cZ$. Thus, this induces a family of quantum MNs that induce a family of MACs. The following defines families of MNs that we will focus on in particular detail as well as the notion of `signaling dimension' it induces.
\begin{definition}[Main Families of MNs]\label{def:main-families-of-MNs}
    Let $k \in \mbb{N}$ and $\vec{d} \in \mbb{N}^{\times k}$.
    \begin{enumerate}
        \item $\mbf{C}(\vec{d})$ denotes the set of MNs where sender $i$ and the receiver are connected by $\Delta_{d_{i}}$, a noiseless classical channel of dimension $d_{i}$, and there are no further shared resources. We call these \textbf{classical MNs (CMNs)}.
        \item $\mbf{C}^{E}(\vec{d})$ denotes the same setting as $\mbf{C}(\vec{d})$ except the senders may share arbitrary entanglement. We call these \textbf{entanglement-assisted (EA) classical MNs (EACMNS)}.
        \item $\mbf{C}^{N}(\vec{d})$ denotes the same setting as $\mbf{C}(\vec{d})$ except the senders may share an arbitrary non-signaling box for pre-processing their inputs. We call these \textbf{fully classical non-signalling MNS (FCNSMNs)}.
        \item $\mbf{Q}(\vec{d})$ denotes the set of MNs where sender $i$ and the receiver are connected by $\id_{d_{i}}$, a noiseless quantum channel of dimension $d_{i}$, and there are no further shared resources. We call these \textbf{quantum MNs (QMNs)}.
        \item $\mbf{Q}^{E}(\vec{d})$ denotes the same setting as $\mbf{Q}(\vec{d})$ except the senders may share arbitrary entanglement. We call these \textbf{entanglement-assisted (EA) QMNs (EAQMNs)}
    \end{enumerate}
\end{definition}
Unlike in the main text, we parameterize the families of MNs above in terms of their signaling dimension $d$ rather than its logarithm, the number of communicated bits or qubits. This is both for notational simplicity and mathematical generality. We refer to an arbitrary family of MNs parameterized by the amount of signaling $\vec{d}$. For general MNs we use the notation $\mbf{S}(\vec{d})$, and if the number of senders is small, we write out the dimensions explicitly, \textit{e.g.} $\mbf{S}(d_1,d_2)$. Using this, we abstractly define the signaling dimension to quantify the amount of communication used by each sender in the MN.
\begin{definition}[Signaling Dimension of a MN]\label{def:signaling-dimension}
    Given a set of MNs parameterized by $\vec{d} \in \mbb{N}^{\times k}$, $\mbf{S}(\vec{d})$, we refer to $\vec{d}$ as the signaling dimension vector where $d_i$ denotes the signaling dimension of the $i^{th}$ sender's communication channel.
\end{definition}
\begin{definition}[Signaling Dimension]\label{def:signaling-dimension-of-pt-to-pt-chan} \cite{dallarno2017_no_hypersignaling}
For a 1-sender MN $\mbf{S}$ with classical input and output,\footnote{A 1-sender MN is equivalent to what is commonly called a point-to-point communication channel.} the signaling dimension is the smallest value $d \in \mbb{N}$ such that $\mbf{S}\subseteq \mbf{C}(d)$ for all input and output alphabet sizes.
\end{definition}
\begin{remark}[Signaling Dimension of Noiseless Quantum and Classical Channels]
\cite{doolittle_2021_certifying_classical_simulation_cost}
Since we consider noiseless communication channels, the signaling dimension of a classical channel is equivalent to size of the coding alphabet, for example, $n$-bit communication channel has signaling dimension $d=2^n$. The signaling dimension of a quantum channel is bound by the Hilbert space dimension of the quantum state coding alphabet, for example an $n$-qubit communication channel has signaling dimension $d=2^n$.
\end{remark}

 Next, we remark that each of these families of MNs are such that the receiver can receive all of the transmissions from the senders before doing any processing. That is, without loss of generality, one can partition the MN into an `encoding and transmission step' in which the senders take in their inputs, do their processing, and forward the resulting state to the receiver, and a `decoding step,' where the receiver applies a map to the total message. Most MN configurations without timing constraints satisfy this.\footnote{A family that does not \textit{directly} satisfy such a structure are a set of MNs where all senders and the receiver share randomness (Definition \ref{def:SR-assisted-MNs}). This is because the shared randomness means the total channel actually decomposes into a convex combination of MNs that decompose in this way. Nonetheless, one would expect most MNs to effectively be of the structure we describe.} The following formalizes the above idea and names such classes of MNs.
\begin{definition}[Simultaneous Message Passing MNs]\label{def:SMP-MN}
    We say a set of MNs $\mbf{S}$ are a set of simultaneous message passing (SMP) MNs if without loss of generality the total action of the MN can be decomposed into an `encoding-and-transmission channel' $\cE^{\net}$, which is the composition of everything the senders do to their inputs as well as the transmission to the receiver, and a `decoding channel' $\cD^{\net}$, which is what the receiver does. More formally, a set of MNs are SMP MNs if we may always express the total channel as $\cD^{\net} \circ \cE^{\net}$.
\end{definition}
\noindent We use the term SMP MNs because it extends the simultaneous message passing model of communication complexity introduced by Yao \cite{yao1979_communication_complexity}, in which the inputs are classical and the output is a Boolean function (see \cite{buhrman2010_nonlocality_communication_complexity} for a discussion of quantum results in this model). An example of a SMP MN is illustrated in Fig.~\ref{fig:SMP-MN}.

\begin{figure}
    \centering
        \begin{tikzpicture}
            \node[terminal] (x) at (-1.75,1) {$\cX \ni x$};
            \node[qsource] (rho) at (0,0) {$\rho_{AB}$};
            \node[terminal] (y) at (-1.75,-1) {$\cY \ni y$};    
            \node[proc_dev] (A) at (1.1,1) {$\cE_{A \to \mbb{C}^{d_{1}}}^{x}$};
            \node[proc_dev] (B) at (1.1,-1) {$\cF_{B \to \mbb{C}^{d_{2}}}$};
            \node[meas_dev, minimum height = 1cm] (R) at (3.5, 0) {$\{\Pi_{z}\}_{z}$};
            \node[terminal] (z) at (4.75, 0) {$z$};
            \node[terminal] (enc) at (1,-1.75) {$\cE^{\net}$};
            \node[terminal] (dec) at (3.5,-1.75) {$\cD^{\net}$};
        
            \path (rho) \qedge (A);
            \path (rho) \qedge (B);
            \path (x) \cedge (A);
            \path (y) \cedge (B);
            \path (A) \qedge node[below,pos=0.3] {$d_{1}$} (R);
            \path (B) \qedge node[above,pos=0.3] {$d_{2}$} (R);
            \path (R) \cedge (z);

            \draw[gray, dashed] (2.5,1.5) -- (2.5,-2);
            \draw[gray, dashed] (-0.8,1.5) -- (-0.8,-2);
        \end{tikzpicture} 
    \caption{An example of a simultaneous message passing multiaccess network (Definition \ref{def:SMP-MN}) with signaling dimension $\vec{d}=(d_{1},d_{2})$ as made explicit by the labeled arrows. In this case, senders $1$ and $2$ share entanglement through the state $\rho_{AB}$. They then process it conditioned on their respective inputs to send over the noiseless quantum channels to the receiver, who then measures the received joint state to output a value of $z$. This MN is simultaneous message passing as the two senders cannot interact and the receiver receives the entire joint state prior to having to process it.
    }
    \label{fig:SMP-MN}
\end{figure}

Finally, we make the following observations on the strength of different MNs for fixed signaling dimension.
\begin{proposition}\label{prop:MN-containments}
    Let $k \in \mbb{N}$ and $\vec{d} \in \mbb{N}^{\times k}$. Then,
    \begin{align}
        \mbf{C}(\vec{d}) \subseteq \mbf{C}^{E}(\vec{d}) \subseteq \mbf{C}^{N}(\vec{d}) \label{eq:classical-set-containments} \\
        \mbf{C}(\vec{d}) \subseteq \mbf{Q}(\vec{d}) \subseteq \mbf{Q}^{E}(\vec{d}) \ . \label{eq:quantum_set_containments}
    \end{align}
\end{proposition}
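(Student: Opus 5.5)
The plan is to prove each containment by exhibiting, for every MN in the smaller family, an MN in the larger family that induces the same MAC. The embeddings are: trivial resources are included in richer ones, and classical communication is a special case of quantum communication. Throughout I fix the input alphabets $\cW_{i}$ and output alphabet $\cZ$ and compare the sets of induced conditional distributions $P_{Z\vert W^{k}}$.

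\textbf{The chain in \eqref{eq:classical-set-containments}.} For $\mbf{C}(\vec{d}) \subseteq \mbf{C}^{E}(\vec{d})$, take any CMN with local encoders $P^{(i)}_{W_i' \vert W_i}$ and decoder $R_{Z\vert W'^{k}}$. Let the senders share a trivial product state, for instance $\dyad{0}^{\otimes k}$, and ignore it. Each sender then encodes classically as before and outputs through $\Delta_{d_i}$, which reproduces the MAC exactly.

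For $\mbf{C}^{E}(\vec{d}) \subseteq \mbf{C}^{N}(\vec{d})$, fix an EACMN with shared state $\rho$ and local instruments. The map from the inputs $w^{k}$ to the classical messages $w'^{k}$ is the correlation
\begin{align}
P(w'^{k}\vert w^{k}) = \Tr\Big[\big(\textstyle\bigotimes_{i} M^{(i),w_i}_{w_i'}\big)\rho\Big],
\end{align}
where $M^{(i),w_i}_{w_i'}$ is the effective POVM of sender $i$ obtained by composing its encoding with $\Delta_{d_i}$. This correlation is non-signaling, because summing over $w_j'$ gives $\sum_{w_j'} M^{(j),w_j}_{w_j'} = I$, which is independent of $w_j$. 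It is also fully classical, with output alphabet of size $d_i$ for sender $i$. So it is a valid non-signaling box, and composing it with the same decoder $R$ gives the same MAC.

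\textbf{The chain in \eqref{eq:quantum_set_containments}.} For $\mbf{C}(\vec{d}) \subseteq \mbf{Q}(\vec{d})$, each sender prepares the diagonal state $\sum_{w'} P^{(i)}(w'\vert w_i)\dyad{w'}$ in $\mbb{C}^{d_i}$ and sends it through $\id_{d_i}$. The receiver measures all registers in the computational basis and applies $R$. Because $\Delta_{d_i}$ acts identically to $\id_{d_i}$ on diagonal states, the induced MAC matches. For $\mbf{Q}(\vec{d}) \subseteq \mbf{Q}^{E}(\vec{d})$, use a trivial product shared state, exactly as in the first step.

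\textbf{Main obstacle.} The only step needing care is $\mbf{C}^{E} \subseteq \mbf{C}^{N}$, where I must check that the definition of a fully classical non-signaling box matches the correlations produced by entangled measurements. This reduces to the standard fact that quantum correlations are non-signaling: the marginal computed above is independent of the other parties' inputs. For $k>2$ I would verify this for every subset of parties, since summing over any subset of outputs leaves identities on those parties.
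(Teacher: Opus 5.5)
Your proposal is correct and follows the same route as the paper. The resource-inclusion steps are trivial embeddings, classical communication is treated as a special case of quantum communication, and for $\mbf{C}^{E}\subseteq\mbf{C}^{N}$ local measurements on a shared state followed by $\Delta_{d_i}$ yield a fully classical box whose marginals depend only on the corresponding inputs. Your check over every subset of parties for $k>2$ slightly sharpens the paper's argument, which writes out only the single-party marginals.
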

\begin{proof}
    The first inclusion in \eqref{eq:classical-set-containments} follows from the fact adding entanglement-assistance can only result in a larger set of network configurations. The second inclusion follows from the fact that independent classical channels with entanglement-assistance induces a classical non-signaling box from $\cW^{k} \to \cC^{k}$. To see this latter point, if the inputs are on Hilbert spaces $A_{i}$ for each $i \in [k]$, the users have shared state $\rho_{Q^{k}} \in \Density(\otimes_{i \in [k] Q_{i}})$, then they can apply local measurement channels $\cM^{i}_{Q_{i}A_{i}}$ to send over their local channels. Then the output of the $j^{th}$ party over the total channel for any input $\sigma_{A^{k}}$ is 
    $$\Tr_{C^{k}\setminus C_{j}}[(\otimes_{i \in [k]} \Delta_{id_{d_{i}}} \circ \cM_{i})(\rho_{Q^{k}} \otimes \sigma_{A^{k}})] = (\Delta_{id_{d_{i}}} \circ \cM_{j})(\rho_{Q_{j}} \otimes \sigma_{A_{j}}) \ , $$
    which is a well-defined channel from $A_{j}$ to $C_{j}$ that appends $\rho_{Q_{j}}$ to input $\sigma_{A_{j}}$ and then applies $\Delta_{id_{d_{i}}} \circ \cM_{i}$.
    
    The second set of containments in Eq.~\eqref{eq:quantum_set_containments} follow from noiseless quantum communication being more general than noiseless classical communication and that entanglement-assistance can only be a larger set of networks.
\end{proof}

\paragraph{Implementing Functions over Multiaccess Networks} 
We investigate the performance of computing multivariable functions over MNs with constrained communication resources. We quantify the performance of a MN using the success probability developed in reference \cite[Section 2.2.3]{doolittle2024operational_nonclassicality}.
\begin{definition}\label{def:function-simulation-success-prob}
    Let $f: \cW^{k} \to \cZ$ be a function. Let $Q_{Z \vert W^{k}}$ be a conditional distribution. Then the average success probability of computing $f$ over $Q_{Z \vert W^{k}}$ is
    \begin{align}\label{eq:function-guessing-success-prob}
        P_{S}^{f}(Q_{Z \vert W^{k}}) \coloneq \sum_{w^{k} \in \cW^{k}} \frac{1}{\vert \cW^{k} \vert} \delta_{f(w^{k}),z} Q_{Z \vert W^{k}}(z \vert w^{k}) \ . 
    \end{align}
    Let $\mbf{S}$ be a set of MNs. Then the optimal success probability of computing $f$ over this set is
    \begin{align}\label{eq:optimal-simulation-prob}
        P_{S}^{f}(\mbf{S}) \coloneq \sup_{Q_{Z \vert W^{k}} \in \mbf{S}} P_{S}^{f}(Q_{Z \vert W^{k}}) \ . 
    \end{align}
\end{definition}

\paragraph{On Shared Randomness Assistance}
As we restrict to simulating MACs measured by a cost function $f$ that is linear in a conditional distribution in \eqref{eq:function-guessing-success-prob}, we can use standard convex analysis arguments to show that sharing randomness between the senders and receiver does not improve the success probability of implementing the function $f$. It is well known that globally shared randomness induces convexity in the set of network communication strategies \cite{Bowles2015_nonclassicality_communication_networks,doolittle2024operational_nonclassicality}, and therefore no communication advantage is gained by sharing randomness, we explicitly prove it here for the case of MNs for completeness.

\begin{definition}[Shared-Randomness Assisted SMPs]\label{def:SR-assisted-MNs}
    Let $\mbf{S}(\vec{d})$ be a set of SMP MNs. We define $\mbf{S}^{SR}(\vec{d})$ as the set of MNs generated by the senders and the receiver sharing unbounded randomness.
\end{definition}
Clearly $\mbf{S}(\vec{d}) \subseteq \mbf{S}^{SR}(\vec{d})$ as the shared randomness can be ignored. Moreover, $\mbf{S}^{SR}(\vec{d})$ is the convex hull of the strategies implementable by $\mbf{S}(\vec{d})$ as the senders and receivers may use the shared randomness to average their strategy.

\begin{proposition}\label{prop:no-need-for-SR} 
    Let $\mbf{S}(\vec{d})$ be a set of SMP MNs defined by the senders sharing resources defined by a tuple of compact, convex sets each with a finite number of extreme points and the receiver's decoders being a compact, convex set with a finite number of extreme points. Then $P_{S}^{f}(\mbf{S}^{SR}(\vec{d})) = P_{S}^{f}(\mbf{S}(\vec{d}))$ and is achieved at an extreme point. In particular, for all $\vec{d}$ and function $f: \bigtimes_{i \in [k]} \cW_{i} \to \cZ$,
    \begin{align}
        P_{S}^{f}(\mbf{C}^{SR}(\vec{d})) = P_{S}^{f}(\mbf{C}(\vec{d})) = \max_{\{h_{i}: \cW_{i} \to \cC_{i}\}_{i \in [k]}} p_{g}(Z \vert C^{k})_{(\id_{Z} \otimes \bigotimes_{i} h_{i})(p_{ZW^{k}})} \label{eq:SR-does-not-help-CMN} \\
        P_{S}^{f}(\mbf{C}^{N,SR}(\vec{d})) = P_{S}^{f}(\mbf{C}^{N}(\vec{d})) \label{eq:SR-does-not-help-NS-ass-CMN} \ . 
    \end{align}
\end{proposition}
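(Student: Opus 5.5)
The plan is a convexity argument: the success functional $P_{S}^{f}$ is linear in the induced conditional distribution, and shared randomness only produces convex mixtures of strategies.

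First, the general statement. Fix an SMP family $\mbf{S}(\vec{d})$. Any element of $\mbf{S}^{SR}(\vec{d})$ has the form $Q_{Z\vert W^{k}} = \sum_{\lambda} p(\lambda) Q^{\lambda}_{Z \vert W^{k}}$. Here $\lambda$ is the shared random variable and each $Q^{\lambda}$ is induced by an element of $\mbf{S}(\vec{d})$: conditioned on $\lambda$, every party applies a fixed local operation, so the network is an ordinary SMP network. Unbounded randomness is handled by approximation, or directly by Carathéodory, since the set of $Q_{Z\vert W^{k}}$ is finite-dimensional. By linearity of \eqref{eq:function-guessing-success-prob},
\[
P_{S}^{f}(Q) = \sum_\lambda p(\lambda) P_{S}^{f}(Q^{\lambda}) \leq \sup_{Q' \in \mbf{S}(\vec{d})} P_{S}^{f}(Q').
\]
Combined with $\mbf{S}(\vec{d}) \subseteq \mbf{S}^{SR}(\vec{d})$, this gives equality.

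To show the supremum is attained at an extreme point, I would use multi-convexity. Each sender's resource (the local encoder, or the shared box) and the receiver's decoder range over compact convex sets with finitely many extreme points, and $P_{S}^{f}$ is affine in each of these components separately when the others are fixed. I would replace the components one at a time by an extreme point that does at least as well. An affine function on a polytope attains its maximum at a vertex, so no step lowers the value. After finitely many replacements the strategy is a tuple of extreme points. Since there are finitely many such tuples, the supremum is a maximum and is achieved there.

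Next I would specialize to the two families. For $\mbf{C}(\vec{d})$, the senders' encoders are stochastic maps $\cW_i \to \cC_i$ with $\vert\cC_i\vert = d_i$, and the decoder is a stochastic map $\cC^{k} \to \cZ$. These are polytopes whose vertices are deterministic functions $h_i$ and a deterministic decoder. Fixing the $h_i$, the optimal decoder maximizes $\sum_z \Pr[Z=z, C^{k}=c]$ over $z$ for each $c$. This is exactly the guessing probability $p_{g}(Z\vert C^{k})$ of the classical state $(\id_Z\otimes\bigotimes_i h_i)(p_{ZW^{k}})$, with $p_{ZW^{k}}$ uniform on $W^{k}$ and $Z=f(W^{k})$; this gives \eqref{eq:SR-does-not-help-CMN}. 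For $\mbf{C}^{N}(\vec{d})$, the set of fully classical non-signaling boxes $\cW^{k}\to\cC^{k}$ is a polytope cut out by finitely many linear constraints, namely normalization, positivity, and no-signaling. The decoders again form a polytope, so the general statement applies and gives \eqref{eq:SR-does-not-help-NS-ass-CMN}.

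The main subtlety is making precise that conditioning on $\lambda$ returns a member of $\mbf{S}(\vec{d})$, and that the objective is affine in each block separately even though it is multilinear jointly. I would state this explicitly for the two classical families rather than attempt full generality.
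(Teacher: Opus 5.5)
Your proposal is correct. It shares the paper's skeleton: $P_{S}^{f}$ is linear in $Q_{Z\vert W^{k}}$, and there are finitely many extreme points. You reach the extreme points by a different mechanism, though.

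The paper expands each resource via Krein--Milman into its finitely many extreme points. It then asserts that every $Q\in\mbf{S}^{SR}(\vec{d})$ is a finite convex combination of strategies built from extreme-point tuples. One application of linearity then gives $P_{S}^{f}(Q)\le\max_{\lambda}P_{S}^{f}(Q^{\lambda})$. The step from ``extreme points of the Cartesian product of resource sets'' to a convex decomposition of $Q$ tacitly uses that $Q$ is multilinear, not linear, in the resources. That is what makes expanding each resource separately expand $Q$.

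You split the argument in two. Linearity alone removes the shared randomness, and this handles arbitrary, even continuous, randomness without any finite decomposition. Block-wise affineness then lets you improve any strategy one coordinate at a time until you reach an extreme-point tuple.

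What each route buys: yours makes the multilinearity explicit and is constructive, starting from any given strategy. The paper's is a one-line decomposition once multilinear expansion is granted. Both need finitely many extreme-point tuples to turn the supremum into a maximum.

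You also supply something the paper's proof leaves implicit. For fixed deterministic encoders $h_i$, you check that the optimal classical decoder is MAP decoding, so the value is $p_{g}(Z\vert C^{k})=\sum_{c}\max_{z}\Pr[Z=z,C^{k}=c]$. This is what justifies the right-hand side of \eqref{eq:SR-does-not-help-CMN}. The paper only notes that the vertices are deterministic functions; the identification with the guessing probability follows from Lemma~\ref{lem:reduction-to-guessing-probability}, proved later. A small wording fix: your phrase ``maximizes $\sum_z\Pr[Z=z,C^{k}=c]$ over $z$'' should say it maximizes $\Pr[Z=z,C^{k}=c]$ over $z$ for each $c$, with the sum over $c$ taken afterwards.
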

\begin{proof}
    By the assumption on the resources in the configuration being compact and convex, the Krein-Milman theorem implies each resource can be decomposed into a convex combination of the extreme points of the set. As the extreme points of a Cartesian product of convex sets is the Cartesian product of the extreme points of the convex sets and we assumed each resource has a finite number of extreme points, there is a finite number of extreme points of the total stragies. As $\mbf{S}^{SR}(\vec{d})$ generates the convex hull of the strategies in $\mbf{S}(\vec{d})$, we may conclude $Q_{Z \vert W^{k}} \in \mbf{S}^{SR}(\vec{d})$ can be decomposed into a finite convex combination of $\{Q^{\lambda}_{Z \vert W^{k}}\}_{\lambda \in \Lambda}$ where each $Q^{\lambda}_{Z \vert W^{k}}$ is constructed from extreme points of the resources. Then, by linearity in $Q_{Z \vert W^{k}}$ of the success probability of computing $f$
    \begin{align}
        P_{S}^{f}(Q_{Z \vert W^{k}}) = \sum_{\lambda \in \Lambda} p(\lambda) P_{S}^{f}(Q^{\lambda}_{Z \vert W^{k}}) \leq \max_{\lambda \in \Lambda} P_{S}^{f}(Q^{\lambda}_{Z \vert W^{k}}) \ .
    \end{align}
    As $Q_{Z \vert W^{k}} \in \mbf{S}^{SR}(\vec{d})$ was arbitrary and $\text{argmax}_{\lambda \in \Lambda} P_{S}^{f}(Q^{\lambda}_{Z \vert W^{k}}) \in \mbf{S}(\vec{d})$, we may conclude $P_{S}^{f}(\mbf{S}^{SR}(\vec{d})) \leq P_{S}^{f}(\mbf{S}(\vec{d})$. 

    To prove equations \eqref{eq:SR-does-not-help-CMN} and \eqref{eq:SR-does-not-help-NS-ass-CMN}, we consider the two cases separately. For $\mbf{C}(\vec{d})$ where $\vec{d} \in \mbb{N}^{\times k}$, note the set of resources are sender $i$'s local conditional distribution $P_{Y_{i} \vert W_{i}}$ where $\vert Y_{i} \vert = d_{i}$ for encoding and the receiver's decoder, which is a conditional distribution $P_{Z \vert Y^{k}}$. The extreme points of the conditional distributions are a finite set of deterministic functions when the input and output spaces are fixed. Thus the above conditions apply. 

    For $\mbf{C}^{N}(\vec{d})$, the set of resources is without loss of generality non-signaling boxes from $\cW^{k}$ to $\cY^{k}$ where $\vert Y_{i} \vert = d_{i}$ and the receiver's decoder, which is a conditional distribution $P_{Z \vert Y^{k}}$. The reason the non-signaling boxes are of this form without loss of generality is because the pre-processing non-signaling box must compose with the transmission channels determined by the signaling dimension. As non-signaling boxes of fixed input and output dimensions form a polytope, which by definition has a finite number of extreme points, the set of shared resource by the senders is a convex set of with a finite number of extreme points.
\end{proof}
\begin{remark}\label{rem:on-non-uniformity}
    We note a straightforward generalization shows the above argument still holds if one replaces $\frac{1}{\vert \cW^{k} \vert}$ in \eqref{eq:function-guessing-success-prob} with a different distribution over $\cW^{k}$.
\end{remark}

Proposition~\ref{prop:no-need-for-SR} neither shows that $P_{S}^{f}(\mbf{C}^{E}(\vec{d}))$ can be achieved (i.e.~a maximum rather than a supremum in \eqref{eq:optimal-simulation-prob}) nor that shared randomness cannot improve the success probability. The former is not shown because the set of quantum states is not compact without bounding the dimension of the entanglement-assistance. The latter is not shown because even in finite-dimensions, the extreme points of the density matrices is not finite. Nonetheless, a remarkable theorem by Frenkel and Weiner \cite{Frenkel2015_classical_information_n-level_quantum_system} shows that for point-to-point channel simulation, quantum communication with shared randomness is no better than classical communication with shared randomness.

\begin{proposition}\cite[Theorem 3]{Frenkel2015_classical_information_n-level_quantum_system} \label{prop:Frenkel-Weiner}
     Let $d \in \mbb{N}$. If $P_{Y \vert X} \in \mbf{Q}^{SR}(d)$, then $P_{Y \vert X} \in \mbf{C}^{SR}(d)$. 
\end{proposition}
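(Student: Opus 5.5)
The plan is to prove the inclusion through its dual form and then supply an explicit classical strategy that wins on every payoff. Fix finite alphabets $\cX,\cY$. The set $\mbf{C}^{SR}(d)$ is the convex hull of the finitely many deterministic behaviors in which $x \mapsto m(x)\in[d]$ and $m\mapsto y(m)$. It is therefore a polytope. By the separating hyperplane theorem it suffices to show the following: for every payoff $c:\cX\times\cY\to\mbb{R}$ and every $P_{Y|X}\in\mbf{Q}^{SR}(d)$,
\[
\sum_{x,y} c(x,y)P_{Y|X}(y|x)\;\le\;\max_{S\subseteq \cY,\ |S|\le d}\ \sum_x \max_{y\in S} c(x,y),
\]
since the right-hand side is the classical optimum. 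Both sides are linear or convex in the strategy, so shared randomness can be dropped on the quantum side. We may therefore take $P_{Y|X}(y|x)=\Tr[\rho_x E_y]$ for states $\rho_x$ and a POVM $\{E_y\}$ on $\mbb{C}^d$. Because $\sum_y P_{Y|X}(y|x)=1$, adding a function of $x$ alone to $c$ shifts both sides equally, so we may assume $c\ge 0$. Refining each $E_y$ into rank-one pieces $\ket{v_j}\bra{v_j}$ labelled by their parent $y(j)$ changes nothing, so we also assume $E_j=\ket{v_j}\bra{v_j}$ with $\sum_j \ket{v_j}\bra{v_j}=I_d$.

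The key step is to build one random $d$-subset $S$ of the refined outcomes that works for all $x$ simultaneously, in the sense that
\[
\lambda_{\max}\Big(\sum_j c_j E_j\Big)\le \mathbb{E}_S \max_{j\in S} c_j \qquad \text{for all } c\ge 0 .
\]
Summing over $x$ with $c_j=c(x,y(j))$ then gives the required bound, because the quantum value is at most $\sum_x\lambda_{\max}(\sum_j c(x,y(j))E_j)$, and an average over $S$ is at most the maximum over $S$.

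To obtain the displayed inequality, order $c_1\ge c_2\ge\cdots$ and set $c_{N+1}=0$ and $F_k=\sum_{j\le k}E_j$. Then
\[
\sum_j c_jE_j=\sum_k (c_k-c_{k+1})F_k,\qquad
\mathbb{E}\max_{j\in S}c_j=\sum_k (c_k-c_{k+1})\Pr[S\cap[k]\neq\emptyset].
\]
Subadditivity of $\lambda_{\max}$ reduces the claim to the hitting condition $\lambda_{\max}(E_T)\le \Pr[S\cap T\ne\emptyset]$ for every $T$, where $E_T=\sum_{j\in T}E_j$.

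To meet the hitting condition, I would take $S$ to be the determinantal point process whose kernel is the Gram matrix $K_{ij}=\braket{v_i}{v_j}$. Since $\sum_j\ket{v_j}\bra{v_j}=I_d$, $K$ is a rank-$d$ projection, so $S$ has exactly $d$ elements almost surely, and $\Pr[S\cap T=\emptyset]=\det(I-K_T)$. The nonzero spectrum of $K_T$ coincides with that of $E_T$ and lies in $[0,1]$. Hence
\[
\Pr[S\cap T\ne\emptyset]=1-\prod_i(1-\mu_i)\ge \max_i\mu_i=\lambda_{\max}(E_T).
\]
This is the main obstacle: finding a single distribution over $d$-subsets that dominates $\lambda_{\max}$ on every threshold set at once. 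If the determinantal-process facts prove awkward to justify, the fallback is an LP-duality or Hall-type argument over the hitting constraints.

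With the inequality in hand, the classical simulation is explicit. The shared randomness samples $S$. The sender, knowing $x$, transmits the index (at most $d$ values) of the element of $S$ that is best for $x$, and the receiver outputs its parent label $y(j)$. This realizes the classical bound, and the separation argument yields $P_{Y|X}\in\mbf{C}^{SR}(d)$.
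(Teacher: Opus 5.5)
The paper gives no proof of this statement. It imports it from Frenkel--Weiner \cite[Theorem 3]{Frenkel2015_classical_information_n-level_quantum_system} and uses it as a black box, for example in Corollary~\ref{cor:reduc-point-to-point-to-cl-with-SR}. Your argument is a correct, self-contained proof, and the comparison is simple: citing keeps the appendix short, while your route shows that the only nontrivial input is the Cauchy--Binet formula.

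The individual steps hold up. Reducing to $c\ge 0$ and to rank-one effects is harmless. The layer-cake identity together with subadditivity of $\lambda_{\max}$ correctly reduces everything to the hitting condition. The determinantal facts you need follow directly from Cauchy--Binet, so no fallback is required. Write $V=[v_1,\dots,v_N]$ with $VV^\dagger=I_d$. Then $\Pr[S=A]=\vert\det V_A\vert^{2}$ for $\vert A\vert=d$ defines a probability distribution. Moreover $\Pr[S\cap T=\emptyset]=\det(V_{T^c}V_{T^c}^\dagger)=\det(I_d-E_T)$.

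Your last paragraph overstates what it gives. The strategy ``send the best element of $S$ for $x$'' depends on the payoff $c$ and does not reproduce $P_{Y|X}$. It only witnesses the inequality, and membership then follows non-constructively from the separating hyperplane theorem.

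You can make the simulation genuinely explicit and drop the duality step, the shift of $c$, and the layer-cake decomposition altogether:
\begin{enumerate}
\item Fix $x$ and set $q_x(j)=\bra{v_j}\rho_x\ket{v_j}$.
\item For every $T$, $q_x(T)=\Tr[\rho_x E_T]\le\lambda_{\max}(E_T)\le\Pr[S\cap T\neq\emptyset]$.
\item This is exactly Hall's (supply--demand) condition for a coupling of $S$ with $j\sim q_x$ such that $j\in S$ almost surely.
\item The sender samples $j$ from the conditional law given $(S,x)$ and sends its position in $S$, which takes at most $d$ values.
\item The receiver outputs $y(j)$, which reproduces $\Tr[\rho_x E_y]$ exactly.
\end{enumerate}
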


\paragraph{Guessing Probability and Min-Entropy}
Given a classical-quantum state $\rho_{ZQ} = \sum_{z \in \cZ} p_{Z}(z)\dyad{z} \otimes \rho_{Q}^{z}$, the optimal probability of guessing the value of $z$ according to quantum mechanics using the $Q$ system, or \textit{guessing probability} is defined as 
\begin{align}\label{eq:guessing-probability}
    p_{g}(Z \vert Q)_{\rho} \coloneq \max_{\substack{\{\Gamma_{z}\}_{z \in \cZ} \subset \Pos(A): \\ \sum_{z} \Gamma_{z} = I_{Q}}} \sum_{z} p_{Z}(z)\Tr[\Gamma_{z}\rho^{z}_{Q}] \ , 
\end{align}
where note the maximization is over positive-operator-valued-measures with $\vert \cZ \vert$ outcomes. The guessing probability in Eq.~\eqref{eq:guessing-probability} is equivalent to the SDP for maximizing success in quantum state discrimination for discriminating the states $\{\rho^{z}_{Q}\}_{z}$ drawn according to distribution $p_{Z}$ (see \cite[Section 3.1.2]{WatrousBook} for an introduction and \cite{Bae_2015} and references therein for further details). Given a possibly sub-normalized quantum state $\rho_{AB} \in \Density_{\leq}(A \otimes B)$, the min-entropy is defined as
\begin{align}
    H_{\min}(A\vert B)_{\rho} \coloneq \max_{\sigma_{B} \in \Density_{\leq}(B)} - \log \left\Vert \sigma_{B}^{-1/2} \rho_{AB} \sigma_{B}^{-1/2} \right\Vert \ .
\end{align}
It was shown in \cite{konig2009operational} that for classical-quantum states $\rho_{ZQ}$,
\begin{align}\label{eq:guess-prob-and-min-ent}
    \exp(-H_{\min}(Z \vert Q)_{\rho}) = p_{g}(Z \vert Q)_{\rho} \ . 
\end{align}
In the subsequent sections, we will apply the identity in Eq.~\eqref{eq:guess-prob-and-min-ent} and the following key properties of the guessing probability (equivalently, min-entropy of classical-quantum states).
\begin{proposition}[Properties of Guessing Probability] Let $\rho_{ZQ} = \sum_{z} p_{Z}(z)\dyad{z} \otimes \rho^{z}_{Q}$ be a classical-quantum state. \label{prop:guessing-prob-properties}
    \begin{enumerate}
        \item \textbf{Data Processing Inequality:} For any quantum channel $\cE_{Q \to Q'}$, $p_{g}(Z \vert Q')_{\cE(\rho)} \leq p_{g}(Z \vert Q)_{\rho}$.
        \item \textbf{Multiplicativity of Guessing over Independent States:} Let $\sigma_{XB}$ be a classical-quantum state. Then,
        \begin{align}
            p_{g}(XZ \vert BQ)_{\rho \otimes \sigma} = p_{g}(X \vert B)_{\sigma} \cdot p_{g}(Z \vert Q)_{\rho} \ .
        \end{align}
        \item \textbf{Bounds on Guessing Probabilities:} $\frac{1}{\vert \cZ \vert} \leq p_{g}(Z)_{\rho} \leq p_{g}(Z \vert Q) \leq 1$ where the first inequality is saturated when $\rho_{Z} = \pi_{Z}$, the second is saturated when $\rho_{ZQ} = \rho_{Z} \otimes \rho_{Q}$, and the final inequality is saturated when $\{\rho_{Q}^{z}\}_{z}$ are mutually orthogonal, i.e. $\Tr[\rho_{Q}^{z}\rho_{Q}^{z'}] \propto \delta_{z,z'}$.
        \item \textbf{Decomposing Classical Side-Information:} If $Y$ is a classical register so $\rho_{ZQY} = \sum_{y} p(y) \dyad{y} \otimes \rho^{y}_{ZQ}$, then $p_{g}(Z \vert QY)_{\rho} = \sum_{y} p(y) p_{g}(Z \vert Q)_{\rho^{y}_{ZQ}}$.
        \item \textbf{Classical Side-Information Chain Rule:} Let  $\sigma_{ZQY}$ be a classical-quantum-classical state. Then
        \begin{align}\label{eq:guessing-prob-chain-rule}
            p_{g}(Z \vert QY)_{\sigma} &\leq \vert Y \vert p_{g}(Z \vert Q)_{\sigma} \ .
        \end{align}
    \end{enumerate}
\end{proposition}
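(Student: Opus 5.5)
We prove the five items one at a time, working directly from the definition \eqref{eq:guessing-probability} as a state-discrimination SDP. We switch to the min-entropy picture \eqref{eq:guess-prob-and-min-ent} only where that is more convenient.

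\textbf{Item 1 (data processing).} Let $\{\Gamma_{z}\}_{z}$ be any POVM on $Q'$. Its pullback $\{\cE^{\dagger}(\Gamma_{z})\}_{z}$ is a POVM on $Q$, because $\cE^{\dagger}$ is positive and unital. It satisfies $\Tr[\Gamma_{z}\cE(\rho^{z}_{Q})] = \Tr[\cE^{\dagger}(\Gamma_{z})\rho^{z}_{Q}]$. So every success probability achievable on $\cE(\rho)$ is also achievable on $\rho$, and maximizing gives the inequality.

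\textbf{Item 3 (bounds).} The trivial POVM $\Gamma_{z^{\star}} = I_{Q}$ with $z^{\star} \in \arg\max_{z} p_{Z}(z)$ gives $p_{g}(Z \vert Q) \geq \max_{z} p_{Z}(z) = p_{g}(Z) \geq 1/\vert\cZ\vert$. Equality in the last step holds exactly when $p_{Z}$ is uniform. If $\rho_{ZQ} = \rho_{Z} \otimes \rho_{Q}$, then $\sum_{z} p_{Z}(z)\Tr[\Gamma_{z}\rho_{Q}] \leq \max_{z} p_{Z}(z)\sum_{z}\Tr[\Gamma_{z}\rho_{Q}] = \max_{z} p_{Z}(z)$, so the middle inequality is saturated. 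The upper bound $p_{g}(Z \vert Q) \leq 1$ is immediate. If the $\rho^{z}_{Q}$ have mutually orthogonal supports, the projectors onto those supports (completed to the identity) achieve $1$.

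\textbf{Item 4 (classical side-information decomposes).} First, $\Pi = \sum_{y}(\dyad{y} \otimes I)(\cdot)(\dyad{y} \otimes I)$ leaves $\rho$ invariant and is self-adjoint. Hence replacing any POVM $\{\Gamma_{z}\}_{z}$ by $\{\Pi(\Gamma_{z})\}_{z}$ does not change the success probability. We may therefore take $\Gamma_{z} = \sum_{y} \dyad{y} \otimes \Gamma^{y}_{z}$, where each $\{\Gamma^{y}_{z}\}_{z}$ is a POVM on $Q$. The objective then splits as $\sum_{y} p(y)\sum_{z}\Tr[\Gamma^{y}_{z}\rho^{y}_{ZQ}\text{-blocks}]$. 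The POVMs for different $y$ are chosen independently, so the maximum is the $p(y)$-weighted sum of the individual maxima.

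\textbf{Item 5 (chain rule).} Write $p_{Z}(z)\sigma^{z}_{QY} = \sum_{y}\dyad{y} \otimes \tilde\sigma^{z,y}_{Q}$ with $\tilde\sigma^{z,y}_{Q} \geq 0$, so that $\sum_{y}\tilde\sigma^{z,y}_{Q} = p_{Z}(z)\sigma^{z}_{Q}$. By Item 4's reduction, the optimal POVM has the block form $\{\Gamma^{y}_{z}\}$. Then
\[
p_{g}(Z \vert QY)_{\sigma} = \sum_{y}\sum_{z}\Tr[\Gamma^{y}_{z}\tilde\sigma^{z,y}_{Q}] \leq \sum_{y}\sum_{z}\Tr\Big[\Gamma^{y}_{z}\sum_{y'}\tilde\sigma^{z,y'}_{Q}\Big] = \sum_{y}\sum_{z}p_{Z}(z)\Tr[\Gamma^{y}_{z}\sigma^{z}_{Q}] \leq \vert Y \vert\, p_{g}(Z \vert Q)_{\sigma},
\]
using positivity of each $\Gamma^{y}_{z}\tilde\sigma^{z,y'}_{Q}$ trace term and then bounding each of the $\vert Y\vert$ summands by the optimum.

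\textbf{Item 2 (multiplicativity)} is the main obstacle. The lower bound $\geq$ is easy: the product POVM $\{\Gamma_{x} \otimes \Lambda_{z}\}_{x,z}$ achieves the product of the two optima. For the upper bound I would use SDP duality:
\[
p_{g}(Z \vert Q)_{\rho} = \min\{\Tr[Y_{Q}] : Y_{Q} \geq p_{Z}(z)\rho^{z}_{Q}\ \forall z\}.
\]
Strong duality holds via Slater's condition, e.g.\ with $Y = I$. Take optimal dual solutions $Y_{B}$ for $\sigma$ and $Y_{Q}$ for $\rho$. The product $Y_{B} \otimes Y_{Q}$ is dual feasible for $\sigma \otimes \rho$, because $Y_{B} \otimes Y_{Q} - A \otimes C = (Y_{B} - A) \otimes Y_{Q} + A \otimes (Y_{Q} - C) \geq 0$ for $0 \leq A \leq Y_{B}$ and $0 \leq C \leq Y_{Q}$. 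Its objective is $\Tr[Y_{B}]\Tr[Y_{Q}]$, giving the upper bound $\leq$. Equivalently, one could invoke additivity of $H_{\min}$ on tensor products together with \eqref{eq:guess-prob-and-min-ent}.
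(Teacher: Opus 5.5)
Your proof is correct, but it takes a different route from the paper. The paper gives only a proof summary by citation. Items 1 and 2 are taken from Tomamichel's thesis (data processing and additivity of min-entropy). Item 3 is derived from $H_{\min}(A)\le\log|A|$, data processing, and the Holevo--Helstrom theorem. Item 4 is quoted from Tomamichel's book. Item 5 is the min-entropy chain rule $H_{\min}(A|BY)\ge H_{\min}(A|B)-\log|Y|$ for classical $Y$, translated through the K\"{o}nig--Renner--Schaffner identity \eqref{eq:guess-prob-and-min-ent}.

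You instead work entirely from the POVM/SDP definition \eqref{eq:guessing-probability}:
\begin{itemize}
\item adjoint channels for Item 1;
\item explicit POVMs for Item 3;
\item a pinching argument for Item 4, which you then reuse to give a direct operational proof of Item 5;
\item a product dual certificate for Item 2, which is essentially how additivity of $H_{\min}$ is proven in the cited sources.
\end{itemize}

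Your route is self-contained and does not need the operational meaning of min-entropy. It also shows exactly where classicality of $Y$ enters in Item 5: the block-diagonal reduction of the optimal POVM. This matches the paper's remark that the chain rule fails for quantum $Y$. The paper's route is shorter and stays in the entropic language it needs later anyway, for example the chain rule and data processing for smooth min-entropy in the MDI QKG security proof.

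One small correction to Item 2: $Y=I$ need not be strictly dual feasible, since some $p_Z(z)\rho^z_Q$ may have eigenvalue $1$. Use $Y=2I$ for Slater's condition instead. The dual minimum is attained because every feasible $Y$ is positive semidefinite, so sublevel sets of $\Tr[Y]$ are compact; alternatively, just work with $\epsilon$-optimal dual solutions.
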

\begin{proof}[Proof Summary]
    Items 1 and 2 are Items vi. and viii. of Table 4.1 in \cite{tomamichel-thesis}. Item 3 may be proven using $H_{\min}(A) \leq \log \vert A \vert$, the data processing inequality, and the Holevo-Helstrom theorem (c.f.~\cite[Theorem 3.4]{WatrousBook}). Item 4 is \cite[Eq.~6.25]{Tomamichel-Book}. Item 5 follows immediately from \cite[Lemma 6.18]{Tomamichel-Book} and \eqref{eq:guess-prob-and-min-ent}. 
\end{proof}

\begin{remark}[On the Chain Rule] We stress the proof of Item 5 of Proposition \ref{prop:guessing-prob-properties} requires the $Y$ system to be classical. This is because it follows from the chain rule for $\rho_{ABY}$, $H_{\min}(A \vert BY) \geq H_{\min}(A \vert B) - \log \vert Y \vert$. This chain rule can be violated if $Y$ is not classical. The canonical example is $\rho_{ABC} = \Phi_{AC}^{+} \otimes \dyad{0}_{B}$. Then
\begin{align}
    H_{\min}(A \vert BC)_{\rho} = H_{\min}(A \vert C)_{\Phi^{+}} + 0 = -\log \vert A \vert \qquad \text{and} \qquad H_{\min}(A \vert B) = H_{\min}(A) = \log \vert A \vert \ .  
\end{align}
Thus, $H_{\min}(A \vert BC) - H_{\min}(A \vert B) = -2\log \vert C \vert < -\log \vert C \vert $.
\end{remark}

Lastly, to motivate Section \ref{app:entropy-chain-rule}, we observe in Proposition~\ref{prop:easier-to-guess-output-of-function} that it is strictly easier to guess the output of a function than its input, even with quantum side-information.
\begin{proposition}[Easier to Guess the Output of a Function]\label{prop:easier-to-guess-output-of-function}
    Let $\rho_{XYQ} = \sum_{x,y} p(x,y)\dyad{x} \otimes \dyad{y} \otimes \rho^{x,y}_{Q}$ be a CCQ state and $f:\cX \times \cY \to \cZ$ be a function. Define $\rho_{ZYQ} \coloneq \rho_{f(X,Y)YQ}$. Then $H_{\min}(Z \vert YQ) \leq H_{\min}(X \vert YQ)$. Equivalently, $p_{g}(Z \vert YE) \geq p_{g}(X \vert YQ)$.
\end{proposition}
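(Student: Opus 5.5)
The plan is to reduce the claim to the data processing inequality for guessing probability (Item 1 of Proposition \ref{prop:guessing-prob-properties}). The register $Z$ is obtained from $(X,Y)$ by a deterministic map. So any strategy that guesses $X$ from $YQ$ yields a strategy for guessing $Z$ from $YQ$: guess $\hat x$, then output $f(\hat x, y)$. Formally, first compare the states $\rho_{XYQ}$ and $\rho_{ZYQ}$. Consider the classical channel $\cF_{X Y \to Z Y}$ defined by $\dyad{x}\otimes\dyad{y}\mapsto \dyad{f(x,y)}\otimes\dyad{y}$. Then $\rho_{ZYQ} = (\cF\otimes\id_Q)(\rho_{XYQ})$, since $Y$ is copied and remains classical.

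The cleanest argument works with guessing probabilities directly, using Item 4 of Proposition \ref{prop:guessing-prob-properties} to decompose over the classical register $Y$. We have $p_g(X\vert YQ)_\rho=\sum_y p(y)\, p_g(X\vert Q)_{\rho^y}$, and likewise for $Z$. For fixed $y$, take an optimal POVM $\{\Gamma^y_x\}_x$ for guessing $X$ from $\rho^y_{XQ}$. Define the coarse-grained POVM $\Lambda^y_z \coloneq \sum_{x: f(x,y)=z}\Gamma^y_x$, which is a valid POVM on $Q$ with outcomes in $\cZ$. Its success probability for guessing $Z$ is
\begin{align}
\sum_z \sum_{x:f(x,y)=z} \sum_{x'} p(x'\vert y)\,\delta_{f(x',y),z}\Tr[\Gamma^y_x\rho^{x',y}_Q] \;\geq\; \sum_x p(x\vert y)\Tr[\Gamma^y_x \rho^{x,y}_Q] = p_g(X\vert Q)_{\rho^y} \ .
\end{align}
The inequality holds because we drop the nonnegative terms with $x'\neq x$ but $f(x',y)=f(x,y)$. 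Averaging over $y$ then gives $p_g(Z\vert YQ)_\rho \geq p_g(X\vert YQ)_\rho$.

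Finally, we convert this to the min-entropy statement through the identity \eqref{eq:guess-prob-and-min-ent}. That identity applies because both $\rho_{XYQ}$ and $\rho_{ZYQ}$ are classical on the guessed register. Since $-\log$ is decreasing, we obtain $H_{\min}(Z\vert YQ)\leq H_{\min}(X\vert YQ)$.

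No step is a genuine obstacle. The only care needed is to check that the coarse-grained operators sum to $I_Q$, and that $Y$ is treated as classical conditioning information so that Item 4 applies. The statement's ``$p_g(Z\vert YE)$'' should be read as $p_g(Z\vert YQ)$.
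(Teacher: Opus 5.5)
Your proof is correct, but it takes a different route from the paper's.

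\textbf{The paper's route.} The paper argues entirely at the level of entropies, in two steps:
\begin{enumerate}
\item It invokes \cite[Lemma 4]{george2022finite} with $\ve=0$ to get $H_{\min}(X\vert YQ)=H_{\min}(ZX\vert YQ)$. This holds because $Z$ is a deterministic function of $X$ and the classical conditioning register $Y$.
\item It drops the classical register $X$ via $H_{\min}(ZX\vert YQ)\geq H_{\min}(Z\vert YQ)$, citing \cite[Lemma 6.17]{Tomamichel-Book}.
\end{enumerate}
The guessing-probability form then follows from \eqref{eq:guess-prob-and-min-ent}.

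\textbf{Your route.} You argue operationally. You decompose over $y$ using Item 4 of Proposition \ref{prop:guessing-prob-properties}. You then coarse-grain an optimal $X$-guessing POVM along the fibres of $f(\cdot,y)$, which gives an explicit $Z$-guessing strategy that succeeds at least as often. Your computation of the success probability and the check that the coarse-grained operators sum to $I_Q$ are both fine.

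\textbf{Comparison.} Your argument is self-contained and makes the reason for the inequality transparent. However, it is tied to the operational meaning of $H_{\min}$ for classical-quantum states. The paper's argument is shorter given the cited lemmas. It also transfers more readily to other entropic quantities; for instance, the cited lemma also covers smoothing, $\ve>0$.

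\textbf{Minor remark.} Your opening appeal to Item 1 of Proposition \ref{prop:guessing-prob-properties} does not actually carry the proof. That item concerns channels acting on the side-information, not on the guessed register. It is the coarse-graining argument you then give that does the work. Your reading of $p_g(Z\vert YE)$ as a typo for $p_g(Z\vert YQ)$ is correct.
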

\begin{proof}
   We have $H_{\min}(X \vert Y Q) = H_{\min}(Z X \vert Y Q) \geq H_{\min}(Z \vert YQ)$ where the first equality is using \cite[Lemma 4]{george2022finite} in the case $\ve = 0$ and the replacements $Y \to Z$, $X \to X$, $P \to Y$, $E \to Q$ and the second is that the entropy of a classical register is non-negative \cite[Lemma 6.17]{Tomamichel-Book}. This proves the min-entropy statement and the guessing probability statement follows from \eqref{eq:guess-prob-and-min-ent}.
\end{proof}

\section{Dense Network Coding}\label{app:dense-network-coding}
In this appendix we generalize the dense network coding protocol given in the main text. We will show that the group operation $\cdot$ of a group $(G,\cdot)$ of size $d^{2}$ that admits a projective unitary representation on $\mbb{C}^{d}$ can be densely network coded (Theorem \ref{thm:dense-network-coding}). We will call such groups ``tightly network codeable (TNC) groups," because they cannot be dense network coded on a smaller space. We then formalize that dense network coding does not transmit all the information about the inputs unlike using superdense coding in parallel (Proposition \ref{prop:DNC-does-not-transmit-inputs}). We also characterize TNC groups by showing they are in exact correspondence with `nice error bases' \cite{knill1996group} whose characterization has previously been studied \cite{klappenecker2002beyond} (Section \ref{sec:alg-struc-of-TNC}). Conveniently, this identification allows us to distinguish between dense network coding and dense coding by comparing our family of groups with Werner's generic characterization of superdense coding \cite{werner2001-dense-coding}.

\subsection{Dense Network Coding of the Group Operation}\label{sec:general-dense-network-coding}
We now provide the general strategy for dense network coding of the group operation. We begin by recalling the definition of a group and some related notation.
\begin{definition}
    A group is an ordered pair $(G,\cdot)$ where $G$ is a set and $\cdot: G \times G \to G$ is a binary operation on $G$, the group operation, satisfying:
    \begin{enumerate}
        \item Associativity: $(g \cdot h) \cdot f = g \cdot (h \cdot f)$ for all $g,h,f \in G$,
        \item Identity: there exists $e \in G$ such that $g \cdot e = e \cdot g = g$ for all $g \in G$, and
        \item Inverse: for each $g \in G$, there exists an element $g^{-1} \in G$ such that $g \cdot g^{-1} = e = g^{-1} \cdot g$. 
    \end{enumerate}
\end{definition}
For the rest of this section, we will consider sets of unitaries indexed by the elements of $G$, e.g.~$\{U_{g}\}_{g \in G}$. It follows that one may express the index of the unitary in terms of other group elements and the group operation, e.g.~$U_{g \cdot h}$ is $U_{g'}$ where $g' = g \cdot h$.

With the notation fixed, we begin by defining the following set of groups.
\begin{definition}[Tightly Network Codeable Groups]\label{def:TNC}
    Let $(G,\cdot)$ be a group of order $d^{2}$ such that there exist a set of unitaries $\cU \coloneq \{U_{g}\}_{g \in G} \subset \Unitary(\mbb{C}^{d})$ satisfying
    \begin{enumerate}
        \item $\cU$ are an orthonormal basis of $\Lin(\mbb{C}^{d})$ under the normalized Hilbert-Schmidt inner product, i.e.
        \begin{align}\label{eq:ortho-basis-of-unitaries}
            \Tr[U_{g}^{\dagger}U_{h}] = d \delta_{g,h} \quad \forall g,h \in G \ . 
        \end{align}
        \item There exist $\{\omega(g,h)\}_{g,h \in G} \subset \Unitary(\mbb{C})$ such that
        \begin{align}\label{eq:proj-group-behaviour}
            U_{g}U_{h} = \omega(g,h)U_{g \cdot h} \quad \forall g,h \in G \ . 
        \end{align}
    \end{enumerate}
    We call such a group a tightly network codeable (TNC) group and the set $\cU$ its TNC representation.
\end{definition}
These representation-theoretic conditions on the group are useful due to the following.

\begin{proposition}\label{prop:TNC-coding}
    Let $(G,\cdot)$ be a TNC group of order $d^{2}$ and $\cU \subset \Unitary(d)$ its TNC representation. Define the set of states
    \begin{align}
        \ket{\Phi_{g}} \coloneq U_{g} \otimes I \ket{\Phi} \quad g \in \mbb{G} \ .
    \end{align}
    Then $\{\ket{\Phi_{g}}\}_{g \in G}$ are an orthonormal basis of $\mbb{C}^{d} \otimes \mbb{C}^{d}$. 
\end{proposition}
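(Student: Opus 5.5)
The plan is to compute the Gram matrix of the family $\{\ket{\Phi_{g}}\}_{g \in G}$ and check that it equals the identity. Orthonormality together with a dimension count then gives a basis. The only input needed from Definition \ref{def:TNC} is the orthogonality condition \eqref{eq:ortho-basis-of-unitaries}; the projective multiplication rule \eqref{eq:proj-group-behaviour} is not needed for this proposition. It only becomes relevant later, when the senders' encoded states are identified with the $\ket{\Phi_{g \cdot h}}$ up to phase.

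\textbf{Step 1 (inner products).} For $g,h \in G$, write
\begin{align}
    \braket{\Phi_{g}}{\Phi_{h}} = \bra{\Phi} (U_{g}^{\dagger}U_{h}) \otimes I \ket{\Phi} \ .
\end{align}
Next, use the identity $\bra{\Phi} X \otimes I \ket{\Phi} = \frac{1}{d}\Tr[X]$, which already appears in the proof of Theorem \ref{thm:main-text-dense-coding}. It can be checked directly by expanding $\ket{\Phi} = \frac{1}{\sqrt{d}}\sum_{i}\ket{i}\ket{i}$:
\begin{align}
    \bra{\Phi} X \otimes I \ket{\Phi} = \frac{1}{d}\sum_{i,j} \bra{i}X\ket{j}\braket{i}{j} = \frac{1}{d}\Tr[X] \ .
\end{align}
Applying this with $X = U_{g}^{\dagger}U_{h}$ gives $\braket{\Phi_{g}}{\Phi_{h}} = \frac{1}{d}\Tr[U_{g}^{\dagger}U_{h}] = \delta_{g,h}$ by \eqref{eq:ortho-basis-of-unitaries}. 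Hence the family is orthonormal. Normalization also follows independently from the unitarity of $U_{g}$.

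\textbf{Step 2 (spanning).} An orthonormal family is linearly independent, and this one contains $\vert G \vert = d^{2}$ vectors. Since $\dim(\mbb{C}^{d} \otimes \mbb{C}^{d}) = d^{2}$, the family is an orthonormal basis.

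No step here is a genuine obstacle. The only point that needs care is the one-line verification of the maximally-entangled-state trace identity, together with keeping track of the order of $U_{g}^{\dagger}$ and $U_{h}$ so that the result matches the form of \eqref{eq:ortho-basis-of-unitaries}.
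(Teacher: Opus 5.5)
Your proof is correct and follows the paper's argument: the paper likewise computes $\braket{\Phi_g}{\Phi_h} = \frac{1}{d}\Tr[U_g^{\dagger}U_h] = \delta_{g,h}$ using the maximally-entangled trace identity and \eqref{eq:ortho-basis-of-unitaries}. Your dimension count ($\vert G \vert = d^{2}$ orthonormal vectors in a $d^{2}$-dimensional space) supplies the spanning step, which the paper leaves implicit.
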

\begin{proof}
    Orthonormality follows from 
    \begin{align}
        \bra{\Phi_{g}}\ket{\Phi_{h}} = \bra{\Phi}U_{g}^{\dagger}U_{h} \otimes I \ket{\Phi} = \frac{1}{d}\Tr[U_{g}^{\dagger}U_{h}] = \delta_{g,h} \ ,  
    \end{align}
    where we used $\bra{\Phi}K \otimes I \ket{\Phi} = \frac{1}{d}\Tr[K]$ for $K \in \Lin(\mbb{C}^{d})$ and \eqref{eq:ortho-basis-of-unitaries}.
\end{proof}
Given the above, we may define the $G$-Bell state measurement as the one that projects onto the set of states $\{\ket{\Phi_{g}}\}_{g \in G}$, i.e. for $K \in \Lin(\mbb{C}^{d} \otimes \mbb{C}^{d})$
\begin{align}\label{eq:G-BSM}
        \cM^{G}(K) \coloneq \sum_{g \in G} \Tr[K\dyad{\Phi_{g}}]\dyad{g}  \ . 
\end{align}

These definitions are sufficient to prove our general dense network coding theorem.

\begin{protocoldesc}[H]
\caption{Dense Network Coding for Tightly Network Codeable Group $(G,\cdot)$ of Order $d^{2}$}\label{prot:dense-network-coding-of-some-type}
		\textbf{Inputs:} \\
		\hspace{0.5cm}
		\begin{tabular}{ l l}
			$g \in G$ & Sender 1's Input  \\
            $h \in G$ & Sender 2's Input \\
            $\ket{\Phi} \in \mbb{C}^{d} \otimes \mbb{C}^{d}$ & Maximally Entangled State Shared by Senders 1 and 2
		\end{tabular}
		
		\vspace{0.5cm}
		
		\textbf{Protocol:} 
		\begin{enumerate}
            \item On input $g$, Sender 1 applies $U_{g}$ to their local system of $\ket{\Phi}$.
            \item On input $h$, Sender 2 applies $U_{h}^{T}$ to their local system of $\ket{\Phi}$.
            \item Senders 1 and 2 forward their systems to the receiver.
            \item The receiver applies the $G$-Bell state measurement \eqref{eq:G-BSM} to its received systems obtaining outcome $\hat{g} \in G$, which it outputs as the answer.
        \end{enumerate}
\end{protocoldesc}

\begin{theorem}\label{thm:dense-network-coding}
    Let $(G,\cdot)$ be a TNC group of order $d^{2}$ and $f(g,h) \coloneq g \cdot h$ be the function defined by the group operation. Protocol \ref{prot:dense-network-coding-of-some-type} computes $f$ perfectly. That is $P_{S}^{f}(\mbf{Q}^{E}(d,d)) = 1$. Moreover, this strategy is symmetrically minimal in the sense 
    \begin{align}\label{eq:sym-minimal}
        P_{S}^{f}(\mbf{Q}^{E}(d_{1},d_{2})) < 1 \text{ if } \min\{d_{1},d_{2}\} < d \ . 
    \end{align}
\end{theorem}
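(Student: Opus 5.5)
The plan is to prove the two claims separately: perfect correctness by direct computation, and symmetric minimality by reducing to a point-to-point entanglement-assisted guessing bound.

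\textbf{Correctness.} I would follow the proof of Theorem~\ref{thm:main-text-dense-coding} verbatim, with the discrete Weyl operators replaced by the TNC representation $\cU$.
\begin{enumerate}
    \item On inputs $(g,h)$, apply the transpose trick $(I\otimes K^{T})\ket{\Phi} = (K\otimes I)\ket{\Phi}$. This gives that the transmitted state is $(U_g\otimes U_h^{T})\ket{\Phi} = (U_gU_h\otimes I)\ket{\Phi}$.
    \item By \eqref{eq:proj-group-behaviour}, this equals $\omega(g,h)(U_{g\cdot h}\otimes I)\ket{\Phi} = \omega(g,h)\ket{\Phi_{g\cdot h}}$.
    \item The phase $\omega(g,h)$ is global, so it is irrelevant to the measurement statistics.
    \item By Proposition~\ref{prop:TNC-coding}, $\{\ket{\Phi_{g'}}\}_{g'\in G}$ is an orthonormal basis. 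Hence $\cM^{G}$ returns $g\cdot h$ with probability $1$.
\end{enumerate}
The protocol uses only a shared $\ket{\Phi}$ and one $d$-dimensional quantum system per sender, so it lies in $\mbf{Q}^{E}(d,d)$. Therefore $P_S^f(\mbf{Q}^E(d,d))=1$.

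\textbf{Minimality.} By symmetry, assume $d_1<d$. The target bound is $P_S^f(\mbf{Q}^E(d_1,d_2)) \le d_1^2/d^2 < 1$, uniformly over the dimension of the shared entanglement, so that it also bounds the supremum. When instead $d_2<d$, the same argument runs with the roles swapped, giving $d_2^2/d^2$.

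\begin{enumerate}
    \item \emph{Reduce to guessing.} By Lemma~\ref{lem:main-text-reduc-to-guessing-prob}, the success probability of any strategy equals $p_g(Z\vert A'B')$. Here $A'$ ($\vert A'\vert=d_1$) and $B'$ are the transmitted systems and $Z=g\cdot h$.
    \item \emph{Give the receiver extra information.} Supply the receiver with the classical input $h$ and with Sender 2's entire share $B$ of the entangled state before encoding. Knowing $h$ and $B$, the receiver can apply $\cF^{h}_{B\to B'}$ itself. By data processing (Proposition~\ref{prop:guessing-prob-properties}, Item 1), $p_g(Z\vert A'B') \le p_g(Z\vert A' B H)$.
    \item \emph{Pass from the output to the input.} Group multiplication is $H$-conditionally bijective: for fixed $h$, the map $g\mapsto g\cdot h$ is a bijection. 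So guessing $Z$ given $H$ is equivalent to guessing $g$. This follows from the equality $p_g(Z\vert HQ)=p_g(X\vert HQ)$ of Appendix~\ref{app:entropy-chain-rule}, or directly by relabelling the POVM with $h$-dependent inverse maps. Thus $p_g(Z\vert A'BH)=p_g(G\vert A'BH)$. Since $g$ is independent of $h$ and of $B$, $H$ can then be dropped, giving $p_g(G\vert A'B)$.
    \item \emph{Apply the entanglement-assisted chain rule.} Before encoding, $G$ is uniform and independent of $B$, so $H_{\min}(G\vert B)=2\log d$. Sender 1's encoding acts only on $A$ and outputs the $d_1$-dimensional system $A'$. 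The quantum-side-information chain rule $H_{\min}(G\vert BA')\ge H_{\min}(G\vert B)-2\log\vert A'\vert$ then gives $H_{\min}(G\vert BA')\ge 2\log d-2\log d_1$. This is the superdense coding limit, i.e.\ Holevo-type dimension counting.
    \item \emph{Conclude.} Exponentiating via \eqref{eq:guess-prob-and-min-ent} gives $p_g\le d_1^2/d^2<1$.
\end{enumerate}

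\textbf{Main obstacle.} The hard step is item 4, the chain rule with a \emph{quantum} conditioning system. Proposition~\ref{prop:guessing-prob-properties} only provides the classical version, which costs $\log\vert Y\vert$, and the Remark after it shows that quantum conditioning can cost twice as much. I would prove the needed factor-$2\log d_1$ bound directly. The first attempt would be the operator inequality $\rho_{GBA'}\le d_1^{2}\,\rho_{GB}\otimes\pi_{A'}$, which I would obtain from $\rho_{GA'B}\le \vert A'\vert\,\rho_{GB}\otimes I_{A'}$ applied to a purification. The alternative would be to bound the guessing probability as a state-discrimination SDP, constructing a dual feasible point of value $d_1^2/d^2$. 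Either route is standard, but it must be carried out carefully because the auxiliary entanglement dimension is unbounded.
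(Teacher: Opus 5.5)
Your correctness argument is the same as the paper's. For minimality you take a genuinely different route.

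\textbf{The paper's route.} Via Proposition~\ref{prop:upper-bound-on-EAQMN-by-classical-channel}, the paper merges the two entanglement-assisted senders into a single sender holding $(g,h)$ with a $d_1d_2$-dimensional quantum channel. It replaces this by a classical channel of the same size using Frenkel--Weiner (Proposition~\ref{prop:Frenkel-Weiner}) and removes the shared randomness. It then applies the classical chain rule together with uniformity of $p_Z$, obtaining $P_S^f(\mbf{Q}^E(d_1,d_2))\le d_1d_2/d^2$. This needs nothing about $f$ beyond a uniform output marginal, and it never uses a chain rule with quantum side information. However, the bound is below $1$ only when $d_1d_2<d^2$. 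The paper's closing claim that it is $<1$ whenever $\min\{d_1,d_2\}<d$ tacitly assumes something like $\max\{d_1,d_2\}\le d$. For instance, $d_1=d-1$, $d_2=d^2$ gives the vacuous bound $d-1$.

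\textbf{Your route.} You use a one-sided relaxation: hand the receiver $h$ and Bob's entire share. You then use that $g\mapsto g\cdot h$ is a bijection (Lemma~\ref{lem:cond-bij-p-guess-and-H-min}) and the superdense-coding limit. This gives $\min\{d_1,d_2\}^2/d^2$, which is never weaker than $d_1d_2/d^2$ and does not depend on the other sender's dimension. So it proves \eqref{eq:sym-minimal} exactly as stated, including the regime the paper's last step does not cover. The price is that you use the conditional bijectivity of group multiplication rather than only uniformity of the output.

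\textbf{Two minor points.}
\begin{itemize}
\item Dropping $H$ requires the joint statement $\rho_{GA'BH}=\rho_{GA'B}\otimes\pi_H$. This is true because neither Alice's encoding nor the share $B$ depends on $h$. Independence of $g$ from $h$ and from $B$ separately is not enough.
\item The step you flag as the main obstacle is short and needs no purification. For any positive semidefinite $X_{GBA'}$, averaging over the $\vert A'\vert^2$ Weyl conjugations on $A'$ gives $X\le \vert A'\vert\,\Tr_{A'}[X]\otimes I_{A'}$. Trace preservation of Alice's encoding gives $\rho_{GB}=\pi_G\otimes\rho_B$. Hence $\rho_{GBA'}\le \frac{d_1^2}{d^2}\,I_G\otimes\rho_B\otimes\pi_{A'}$, so $p_g(G\vert BA')\le d_1^2/d^2$ directly, uniformly in the entanglement dimension.
\end{itemize}
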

\noindent We remark \eqref{eq:sym-minimal} is a justification of the `tightly' modifier in naming TNC groups.
\begin{proof}
    For the main claim, as Protocol \ref{prot:dense-network-coding-of-some-type} is manifestly a strategy using entanglement-assistance and quantum communication, it suffices to show the receiver always outputs $g \cdot h$ when Sender 1 and 2's inputs are $g$ and $h$ respectively. The receiver receives 
    \begin{align}\label{eq:receivers-received-state}
        U_{g} \otimes U_{h}^{T}\ket{\Phi} = U_{g}U_{h} \otimes I\ket{\Phi} = \omega(g,h)U_{g \cdot h} \otimes I \ket{\Phi} = \omega(g,h)\ket{\Phi_{g \cdot h}} \ , 
    \end{align}
    where we have used the transpose trick and \eqref{eq:proj-group-behaviour}. By Born's rule, we can then calculate the probability of each outcome conditioned on the inputs:
    \begin{align}
        \Pr[\hat{g} \vert g , h] = \Tr[\dyad{\Phi_{\hat{g}}}\omega(g,h)\dyad{\Phi_{g \cdot h}} \omega(g,h)^{\ast}] = \delta_{\hat{g},g \cdot h} \ , 
    \end{align}
    where we have used the orthonormality of the states. Thus, for every pair of inputs $(g,h)$, the receiver outputs the product $g \cdot h$. This completes the proof of the main claim.

    To establish \eqref{eq:sym-minimal}, let $X$ and $Y$ denote the registers storing Sender 1 and 2's inputs respectively. Then we can compute the marginal on the output of the group operation stored in register $Z$,
    \begin{align}
        p_{Z} = \Tr_{XY}[p_{ZXY}] &= \Tr_{XY}\left[ \frac{1}{\vert G \vert^{2}} \sum_{g,h \in G} \dyad{g \cdot h}_{Z} \otimes \dyad{g}_{X} \otimes \dyad{h}_{Y}\right] \\
        &= \frac{1}{\vert G \vert^{2}} \sum_{g,h \in G} \dyad{g \cdot h} \\
        &= \frac{1}{\vert G \vert^{2}} \sum_{g \in G} \left(\sum_{h} \dyad{g \cdot h} \right) \\
        &= \frac{1}{\vert G \vert^{2}} \sum_{g \in G} I_{Z} \label{eq:sum-over-left-group-mult-is-id} \\
        &= \frac{1}{\vert G \vert}I_{Z} = \pi_{Z} \ , 
    \end{align}
    where to get \eqref{eq:sum-over-left-group-mult-is-id} we used the vectors $\{\ket{g}\}_{g \in G}$ define the computational basis for the $Z$ register and that left multiplication of the group by a group element is a bijection, so the sum over $h \in G$ in the previous line results in the identity. As such, $p_{Z}$ is the uniform distribution. Thus, by applying Proposition \ref{prop:upper-bound-on-EAQMN-by-classical-channel} with Sender $i$'s signaling dimension being $d_{i}$, Item 3 of Proposition \ref{prop:guessing-prob-properties}, and the fact $d^{2} = \vert G \vert = \vert Z \vert$,
    \begin{align}
         P_{S}^{\cdot}(\mbf{Q}^{EA}(d_{1} , d_{2})) \leq d_{1} \cdot d_{2} \cdot p_{g}(Z)_{\pi} = \frac{d_{1} \cdot d_{2} }{d^{2}} \ , 
    \end{align}
    which is strictly less than one if $\min\{d_{1},d_{2}\} < d$.
\end{proof}

\begin{proposition}\label{prop:upper-bound-on-EAQMN-by-classical-channel}
    For any function $f:\cX \times \cY \to \cZ$ and $d_{1},d_{2} \in \mbb{N}$, 
    \begin{align}
        P_{S}^{f}(\mbf{Q}^{E}(d_{1} , d_{2})) \leq d_{1} \cdot d_{2} \cdot p_{g}(Z)_{p_{Z}} \ ,
    \end{align}
    where $p_{Z}$ is the marginal of $p_{ZXY} \coloneq \sum_{x,y} p(x,y)\dyad{f(x,y)}_{Z} \otimes \dyad{x} \otimes \dyad{y}$.
\end{proposition}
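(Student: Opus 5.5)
The plan is to reduce the network problem to a guessing problem with classical-looking side information of bounded size, and then strip that side information off with the chain rule. Fix any strategy in $\mbf{Q}^{E}(d_{1},d_{2})$: a shared state $\rho_{AB}$, encodings $\cE^{x}_{A\to A'}$ and $\cF^{y}_{B\to B'}$ with $|A'|=d_{1}$ and $|B'|=d_{2}$, and a decoding POVM. These are SMP MNs, so by the reduction of Lemma~\ref{lem:main-text-reduc-to-guessing-prob} (or directly, since the decoder is just a POVM on $A'B'$) the success probability is at most $p_{g}(Z\vert A'B')_{\sigma}$. Here
\[
\sigma_{ZA'B'}=\sum_{x,y}p(x,y)\dyad{f(x,y)}\otimes(\cE^{x}\otimes\cF^{y})(\rho_{AB}).
\]
Everything then comes down to showing $p_{g}(Z\vert A'B')_{\sigma}\le d_{1}d_{2}\,p_{g}(Z)$.

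The obstacle is that $A'B'$ is quantum, and Item 5 of Proposition~\ref{prop:guessing-prob-properties} needs classical side information. Indeed, the Remark on the chain rule shows that a quantum register can cost up to $2\log$ of its dimension, which is exactly the superdense-coding phenomenon. Since the target bound is $d_{1}d_{2}$ rather than $d_{1}^{2}d_{2}^{2}$, I must avoid that loss. I would argue directly with the min-entropy chain rule for quantum side information, $H_{\min}(Z\vert A'B')\ge H_{\min}(Z)-\log|A'B'|$. This rule holds when no other system is entangled with $A'B'$, i.e.\ when the conditioning is only on $A'B'$ itself.

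Concretely, I would prove the operator inequality
\[
\sigma_{ZA'B'}\le p_{g}(Z)\, I_{Z}\otimes I_{A'B'}.
\]
To see it, note that $\sigma_{ZA'B'}=\sum_{z}p_{Z}(z)\dyad{z}\otimes\sigma^{z}_{A'B'}$ with each $\sigma^{z}$ a density operator. Hence $p_{Z}(z)\sigma^{z}\le p_{Z}(z)I\le \max_{z}p_{Z}(z)\,I=p_{g}(Z)I$. Choosing $\sigma_{B}=\pi_{A'B'}$ in the definition of $H_{\min}$ then gives
\[
H_{\min}(Z\vert A'B')\ge -\log\bigl(d_{1}d_{2}\,p_{g}(Z)\bigr).
\]
Applying \eqref{eq:guess-prob-and-min-ent} turns this into $p_{g}(Z\vert A'B')\le d_{1}d_{2}\,p_{g}(Z)$.

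An equivalent route avoids min-entropy entirely. For any POVM $\{\Gamma_{z}\}$, write
\[
\sum_{z}p_{Z}(z)\Tr[\Gamma_{z}\sigma^{z}]\le \max_{z}p_{Z}(z)\sum_{z}\Tr[\Gamma_{z}]=p_{g}(Z)\Tr[I_{A'B'}]=d_{1}d_{2}\,p_{g}(Z),
\]
using $\Tr[\Gamma_{z}\sigma^{z}]\le\Tr[\Gamma_{z}]$. Since the strategy was arbitrary, taking the supremum over $\mbf{Q}^{E}(d_{1},d_{2})$ concludes the proof. The entanglement and the encodings play no role beyond fixing the received dimension $d_{1}d_{2}$.

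So the anticipated main obstacle, the quantum chain rule, dissolves once one conditions only on the transmitted registers and not on any purification. Avoiding a purification is exactly why the factor is $d_{1}d_{2}$ rather than its square.
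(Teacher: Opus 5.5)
Your proof is correct, but it takes a different and much more elementary route than the paper. The paper argues in four steps:
\begin{itemize}
\item it coarse-grains the entanglement-assisted network into a single point-to-point quantum channel of dimension $d_{1}d_{2}$, so that $P_{S}^{f}(\mbf{Q}^{E}(d_{1},d_{2}))\le P_{S}^{f}(\mbf{Q}(d_{1}d_{2}))$;
\item it invokes the Frenkel--Weiner theorem (Proposition~\ref{prop:Frenkel-Weiner}) to replace that channel by a $d_{1}d_{2}$-dimensional classical channel with shared randomness;
\item it removes the shared randomness by the convexity argument of Proposition~\ref{prop:no-need-for-SR}, using Remark~\ref{rem:on-non-uniformity} to cover non-uniform inputs;
\item only then does it apply the classical-side-information chain rule (Item 5 of Proposition~\ref{prop:guessing-prob-properties}) to the resulting classical register.
\end{itemize}

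You instead bound the quantum guessing probability directly, via $\Tr[\Gamma_{z}\sigma^{z}]\le\Tr[\Gamma_{z}]$ and $\sum_{z}\Tr[\Gamma_{z}]=d_{1}d_{2}$, or equivalently via the operator inequality $\sigma_{ZA'B'}\le p_{g}(Z)\,I_{Z}\otimes I_{A'B'}$. This shows that the chain rule with factor $\vert C\vert$ already holds for a quantum register $C$, provided the guessed variable is classical and $C$ is the entire side information. That makes the detour through Frenkel--Weiner and shared randomness unnecessary for this statement. The same observation would also shortcut the analogous step in Theorem~\ref{thm:unassisted-QMN-bound}, where the paper says the quantum $C_{1}$ forces the Frenkel--Weiner step.

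What the paper's route buys is the intermediate structural fact that every MAC induced by $\mbf{Q}^{E}(d_{1},d_{2})$ lies in $\mbf{C}^{SR}(d_{1}d_{2})$. Hence $P_{S}^{f}(\mbf{Q}^{E}(d_{1},d_{2}))\le P_{S}^{f}(\mbf{C}(d_{1}d_{2}))$ holds for any figure of merit linear in the induced MAC, not just this particular bound. Your route buys a self-contained two-line proof that needs no deep simulation theorem and handles arbitrary input distributions directly.
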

\begin{proof}
    One can replace the entanglement-assisted senders with a single channel classical-to-quantum $\cE_{XY \to Q}$ where $Q \cong \mbb{C}^{d_{1} \cdot d_{2}}$ as the entanglement-assisted strategy implements a specific case of such a strategy, i.e. 
    \begin{align}
        P_{S}^{f}(\mbf{Q}^{E}(d_{1} , d_{2})) \leq P_{S}^{f}(\mbf{Q}(d_{1} \cdot d_{2})) \ . 
    \end{align}
    By the Frenkel-Weiner Theorem (Proposition \ref{prop:Frenkel-Weiner}), we may implement the optimal quantum strategy using a classical channel where the sender and receiver share arbitrary randomness, so we have
    \begin{align}
        P_{S}^{f}(\mbf{Q}(d_{1} \cdot d_{2})) \leq P_{S}^{f}(\mbf{C}^{SR}(d_{1} \cdot d_{2}) \ . 
    \end{align}
    By Proposition \ref{prop:no-need-for-SR} and Remark \ref{rem:on-non-uniformity}, even though we have not assumed the inputs are uniformly distributed, the shared randomness does not improve the success probability over using just the classical channel, 
    \begin{align}
        P_{S}^{f}(\mbf{C}^{SR}(d_{1} \cdot d_{2})) \leq P_{S}^{f}(\mbf{C}(d_{1} \cdot d_{2})) \ . 
    \end{align}
    Combining these inequalities in sequence and using Lemma \ref{lem:reduction-to-guessing-probability}, we have
    \begin{align}
        P_{S}^{f}(\mbf{Q}^{E}(d_{1} , d_{2})) \leq \max_{T_{C\vert XY}} p_{g}(Z \vert C)_{(\id_{Z} \bigotimes T)(p_{ZXY})} , 
    \end{align}
    where $C$ is a classical register of dimension $d_{1} \cdot d_{2}$. Finally, using Item 5 of Proposition \ref{prop:guessing-prob-properties} completes the proof.
\end{proof}

\paragraph{Dense Network Coding Does Not Transmit the Entire Input} In the main text we state that dense network coding does not transmit all of the information in the inputs to the receiver. This distinguishes it from each sender using superdense coding to transmit the inputs to the receiver to compute the function. To formalize that not all the information is transmitted, we show that the receiver cannot generally guess the inputs of the function when the senders perform dense network coding. 
\begin{proposition}\label{prop:DNC-does-not-transmit-inputs}
    Consider Protocol \ref{prot:dense-network-coding-of-some-type} with a distribution over inputs $\{p_{XY}(g,h)\}_{g,h \in G}$. If the senders follow Protocol \ref{prot:dense-network-coding-of-some-type}, then the receivers ability to guess the inputs correctly is given by
    \begin{align}
        p_{g}(XY \vert Z) = \sum_{g' \in G} \; \max_{g,h \in G: g \cdot h = g'} p_{XY}(g,h) \ .
    \end{align}
    In particular, if the inputs are independent and uniform, i.e. $p_{XY}(g,h) = \frac{1}{\vert G \vert^{2}}$ for all $g,h\in G$, then $p_g(XY|Z) = \frac{1}{d^{2}}$ and the receiver's ability to guess $g$ and $h$ is bounded.
\end{proposition}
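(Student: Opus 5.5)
Under Protocol \ref{prot:dense-network-coding-of-some-type} with inputs $(g,h)$, the proof of Theorem \ref{thm:dense-network-coding} shows that the receiver's $G$-Bell measurement outputs $g\cdot h$ with certainty. So the receiver's total knowledge is the classical register $Z$ holding $g\cdot h$. Moreover, any post-measurement system the receiver could retain is, up to the global phase $\omega(g,h)$, the pure state $\ket{\Phi_{g\cdot h}}$. This is a function of $g\cdot h$ alone, so by the data processing inequality (Item 1 of Proposition \ref{prop:guessing-prob-properties}) it adds nothing beyond $Z$. The plan is therefore to compute $p_{g}(XY\vert Z)$ for the classical joint distribution
\begin{align}
    p_{XYZ}(g,h,g') = p_{XY}(g,h)\,\delta_{g',g\cdot h} \ .
\end{align}

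For classical side information, Item 4 of Proposition \ref{prop:guessing-prob-properties} decomposes the guessing probability over the values $g'$ of $Z$:
\begin{align}
    p_{g}(XY\vert Z) = \sum_{g'} p_{Z}(g')\, p_{g}(XY)_{p_{XY\vert Z=g'}} \ .
\end{align}
With trivial side information, the optimal guess is the maximum-probability element, so $p_{g}(XY)_{p_{XY\vert Z=g'}} = \max_{g,h} p_{XY\vert Z}(g,h\vert g')$. The conditional distribution $p_{XY\vert Z}(\cdot\vert g')$ is supported on the fiber $\{(g,h): g\cdot h = g'\}$ and equals $p_{XY}(g,h)/p_{Z}(g')$ there. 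Multiplying by $p_{Z}(g')$ cancels the normalization, which gives
\begin{align}
    p_{g}(XY\vert Z) = \sum_{g'\in G}\ \max_{g\cdot h = g'} p_{XY}(g,h) \ .
\end{align}
Values $g'$ with $p_{Z}(g')=0$ contribute zero to both sides, so they need no separate treatment.

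For the uniform case, each term in the sum equals $1/\vert G\vert^{2}$, and there are $\vert G\vert$ terms. Hence
\begin{align}
    p_{g}(XY\vert Z) = \frac{1}{\vert G\vert} = \frac{1}{d^{2}} \ .
\end{align}
This is far below the value $1$ attained when each sender uses superdense coding to the receiver. Each fiber has exactly $\vert G\vert$ elements, since $h = g^{-1}\cdot g'$ is determined by $g$, but only the value of each term is needed here.

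The only step that needs care is the first: justifying that the receiver's side information reduces to the classical $Z$. If the statement is read as conditioning only on the measurement outcome, this is immediate. If one allows the receiver any measurement of the received state, the received states $\omega(g,h)\ket{\Phi_{g\cdot h}}$ depend on $(g,h)$ only through $g\cdot h$. The side information is then a function of $Z$, and the bound follows by data processing.
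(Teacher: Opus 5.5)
Your proposal is correct and follows essentially the paper's route: reduce the received systems $S_{1}S_{2}$ to the classical register $Z$, then evaluate the classical guessing probability fiber by fiber, including the same count for the uniform case. The only difference is cosmetic: the paper gets $p_{g}(XY\vert S_{1}S_{2}) = p_{g}(XY\vert Z)$ in one step from the isometry $V_{Z\to S_{1}S_{2}}=\sum_{g\in G}\ket{\Phi_{g}}\bra{g}$ with isometric invariance of the min-entropy, and cites the standard classical formula directly, whereas you combine data processing through the preparation map $g'\mapsto\dyad{\Phi_{g'}}$ (the converse) with the $G$-Bell measurement (achievability).
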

\begin{proof}
    For clarity, label the quantum systems of Senders 1 and 2 in Protocol \ref{prot:dense-network-coding-of-some-type} by $S_{1}$ and $S_{2}$ so that the initial shared state is $\ket{\Phi}_{S_{1}S_{2}}$. It follows from \eqref{eq:receivers-received-state} that the joint state of the inputs, output, and the quantum systems after encoding is
    \begin{align}
        \rho_{XYS_{1}S_{2}} = \sum_{g,h \in G} p_{XY}(g,h) \dyad{g}_{X} \otimes \dyad{h}_{Y} \otimes \dyad{g \cdot h}_{Z} \otimes \dyad{\Phi_{g \cdot h}}_{S_{1}S_{2}} \ . 
    \end{align}
    Our interest is in $p_{g}(XY \vert S_{1}S_{2})_{\rho}$ as the $S_{1}S_{2}$ systems are what are available to the receiver. To establish the generic claim in the proposition, we will first show 
    \begin{align}\label{eq:guessing-inputs-same-from-just-output}
        p_{g}(XY \vert S_{1}S_{2})_{\rho} = p_{g}(XY \vert Z)_{\rho}
    \end{align} 
    To see this, define the isometry $V_{Z \to S_{1}S_{2}} = \sum_{g \in G} \ket{\Phi_{g}}_{S_{1}S_{2}}\bra{g}_{Z}$, a direct calculation shows $V\rho_{XYZ}V^{\dagger} = \rho_{XYS_{1}S_{2}}$. Combining this with \eqref{eq:guess-prob-and-min-ent} and the invariance of the min-entropy under local isometries \cite[Corollary 6.11]{Tomamichel-Book}, we conclude \eqref{eq:guessing-inputs-same-from-just-output}. Then, using \cite[Eq.~6.27]{Tomamichel-Book}, 
    \begin{align}
        p_{g}(XY \vert Z)_{\rho} = \sum_{g' \in G} \; \max_{g,h \in G: g \cdot h = g'} p_{XYZ}(g,h,g') = \sum_{g' \in G} \; \max_{g,h \in G: g \cdot h = g'} p_{XY}(g,h) \ , 
    \end{align}
    where we have used the fact that the value of $Z$ is completely determined by the pair $(g,h)$. This proves the main claim. To establish the claim for independent and uniform inputs, note that in that case $p_{XY}(g,h) = \frac{1}{d^{4}}$ for all $g,h \in G$, so one obtains $d^{2} \cdot \frac{1}{d^{4}} = \frac{1}{d^{2}}$ as claimed.
\end{proof}
Proposition~\ref{prop:DNC-does-not-transmit-inputs} shows that when Protocol \ref{prot:dense-network-coding-of-some-type} is performed, the receiver is unable to guess the inputs encoded by the senders.  
This bound on the receiver's guessing probability implies that not all of the input information was transmitted to the receiver.
Another way of seeing the above property is by using the von Neumann entropy. In this case,
\begin{align}
    H(XY \vert S_{1}S_{2}) = H(XY \vert Z) = H(XYZ) - H(Z) = H(XY) - H(Z) \ ,
\end{align}
where the first equality is isometric invariance as in the above proof, the second is the chain rule for conditional entropy, and the final equality is isometric invariance and that $Z$ can be computed isometrically from $X$ and $Y$. When the input distributions are uniform, it follows that
\begin{align}
    H(XY) - H(Z) = \log(\vert G \vert^{2})-\log(\vert G \vert) = \log(\vert G \vert) = 2 \log(d) \ , 
\end{align}
which can be interpreted as the receiver lacking two of the four dits of information input to the dense network coding protocol.

\subsection{Algebraic Structure of Tightly Network Codeable Groups}\label{sec:alg-struc-of-TNC}
Theorem \ref{thm:dense-network-coding} shows the group operation of a TNC group can always be computed using a dense network coding protocol. It is therefore natural to investigate the algebraic structure of TNC groups and identify examples of such groups. An immediate observation is the importance of \eqref{eq:proj-group-behaviour}, which we restate for reference:
\begin{align}\label{eq:projective-representation}
    U_{g}U_{h} = \omega(g,h)U_{g \cdot h} \quad \forall g,h \in G \ , 
\end{align}
where $\omega(g,h) \in \Unitary(\mbb{C})$. Such a condition tells us that the set of unitaries is by definition a projective representation of the group $(G,\cdot)$. We will develop the significance of this condition through the rest of this subsection.

\paragraph{Distinction from the Algebraic Structure for Superdense Coding}
~We begin by distinguishing dense network coding from superdense coding. Werner identified that any set $G$ of size $d^{2}$ with unitaries $\{U_{g}\}_{g \in G} \subset \Unitary(\mbb{C}^{d})$ satisfying \eqref{eq:ortho-basis-of-unitaries} defines a teleportation/dense coding protocol that achieves the `best results' in terms of minimal dimensions for the entangled state and maximal advantage over classical resources \cite[Theorem 1]{werner2001-dense-coding}. In dense network coding, we extend the definition of the set $G$ by equipping it with a binary multiplication operator that makes it a group and requiring the group to admit a projective representation. To explain why dense network coding requires these extra conditions relative to superdense coding, it suffices to examine the goals of each task. In superdense coding, the sender's input is a label $g \in G$ and the goal is for the decoder to determine it correctly. It follows that any encoding of the $\vert G \vert$ symbols into quantum states that allows for perfect decoding suffices for superdense coding. Such an encoding that uses shared entanglement is guaranteed by \eqref{eq:ortho-basis-of-unitaries} as shown in Proposition \ref{prop:TNC-coding}. In contrast, dense network coding considers two independent inputs from the set, $g,h \in G$, and the decoder must determine the output of a \textit{function} of these inputs, $f(g,h)$. Thus, the encoding must allow each symbol $g \in G$ to be perfectly decoded, and must perfectly encode the functions output $f(g,h)$. When the function corresponds to the group operation, these requirements are guaranteed to be satisfied as shown in \eqref{eq:projective-representation} and in the proof of Theorem \ref{thm:dense-network-coding}. Thus, the necessary algebraic structure of superdense coding differs from dense network coding, which is due to the difference in the task.

\paragraph{Correspondence Between TNC Groups and Nice Error Bases} ~We now turn to further characterizing TNC groups. To this end, it will be easiest to first show they correspond to groups introduced by Knill for error correction \cite{knill1996group} whose algebraic structure has already  been studied \cite{werner2001-dense-coding, klappenecker2002beyond, klappenecker2003unitary}.
\begin{definition}\cite{knill1996group}(see also \cite{klappenecker2002beyond}) \label{def:NEB}
    A nice error basis on $\mbb{C}^{d}$ is a set of unitaries indexed by a group $(G,\cdot)$ of order $d^{2}$ with identity element $e$, $\cE = \{U_{g}\}_{g \in G} \subset \Unitary(\mbb{C}^{d})$, such that
    \begin{enumerate}[itemsep=0pt]
        \item $U_{e} = I$
        \item $\Tr[U_{g}] = d \delta_{g,e}$ 
        \item For all $g,h \in G$, there exists $\omega(g,h) \in \mbb{C}\setminus\{0\}$ such that $U_{g}U_{h} = \omega(g,h)U_{g \cdot h}$. 
    \end{enumerate}
    The group $G$ is called the index group of the nice error basis.
\end{definition}

\begin{proposition}\label{prop:correspondence-between-NEB-and-TNC}
    A group $(G,\cdot)$ is the index group of a nice error basis if and only if it is a TNC group.
\end{proposition}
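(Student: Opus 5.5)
We prove the two directions separately, comparing the conditions of Definition \ref{def:TNC} with those of Definition \ref{def:NEB}. The work lies in relating the orthogonality condition \eqref{eq:ortho-basis-of-unitaries} to the trace condition $\Tr[U_g] = d\delta_{g,e}$, and in normalizing the representation so that $U_e = I$.

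For the direction nice error basis $\Rightarrow$ TNC, let $\{U_g\}$ be a nice error basis with index group $G$. Condition 2 of Definition \ref{def:TNC} only needs $|\omega(g,h)|=1$. This follows because $U_gU_h$ and $U_{g\cdot h}$ are both unitary, so $|\omega(g,h)|\,\|U_{g\cdot h}\|=\|U_gU_h\|=1$.

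For orthonormality, first show $U_g^\dagger$ is proportional to $U_{g^{-1}}$. From $U_gU_{g^{-1}}=\omega(g,g^{-1})U_e=\omega(g,g^{-1})I$ we get $U_g^{\dagger}=U_g^{-1}=\omega(g,g^{-1})^{-1}U_{g^{-1}}$. Hence
\[
U_g^\dagger U_h=\omega(g,g^{-1})^{-1}\,\omega(g^{-1},h)\,U_{g^{-1}\cdot h}.
\]
Taking traces and using condition 2 of Definition \ref{def:NEB} gives a nonzero trace only when $g^{-1}\cdot h=e$, i.e.\ $g=h$. When $g=h$, the trace is $\Tr[U_g^\dagger U_g]=\Tr[I]=d$. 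This establishes \eqref{eq:ortho-basis-of-unitaries}.

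For the direction TNC $\Rightarrow$ nice error basis, let $\{U_g\}$ be a TNC representation. We first normalize the identity. From $U_eU_e=\omega(e,e)U_e$ and invertibility of $U_e$, we get $U_e=\omega(e,e)I$. Define $V_g:=\omega(e,e)^{-1}U_g$, using $|\omega(e,e)|=1$. The family $\{V_g\}$ is still unitary, still orthonormal in the sense of \eqref{eq:ortho-basis-of-unitaries} since the phase cancels, and still projective with new cocycle $\omega'(g,h)=\omega(e,e)^{-1}\omega(g,h)$. It also satisfies $V_e=I$. Then the trace condition follows from orthonormality:
\[
\Tr[V_g]=\Tr[V_e^\dagger V_g]=d\,\delta_{g,e}.
\]
Finally, $\omega'(g,h)$ is nonzero because it is a unit complex number.

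The main subtlety is the normalization step in the TNC $\Rightarrow$ nice error basis direction. The key observation that makes it work is that projectivity at $(e,e)$ forces $U_e$ to be a scalar.

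The other point needing care is the identity $U_g^\dagger\propto U_{g^{-1}}$ in the forward direction. It relies on $U_e=I$ being available in that direction, which condition 1 of Definition \ref{def:NEB} supplies. Everything else is a direct calculation.
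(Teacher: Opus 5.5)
Your proof is correct. In the direction from a TNC group to a nice error basis it is essentially the paper's argument. The paper sets $V_g \coloneq U_e^\dagger U_g$, and your observation $U_e=\omega(e,e)I$ shows this is exactly your $\omega(e,e)^{-1}U_g$. The paper tracks the phases $\omega(e,g)$ separately for each $g$, but these all equal $\omega(e,e)$.

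The genuine difference is in the direction from a nice error basis to a TNC group. The paper first rescales the unitaries to determinant one, so the cocycle takes root-of-unity values. It then performs a gauge transformation $U_g''=\lambda(g)U_g'$ chosen so that $\omega''(g,g^{-1})=1$. This gives $U_g^\dagger=U_{g^{-1}}$, and the diagonal trace is read off as $\omega(g^{-1},g)\,d=d$.

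You skip both normalizations. Unit modulus of $\omega(g,h)$ comes directly from the unitarity of $U_gU_h$ and $U_{g\cdot h}$. You then carry the phase in $U_g^\dagger=\omega(g,g^{-1})^{-1}U_{g^{-1}}$ along and note that it never matters. Off the diagonal the trace vanishes regardless of the phase, and on the diagonal $\Tr[U_g^\dagger U_g]=\Tr[I]=d$ trivially.

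Your route is shorter, takes the nice error basis itself unchanged as the TNC representation, and avoids the cocycle bookkeeping and square-root choices. The paper's route gives a normalized representation satisfying $U_g^\dagger=U_{g^{-1}}$, which is structurally pleasant but not needed for the proposition.
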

\begin{proof}
    $(\Longrightarrow)$ Let $(G,\cdot)$ be the index group of a nice error basis. Let $\{U_{g}\}_{g \in G}$ be the corresponding nice error basis. We will construct a set of unitaries indexed by $G$ that satisfy the conditions of Definition \ref{def:TNC}. As noted in \cite{knill1996group}, one can re-normalize the $U_{g}$ into $\{U_{g}'\}_{g}$ such that $\det[U_{g}'] = 1$ for all $g \in G$, which results in $\omega'(g,h)$ defined by $U_{g}'U_{h'} = \omega'(g,h)U_{g\cdot h}$ being a root of unity for all $g,h \in G$. We now wish to construct $\{U_{g}''\}_{g}$ with constants $\omega''(g,h)$ such that $\omega''(g,g^{-1}) = 1$ for all $g$. This is a known normalization achievable by a gauge transformation,\footnote{In representation theory parlance, it is normalizing the 2-cocycle specified by $\omega'(g,h)$.} but for completeness we provide it. 

     Let $\{\lambda(g)\}_{g \in G} \subset \Unitary(\mbb{C}^{d})$ be not yet specified and $U_{g}'' \coloneq \lambda(g)U_{g}'$ for all $g \in G$. By direct calculation,
     \begin{align}\label{eq:gauge-trans-identity}
         \omega''(g,h) = \frac{\lambda(g)\lambda(h)}{\lambda(gh)} \omega'(g,h) \ .
     \end{align}
     First we set $\lambda(e) = 1$. Noting Item 3 of Definition \ref{def:NEB} guarantees $\omega'(g,e) = \omega'(e,g) = 1$ for all $g \in G$, one may use \eqref{eq:gauge-trans-identity} to conclude $\omega''(g,e) = \omega''(e,g) = 1$ for all $g \in G$. This guarantees $U_{g}U_{e} = U_{g} = U_{e}U_{g}$ as needed. Next, using the associativity of matrix multiplication, Item 3 of Definition \ref{def:NEB} implies the cocycle identity
    \begin{align}
        \omega'(a,b)\omega'(ab,c) = \omega'(a,bc)\omega'(b,c) \quad \forall a,b,c \in G \ .
    \end{align}
    By setting $a=g$, $b=g^{-1}$, $c=g$, we have $\omega'(g,g^{-1}) = \omega'(g^{-1},g)$ and using \eqref{eq:gauge-trans-identity} and that $\lambda(e) =1$, 
    \begin{align}\label{eq:gauge-trans-for-g-and-inverse}
        \omega''(g,g^{-1}) = \lambda(g)\lambda(g^{-1})\omega'(g,g^{-1}) = \lambda(g^{-1})\lambda(g)\omega'(g^{-1},g) = \omega''(g^{-1},g) \ .
    \end{align}
    Thus it suffices to choose the remaining $\lambda(g)$ to guarantee $\omega''(g,g^{-1}) = 1$. By \eqref{eq:gauge-trans-for-g-and-inverse}, our aim is to choose the remaining $\lambda(g)$ to satisfy $\lambda(g)\lambda(g^{-1}) = \omega'(g,g^{-1})^{-1}$ for all $g \in G$. To do this, we iterate over all $g \in G$ such that when the value of $\lambda(g)$ has not yet been fixed:
    \begin{itemize}
        \item if $g = g^{-1}$, we let $\lambda(g)$ be the principal root of the square root of $\omega'(g,g^{-1})^{-1}$ 
        \item if $g \neq g^{-1}$, we let $\lambda(g) =1$ and $\lambda(g^{-1}) = \omega'(g,g^{-1})^{-1}$.
    \end{itemize}
    As $\omega'(g,h) \neq 0$, the values of $\lambda(g)$ are always well-defined and as $g$ is the inverse of $g^{-1}$, this iteration completely defines the assignment of $\lambda(g)$. In total, we have constructed a NEB $\{U_{g}''\}_{g \in G}$ where $U_{e} = I$, $\omega(g,e)=\omega(e,g)=\omega(e,e)= 1$ for all $g \in G$, and $\omega(g,g^{-1}) = 1$ for all $g \in G$. Note this implies 
    \begin{align}\label{eq:NEB-converted-to-nice-rep}
        U_{g}U_{g^{-1}} = I \Longrightarrow U_{g}^{\dagger} = U_{g^{-1}} \quad \forall g \in G \ ,
    \end{align}
    which is a standard identity for a non-projective unitary representation of a group.
    
    Finally, we show $\{U_{g}''\}_{g \in G}$ is a TNC representation of $(G,\cdot)$. As the $\omega(g,h) \in \Unitary(\mbb{C})$, Item 2 of Definition \ref{def:TNC} is satisfied, thus we just need to prove Item 1. We have
    \begin{align}
        \Tr[U_{g}^{\dagger}U_{h}] = \Tr[U_{g^{-1}}U_{h}] = \omega(g^{-1},h)\Tr[U_{g^{-1} \cdot h}] = \omega(g^{-1},h) d \delta_{g^{-1} \cdot h, e} \ , 
    \end{align}
    where we used \eqref{eq:NEB-converted-to-nice-rep} and then Item 2 of Definition \ref{def:NEB}. Moreover, as the inverse of a group element is unique, $\delta_{g^{-1} \cdot h,e} = \delta_{g,h}$. If $g = h$, then $\omega(g^{-1},h) = \omega(g^{-1},g) = 1$ by our construction and if $g \neq h$, then $\delta_{g,h} = 0$. Thus, $\omega(g^{-1},h) d \delta_{g^{-1} \cdot h, e} = d \delta_{g,h}$. Putting this together, $\Tr[U_{g}^{\dagger}U_{h}] = d \delta_{g,h}$ for all $g,h \in G$ as wanted. Thus, if $(G,\cdot)$ is the index group of a nice error basis, it is also a TNC group.
    
   $(\Longleftarrow)$ Let $(G,\cdot)$ be a TNC group. Let $\{U_{g}\}_{g \in G}$ be its TNC representation. We will construct a NEB indexed by $G$ from this. Let $e \in G$ denote the identity element. Define $\{V_{g} \coloneq U_{e}^{\dagger}U_{g} \}_{g \in G}$. Then $V_{e} = U_{e}^{\dagger}U_{e} = I$ as $U_{e}$ is a unitary. Next,
   \begin{align}
       \Tr[V_{g}] = \Tr[U_{e}^{\dagger}U_{g}] = d \delta_{g,e} \ ,
   \end{align}
   where the second equality is Item 1 of Definition \ref{def:TNC}. Now, by Item 2 of Definition \ref{def:TNC}, $U_{e}U_{g} = \omega(e,g)U_{e \circ g} = \omega(e,g)U_{g}$ for $g \in G$. By multiplying on the left hand side by $U_{e}^{\dagger}$, $U_{g} = \omega(e,g)U_{e}^{\dagger}U_{g}$ for $g \in G$. Re-writing, $\omega(e,g)^{-1}U_{g} = U_{e}^{\dagger}U_{g} = V_{g}$ for $g \in G$. Therefore, for any $g,h \in G$,
   \begin{align}
       V_{g}V_{h} &= \omega(e,g)^{-1} U_{g} \; \omega(e,h)^{-1}U_{h} \\
       &= \omega(e,g)^{-1}\omega(e,h)^{-1}\omega(g,h)U_{g \cdot h} \\
       &= \frac{\omega(g,h)\omega(e,g \cdot h)}{\omega(e,g)\omega(e,h)} V_{g \cdot h} \\
       &\coloneq \omega'(g,h) V_{g \cdot h} \ , 
    \end{align}
    where the second equality uses Item 2 of Definition \ref{def:TNC} and the rest is just $V_{g} = \omega(e,g)^{-1}$. As the $\omega$ terms are in $\Unitary(\mbb{C})$, they are closed under multiplication and division, so $\omega'(g,h) \in \Unitary(\mbb{C})$ for all $g,h$. Thus, we have established $\{V_{g}\}_{g \in G}$ are a nice error basis according to Definition \ref{def:NEB} and thus $G$ is the index group of said nice error basis.
\end{proof}

Given the above, the characterization of TNC groups is the same as the characterization of nice error bases, which was done to a large degree in \cite{klappenecker2002beyond}. An example application of this is that we are able to conclude the following.
\begin{proposition}\cite[Theorem 2]{klappenecker2002beyond}\label{prop:Abelian-TNC-groups}
If $(G,\cdot)$ is Abelian and a TNC group, then $G$ is of symmetric type (i.e. there exists Abelian $H$ such that $G \cong H \times H$). Conversely, any finite Abelian group of symmetric type is a TNC group.
\end{proposition}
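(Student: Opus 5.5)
By Proposition \ref{prop:correspondence-between-NEB-and-TNC}, being a TNC group is the same as being the index group of a nice error basis. So the statement can be quoted directly from \cite[Theorem 2]{klappenecker2002beyond}. Below is the self-contained argument I would give instead; the converse is easy, and the forward direction reduces to a structure theorem for nondegenerate alternating forms on finite Abelian groups.

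\emph{Converse.} Write $H \cong \mbb{Z}_{n_1} \times \cdots \times \mbb{Z}_{n_m}$ and set $d = \prod_i n_i$, so $|H \times H| = d^2$. Index $G = H \times H$ by pairs $((q_i)_i,(r_i)_i)$ and define
\begin{align}
U_{(q,r)} \coloneq \bigotimes_{i} W^{(n_i)}_{q_i,r_i},
\end{align}
where $W^{(n_i)}_{q,r}$ are the discrete Weyl operators \eqref{eq:discrete-weyl-operators} on $\mbb{C}^{n_i}$. Item 1 of Definition \ref{def:TNC} follows because the trace is multiplicative over tensor products and each factor satisfies $\Tr[W_{q,r}^\dagger W_{s,t}] = n_i\,\delta$ \cite[Eq.~4.78]{WatrousBook}. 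Item 2 follows factorwise from $W_{q,r}W_{s,t} = \zeta^{rs}W_{q+s,r+t}$, as used in the proof of Theorem \ref{thm:main-text-dense-coding}. The phases are products of roots of unity, hence lie in $\Unitary(\mbb{C})$.

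\emph{Forward direction.} Let $\{U_g\}$ be a TNC representation of an Abelian $G$ with $|G| = d^2$.
\begin{enumerate}
\item Since $g\cdot h = h\cdot g$, Item 2 of Definition \ref{def:TNC} gives $U_gU_h = \beta(g,h)U_hU_g$ with $\beta(g,h) \coloneq \omega(g,h)/\omega(h,g)$.
\item Comparing $U_gU_hU_k$ commuted past $U_k$ in two ways shows $\beta$ is a bicharacter. It is alternating since $\beta(g,g) = 1$.
\item $\beta$ is nondegenerate. If $\beta(g,\cdot) \equiv 1$, then $U_g$ commutes with every $U_h$. By Item 1 these span $\Lin(\mbb{C}^d)$, so $U_g$ is a scalar $cI$. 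Then $\Tr[U_e^\dagger U_g] = c\,\Tr[U_e^\dagger] = d\,\delta_{g,e}$. If $g \neq e$ this forces $\Tr[U_e^\dagger] = 0$, hence $\Tr[U_g^\dagger U_g] = \overline{c}\,\Tr[U_g^\dagger]=0$, which contradicts $\Tr[U_g^\dagger U_g] = d$. So $g = e$.
\end{enumerate}

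\emph{Main obstacle.} It remains to show that a finite Abelian group carrying a nondegenerate alternating bicharacter has the form $H \times H$. I would split into Sylow $p$-parts, which are $\beta$-orthogonal because bicharacter values on elements of coprime orders are roots of unity of coprime orders, hence equal $1$. I would then run a symplectic Gram--Schmidt induction on each $p$-part:
\begin{enumerate}
\item Choose $g$ of maximal order $p^k$. Nondegeneracy makes $\beta(g,\cdot)$ a character of order exactly $p^k$.
\item Choose $h$ with $\beta(g,h)$ a primitive $p^k$-th root of unity. Then $h$ also has order $p^k$, and $\langle g\rangle \cap \langle h \rangle = \{e\}$.
\item Show $G_p = \langle g,h\rangle \oplus \langle g,h\rangle^{\perp}$ and that $\beta$ restricted to the complement is still nondegenerate.
\end{enumerate}
Induction then yields $G_p \cong \prod_j (\mbb{Z}_{p^{k_j}})^2$, and taking $H = \prod_p \prod_j \mbb{Z}_{p^{k_j}}$ gives $G \cong H \times H$. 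Step 3, the orthogonal splitting, is where care is needed. The projection $x \mapsto x \cdot g^{a}h^{b}$ must be chosen so that $\beta$ evaluates correctly against both generators. Solvability of the resulting exponent equations uses that $\beta(g,h)$ has the maximal possible order $p^k$.
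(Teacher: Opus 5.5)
Your argument is correct in outline, but it takes a different route from the paper. The paper gives no proof of its own: it transports \cite[Theorem 2]{klappenecker2002beyond} across the NEB--TNC correspondence of Proposition \ref{prop:correspondence-between-NEB-and-TNC}. You work directly with a TNC representation instead.
\begin{itemize}
\item The commutator phase $\beta(g,h)=\omega(g,h)/\omega(h,g)$ is an alternating bicharacter.
\item The spanning property from Item 1 of Definition \ref{def:TNC} makes $\beta$ nondegenerate.
\item The classification of nondegenerate alternating forms on finite Abelian groups (Sylow splitting plus symplectic Gram--Schmidt) then gives $G\cong H\times H$.
\item For the converse you write down the tensor-product Weyl representation of $H\times H$ explicitly.
\end{itemize}
This buys a self-contained proof that avoids both the external citation and the gauge-normalization step inside the NEB correspondence. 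It also makes the converse constructive, whereas the paper only discusses $\mbb{Z}_d\times\mbb{Z}_d$ and $\mbb{Z}_2^n\times\mbb{Z}_2^n$.

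The price is that you must carry out the symplectic splitting, which is routine once set up. Let $\zeta=\beta(g,h)$, primitive of order $p^k$, and $K=\langle g,h\rangle$. The conditions $\beta(xg^ah^b,g)=\beta(x,g)\zeta^{-b}=1$ and $\beta(xg^ah^b,h)=\beta(x,h)\zeta^{a}=1$ decouple. They are solvable because $\beta(x,\cdot)$ takes values in the $p^k$-th roots of unity. Finally, $K\cap K^{\perp}=\{e\}$ because $\beta(g^ah^b,g^ch^d)=\zeta^{ad-bc}$.

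One step needs repair. In your nondegeneracy argument you correctly deduce $\Tr[U_e^\dagger]=0$. The next inference, ``hence $\Tr[U_g^\dagger U_g]=\overline{c}\,\Tr[U_g^\dagger]=0$'', does not follow. With $U_g=cI$ one has $\Tr[U_g^\dagger]=\overline{c}\,d\neq 0$, and the vanishing of $\Tr[U_e^\dagger]$ says nothing about $U_g$.

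The missing observation is that $U_e$ is itself a scalar. Item 2 of Definition \ref{def:TNC} with $g=h=e$ gives $U_e^2=\omega(e,e)U_e$. Since $U_e$ is invertible, $U_e=\omega(e,e)I$, so $\Tr[U_e^\dagger]=\overline{\omega(e,e)}\,d\neq 0$. That is the contradiction you wanted, and with this line inserted the forward direction goes through.
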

\begin{proof}
    This is a re-statement of \cite[Theorem 2]{klappenecker2002beyond} in light of Proposition \ref{prop:correspondence-between-NEB-and-TNC}.
\end{proof}
For non-Abelian TNC groups (equivalently nice error bases), a clean characterization is not known (see the discussion after the proof of Theorem 3 in \cite{klappenecker2002beyond}). Nonetheless, in Step 2 of the proof of Theorem 3 in \cite{klappenecker2002beyond}, it is shown that one can construct a TNC group $G$ from any group $H$ of central type with cyclic center, thus a TNC group need not be Abelian. 

\paragraph{Simple Examples of TNC Groups from Discrete Weyl Operators} ~
Proposition \ref{prop:Abelian-TNC-groups} and the discussion thereafter show the existence of many TNC groups. Similar to many topics in quantum information theory, we believe the most natural family of TNC groups are those induced by the Discrete Weyl operators (recall~\eqref{eq:discrete-weyl-operators}). As follows from their commutation relations, discrete Weyl operators acting on $\mbb{C}^{d}$ are in fact the projective representation of the product group $\mbb{Z}_{d} \times \mbb{Z}_{d}$. As $\mbb{Z}_{d}$ is a cyclic group, and thus Abelian, $\mbb{Z}_{d} \times \mbb{Z}_{d}$ is also an example of a group relevant by Proposition \ref{prop:Abelian-TNC-groups}. Thus, the reason computing $\oplus_{d}^{2}$ via dense network coding arises is because it is the corresponding group operation.

A similar observation can be made for computing the bitwise XOR of two pairs of bitstrings of length $n$. In this case, one uses the Pauli string representation of $\mbb{Z}_{2}^{n} \times \mbb{Z}_{2}^{n}$ (which is a group of order $2^{2n}$). This follows by defining the Pauli and Discrete-Weyl strings:
\begin{align}
    X_{\vec{x}} \coloneq  \bigotimes_{i \in [n]} X^{x_{i}} \quad Z_{\vec{z}} \coloneq \bigotimes_{i \in [n]} Z^{z_{i}} \quad W_{\vec{x},\vec{z}} \coloneq \bigotimes_{i \in [n]} X^{x_{i}}Z^{z_{i}} \quad \forall x,z \in \{0,1\}^{n} \ , 
\end{align}
where $X$ and $Z$ in this case are the shift and phase operators (recall \eqref{eq:phase-and-shift-operators}) for $d=2$, i.e. the Pauli $X$ and $Z$ operators. The operators $W_{\vec{x},\vec{z}}$ are unitaries on $\mbb{C}^{2^{n}}$ that, using the commutation relations for discrete Weyl operators \cite[Eq.~4.75]{WatrousBook}, satisfy $W_{\vec{x},\vec{z}}W_{\vec{x}',\vec{z}'} = (-1)^{\vec{x}' \cdot \vec{z}}W_{\vec{x} \oplus_{2}^{n} \vec{x}', \vec{z} \oplus_{2}^{n} \vec{z}'}$. As $2^{n} = \sqrt{\vert \mbb{Z}_{2}^{n} \times \mbb{Z}_{2}^{n} \vert}$, we have shown this is an example of a TCN group by definition, and thus another function that may be computed using dense network coding. Finally, we remark you can of course further generalize bitstrings to ditstrings to unify both examples.

\section{From Channel Simulation to Guessing Probability}\label{app:chan-sim-to-guessing-prob}
Having introduced dense network coding, our goal is to show that dense network coding requires both entanglement-assistance and quantum communication to be achieved. In this section, we take the first step in this direction by re-expressing the optimal success probability of computing $f$ over a family of simultaneous message passing (SMP) MNs (Definition \ref{def:SMP-MN}) as an optimization of the guessing probability over the possible encodings.

\begin{lemma}\label{lem:reduction-to-guessing-probability}
    Let $f: \cW^{k} \to \cZ$ be a function. Let $\mbf{S}$ be a set of $k$-senders, one receiver MNs that are SMP MNs (Definition \ref{def:SMP-MN}) so that we can always identify an `encoding-and-transmission channel' $\cE^{\net}_{W^{k} \to C}$ where the $C$ register is the total message the receiver receives. Then
    \begin{align}
        P_{S}^{f}(\mbf{S}) = \sup_{\cE^{\net}} p_{g}(Z \vert C)_{(\id_{Z} \otimes \cE^{\net}_{W^{k} \to C})(\rho_{ZW^{k}})} \ , 
    \end{align}
    where the maximization is over encoding-and-transmission channels possible according to the set $\mbf{S}$ and $\rho_{ZW^{k}} = \frac{1}{\vert \cW^{k} \vert} \sum_{w^{k}} \dyad{f(w^{k})} \otimes \dyad{w^{k}}$.
\end{lemma}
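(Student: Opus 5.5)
The plan is to prove the equality by showing both inequalities for each fixed encoding-and-transmission channel, and then taking the supremum over $\cE^{\net}$. By Definition \ref{def:SMP-MN}, every element of $\mbf{S}$ induces a MAC of the form $Q_{Z\vert W^{k}} = \cD^{\net}\circ\cE^{\net}$. Here the decoding channel $\cD^{\net}_{C\to Z}$ ends in a classical output. Without loss of generality, it is therefore a POVM $\{\Gamma_{z}\}_{z\in\cZ}$ on $C$, giving
\[
Q_{Z\vert W^{k}}(z\vert w^{k}) = \Tr[\Gamma_{z}\,\cE^{\net}(\dyad{w^{k}})].
\]
I will assume, as is implicit in the families of Definition \ref{def:main-families-of-MNs}, that the receiver is unconstrained. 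Then for each admissible $\cE^{\net}$, every POVM on $C$ is an admissible decoder.

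The key step is to rewrite the success probability of Definition \ref{def:function-simulation-success-prob} as a state-discrimination value. Substituting the form above and grouping inputs by their function value, we get
\[
P_{S}^{f}(Q_{Z\vert W^{k}}) = \frac{1}{\vert\cW^{k}\vert}\sum_{w^{k}}\Tr\!\left[\Gamma_{f(w^{k})}\cE^{\net}(\dyad{w^{k}})\right] = \sum_{z} p_{Z}(z)\Tr[\Gamma_{z}\rho^{z}_{C}].
\]
Here $p_{Z}(z) = \vert f^{-1}(z)\vert/\vert\cW^{k}\vert$, and $\rho^{z}_{C} = \frac{1}{\vert f^{-1}(z)\vert}\sum_{w^{k}\in f^{-1}(z)}\cE^{\net}(\dyad{w^{k}})$ whenever $p_{Z}(z)>0$. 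I will then check directly that tracing out $W^{k}$ from $(\id_{Z}\otimes\cE^{\net})(\rho_{ZW^{k}})$ yields exactly the cq state $\sum_{z}p_{Z}(z)\dyad{z}\otimes\rho^{z}_{C}$. Consequently the right-hand side above is the objective in \eqref{eq:guessing-probability} evaluated at the POVM $\{\Gamma_{z}\}$.

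Both directions now follow. For ``$\le$'': any $Q\in\mbf{S}$ uses some admissible $\cE^{\net}$ and some POVM. Its success probability is at most the maximum over POVMs, which is $p_{g}(Z\vert C)$ for that $\cE^{\net}$, and hence at most the supremum over $\cE^{\net}$. For ``$\ge$'': fix an admissible $\cE^{\net}$ and let the receiver use an optimal POVM attaining $p_{g}(Z\vert C)$. This optimum exists by compactness of the POVM set in finite dimension. The resulting MN lies in $\mbf{S}$ and has success probability equal to $p_{g}(Z\vert C)$. Taking suprema gives equality.

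I expect the main obstacle to be justification rather than calculation. One point is that the decoder can be taken to be an arbitrary POVM acting only on $C$. Another is that families with shared randomness between the senders and the receiver are not literally SMP, and are handled separately via Proposition \ref{prop:no-need-for-SR}. The remaining care is the bookkeeping for values $z$ with empty preimage: they have $p_{Z}(z)=0$ and contribute nothing.
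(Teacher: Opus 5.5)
Your proposal is correct and follows essentially the same route as the paper: fix $\cE^{\net}$, take the decoder to be a POVM $\{\Gamma_{z}\}$, regroup the success probability by function value into the cq-state discrimination objective, then maximize over POVMs and take the supremum over $\cE^{\net}$. Your explicit remark that the receiver must be unconstrained, which the ``$\ge$'' direction needs, makes precise an assumption the paper uses only implicitly.
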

\begin{proof}
    First, we make the appropriate identifications for a specific encoding-and-transmission channel of the senders and choice of POVM by the receiver to induce a MAC. We then optimize over the decoder to obtain the probability of guessing for that specific encoding-and-transmission channel. Finally, we optimize over the choice of encoding-and-transmission channel according to the set $\mbf{S}$ to complete the proof.
    
    Let $\cE^{\net}_{W^{k} \to C}$ be an encoding-and-transmission channel possible according to the specified set of MNs $\mbf{S}$. To induce a MAC, the receiver then applies some POVM $\{\Gamma_{z}\}_{z \in \cZ}$ to output a value $z \in \cZ$. With slightly more generality than in the proposition, we can let
    the initial joint state be $\rho_{ZW} = \sum_{w^{k} \in \cW^{k}} \Pr[w^{k}] \dyad{f(w^{k})}_{Z} \otimes \dyad{w^{k}}_{W^{k}}$. By linearity, the joint state after the senders' processing is
    \begin{align}
        \rho_{ZC} &= \sum_{w^{k}} \Pr[w^{k}] \dyad{f(w^{k})}_{Z} \otimes \cE^{\net}_{W^{k} \to C}(\dyad{w^{k}}) \\
        &= \sum_{z} \dyad{z}_{Z} \otimes \sum_{w^{k} \in f^{-1}(z)} \Pr[w^{k}]\cE^{\net}_{W^{k} \to C}(\dyad{w^{k}}) \eqqcolon \sum_{z} \dyad{z}_{Z} \otimes \check{\rho}_{C}^{z}  \ .
    \end{align}
    Note this means $\Pr[z]\rho^{z}_{C} = \check{\rho}_{C}^{z}$. By Born's rule, the MAC induced by the encoding-and-transmission and the receiver's POVM is
    \begin{align}
        Q^{\Gamma,\cE}_{Z \vert W^{k}}(z \vert w^{k}) = \Tr[\Gamma_{z}\cE^{\net}_{W^{k} \to C}(\dyad{w^{k}})] \ .
    \end{align}
    With these identifications,
    \begin{align}
        \sum_{z} \Pr[z]\Tr[\Gamma_{z}\rho^{z}_{C}] = \sum_{z} \Tr[\Gamma_{z} \check{\rho}^{z}_{C}] &= \sum_{z} \Tr[\Gamma_{z} \sum_{w^{k} \in f^{-1}(z)} \Pr[w^{k}]\cQ^{\net}_{W^{k} \to C}(\dyad{w^{k}})] \\
        &= \frac{1}{\vert \cW^{k} \vert} \sum_{z} \sum_{w^{k} \in f^{-1}(z)} \Tr[\Gamma_{z} \cQ^{\net}_{W^{k} \to C}(\dyad{w^{k}})] \\
        &= P^{f}_{S}(Q^{\Gamma,\cE}_{Z\vert W^{k}}) \ ,
    \end{align}
    where the third equality uses the assumption the inputs are uniform and the final equality is by \eqref{eq:function-guessing-success-prob}. By optimizing over the receivers' choice of POVM,
    \begin{align}
        \max_{\{\Gamma_{z}\}} P^{f}_{S}(Q^{\Gamma,\cE}_{Z\vert W^{k}}) = \max_{\{\Gamma_{z}\}} \sum_{z} \Pr[z]\Tr[\Gamma_{z}\rho^{z}_{C}] = p_{g}(Z \vert C)_{\rho_{ZC}} = p_{g}(Z \vert C)_{(\id_{Z} \otimes \cE^{\net})(\rho_{ZW^{k}})} \ ,
    \end{align}
    where we just the used the different definitions. Finally,
    \begin{align}
        P_{S}^{f}(\mbf{S}) = \sup_{\cE^{\net}_{W^{k} \to C},\{\Gamma^{z}\}} \sum_{z} P_{S}^{f}(Q^{\Gamma,\cE}_{Z \vert W^{k}})  = \sup_{\cE^{\net}_{W^{k} \to C}}   p_{g}(Z \vert C)_{(\id_{Z} \otimes \cE^{\net})(\rho_{ZW^{k}})} \ . 
    \end{align}
    This completes the proof.
\end{proof}
The reduction of channel simulation to guessing probability in Lemma \ref{lem:reduction-to-guessing-probability} acts as our starting point for placing limitations on channel simulation.
Instead of addressing channel simulation directly, we will establish limits on this guessing probability that only depend on the configuration of the MNs so that we can bound $P_{S}^{f}(\mbf{S})$ for specific choices of $\mbf{S}$. Before doing this, we note that as the probability of guessing is convex in the state \cite[Item iv. of Table 4.1]{tomamichel-thesis}, even when the set of $\cE^{\net}$ allowed by a given configuration of SMP MNs $\mbf{S}$ is convex, the optimization in Lemma \ref{lem:reduction-to-guessing-probability} is (equivalent to) minimizing a concave function over a convex set. Such optimizations problems are generally not easy to solve (e.g. there are instances that are NP hard) \cite[Chapter 2]{Horst1994}, so the reduction to maximizing the guessing probability over a set of network configurations alone does not make the problem easy to solve.

\section{Entropy Chain Rule for Conditionally Bijective Functions}\label{app:entropy-chain-rule}
As Proposition \ref{prop:easier-to-guess-output-of-function} shows, it is generally easier to guess the output of a function than one of its inputs. To put bounds on the ability to guess the value of a function over a (quantum multiaccess) network, we thus require a method for controlling that it may be easier to guess the output than the input. Here we identify a class of functions where, given one of the inputs as side-information, guessing the output is no easier than guessing the other input. We begin by naming this class of functions. 
\begin{definition}\label{eq:cond-bijectivity}
    Let $f: \cX \times \cY \to \cZ$. We say $f$ is $Y$-conditionally bijective if the function $f_{y}(x) := f(x,y)$ for all $x \in \cX$ is a bijection for each $y \in \cY$. (Note this means it must be the case $\vert \cZ \vert = \vert \cY \vert$.)
\end{definition}
\noindent As the definition highlights, a conditionally bijective function is really just a set of bijections on $\cX$ indexed by $y \in \cY$. For intuition, we provide two examples of a conditionally bijective function. 
\begin{example}[Examples of Conditionally Bijective Functions]\label{ex:cond-bij-func}
    For simplicity, we let $\cX = \{0,1,...,\vert \cZ \vert -1 \} = \cZ$, which can always be the case by relabeling the alphabet.
    \begin{itemize}
        \item Trivial Conditionally Bijective Functions: For all $y$, the value of $x$ does not change, i.e. $f_{y}(x) = x$. In other words, $f_{y}$ is always the identity function. We call this `trivial' as it doesn't do anything to the input $x$ depending on $y$, but is conditionally bijective.
        \item Shifting Conditionally Bijective Functions: For all $y$, $f_{y}(x) = x + y \mod \vert \cZ \vert$.  We call this `shifting' as it would be that $f_{y}$ simply shifts the value of $x$ by $y$ modulo $\vert \cZ \vert$.
    \end{itemize}
\end{example}

The claim will be that for a conditionally bijective function, guessing the value of $Z$ given the value of $Y$ is as hard as guessing the value of $X$ given the value of $Y$. This should not be surprising: a bijection is invertible, so if someone knows the conditionally bijective function and the value of $y$, that person knows the effective function $f_{y}$ that was applied to $x$. As such, if one can guess $z$, they can equally well guess $x$ as $x = f_{y}^{-1}(z)$. One might hope this extends to how well one can guess with partial information. In Lemma \ref{lem:cond-Y-func-entropy} we prove this intuition holds for general distributions as measured for a large class of entropies, including the min-entropy, which by applying \eqref{eq:guess-prob-and-min-ent} allows us to conclude the following.
\begin{lemma}\label{lem:cond-bij-p-guess-and-H-min}
    Let $f: \cX \times \cY \to \cZ$ be $Y$-conditionally bijective. Let $\rho_{ZYE} = \rho_{f(X,Y)YE}$. Then 
    \begin{align}
        p_{g}(Z \vert YE)_{\rho} = p_{g}(X \vert YE)_{\rho} \quad \text{and} \quad H_{\min}(Z \vert YE)_{\rho} = H_{\min}(X \vert YE)_{\rho} \ . 
    \end{align}
\end{lemma}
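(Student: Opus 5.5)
The plan is to exhibit a controlled unitary, acting on the classical registers alone and conditioned on $Y$, that carries the $ZYE$ state onto the $XYE$ state, and then to invoke invariance of the guessing probability under such relabelings. Concretely, write the underlying state as
\begin{align}
    \rho_{XYE} = \sum_{x,y} p(x,y)\dyad{x}_{X} \otimes \dyad{y}_{Y} \otimes \rho^{x,y}_{E} \ ,
\end{align}
so that
\begin{align}
    \rho_{ZYE} = \sum_{x,y} p(x,y)\dyad{f_{y}(x)}_{Z} \otimes \dyad{y}_{Y} \otimes \rho^{x,y}_{E} \ .
\end{align}
Because $f$ is $Y$-conditionally bijective, each $f_{y}$ is a bijection $\cX \to \cZ$, and we identify $Z \cong X$ as Hilbert spaces (note $\vert \cX \vert = \vert \cZ \vert$). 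Define
\begin{align}
    V \coloneq \sum_{y} P_{f_{y}} \otimes \dyad{y}_{Y} \ ,
\end{align}
where $P_{f_{y}}\ket{x} = \ket{f_{y}(x)}$ is a permutation matrix. Since each $P_{f_{y}}$ is unitary, $V$ is unitary, and
\begin{align}
    (V \otimes I_{E})\,\rho_{XYE}\,(V \otimes I_{E})^{\dagger} = \rho_{ZYE} \ .
\end{align}

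The first key step is to deduce the guessing-probability equality directly, which avoids any subtlety about unitaries acting jointly on the guessed register and the conditioning register. By Item 4 of Proposition \ref{prop:guessing-prob-properties}, with $Y$ classical,
\begin{align}
    p_{g}(Z \vert YE)_{\rho} = \sum_{y} p(y)\, p_{g}(Z \vert E)_{\rho^{y}_{ZE}} \ ,
    \qquad
    p_{g}(X \vert YE)_{\rho} = \sum_{y} p(y)\, p_{g}(X \vert E)_{\rho^{y}_{XE}} \ .
\end{align}
For fixed $y$, the ensembles $\{p(x\vert y), \rho^{x,y}_{E}\}$ indexed by $x$ and by $z = f_{y}(x)$ are identical up to a relabeling of outcomes. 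Any POVM $\{\Gamma_{x}\}$ therefore yields a POVM $\{\Gamma'_{z} = \Gamma_{f_{y}^{-1}(z)}\}$ with the same success probability, and conversely. Hence $p_{g}(Z \vert E)_{\rho^{y}} = p_{g}(X \vert E)_{\rho^{y}}$ for every $y$, and summing over $y$ gives the first claim.

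The min-entropy equality then follows immediately from \eqref{eq:guess-prob-and-min-ent}, since both states are classical-quantum with respect to the guessed register and the quantity $-\log$ is injective.

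The main obstacle, mild here, is to make sure the relabeling is legitimate, namely that it depends only on $y$, which is available to the guesser. This is exactly why conditioning on $Y$ is essential: without $Y$, Proposition \ref{prop:easier-to-guess-output-of-function} shows only an inequality. An alternative route that I would mention but not pursue is to invoke invariance of $H_{\min}$ under local isometries together with \cite[Lemma 4]{george2022finite}, i.e.\ $H_{\min}(X \vert YE) = H_{\min}(ZX \vert YE)$, combined with the symmetric identity $H_{\min}(Z \vert YE) = H_{\min}(XZ \vert YE)$, which holds because $x = f_{y}^{-1}(z)$ is a function of $(z,y)$.
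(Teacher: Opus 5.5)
Your proof is correct, but it runs in the opposite direction from the paper's. The paper obtains this lemma as a special case of Lemma \ref{lem:cond-Y-func-entropy}. There it uses the same controlled permutation you call $V$ (the paper's $U_{XY \to ZY}$) and checks that $U(I_X \otimes Q_{YE})U^{\dagger} = I_Z \otimes Q_{YE}$ for every classical-quantum $Q_{YE} \geq 0$. It then uses data processing under dephasing of $Y$ to restrict the optimization over $\sigma_{YE}$ to classical-quantum operators. Unitary invariance of the divergence gives equality of $H_{\min}$, and the guessing-probability statement is read off from \eqref{eq:guess-prob-and-min-ent}.

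You instead prove the guessing-probability equality first, operationally. You split on the classical $Y$ via Item 4 of Proposition \ref{prop:guessing-prob-properties}, relabel POVM outcomes through $f_y^{-1}$, and then deduce the min-entropy statement from \eqref{eq:guess-prob-and-min-ent}. This sidesteps exactly the subtlety you flag, namely a unitary acting on the conditioning register; the paper handles that subtlety through the commutation identity above. Your $V$ is in fact never used in your argument and could be dropped.

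Your route is more elementary and fully self-contained for the statement at hand. The paper's route buys generality. The same two facts give equality for every conditional entropy defined through a unitarily invariant, data-processing divergence, and, since the unitary maps smoothing balls to smoothing balls, also for $H^{\ve}_{\min}$. The smooth version is what is actually invoked later in the proof of Lemma \ref{lem:one-shot-bound}. Your per-$y$ relabeling would need an extra step to reach the smoothed quantity, e.g.\ arguing that the smoothing optimizer can be chosen classical on $X$ and $Y$.

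The alternative you sketch via \cite[Lemma 4]{george2022finite} is also sound. It is just the two-sided version of the argument the paper uses for Proposition \ref{prop:easier-to-guess-output-of-function}.
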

\noindent We remark this result only requires the structure of the function to be conditionally bijective. For this reason, from an entropic perspective, there is no difference between different conditionally bijective functions. In particular, recalling Example \ref{ex:cond-bij-func}, there is no difference between applying the trivial conditionally bijective function (which does nothing) and the shifting conditional bijective function (which changes the value a great deal depending on the value of $Y$) because the same access to $Y$ is given in both cases.

The remainder of the section states the more general form of Lemma \ref{lem:cond-bij-p-guess-and-H-min} and provides the proof.
\begin{lemma}\label{lem:cond-Y-func-entropy}
    Let $f: \cX \times \cY \to \cZ$ be $Y$-conditionally bijective. Let $0 \leq \rho_{ZYE} = \rho_{f(X,Y)YE}$ where $\Tr[\rho] \leq 1$, i.e. it may be a subnormalized state. Then for any conditional entropy $\mbb{H}(A\vert B)_{\rho \vert \sigma} \coloneq -\mbb{D}(\rho \Vert I_{A} \otimes \sigma_{B})$ or $\mbb{H}(A \vert B)_{\rho} \coloneq \sup_{\sigma_{B}} H(A \vert B)_{\rho \vert \sigma}$ defined via a divergence $\mbb{D}$ that satisfies the data-processing inequality and is unitarily invariant,
    \begin{align}
        \mbb{H}(Z|YE)_{\rho} = \mbb{H}(X|YE)_{\rho} \ .
    \end{align}
    Moreover, any smooth min-entropy defined using a distance measure that satisfies data processing,\footnote{See \eqref{eq:purified-smooth-distance-smoothed-min-entropy} as an example} the same holds, i.e. under said conditions, for all $\ve \in [0,1]$,
    \begin{align}
        H^{\ve}_{\min}(Z|YE)_{\rho} = H^{\ve}_{\min}(X|YE)_{\rho} \ . 
    \end{align}
\end{lemma}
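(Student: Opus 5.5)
The plan is to exhibit a unitary on the classical registers that maps $\rho_{XYE}$ to $\rho_{ZYE}$ while acting trivially on $YE$, and then invoke unitary invariance. Since $f$ is $Y$-conditionally bijective, each $f_{y}$ is a bijection $\cX \to \cZ$ with $\vert \cX \vert = \vert \cZ \vert$, so after identifying $X$ and $Z$ with a common Hilbert space $\mbb{C}^{\vert \cZ \vert}$, the operator
\begin{align}
    U \coloneq \sum_{y \in \cY} \Big(\sum_{x \in \cX} \ket{f_{y}(x)}\bra{x}\Big) \otimes \dyad{y}_{Y}
\end{align}
is a unitary; it is a controlled permutation. Writing $\rho_{XYE} = \sum_{x,y} \dyad{x} \otimes \dyad{y} \otimes \tilde{\rho}^{x,y}_{E}$, where the $\tilde{\rho}^{x,y}_{E}$ are subnormalized, a direct computation gives
\begin{align}
    (U \otimes I_{E})\rho_{XYE}(U \otimes I_{E})^{\dagger} = \sum_{x,y}\dyad{f(x,y)} \otimes \dyad{y} \otimes \tilde{\rho}^{x,y}_{E} = \rho_{ZYE}.
\end{align}

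Next I show the conditional entropy is unchanged. For fixed $\sigma_{YE}$, the reference operator $I_{X} \otimes \sigma_{YE}$ is not generally invariant under $U$, since $U$ acts on $Y$. I would therefore first reduce to classical-on-$Y$ reference operators. Let $\cP_{Y}$ be the pinching (dephasing) channel on $Y$. It leaves $\rho$ fixed, because $Y$ is classical. By data processing,
\begin{align}
    \mbb{D}(\rho \Vert I \otimes \sigma) \geq \mbb{D}(\rho \Vert I \otimes \cP_{Y}(\sigma)),
\end{align}
so the supremum defining $\mbb{H}(\cdot \vert YE)_{\rho}$ may be restricted to block-diagonal $\sigma_{YE} = \sum_{y} \dyad{y} \otimes \sigma^{y}_{E}$. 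For such $\sigma$,
\begin{align}
    U(I \otimes \sigma)U^{\dagger} = \sum_{y} \Big(\textstyle\sum_{x} \dyad{f_{y}(x)}\Big) \otimes \dyad{y} \otimes \sigma^{y}_{E} = I \otimes \sigma,
\end{align}
because each $f_{y}$ is a bijection. Unitary invariance of $\mbb{D}$ then gives $\mbb{D}(\rho_{ZYE} \Vert I \otimes \sigma) = \mbb{D}(\rho_{XYE} \Vert I \otimes \sigma)$ for every such $\sigma$, which handles the optimized entropy.

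The non-optimized version $\mbb{H}(\cdot\vert\cdot)_{\rho\vert\sigma}$ with a fixed $\sigma$ is the delicate case. Either the intended $\sigma$ is classical on $Y$, or, for general $\sigma$, I would apply data processing under $\cP_{Y}$ in both directions. Concretely, $U \circ \cP_{Y}$ and $\cP_{Y}\circ U^{\dagger}$ are channels, and they map $(\rho_{X},\, I\otimes\sigma)$ to $(\rho_{Z},\, I\otimes\cP_Y(\sigma))$, and symmetrically. This reduces the comparison to pinched $\sigma$. I expect this is the main obstacle: the equality for an arbitrary non-pinched $\sigma$ may require stating it for $\sigma$ classical on $Y$, and I would note that restriction explicitly.

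For the smooth min-entropy I use the same strategy. The map $\omega \mapsto (U\otimes I)\,\cP_{Y}(\omega)\,(U\otimes I)^{\dagger}$ is a channel sending $\rho_{XYE}$ to $\rho_{ZYE}$, and the analogous map with $U^{\dagger}$ goes back. Given an optimal smoothing state $\tilde{\rho}_{XYE}$ in the $\ve$-ball, its image is in the $\ve$-ball around $\rho_{ZYE}$ by data processing of the distance. The image also has min-entropy at least that of $\tilde{\rho}$, since pinching $Y$ followed by the controlled permutation does not decrease $H_{\min}(\cdot\vert YE)$ by the argument above. This gives one inequality, and symmetry gives the other. Finally, Lemma~\ref{lem:cond-bij-p-guess-and-H-min} follows by taking $\mbb{D}=D_{\max}$ and applying \eqref{eq:guess-prob-and-min-ent}.
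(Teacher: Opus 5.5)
Your proposal is correct and is essentially the paper's proof: the same controlled permutation $U=\sum_{x,y}\ket{f_{y}(x)}\bra{x}\otimes\dyad{y}$, the identity $U(I\otimes\sigma)U^{\dagger}=I\otimes\sigma$ for $\sigma$ classical on $Y$, and the pinching/data-processing restriction of the supremum to such $\sigma$. Like the paper, you end up establishing the fixed-$\sigma$ version only for $\sigma$ classical on $Y$, and that restriction is necessary: your two-sided data-processing idea only bounds both sides below by the common pinched value, and for non-classical $\sigma$ the equality can genuinely fail (e.g.\ for $\mbb{D}=D_{\max}$). Your smoothing step, which pinches $Y$ before applying $U$, is slightly more careful than the paper's, which applies $U^{\dagger}$ to the optimal smoothed state directly and tacitly needs that state to be classical on $Y$.
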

\begin{proof}
    We first prove the claim for an entropy $\mbb{H}$ on normalized states, then explain how the proof changes if the states are subnormalized, and finally explain the claim for smooth min-entropy. First, we observe that the global state
    \begin{align}
        \rho_{XYZE} = \sum_{x,y} p(x,y) \dyad{x} \otimes \dyad{y} \otimes \dyad{f(x,y)}_{Z} \otimes \rho^{x,y}_{E}
    \end{align} 
    is defined such we obtain $\rho_{XYE}$ and $\rho_{f(X,Y)YE}$ simply as marginals of this global state. In particular, it is immediate from the partial trace that we have the marginals
    \begin{align}
        \rho_{XYE} &= \sum_{x,y} p(x,y) \dyad{x} \otimes \dyad{y} \otimes \rho^{x,y}_{E} \label{eq:XYE-marginal-for-bijection} \\
        \rho_{ZYE} &= \sum_{z,y} q(z,y) \dyad{z} \otimes \dyad{y} \otimes \rho^{z,y}_{E} \ ,
    \end{align}
    where $q_{ZY}$ is some joint distribution and $\{\rho^{z,y}_{E}\}_{z,y}$ is some set of conditional quantum states. We need to establish a nice expression for $\rho_{ZYE}$ that makes clear the form of $q_{ZY}$ and the conditional states. As $f$ is $Y$-conditionally bijective, the joint probability of $z$ and $y$ is $p(z,y) = p(\hat{x},y)$ where $\hat{x}$ is \textit{the unique} $x \in \cX$ such that $f_{y}(\hat{x}) = z$. Thus, $q(z,y) = p( f_{y}^{-1}(z), y)$ for each $z,y$. For the same reason $\rho^{z,y}_{E} = \rho^{f^{-1}_{y}(z),y}_{E}$. Therefore, we may write 
    \begin{align}
        \rho_{ZYE} 
        &= \sum_{z,y} q(f_{y}^{-1}(z),y) \dyad{z}_{Z} \otimes \dyad{y} \otimes \rho^{f_{y}^{-1}(z),y}_{E} \nonumber \\
        &=  \sum_{x,y} p(x,y) \dyad{f(x,y)}_{Z} \otimes \dyad{y} \otimes \rho^{x,y}_{E} \ . \label{eq:form-of-rhoZYE}
    \end{align}
    Note that this \textit{only} worked because we appealed to the bijectivity. In effect, this shows $\rho_{XYE}$ and $\rho_{ZYE}$ are unitarily equivalent, but we make this explicit. Consider the unitary 
    \begin{align}\label{eq:bijection-unitary}
        U_{XY \to ZY} = \sum_{x,y} \ket{f_{y}(x)}\bra{x} \otimes \dyad{y} 
    \end{align} where we stress that this is a unitary because $X \cong Z$ and $f$ is bijective. We show this formally at the end of the proof for completeness. Now a direct calculation using \eqref{eq:XYE-marginal-for-bijection} and \eqref{eq:form-of-rhoZYE} will verify
    $U\rho_{XYE}U^{\dagger} = \rho_{ZYE}$.
    
    Moreover, $U\left(I_{X} \otimes Q_{YE}\right)U^{\dagger} = I_{Z} \otimes Q_{YE}$ for all classical-quantum positive semidefinite operators $Q_{YE} \geq 0$. This is a rather direct calculation that we provide. In the case that $Q = 0$, this is trivial, so we can assume $Q \geq 0$ and $Q \neq 0$. As we can always let $Q_{YE} = a \sigma_{YE}$ where $a > 0$, it suffices to now consider the case $\sigma_{YE} \in \Density(YE)$, where $\sigma_{YE} = \sum_{y} r(y) \dyad{y} \otimes \sigma^{y}_{E}$. Then
    \begin{align}
        & U\left(I_{X} \otimes \sigma_{YE}\right)U^{\dagger} \notag \\
        =& \left(\sum_{x_{1},y_{1}} \ket{f_{y_{1}}(x_{1})}\bra{x_{1}} \otimes \dyad{y_{1}} \right) \cdot \left(\sum_{x} \dyad{x} \otimes \sum_{y} r(y) \otimes \sigma^{y}_{E} \otimes \dyad{y}\right)  \cdot \left(\sum_{x_{2},y_{2}} \ket{x_{2}}\bra{f_{y_{2}}(x_{2})} \otimes \dyad{y_{2}} \right) \notag \\
        =& \sum_{x,y} \ket{f_{y}(x)}\bra{f_{y}(x)} \otimes r(y) \sigma^{y}_{E} \otimes \dyad{y} \\ 
        =& \sum_{y} I_{Z} \otimes r(y) \sigma^{y}_{E} \otimes \dyad{y} \\
        =& I_{Z} \otimes \sigma_{YE} \ ,
    \end{align}
    where the first equality uses \eqref{eq:bijection-unitary}, the second equality collapses orthogonal registers, the third is because we may sum over $x$ independently over everything else and $f_{y}$ is a bijection for each $y$, and the last is just regrouping terms.

    Therefore, for any unitarily invariant divergence $\mbb{D}$ and classical-quantum $Q_{YE} \geq 0$,
    \begin{align}
        \mbb{D}(\rho_{XYE} \Vert I_{X} \otimes Q_{YE}) 
        = \mbb{D}(U\rho_{XYE}U^{\dagger} \Vert UI_{X} \otimes Q_{YE}U^{\dagger}) 
        = \mbb{D}(\rho_{ZYE} \Vert I_{Z} \otimes Q_{YE}) \ . 
    \end{align}
    For any $\mbb{H}(X\vert YE)_{\rho \vert \sigma} \coloneq -\mbb{D}(\rho \Vert I_{X} \otimes \sigma_{YE})$ where $\sigma_{YB}$ is classical-quantum, this establishes the proof. For $\mbb{H}(X \vert YE)_{\rho} \coloneq \sup_{\sigma_{YE} \in \Density(YE)} \mbb{H}(X \vert YE)_{\rho \vert \sigma}$, note that as $\mbb{D}$ satisfies data processing, one may restrict $\sigma_{YB}$ to classical-quantum states and then apply the previous case.
    
    Next, note the above proof extends to subnormalized $\rho$ as at no point did we rely on normalization in the above, so one may replace $\rho_{XYE}$ with $\alpha \rho_{XYE}$ where $\alpha \in [0,1]$.

    Lastly, we establish the property for smooth min-entropy independently due to the smoothing. Let $H^{\ve}_{\min}(Z|YE)_{\rho} = H_{\min}(Z|YE)_{\wt{\rho}}$. Then $\wt{\rho}_{ZYE} \in \mathcal{B}^{\ve}(\rho)$ where $\mathcal{B}^{\ve}(\rho) := \{\wt{\rho} \in \Density_{\leq}(ZYE) : \Delta(\wt{\rho},\rho) \leq \ve\}$ for distance measure $\Delta$ being unitarily invariant and satisfying the data processing inequality. It follows $U^{\dagger}\wt{\rho}_{ZYE}U \in \mathcal{B}^{\ve}(U^{\dagger} \rho_{ZYE} U) = \mathcal{B}^{\ve}(\rho_{XYE})$. Then, as min-entropy is defined via max-divergence \cite[Eqs.~5.19 and 5.24]{Tomamichel-Book}, which is unitarily invariant, by our previous account
    \begin{align}
        H^{\ve}_{\min}(Z|YE)_{\rho} &= H_{\min}(Z|YE)_{\wt{\rho}}
        = H_{\min}(X|YE)_{U^{\dagger}\wt{\rho}U}
        \leq H_{\min}^{\ve}(X|YE)_{\rho} \ ,
    \end{align}
    where the inequality is because smooth entropy is a maximization. The reverse direction follows an identical argument using the reversed unitary. This completes the main part of the proof.

    \textit{Unitarity of \eqref{eq:bijection-unitary}}. To see it is a unitary,
    \begin{align}
        UU^{\dagger} =&  \left(\sum_{x,y} \ket{f_{y}(x)}\bra{x} \otimes \dyad{y}\right) \cdot \left(\sum_{\ol{x},\ol{y}} \ket{\ol{x}}\bra{f_{\ol{y}}(\ol{x})} \otimes \dyad{\ol{y}}\right) \\
        =&  \sum_{x,y} \ket{f_{y}(x)}\bra{f_{y}(x)} \otimes \dyad{y} \\
        =& \sum_{y} I_{Z} \otimes \dyad{y} \\
        =& I_{Z} \otimes I_{Y} \ ,
    \end{align}
    where the first equality is collapsing orthogonal states and the second is by the bijectivity property of $f_{y}$. Similarly,
    \begin{align}
        U^{\dagger}U =&  \left(\sum_{\ol{x},\ol{y}} \ket{\ol{x}}\bra{f_{\ol{y}}(\ol{x})} \otimes \dyad{\ol{y}}\right) \left(\sum_{x,y} \ket{f_{y}(x)}\bra{x} \otimes \dyad{y}\right) \\
        =&  \sum_{x,\ol{x},y} \ket{\ol{x}}\bra{f_{y}(x)}\ket{f_{y}(\ol{x})}\bra{\ol{x}} \otimes \dyad{y} \\
        =& \sum_{x,y} \dyad{x} \otimes \dyad{y} \\
        =& I_{Z} \otimes I_{Y} \ ,
    \end{align}
    where the first equality collapses orthogonal states and the second uses that $f_{y}$ is a bijection so $f_{y}(x) = f_{y}(\ol{x})$ if and only if $x = \ol{x}$.
\end{proof}

\section{Bounds on Guessing Conditionally Bijective Functions over Multiaccess Networks}\label{app:bounds-on-guessing-cond-bij-over-MNs}
We now turn to what may be seen as our main technical lemmata for establishing Theorem \ref{thm:main-text-communication-advantage}. These are generic results on constraining how well the output of a conditionally bijective function can be guessed over specific types of multiaccess networks as a function of the signaling dimension. We believe these results are of interest in their own right, and thus we state them as Theorems \ref{thm:unassisted-QMN-bound} and \ref{thm:non-signaling-box-bound}. 

\subsection{Bounds for Unassisted Quantum Multiaccess Networks}\label{sec:bounds-for-unassisted-QMAN}
\begin{figure}[h]
    \centering
        \begin{tikzpicture}
            \node[terminal] (x) at (-1,1) {$\cX \ni x$};
            \node[terminal] (y) at (-1,-1) {$\cY \ni y$};    
            \node[qsource] (A) at (1.1,1) {$\rho^{x}_{C_{1}}$};
            \node[qsource] (B) at (1.1,-1) {$\sigma^{y}_{C_{2}}$};
            \node[meas_dev, minimum height = 1cm] (R) at (2.75, 0) {$\{\Pi_{z}\}_{z}$};
            \node[terminal] (z) at (4, 0) {$z$};
        
            \path (x) \cedge (A);
            \path (y) \cedge (B);
            \path (A) \qedge (R);
            \path (B) \qedge (R);
            \path (R) \cedge (z);
        \end{tikzpicture} 
    \caption{The network considered in Section \ref{sec:bounds-for-unassisted-QMAN}, an unassisted quantum multiaccess network with signalling dimension $\vec{d} = (\vert C_{1} \vert, \vert C_{2} \vert)$.}
    \label{fig:QMN-unassisted}
\end{figure}

The following shows that a two-sender QMN without entanglement-assistance (depicted in Fig.~\ref{fig:QMN-unassisted}) is constrained in correctly guessing the output of a $Y$-conditionally bijective function $f: \cX \times \cY \to \cZ$ by the dimension of $\cX$ and the signaling dimension of sender $1$. Physically, the proof is a sequence of reductions where first we show the task is limited by Sender 1's ability to send $x \in \cX$ to the receiver over a quantum channel with output dimension $\vert C_{1} \vert$ (Lemma \ref{lem:no-ent-ass-reduc-to-point-to-point}), then we use the Frenkel-Weiner theorem (Proposition \ref{prop:Frenkel-Weiner}) to replace the quantum channel with a classical channel with the same signaling dimension and shared randomness (Corollary \ref{cor:reduc-point-to-point-to-cl-with-SR}), we then remove the shared randomness without loss of generality (Proposition \ref{prop:no-need-for-SR}), and finally we can use Item 5 of Proposition \ref{prop:guessing-prob-properties} to obtain a bound only in terms of the ability to guess the value of $x \in \cX$ and the size of the signaling dimension of party $1$, $\vert C_{1} \vert$.

\begin{theorem}\label{thm:unassisted-QMN-bound}
    Let $f:\cX \times \cY \to \cZ$ be $Y$-conditionally bijective and the two inputs be independent, i.e.~$p_{XY} = p_{X} \otimes q_{Y}$. Then for fixed Hilbert spaces $C_{1}$ and $C_{2}$,
    \begin{align}
         \sup_{\cE_{X \to C_{1}},\cF_{Y \to C_{2}}} p_{g}(Z \vert C_{1}C_{2})_{(\id_{Z} \otimes \cE \otimes \cF)(p_{ZXY})} \leq \vert C_{1} \vert  p_{g}(X)_{p} \ . 
    \end{align}
\end{theorem}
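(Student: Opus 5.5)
The plan follows the chain of reductions outlined before the statement. Fix encoders $\cE_{X \to C_{1}}$ and $\cF_{Y \to C_{2}}$, and write the relevant state as
\[
\rho_{ZC_{1}C_{2}} = \sum_{x,y} p_{X}(x)q_{Y}(y)\dyad{f(x,y)}_{Z}\otimes \cE(\dyad{x})\otimes\cF(\dyad{y}).
\]

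\textbf{Step 1: give the receiver $Y$.} We first give the receiver the classical value $y$ in place of Sender 2's message. Let $\rho_{ZYXC_{1}C_{2}}$ be the extended state that also keeps a classical copy of $y$. The channel that discards $Y$ maps $\rho_{ZYC_{1}C_{2}}$ to $\rho_{ZC_{1}C_{2}}$. By the data processing inequality (Item 1 of Proposition~\ref{prop:guessing-prob-properties}),
\[
p_{g}(Z\vert C_{1}C_{2}) \leq p_{g}(Z \vert YC_{1}C_{2}).
\]
Conditioned on $Y=y$, the $C_{2}$ register is in the fixed state $\cF(\dyad{y})$, which is independent of $(Z,C_{1})$ because the inputs are independent. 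So $C_{2}$ can be regenerated locally from $Y$, and a second application of data processing gives
\[
p_{g}(Z \vert YC_{1}C_{2}) = p_{g}(Z\vert YC_{1}).
\]
Alternatively, Item 4 of Proposition~\ref{prop:guessing-prob-properties} combined with Item 2 shows the same thing.

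\textbf{Step 2: swap $Z$ for $X$.} Since $f$ is $Y$-conditionally bijective, Lemma~\ref{lem:cond-bij-p-guess-and-H-min} applies with $E=C_{1}$ and gives
\[
p_{g}(Z\vert YC_{1}) = p_{g}(X\vert YC_{1}).
\]
The state $\rho_{XYC_{1}}$ factorizes as $\rho_{XC_{1}}\otimes q_{Y}$, because $C_{1}$ depends only on $x$ and $Y$ is independent of $X$. Hence $Y$ is useless side information, and another application of data processing (or Item 4) yields
\[
p_{g}(X\vert YC_{1}) = p_{g}(X\vert C_{1})_{(\id\otimes\cE)(p_{XX'})}.
\]

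\textbf{Step 3: point-to-point bound.} This is the step I expect to be the crux. It bounds how well $x$ can be decoded from a $|C_{1}|$-dimensional quantum system prepared from $x$. The quantity $p_{g}(X\vert C_{1})$ is the success probability of a point-to-point quantum channel in $\mbf{Q}(|C_{1}|)$ at computing the identity function under the input distribution $p_{X}$. By the Frenkel--Weiner theorem (Proposition~\ref{prop:Frenkel-Weiner}), any such channel lies in $\mbf{C}^{SR}(|C_{1}|)$. By Proposition~\ref{prop:no-need-for-SR} together with Remark~\ref{rem:on-non-uniformity}, which handles the non-uniform $p_{X}$, shared randomness gives no advantage. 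The success probability is therefore at most that of a deterministic encoder $h:\cX\to\cC_{1}$ with $|\cC_{1}|=|C_{1}|$, namely $p_{g}(X\vert C)_{h(p)}$ with $C$ classical. The delicate point is that Lemma~\ref{lem:reduction-to-guessing-probability} is stated for uniform inputs, so the identification of guessing probability with $P_{S}$ must be rechecked for a general $p_{X}$. That check is routine: the linearity argument in its proof goes through verbatim.

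\textbf{Step 4: finish.} The chain rule for classical side information (Item 5 of Proposition~\ref{prop:guessing-prob-properties}) gives
\[
p_{g}(X\vert C)\leq |C_{1}|\,p_{g}(X)_{p}.
\]
Chaining all the inequalities and taking the supremum over $\cE$ and $\cF$ yields the claim.
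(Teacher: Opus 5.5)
Your proposal is correct and follows the paper's proof essentially step for step. The reduction to a point-to-point problem (give the receiver $Y$, regenerate $C_{2}$ from it, swap $Z$ for $X$ via Lemma~\ref{lem:cond-bij-p-guess-and-H-min}, drop $Y$ by independence) is exactly Lemma~\ref{lem:no-ent-ass-reduc-to-point-to-point}; this is followed by Frenkel--Weiner, removal of shared randomness via Proposition~\ref{prop:no-need-for-SR} and Remark~\ref{rem:on-non-uniformity}, and the classical chain rule. One small remark: regenerating $C_{2}$ from $Y$ does not need independence of the inputs, since conditioned on $y$ the register $C_{2}$ is in state $\cF(\dyad{y})$ whatever $x$ is, so independence is only used when you discard $Y$ in Step 2.
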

\begin{proof}
     We have the sequence of inequalities
    \begin{align}
        \sup_{\cE_{X \to C_{1}},\cF_{Y \to C_{2}}} p_{g}(Z \vert C_{1}C_{2})_{(\id_{Z} \otimes \cE \otimes \cF)(p_{ZXY})} &\leq \sup_{\cE_{\ol{X} \to C_{1}}} p_{g}(X \vert C_{1})_{(\id_{X} \otimes \cE)(\chi^{\vert p}_{X\ol{X}})} \\
        &\leq \sup_{P_{\widehat{X} \vert X} \in \mbf{C}^{SR}(\vert C_{1} \vert)} \sum_{x} p(x) P_{\widehat{X}\vert X}(x \vert x) \\
        & = \max_{h: \ol{X} \to \cC_{1}} p_{g}(X \vert C_{1})_{(\id_{X} \otimes h)(\chi^{\vert p}_{X\ol{X}})} \ 
    \end{align}
    where the first inequality is Lemma \ref{lem:no-ent-ass-reduc-to-point-to-point}, the second is Corollary \ref{cor:reduc-point-to-point-to-cl-with-SR}, and the final equality is Proposition \ref{prop:no-need-for-SR} combined with Remark \ref{rem:on-non-uniformity}. Moreover, as $C_{1}$ is classical, 
    \begin{align}
        \max_{h: \ol{X} \to \cC_{1}} p_{g}(X \vert C_{1})_{(\id_{X} \otimes h)(\chi^{\vert p}_{X\ol{X}})} \leq \vert C_{1} \vert p_{g}(X)_{\chi^{\vert p}} = \vert C_{1} \vert  p_{g}(X)_{p} \ ,
    \end{align}
    where the first inequality is Item 5 of Proposition \ref{prop:guessing-prob-properties} and the equality is simply the $X$ marginal of $\chi^{\vert p}$ is the distribution $p_{X}$.
\end{proof}

The rest of this subsection establishes the reductions used in the above theorem.
\begin{lemma}\label{lem:no-ent-ass-reduc-to-point-to-point}
    Let $f:\cX \times \cY \to \cZ$ be $Y$-conditionally bijective. Then for fixed Hilbert spaces $C_{1}$ and $C_{2}$,
    \begin{align}\label{eq:reduc-to-point-to-point}
        \sup_{\cE_{X \to C_{1}},\cF_{Y \to C_{2}}} p_{g}(Z \vert C_{1}C_{2})_{(\id_{Z} \otimes \cE \otimes \cF)(p_{ZXY})} \leq \sup_{\cE_{\ol{X} \to C_{1}}} p_{g}(X \vert C_{1})_{(\id_{X} \otimes \cE)(\chi^{\vert p}_{X\ol{X}})} \ ,
    \end{align}
    where $p_{ZXY} = \sum_{x,y} p_{X}(x)q_{Y}(y) \dyad{f(x,y)}_{Z} \otimes \dyad{x}_{X} \otimes \dyad{y}_{Y}$.
\end{lemma}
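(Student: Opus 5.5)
Fix arbitrary encoders $\cE_{X \to C_{1}}$ and $\cF_{Y \to C_{2}}$. I will give the receiver extra side information, namely the classical value $y$, and then use Lemma \ref{lem:cond-bij-p-guess-and-H-min} to swap $Z$ for $X$. Concretely, consider the state
\begin{align}
\rho_{ZXYC_{1}C_{2}} = \sum_{x,y} p_{X}(x)q_{Y}(y)\dyad{f(x,y)}_{Z}\otimes\dyad{x}_{X}\otimes\dyad{y}_{Y}\otimes\cE(\dyad{x})\otimes\cF(\dyad{y}) .
\end{align}
The register $C_{2}$ is obtained from the classical $Y$ register by the channel $\cF$, applied to a copy of $y$. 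Keeping the $Y$ register, the map $YC_{1} \to YC_{1}C_{2}$ that reads $y$ and prepares $\cF(\dyad{y})$ is a channel acting only on the receiver's side. Hence the data processing inequality (Item 1 of Proposition \ref{prop:guessing-prob-properties}) gives
\begin{align}
p_{g}(Z\vert C_{1}C_{2}) \le p_{g}(Z\vert YC_{1}C_{2}) \le p_{g}(Z\vert YC_{1}) .
\end{align}
The first inequality discards $Y$; the second applies the preparation channel to $YC_{1}$.

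Next I apply Lemma \ref{lem:cond-bij-p-guess-and-H-min} with $E=C_{1}$. The state $\rho_{ZYC_{1}}$ is exactly $\rho_{f(X,Y)YC_{1}}$ for the CCQ state $\rho_{XYC_{1}}$, and $f$ is $Y$-conditionally bijective, so
\begin{align}
p_{g}(Z\vert YC_{1}) = p_{g}(X\vert YC_{1}) .
\end{align}
Independence of the inputs now matters. Because $p_{XY}=p_{X}\otimes q_{Y}$ and $C_{1}$ depends only on $x$, the state factorizes as $\rho_{XYC_{1}} = \rho_{XC_{1}}\otimes q_{Y}$. By Item 4 of Proposition \ref{prop:guessing-prob-properties}, decomposing over the classical $y$, every conditional state is the same $\rho_{XC_{1}}$, so
\begin{align}
p_{g}(X\vert YC_{1}) = p_{g}(X\vert C_{1})_{(\id_{X}\otimes\cE)(\chi^{\vert p}_{X\ol{X}})} .
\end{align}
Here $\rho_{XC_{1}} = \sum_x p(x)\dyad{x}\otimes\cE(\dyad{x})$ is precisely $(\id_{X}\otimes\cE)(\chi^{\vert p}_{X\ol X})$. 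Taking the supremum over $\cE$ and $\cF$ then yields \eqref{eq:reduc-to-point-to-point}.

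The main point requiring care is the first step: I must be sure that adjoining $Y$ and simulating $C_{2}$ from it is a legitimate receiver-side channel. This works because the senders share no entanglement and $\cF$ acts on the classical $y$ alone, so $C_{2}$ is conditionally independent of everything else given $y$. This is exactly where the lack of entanglement assistance is used. With shared entanglement, $C_{2}$ could not be regenerated from $y$, and the argument would fail.
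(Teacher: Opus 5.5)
Your proof is correct and follows the paper's argument essentially step for step: you adjoin $Y$ to the side information, eliminate $C_{2}$ by regenerating it from $y$ via data processing, swap $Z$ for $X$ using Lemma \ref{lem:cond-bij-p-guess-and-H-min}, and drop $Y$ by independence of the inputs. The only cosmetic difference is in the second step: you use the one-directional data-processing inequality for the preparation channel, whereas the paper proves the equality $p_{g}(Z \vert C_{1}C_{2}Y) = p_{g}(Z \vert C_{1}Y)$ via a copying isometry; only your direction is actually needed.
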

\begin{proof}
    First, for a fixed strategy defined by classical-quantum channels $\cE(\dyad{x}) = \rho^{x}_{C_{1}}$ and $\cF(\dyad{y}) = \sigma^{y}_{C_{2}}$, the total joint state is 
    \begin{align}
        \rho_{ZC_{1}C_{2}XY} = \sum_{x,y} p_{X}(x)q_{Y}(y)\dyad{f(x,y)}_{Z} \otimes \rho^{x}_{C_{1}} \otimes \sigma^{y}_{C_{2}} \otimes \dyad{x} \otimes \dyad{y} \ ,
    \end{align}
    where we used that $\cE,\cF$ are classical-to-quantum channels so that $\rho^{y}_{C_{2}} \coloneq \cF(\dyad{y})$ and similarly for $\rho^{x}_{C_{1}}$. 
    
    Our goal is to remove the $C_{2}$ register from the guessing probability. By data processing (Item 1 of Proposition \ref{prop:guessing-prob-properties}) with partial trace,
    \begin{align}
        p_{g}(Z \vert C_{1}C_{2})_{\rho} \leq p_{g}(Z \vert C_{1}C_{2}Y)_{\rho} \ ,
    \end{align}
    which physically would be like giving the receiver a copy of $Y$. We now will use a data processing argument to show with access to $Y$, the removal of the $C_{2}$ register does not change the guessing probability. This is because one can generate $C_{2}$ from $Y$. To see this, define the isometry that copies the $Y$ register, $V_{Y \to \ol{Y}Y} = \sum_{y} \ket{yy}\bra{y}$. Then,
    \begin{align}
        \cF_{\ol{Y} \to C_{2}}(V\rho_{ZC_{1}Y}V^{\dagger}) &= \sum_{x,y} p_{X}(x)q_{Y}(y)\dyad{f(x,y)}_{Z} \otimes \rho^{x}_{C_{1}} \otimes \cF(\dyad{y}_{\ol{Y}}) \otimes \dyad{y}_{Y} \\
        &= \sum_{x,y} p_{X}(x)q_{Y}(y)\dyad{f(x,y)}_{Z} \otimes \rho^{x}_{C_{1}} \otimes \sigma^{y}_{C_{2}} \otimes \dyad{y}_{Y} \\
        &=  \rho_{ZC_{1}C_{2}Y} \label{eq:reconstructing-C2-from-Y}
    \end{align}
    We then use this with the guessing probability: 
    \begin{align}
        p_{g}(Z \vert C_{1}C_{2}Y)_{\rho} \geq p_{g}(Z \vert C_{1}Y)_{\rho} = p_{g}(Z \vert C_{1}Y\ol{Y})_{V\rho V^{\dagger}} &\geq p_{g}(Z \vert C_{1}YC_{2})_{\cF_{\ol{Y} \to C_{2}}(V\rho V^{\dagger})} = p_{g}(Z \vert C_{1}C_{2}Y)_{\rho} \ ,
    \end{align}
    where the first inequality is data processing with partial trace, the first equality is isometric invariance on the conditioning system, the second inequality is data processing, and the final equality is re-ordering registers and \eqref{eq:reconstructing-C2-from-Y}. Thus, 
    \begin{align}
        p_{g}(Z \vert C_{1}C_{2}Y)_{\rho} = p_{g}(Z \vert C_{1} Y)_{\rho} \ . 
    \end{align} 
    
    As $f$ is $Y$-conditionally bijective, by Lemma \ref{lem:cond-bij-p-guess-and-H-min}, $p_{g}(Z \vert C_{1}Y) = p_{g}(X \vert C_{1}Y)$. As $\rho_{C_{1}XY} = \rho_{C_{1}X} \otimes q_{W_{2}}$, i.e. there is independence of $Y$ from the other registers, $p_{g}(X \vert C_{1}Y)_{\rho} = p_{g}(X \vert C_{1})_{\rho} = p_{g}(X \vert C_{1})_{(\id_{X} \otimes \cE_{\ol{X} \to C_{1}})(\chi^{\vert p}_{X\ol{X}})}$. Combining the above inequalities, we obtain
    \begin{align}
        p_{g}(Z \vert C_{1}C_{2})_{\rho} \leq p_{g}(X \vert C_{1})_{(\id_{X} \otimes \cE_{\ol{X} \to C_{1}})(\chi^{\vert p}_{X\ol{X}})} \ .
    \end{align}
    By optimizing over the choice of strategy, we conclude
    \begin{align}
        \sup_{\cE_{X \to C_{1}},\cF_{Y \to C_{2}}} p_{g}(Z \vert C_{1}C_{2})_{(\id_{Z} \otimes \cE \otimes \cF)(p_{ZXY})} \leq \sup_{\cE_{\ol{X} \to C_{1}}} p_{g}(X \vert C_{1})_{(\id_{X} \otimes \cE)(\chi^{\vert p}_{X\ol{X}})} \ .
    \end{align}
    This is the promised reduction to a point-to-point channel problem.
\end{proof}

%Explanation of why we need to do more work
The reason the above lemma is not sufficient to establish Theorem \ref{thm:unassisted-QMN-bound} is that the $C_{1}$ register is quantum and thus we cannot appeal to Item 5 of Proposition \ref{prop:guessing-prob-properties}. To resolve this, we use the Frenkel-Weiner theorem (Proposition \ref{prop:Frenkel-Weiner}).

\begin{corollary}\label{cor:reduc-point-to-point-to-cl-with-SR}
    For any $\chi^{\vert p}_{X\ol{X}} = \sum_{x} p_{X}(x) \dyad{x}^{\otimes 2}$, the strategy for decoding $X$ over all quantum channels with output dimension $\vert C \vert$ is upper bounded by the optimal strategy using a classical channel with the same output dimension and shared randomness. Formally,
    \begin{align}\label{eq:guessing-prob-Frenkel-Wiener}
        \sup_{\cE_{\ol{X} \to C}} p_{g}(X \vert C)_{(\id_{X} \otimes \cE)(\chi^{\vert p}_{X\ol{X}})} \leq \sup_{P_{\widehat{X} \vert X} \in \mbf{C}^{SR}(\vert C \vert)} \sum_{x} p(x) P_{\widehat{X}\vert X}(x \vert x) \ . 
    \end{align}
\end{corollary}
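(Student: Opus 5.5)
Fix an arbitrary classical-to-quantum channel $\cE_{\ol{X} \to C}$, i.e.\ a family of states $\{\rho^{x}_{C}\}_{x}$ with $\rho^{x}_{C} = \cE(\dyad{x})$. For this channel, the guessing probability $p_{g}(X \vert C)$ of the state $(\id_{X} \otimes \cE)(\chi^{\vert p}_{X\ol{X}}) = \sum_{x} p(x) \dyad{x} \otimes \rho^{x}_{C}$ is, by the definition \eqref{eq:guessing-probability}, a maximum over POVMs $\{\Gamma_{\hat{x}}\}_{\hat{x} \in \cX}$ on $C$ of $\sum_{x} p(x)\Tr[\Gamma_{x}\rho^{x}_{C}]$. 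The plan is to read any fixed pair (encoding $\cE$, POVM $\Gamma$) as a point-to-point channel and then apply Frenkel--Weiner.

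For each such pair, define the conditional distribution $P_{\widehat{X} \vert X}(\hat{x} \vert x) \coloneq \Tr[\Gamma_{\hat{x}}\rho^{x}_{C}]$. This is exactly a point-to-point channel from $\cX$ to $\cX$ implemented by sending a $\vert C \vert$-dimensional quantum system with no further resources, so $P_{\widehat{X} \vert X} \in \mbf{Q}(\vert C \vert) \subseteq \mbf{Q}^{SR}(\vert C \vert)$. Its figure of merit is
\[
\sum_{x} p(x) \Tr[\Gamma_{x}\rho^{x}_{C}] = \sum_{x} p(x) P_{\widehat{X}\vert X}(x \vert x).
\]

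By the Frenkel--Weiner theorem (Proposition~\ref{prop:Frenkel-Weiner}), $P_{\widehat{X} \vert X} \in \mbf{C}^{SR}(\vert C \vert)$. Hence the value above is at most the supremum on the right-hand side of \eqref{eq:guessing-prob-Frenkel-Wiener}. Taking the maximum over POVMs and then the supremum over $\cE$ on the left yields the corollary.

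The only point needing care is the set-membership step: checking that the induced channel lies in $\mbf{Q}^{SR}(\vert C\vert)$ as defined, with matching input and output alphabets $\cX$. Frenkel--Weiner holds for arbitrary alphabet sizes, so this is immediate, and I expect no real obstacle. A non-uniform prior $p$ is harmless because the objective is linear in $P_{\widehat{X}\vert X}$ and Frenkel--Weiner is a statement about set containment that does not depend on the prior.
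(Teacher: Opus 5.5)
Your proposal is correct and follows essentially the same route as the paper's proof. Both fix an encoding and a POVM, and read the induced distribution $\Tr[\Gamma_{\hat{x}}\cE(\dyad{x})]$ as a point-to-point channel implemented through a noiseless $\vert C \vert$-dimensional quantum channel. Both then invoke Frenkel--Weiner (Proposition~\ref{prop:Frenkel-Weiner}) to place that channel in $\mbf{C}^{SR}(\vert C \vert)$, and finish by optimizing over the POVM and over $\cE$.
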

\begin{proof}
    \sloppy We begin by re-expressing the guessing probability in the LHS of \eqref{eq:guessing-prob-Frenkel-Wiener}. For a quantum channel $\cE_{\ol{X} \to C}$, we have
    \begin{align}
        p_{g}(X \vert C)_{(\id_{X} \otimes \cE)(\chi^{\vert p}_{X\ol{X}})} = \max_{\{\Gamma_{x}\}} \sum_{x} p(x) \Tr[\Gamma_{x}\cE(\dyad{x})] \ .
    \end{align}
    We can define the classical channel $P_{\widehat{X} \vert X}(\widehat{x} \vert x) = \Tr[\Gamma_{\widehat{x}}\cE(\dyad{x})] = \Tr[\Gamma_{\widehat{x}} (\id_{\vert C \vert}(\cE(\dyad{x})))]$. The final expression highlights this conditional distribution can be implemented between a sender and receiver using a $\vert C \vert$-dimensional identity channel. Thus, by Proposition \ref{prop:Frenkel-Weiner}, this classical channel can be implemented between the sender and receiver using a noiseless classical channel of dimension signaling dimension $\vert C \vert$, $\Delta_{\vert C \vert}$, and shared randomness. This completes the proof.
\end{proof}

\subsection{Bounds from Not Signaling in One Direction}\label{sec:bounds-on-not-signaling-in-one-direc}
\begin{figure}[h]
    \centering
    \begin{tikzpicture} %NSFCMN
            \node[terminal] (x) at (-1,0.8) {$\cX \ni x$};
            \node[terminal] (y) at (-1,-0.8) {$\cY \ni y$};    
            \node[prep_dev, minimum height = 2.5cm] (AB) at (1.25,0) {$P_{X'Y' \vert XY}$}; 
            \node[dev, minimum height = 1cm] (R) at (4, 0) {$R_{Z \vert X'Y'}$};
            \node[terminal] (z) at (5.5, 0) {$z$};

            \path (x) \cedge (0.45,0.8) ;
            \path (y) \cedge (0.45,-0.8) ;
            \path (2.05,0.8) \cedge node[below,pos=0.25] {$d_{1}$} (R);
            \path (2.05,-0.8) \cedge node[above,pos=0.25] {$d_{2}$} (R);
            \path (R) \cedge (z);
        \end{tikzpicture}
    \caption{The network considered in Section \ref{sec:bounds-on-not-signaling-in-one-direc}, a fully classical bipartite channel that does not signal from the $X$ system to the $Y$ system.}
    \label{fig:FCNSC}
\end{figure}

The following theorem, Theorem \ref{thm:non-signaling-box-bound}, shows the same sort of bounds as in Theorem \ref{thm:unassisted-QMN-bound} hold in the 2-sender case if the 2 senders share a classical-input, classical-output channel that does not signal in one direction (see Fig.~\ref{fig:FCNSC}). To prove this, we need the following proposition that follows from the relation between a channel being signaling and semi-localizable (Proposition \ref{prop:NS-equiv-semilocalizable}).
\begin{figure}[h]
    \centering
    \begin{tikzpicture} %decomposition
            \node[terminal] (A) at (0,1.25) {$A$};
            \node[terminal] (Y) at (0,-1.25) {$Y$}; \node[prep_dev, minimum height = 3cm] (AB) at (2,0) {$\cE_{AY \to A'Y'}$}; 
            \node[terminal] (Ap) at (3.75,1.25) {$A'$};
            \node[terminal] (Yp) at (3.75,-1.25) {$Y'$};
            \node (equals) at (4.5,0) {$=$};

            \node[terminal] (A2) at (5.25,1.25) {$A$};
            \node[terminal] (Y2) at (5.25,-1.25) {$Y$}; 
            \node[dev, minimum height = 1.25cm] (Tbox) at (7.2,-1.25) {$T_{Y \to Y\ol{Y}'Y'}$};
            \node[qsource] (rhoM) at (10,0.6) {$\rho^{y' \vert y}_{M}$};
            \node[proc_dev, minimum height = 1.5cm] (proc_chan) at (12,1.15) {$\cF_{AM \to A'}$};
            \node[terminal] (Yp2) at (14,-1.25) {$Y'$};
            \node[terminal] (Ap2) at (14,1.25) {$A'$};

            \path (A) \qedge (1.1,1.25); 
            \path (Y) \cedge (1.1,-1.25);
            \path (2.9,1.25) \qedge (3.5,1.25);
            \path (2.9,-1.25) \cedge (3.5,-1.25);
            \path (Y2) \cedge (6.25,-1.25);
            \path (8.15,-0.78) \cedge node[midway,below] {$Y\ol{Y}'$} (rhoM);
            \path (8.15,-1.25) \cedge (Yp2);
            \path (A2) \qedge (11.2,1.25);
            \path (rhoM) \qedge (proc_chan);
            \path (12.8,1.25) \qedge (Ap2);

             \draw[dashed, gray] (5,0) -- (14,0);
        \end{tikzpicture}
    \caption{The channel decomposition of $\cE_{AY \to A'Y'}$ with non-signaling restriction $A \not \to Y$ proven in Proposition \ref{prop:copying-localizable}. The dotted grey line highlights that only classical communication needs to be transmitted.}
    \label{fig:channel-decomp}
\end{figure}
\begin{proposition}\label{prop:copying-localizable}
    Let $\cE_{AY \to A'Y'}$ be a bipartite channel where $Y$ and $Y'$ are classical registers not necessarily of the same dimension. If $\cE$ is $A \not \to Y$ signaling, then there exists a conditional distribution $p_{Y' \vert Y}$, a set of quantum states $\{\rho^{y' \vert y}_{M}\}_{y',y}$, and a quantum channel $\cF_{AM \to A'}$ such that 
    \begin{align}
        \cE = (\cF_{AM \to A'} \otimes \id_{Y'}) \circ (\id_{A} \otimes \cP_{Y\ol{Y}' \to M} \otimes \id_{Y'}) \circ (\id_{A} \otimes T_{Y \to Y\ol{Y}'Y'}) \ ,
    \end{align}
    where $\cP_{Y\ol{Y}' \to M}$ is the classical-to-quantum channel $\cP_{Y\ol{Y}' \to M}(K) = \sum_{y,y'} \bra{y}\bra{y'}K\ket{y}\ket{y'} \rho^{y' \vert y}_{M}$
    and $T_{Y \to Y\ol{Y}'Y'}$ is the classical channel
    \begin{align}\label{eq:T-channel-def}
        T_{Y \to Y\ol{Y}'Y'}(K) = \sum_{y,y'} p(y' \vert y)(\dyad{y}_{Y} \otimes \dyad{y'}_{\ol{Y}'} \otimes \dyad{y'}_{Y'})K(\dyad{y}_{Y} \otimes \dyad{y'}_{\ol{Y}'} \otimes \dyad{y'}_{Y'})^{\dagger} \ . 
    \end{align}
    In particular, this shows $\cE$ can be implemented solely communicating classical information from the $Y$ system to the $A'$ system by forwarding copies of the $Y$ input and $Y'$ output (See Fig.~\ref{fig:channel-decomp}).
\end{proposition}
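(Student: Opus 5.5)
The plan is to obtain the decomposition directly from Proposition~\ref{prop:NS-equiv-semilocalizable}, and then use that $Y$ and $Y'$ are classical to ``classicalize'' the message from the $Y$ side to the $A$ side.

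First, since $\cE_{AY \to A'Y'}$ is $A \not\to Y$, Proposition~\ref{prop:NS-equiv-semilocalizable} says it is $Y \to A$ semilocalizable. Hence there are channels $\cF^{0}_{Y \to Y'M}$ and $\cF^{1}_{AM \to A'}$ with $\vert M \vert \leq \vert Y \vert \vert Y' \vert$ such that
\begin{align}
\cE = (\cF^{1}_{AM \to A'} \otimes \id_{Y'}) \circ (\id_{A} \otimes \cF^{0}_{Y \to Y'M}) \ .
\end{align}
I will set $\cF \coloneq \cF^{1}$. Next I formalize ``$Y,Y'$ classical'' as $\cE = (\id_{A'} \otimes \Delta_{Y'}) \circ \cE \circ (\id_{A} \otimes \Delta_{Y})$, where $\Delta$ is the completely dephasing channel. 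I substitute the semilocal form into this identity. The channel $\Delta_{Y'}$ acts only on $Y'$, while $\cF^{1}$ acts only on $AM$, so $\Delta_{Y'}$ commutes past $\cF^{1}$. This gives
\begin{align}
\cE = (\cF^{1} \otimes \id_{Y'}) \circ (\id_{A} \otimes \wt{\cF}^{0}) \ , \qquad \text{where } \wt{\cF}^{0} \coloneq (\Delta_{Y'} \otimes \id_{M}) \circ \cF^{0} \circ \Delta_{Y} \ .
\end{align}

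Second, I read off the data in the statement from $\wt{\cF}^{0}$. For each $y$, the operator $\wt{\cF}^{0}(\dyad{y})$ is a state on $Y'M$ that is block diagonal in $Y'$. I can therefore write it as $\sum_{y'} p(y' \vert y) \dyad{y'}_{Y'} \otimes \rho^{y' \vert y}_{M}$. Here $p(y' \vert y) \coloneq \Tr[(\dyad{y'} \otimes I_{M})\cF^{0}(\dyad{y})]$ is a conditional distribution, and $\rho^{y'\vert y}_{M}$ is the normalized conditional state. When $p(y' \vert y) = 0$, I choose $\rho^{y'\vert y}_{M}$ arbitrarily.

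Third, I verify the claimed composition. The channel $T$ in \eqref{eq:T-channel-def} dephases $Y$, samples $y'$, and writes two copies $\ol{Y}'$ and $Y'$. Then $\cP$ consumes $Y\ol{Y}'$ and prepares $\rho^{y'\vert y}_{M}$. Applied to $\dyad{y}$, the map $(\cP \otimes \id_{Y'}) \circ T$ gives exactly $\sum_{y'} p(y'\vert y)\rho^{y'\vert y}_{M} \otimes \dyad{y'}_{Y'}$. Off-diagonal inputs in $Y$ are killed on both sides, so this map equals $\wt{\cF}^{0}$ by linearity. Composing with $\cF$ and using the previous display gives the stated identity. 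The ``in particular'' clause is then immediate, because only the classical registers $Y\ol{Y}'$ cross from the $Y$ side to the $A$ side.

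The main obstacle is the commutation step. I must justify that the dephasings on $Y$ and $Y'$ can be absorbed into $\cF^{0}$ without changing $\cE$. This relies on the fact that $\cF^{1}$ never touches $Y'$. I would state the classical-register assumption explicitly as invariance under $\Delta_{Y}$ and $\Delta_{Y'}$ so that this step is rigorous. I will also note the degenerate case $p(y'\vert y)=0$, where $\rho^{y'\vert y}_{M}$ may be any state.
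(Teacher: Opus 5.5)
Your proposal is correct and follows essentially the same route as the paper: obtain the one-way decomposition from Proposition~\ref{prop:NS-equiv-semilocalizable}, absorb $\Delta_{Y}$ and $\Delta_{Y'}$ into $\cF^{0}$ (using that $\cF^{1}$ does not act on $Y'$), read off $p(y'\vert y)$ and $\rho^{y'\vert y}_{M}$ from the resulting classical-to-classical-quantum channel, and check $\cE = \cF\circ\cP\circ T$ by direct calculation. The differences are cosmetic: you read the block-diagonal form off directly instead of citing Proposition~\ref{prop:dephase-input-makes-c-to-q-channel}, and you explicitly treat the case $p(y'\vert y)=0$, which the paper leaves implicit.
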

\begin{proof}
    By Proposition \ref{prop:NS-equiv-semilocalizable}, $\cE_{AY \to A'Y'} = (\id_{Y'} \otimes \cF^{1}_{M\to A'}) \circ (\cF^{0}_{Y \to Y'M} \otimes \id_{A})$ where $M \leq \vert \cY \vert \cdot \vert \cY' \vert$. Moreover, as $Y$ and $Y'$ are classical, we can completely dephase before and after the map on these registers, i.e. $\cE_{AY \to A'Y'} = \Delta_{Y'} \circ \cE \circ \Delta_{Y}$. Therefore,
    \begin{align}
        \cE_{AY \to A'Y'} = [\cF^{1}_{AM\to A'} \otimes \id_{Y'} ] \circ [\id_{A} \otimes (\Delta_{Y'} \circ \cF^{0}_{Y \to MY'} \circ \Delta_{Y})] \ . 
    \end{align}
    By Proposition \ref{prop:dephase-input-makes-c-to-q-channel}, $\cF^{0}_{Y \to MY'} \circ \Delta_{Y}$ is itself a classical-to-quantum channel, so there exists $\{\rho^{y}_{MY'}\}_{y \in \cY}$ such that $(\cF^{0}_{Y \to MY'} \circ \Delta_{Y})(\dyad{y}) = \rho^{y}_{MY'}$ for all $y \in \cY$. As dephasing the $Y'$ system of $\rho^{y}_{MY'}$ results in a classical-quantum system, there exists quantum states $\{\rho^{y' \vert y}_{M}\}_{y',y}$ and conditional distribution $p_{Y' \vert Y}$ such that $(\id_{M} \otimes \Delta_{Y'})(\rho^{y}_{MY'}) = \sum_{y' \in \cY'} p_{Y' \vert Y}(y' \vert y)\dyad{y'}_{Y'} \otimes \rho^{y' \vert y}_{M}$ for all $y \in \cY$. In total, this allows us to conclude the classical-to-classical-quantum channel $\widehat{\cF}^{0} \coloneq \Delta_{Y'} \circ \cF^{0}_{Y \to MY'} \circ \Delta_{Y}$ is defined by
    \begin{align}
        \widehat{\cF}^{0}_{Y \to MY'}(\dyad{y})  = \sum_{y' \in \cY'} p(y' \vert y)\dyad{y'} \otimes \rho_{M}^{y' \vert y} \ .
    \end{align}
    It follows $\cE = \cF^{1} \circ \widehat{\cF}^{0}$. Using the conditional distribution from $\widehat{\cF}^{0}$, define the classical channel $T_{Y \to YY'\ol{Y}'}(\dyad{y}) = \sum_{y} p(y' \vert y) \dyad{y}_{Y} \otimes \dyad{y'}_{Y'} \otimes \dyad{y'}_{\ol{Y}'}$. Using the states $\rho^{y' \vert y}$ from $\widehat{\cF}^{0}$, define the classical-to-quantum channel $\cP(\dyad{y} \otimes \dyad{y'}) = \rho^{y' \vert y}_{M}$. By direct calculation, $\cE = \cF^{1} \circ \cP_{\ol{Y}'Y \to M} \circ T_{Y \to MY'}$. Defining $\cF \coloneq \cF^{1}$ completes the proof.
\end{proof}

We now use the above to establish our limit on computing a conditionally bijective function with $2$ senders who share a classical-to-classical box that cannot signal in one direction.
\begin{theorem}\label{thm:non-signaling-box-bound}
    Let $f:\cX \times \cY \to \cZ$ be a $Y$-conditionally bijective function. Let $\cE_{XY \to X'Y'}$ be non-signaling from $X$ to $Y$, i.e. $X \not \to Y$. Let the inputs on $X$ be independent of the inputs on $Y$, i.e.~$p_{XY} = p_{X} \otimes q_{Y}$. Then
    \begin{align}
        p_{g}(Z \vert C_{1}C_{2})_{(\id_{Z} \otimes \cE)(p_{ZW_{1}W_{2}})} \leq \vert X' \vert p_{g}(X) \ . 
    \end{align}
\end{theorem}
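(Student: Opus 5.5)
We apply Proposition \ref{prop:copying-localizable} with $A = X$ (a classical input register). It writes $\cE = \cF_{XM\to X'} \circ \cP_{Y\ol{Y}'\to M}\circ T_{Y\to Y\ol{Y}'Y'}$, so the joint state of $Z$, the inputs, and the outputs $C_1 = X'$, $C_2 = Y'$ is generated as follows. First the $Y$ side samples $y'\sim p(\cdot\vert y)$ and keeps classical copies of $y$ and $y'$. Then the $X$ side receives only the classical state $\rho^{y'\vert y}_M$ and applies $\cF$ to $x$ and $M$. The receiver is helped, as in Lemma \ref{lem:no-ent-ass-reduc-to-point-to-point}, by giving it the classical registers $Y$ and $\ol{Y}'$ as well. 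By data processing (Item 1 of Proposition \ref{prop:guessing-prob-properties}) this gives
\[
p_g(Z\vert X'Y') \le p_g(Z\vert X'Y'Y\ol{Y}').
\]
Since $Y' $ is a copy of $\ol{Y}'$, the isometric-invariance and data-processing sandwich from that lemma lets us discard $Y'$, leaving $p_g(Z\vert X' Y \ol{Y}')$.

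Next we use conditional bijectivity. The registers $Y\ol{Y}'$ are jointly classical, and Lemma \ref{lem:cond-bij-p-guess-and-H-min} applies with $E = X'\ol{Y}'$. Its proof only needs $Y$ classical and $f_y$ bijective, and the extra side-information in $E$ is arbitrary. This gives
\[
p_g(Z\vert X'Y\ol{Y}') = p_g(X\vert X'Y\ol{Y}').
\]
Now $X'$ is a classical register of dimension $\vert X'\vert$ (the box is fully classical), so the classical chain rule, Item 5 of Proposition \ref{prop:guessing-prob-properties}, yields
\[
p_g(X\vert X'Y\ol{Y}') \le \vert X'\vert\, p_g(X\vert Y\ol{Y}').
\]

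The crux is to show that $X$ is independent of $(Y,\ol{Y}')$ once $X'$ is removed. In the decomposition, $(y,y')$ is produced by $T$ from $y$ alone, with $y\sim q_Y$ independent of $x\sim p_X$. Tracing out $X'$ (and $M$, which feeds only into $\cF$) therefore leaves the product $p_X\otimes(\text{distribution of } y,y')$. Decomposing over the classical side information (Item 4 of Proposition \ref{prop:guessing-prob-properties}) then gives $p_g(X\vert Y\ol{Y}') = p_g(X)$.

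Chaining the steps,
\[
p_g(Z\vert C_1C_2) \le \vert X'\vert\, p_g(X).
\]
The main obstacle is careful bookkeeping. The conditioning must be done on $\ol{Y}'$ and $Y$, both classical, \emph{before} applying the chain rule, so that the only register whose dimension is paid for is $X'$. We also have to check that the fully classical nature of $\cE$ makes $X'$ classical, which is what the chain rule requires. If $M$ were needed explicitly it would be quantum and would break the chain rule; this is avoided because $M$ is only an internal input to $\cF$ and is never conditioned on.
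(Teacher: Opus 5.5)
Your proof is correct and follows essentially the same route as the paper's. Both arguments give the receiver the classical input $Y$, use conditional bijectivity to pass from guessing $Z$ to guessing $X$, pay $\log\vert X'\vert$ through the classical chain rule, and invoke the decomposition of Proposition \ref{prop:copying-localizable} to see that $X$ is independent of the remaining classical $Y$-side registers. The only cosmetic difference is that you condition on the internal copy $\ol{Y}'$ and discard $Y'$ at the start, whereas the paper keeps $Y'$ and removes it after the chain rule via a data-processing sandwich showing $H_{\min}(X\vert Y'Y) = H_{\min}(X\vert Y)$.
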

\begin{proof}
    First, note that as the non-signaling box is fully classical, it may be written as a conditional distribution $E_{X'Y' \vert XY}$. This allows us to write a joint distribution over the inputs and outputs:
    \begin{align}
        p_{ZXYX'Y'} &= (\id_{Z} \otimes E_{X'Y \vert XY})(p_{ZXY}) \ , \label{eq:joint-dist-of-NS-box-in-and-out}
    \end{align}
    where $p_{ZXY} = \sum_{x,y} p_{X}(x)p_{Y}(y) \dyad{f(x,y)}_{Z} \otimes \dyad{x}_{X} \otimes \dyad{y}_{Y}$. Then using data processing of partial trace and that $f$ is $Y$-conditionally bijective,
    \begin{align}
        p_{g}(Z\vert X'Y')_{p} \leq p_{g}(Z\vert X'Y'Y)_{p} = p_{g}(X \vert X'Y'Y)_{p} \ .
    \end{align}
    Our goal is to upper bound $p_{g}(X \vert X'Y'Y)_{p}$. Given the relation between guessing probability and min-entropy in \eqref{eq:guess-prob-and-min-ent}, we do this by lower bounding $H_{\min}(X \vert X'Y'Y)_{p}$.

    First, we remove the classical information $X'$ using the chain rule for classical information \cite[Lemma 6.18]{Tomamichel-Book}:
    \begin{align}
        H_{\min}(X \vert X'Y'Y) \geq H_{\min}(X \vert Y'Y) - \log\vert X' \vert \ .
    \end{align}
    Next, we want to remove $Y'$. By Proposition \ref{prop:copying-localizable}, $E_{X'Y' \vert XY} = \cF_{XM \to X'} \circ \cP_{\ol{Y}'Y \to M} \circ T_{Y \to Y\ol{Y}'Y'}$ where $T$ copies the input $Y$ into output $Y$ (see \eqref{eq:T-channel-def}). It follows that 
    \begin{align}
        p_{XYY'} = (\id_{X} \otimes \Tr_{\ol{Y}'} \circ T_{Y \to Y \, \ol{Y}' \, Y'})(p_{XY}) \eqqcolon (\id_{X} \otimes \widehat{T}_{Y \to YY'})(p_{XY}) \ .
    \end{align}
    As such, by data processing,
    \begin{align}\label{eq:C2-carries-no-info}
        H_{\min}(X \vert Y'Y)_{p} \leq H_{\min}(X \vert Y)_{p} \leq H_{\min}(X \vert Y'Y)_{(\id_{X} \otimes \widehat{T})(p_{XY})} = H_{\min}(X\vert Y'Y)_{p}
    \end{align}
    Thus, $H_{\min}(X \vert Y'Y)_p = H_{\min}(X \vert Y)$. As $p_{XY} = p_{X} \otimes q_{Y}$, by Item 3 (or Item 2) of Proposition \ref{prop:guessing-prob-properties},
    \begin{align}
         H_{\min}(X \vert Y)_{p} = H_{\min}(X)_{p} \ .
    \end{align}
    Combining all the steps, we have $H_{\min}(X \vert X'Y'Y) \geq H_{\min}(X)_{p} - \log \vert X' \vert$. Combining this with  \eqref{eq:guess-prob-and-min-ent},
    \begin{align}
        p_{g}(Z \vert X'Y') \leq \exp(-H_{\min}(X\vert X'Y'Y)) \leq \exp(-H_{\min}(X) + \log\vert X' \vert) = \vert X' \vert p_{g}(X) \ .
    \end{align}
    This completes the proof.
\end{proof}
\begin{remark}
    Note that we used the box the senders share only has classical inputs and outputs. First, the existence of a channel $\widehat{T}$ that generates $Y'$ from $Y$ while preserving $Y$ used Proposition \ref{prop:copying-localizable}, which needs $Y$ and $Y'$ to be classical. Second, we used that $X'$ is classical to have the bound $\log \vert X' \vert$. Finally, $X$ had to be classical to talk of a classical function $f$.
\end{remark}

We provide the following for completeness.
\begin{proposition}\label{prop:dephase-input-makes-c-to-q-channel}
    Let $d \in \mbb{N}$ and $\cE_{\mbb{C}^{d} \to B}$ be a quantum channel. Then $\cE \circ \Delta_{d}$ is a classical-to-quantum (i.e. state preparation) channel, that is there exist density matrices $\{\rho^{y}_{B}\}_{y}$ such that $(\cE \circ \Delta_{d})(X) = \sum_{y} \sqrt{\rho^{y}}\bra{y}K\ket{y}\sqrt{\rho^{y}}$ for all $K \in \Lin(\mbb{C}^{d})$.
\end{proposition}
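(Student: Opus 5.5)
The plan is a direct computation using the fact that a completely dephasing channel makes the input classical. The statement has a typo: the left side should read $(\cE \circ \Delta_{d})(K)$. The claimed form $\sum_{y} \sqrt{\rho^{y}}\bra{y}K\ket{y}\sqrt{\rho^{y}}$ equals $\sum_y \bra{y}K\ket{y}\rho^{y}$, since $\bra{y}K\ket{y}$ is a scalar and $\sqrt{\rho^y}\sqrt{\rho^y}=\rho^y$. So it suffices to show that $(\cE\circ\Delta_d)(K) = \sum_y \bra{y}K\ket{y}\,\rho^y_B$ for suitable density matrices $\rho^{y}_B$.

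First, I write the completely dephasing channel explicitly in the implicit computational basis $\{\ket{y}\}_{y\in[d]}$: $\Delta_d(K) = \sum_{y} \dyad{y}K\dyad{y} = \sum_y \bra{y}K\ket{y}\,\dyad{y}$, which holds for every $K \in \Lin(\mbb{C}^{d})$. Then, by linearity of $\cE$, $(\cE\circ\Delta_d)(K) = \sum_y \bra{y}K\ket{y}\,\cE(\dyad{y})$. I define $\rho^y_B \coloneq \cE(\dyad{y})$. Because $\cE$ is CPTP and $\dyad{y}$ is a density matrix, each $\rho^y_B$ is positive semidefinite with unit trace, so it is a density matrix.

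Finally, substituting $\bra{y}K\ket{y}\rho^y = \sqrt{\rho^y}\bra{y}K\ket{y}\sqrt{\rho^y}$ gives exactly the stated expression. This also exhibits $\cE\circ\Delta_d$ as the preparation channel $\dyad{y}\mapsto\rho^y$ applied after measuring in the computational basis.

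There is no real obstacle here. The only points needing care are the typo in the statement and the fact that the square roots are cosmetic.
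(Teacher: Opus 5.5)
Your proof is correct and follows the same route as the paper's: write $\Delta_d(K)=\sum_y \bra{y}K\ket{y}\dyad{y}$, use linearity to get $\sum_y \bra{y}K\ket{y}\,\cE(\dyad{y})$, set $\rho^y_B = \cE(\dyad{y})$, and note that the square roots are cosmetic. The paper carries out the linearity step through a Kraus decomposition of $\cE$, which adds nothing essential, and you are right that the left-hand side should read $(\cE\circ\Delta_d)(K)$.
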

\begin{proof}
    We will prove this using the Kraus operators of a quantum channel. A set of Kraus operators for $\Delta_{d}$ are $\{\dyad{i}\}_{i \in [d]}$ for the chosen classical basis. Let $\{A_{k}\}_{k}$ be a set of Kraus operators for $\cE$. Then for any operator $K \in \Lin(\mbb{C}^{d})$, 
    \begin{align}
        (\cE \circ \Delta_{d})(K) = \sum_{k} A_{k} \left(\sum_{y} \dyad{y}X\dyad{y}\right) A_{k}^{\dagger} &= \sum_{y} \left[ \bra{y}K\ket{y} \left( \sum_{k} A_{k}\dyad{y}A_{k}^{\dagger} \right) \right] \\
        &= \sum_{y} \bra{y}K\ket{y} \cE(\dyad{y}) \\
        &\coloneq \sum_{y} \bra{y}K\ket{y}\rho^{y}_{B} \\
        &= \sum_{y} \sqrt{\rho^{y}_{B}}\bra{y}K\ket{y}\sqrt{\rho^{y}}_{B} \ ,
    \end{align}
    where the second equality is linearity, the third is that $\{A_{k}\}$ are the Kraus operators of $\cE$, the definition of the $\rho^{y}_{B}$ use that $\cE$ is a CPTP map, so $\rho^{y}_{B}$ is always a quantum state, and the final equality is linearity again.
\end{proof}

\section{The Communication Advantage of Dense Network Coding}\label{app:establishing-communication-advantage}
In this section we establish the generalized form of Theorem \ref{thm:main-text-communication-advantage}. This section is split into four pieces: introducing `doubly-conditionally bijective functions,' using the results of Appendix \ref{app:bounds-on-guessing-cond-bij-over-MNs} to place limits on FCNSMNs and EAQMNs (recall Definition \ref{def:main-families-of-MNs}), showing these limits on FCNSMNs and EAQMNs can be exactly achieved using simple CMNs when computing ditwise addition, and finally combining these results to obtain our main theorem (Theorem \ref{thm:one-shot-info-advantage}) and the asymptotic version for communication complexity theorists (Theorem \ref{thm:asymptotic-comm-complexity-adv}).

\subsection{Doubly-Conditionally Bijective Functions}\label{sec:DCB-functions}
Here we define and give a brief structural account of doubly-conditionally bijective functions which will be needed to understand results in subsequent parts of this section.
\begin{definition}\label{def:conditionally-bijective}
    Let $f: \cX \times \cY \to \cZ$ be a function. We say $f$ is `doubly-conditionally bijective' (DCB) if $f$ is both $\cX$- and $Y$-conditionally bijective. That is,
    \begin{enumerate}
        \item for all $x \in \cX$, the functions $f_{x}:\cY \to \cZ$ defined via $f_{x}(y) = f(x,y)$ are bijections and
        \item for all $y \in \cY$, the functions $f_{y}:\cX \to \cZ$ defined via $f_{y}(x) = f(x,y)$ are bijections. 
    \end{enumerate}
    Noting a DCB function requires $\vert \cX \vert = \vert \cY \vert = \vert \cZ \vert$, we say a DCB function is of order $n$ if $\vert \cX \vert = n$.
\end{definition}
 For motivation of the subsequent structural claims, we begin with examples of DCB functions that include those we showed can be computed using dense network coding in Section \ref{app:dense-network-coding}.
\begin{example}[Examples of Doubly-Conditionally Bijective Functions] \label{ex:DCB-functions} ~
    \begin{itemize}
        \item Addition modulo $d$: $\oplus_{d}: \mbb{Z}_{d} \times \mbb{Z}_{d} \to \mbb{Z}_{d}$ defined by $x \oplus_{d} y = (x+y) \mod d$. 
        \item Bitwise XOR: $\oplus^{n}_{2}: \mbb{Z}_{2}^{n} \times \mbb{Z}_{2}^{n} \to \mbb{Z}_{2}^{n}$ defined by $x \oplus^{n} y = (x_{1} \oplus_{2} y_{1}, x_{2} \oplus_{2} y_{2}, ... , x_{n} \oplus_{2} y_{n})$.
        \item Ditwise Addition modulo $d$: $\oplus^{2}_{d}: \mbb{Z}_{d}^{2} \times \mbb{Z}_{d}^{2} \to \mbb{Z}_{d}^{2}$ defined by $(x_{1},x_{2}) \oplus^{2}_{d} (y_{1},y_{2}) = (x_{1} \oplus_{d} y_{1}, x_{2} \oplus_{d} y_{2})$.
    \end{itemize}
\end{example}

Clearly all of the above examples of DCB functions are in fact examples of group operations. To understand how DCB functions differ from group operations of TNC groups, we now show that DCB functions are always equivalent to computing multiplication of a \textit{quasi}group (Proposition \ref{prop:correspondence-between-DCB-and-quasi-group}).\footnote{Recall that a quasigroup $(Q,\cdot)$ is a set $Q$ with a binary multiplication operation $\cdot$ such that for every pair of elements $a,b \in Q$ there exist unique $x,y \in Q$ such that $a \cdot x = b$ and $y \cdot a =b$. This is sometimes called the Latin square property.} To show this, we provide a few definitions.
\begin{definition}
    The \textit{multiplication table} of a two-variable function $f: \cX \times \cY \to \cZ$ is a $\vert \cX \vert \times \vert \cY \vert$ array where entry $(x,y)$ takes the value $f(x,y)$.
\end{definition}
\begin{definition}
    A Latin square of order $n$ is an $n \times n$ array in which each cell contains a single symbol from a set of $n$ distinct elements such that each symbol occurs exactly once in each row and each column. (See Table \ref{tab:Latin-Square} for an example.)
\end{definition}

\begin{table}
    \centering
    \begin{tabular}{c c c c c}
         0 & 1 & 2 & 3 & 4  \\ 
         1 & 0 & 3 & 4 & 2 \\
         2 & 3 & 4 & 0 & 1 \\
         3 & 4 & 1 & 2 & 0 \\
         4 & 2 & 0 & 1 & 3
    \end{tabular}
    \caption{Example of a Latin square. Each entry represents the value of applying a binary operation $\cdot$ to the set $\{0,1,2,3,4\}$. This Latin square does not correspond to multiplication table of a group as it implies the binary operation $\cdot$ is not associative as may be verified by direct calculation.}
    \label{tab:Latin-Square}
\end{table}

\begin{proposition}
    The multiplication table of a DCB function is a Latin square and a Latin square defines the multiplication table of a DCB.
\end{proposition}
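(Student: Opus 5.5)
The plan is to prove the two directions separately, translating each bijectivity condition of Definition \ref{def:conditionally-bijective} into the corresponding row or column condition of a Latin square. Throughout, I will identify the alphabets with index sets via fixed enumerations. This is harmless because a DCB function of order $n$ has $\vert \cX \vert = \vert \cY \vert = \vert \cZ \vert = n$.

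For the forward direction, let $f$ be DCB of order $n$. I would first observe that its multiplication table is an $n \times n$ array with entries from the $n$-element set $\cZ$.
\begin{itemize}
\item Fix a row $x$. The entries of that row are $f_{x}(y)$ for $y \in \cY$. Since $f_{x}:\cY \to \cZ$ is a bijection, each symbol of $\cZ$ appears exactly once in the row: injectivity gives at most once, and surjectivity gives at least once.
\item The identical argument, using the bijections $f_{y}:\cX\to\cZ$, shows each symbol appears exactly once in each column $y$.
\end{itemize}
Hence the table is a Latin square.

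For the converse, take a Latin square $L$ of order $n$ on a symbol set $S$ with $\vert S \vert = n$. Index its rows by $\cX$ and its columns by $\cY$, with $\vert \cX\vert = \vert\cY\vert = n$, set $\cZ \coloneq S$, and define $f(x,y) \coloneq L_{x,y}$. By construction the multiplication table of $f$ is $L$. I would then check DCB:
\begin{itemize}
\item Each symbol occurring exactly once in row $x$ means $f_{x}$ is injective. Since $\vert\cY\vert = \vert\cZ\vert = n$ and the domain and codomain are finite of equal size, $f_{x}$ is also surjective, hence a bijection.
\item The column condition gives that each $f_{y}$ is a bijection in the same way.
\end{itemize}

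No step is a genuine obstacle. The only point needing care is bookkeeping: the Latin square's symbol set must be identified with $\cZ$ and must have exactly $n$ elements, so that "exactly once per row" is equivalent to bijectivity rather than mere injectivity. I would state this identification explicitly at the start.
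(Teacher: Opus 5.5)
Your proposal is correct and matches the paper's proof: the rows of the multiplication table are the bijections $f_{x}$ and the columns are the bijections $f_{y}$, and conversely, setting $f(x,y) = L_{x,y}$ turns the row and column conditions back into bijectivity. Your explicit identification of the symbol set with $\cZ$ is harmless bookkeeping, and the cardinality step for surjectivity is optional, since ``exactly once'' already gives surjectivity directly.
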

\begin{proof}
    We begin with the first claim. Let $f$ be a DCB of order $n$. Without loss of generality, represent it as $f: [n] \times [n] \to [n]$. A row of the multiplication table is $[f_{x}(0), f_{x}(1), ..., f_{x}(n-1)]$. As $f_{x}$ is a bijection, this is a permutation of the elements of $[n]$ and thus each element appears exactly once. By the same argument using that $f_{y}$ is a bijection, in each column of the multiplication table, each element appears exactly once. Thus by definition, the multiplication table is a Latin square of order $n$.

    We now prove the second claim. Consider a Latin square $L$. Define $f(x,y) \coloneq L_{x,y}$ for all $x,y \in [n]$. As each row of $L$ contains every element of $n$ exactly once, this implies $f_{x}$ is a bijection for every $x$. As each column of $L$ contains every element of $n$ exactly once, this implies $f_{y}$ is a bijection for every $y$. Thus $f$ is a DCB function by definition.
\end{proof}

Combining the above with the fact a Latin square is the multiplication table (Cayley table) of a quasigroup on $n$ elements \cite[Theorem 1.11]{dinitz2007handbook} allows us to conclude the following.
\begin{proposition}\label{prop:correspondence-between-DCB-and-quasi-group}
    The multiplication table of a DCB function is the multiplication table of a quasigroup and vice-versa.
\end{proposition}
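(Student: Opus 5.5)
The plan is to pass through the Latin-square characterization just established. By the preceding proposition, a function $f$ is DCB of order $n$ exactly when its multiplication table is a Latin square of order $n$. Separately, \cite[Theorem 1.11]{dinitz2007handbook} says that Latin squares of order $n$ are precisely the Cayley tables of quasigroups on $n$ elements. Composing these two equivalences should give the statement in both directions, so the work is mainly in lining up the objects on each side.

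\emph{Forward direction.} Let $f$ be DCB of order $n$. I will relabel $\cX$, $\cY$, $\cZ$ by $[n]$, as in the preceding proof; this is harmless because the multiplication table only sees the bijective structure. The table of $f$ is then a Latin square $L$. I define $Q=[n]$ with $x\cdot y\coloneq L_{x,y}=f(x,y)$. To confirm that $(Q,\cdot)$ is a quasigroup, take $a,b\in Q$. The equation $a\cdot x=b$ has a unique solution $x=f_a^{-1}(b)$, using $\cX$-conditional bijectivity. The equation $y\cdot a=b$ has a unique solution $y=f_a^{-1}(b)$ with $f_a$ now the column map, using $Y$-conditional bijectivity. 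Its multiplication table is $L$ by construction, which is the table of $f$.

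\emph{Reverse direction.} Let $(Q,\cdot)$ be a quasigroup with $|Q|=n$. The unique solvability of $a\cdot x=b$ says that every row map $x\mapsto a\cdot x$ is a bijection $Q\to Q$. Likewise, unique solvability of $y\cdot a=b$ makes every column map a bijection. So the Cayley table is a Latin square, and by the second half of the preceding proposition it is the multiplication table of a DCB function, namely $f(x,y)=x\cdot y$.

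The only point I expect to need care is the labeling. A DCB function may have three formally different alphabets $\cX,\cY,\cZ$, whereas a quasigroup uses a single set. This is handled by fixing bijections of $\cX,\cY,\cZ$ with $[n]$. Such bijections exist since $|\cX|=|\cY|=|\cZ|=n$ is noted in Definition \ref{def:conditionally-bijective}. With that understanding, the statement is read as equality of multiplication tables up to this relabeling.
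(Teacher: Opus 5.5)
Your proposal is correct and follows the paper's route: the paper obtains this proposition by composing the preceding DCB/Latin-square equivalence with the Latin-square/quasigroup correspondence of \cite[Theorem 1.11]{dinitz2007handbook}. The only differences are that you check the unique-solvability condition directly rather than relying only on the citation, and you make explicit the relabeling of $\cX,\cY,\cZ$ onto a single set, which the paper leaves implicit.
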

\noindent An example of a DCB function that is computing a quasigroup rather than a group is specified by the Latin square (equivalently multiplication table) given in Table \ref{tab:Latin-Square}. The reason it is not a group operation is because the Latin square implies the binary operation is not associative.

\subsection{Bounds without Entanglement-Assistance or Quantum Communication}
We now bound the ability to compute any DCB function using any resources weaker than entanglement-assistance and quantum communication in terms of the signaling dimension. To do this, we focus on $\mbf{C}^{N}(d_{1},d_{2})$ and $\mbf{Q}(d_{1},d_{2})$, which will allow us to preclude achievability without both entanglement-assistance and quantum communication, because these cases limit $\mbf{C}^{E}(d_{1},d_{2})$, $\mbf{C}^{SR}(d_{1},d_{2})$, and $\mbf{C}(d_{1},d_{2})$ by Propositions \ref{prop:MN-containments} and \ref{prop:no-need-for-SR}.

\begin{proposition}\label{prop:impossibility-of-weaker-resource-MNs}
    Let $f:\cX \times \cY \to \cZ$ be doubly-conditionally bijective. Let $\vert \cX \vert =d$ and $d_{1}, d_{2} \in \mbb{N}$, then 
    \begin{align}
        \max\{P_{S}^{f}(\mbf{Q}(d_{1},d_{2})), P_{S}^{f}(\mbf{C}^{N}(d_{1},d_{2}))\} \leq \frac{\min\{d_{1},d_{2}\}}{d} \ . 
    \end{align}
\end{proposition}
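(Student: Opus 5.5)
The plan is to combine Lemma \ref{lem:reduction-to-guessing-probability} with Theorems \ref{thm:unassisted-QMN-bound} and \ref{thm:non-signaling-box-bound}, applied once with each sender in the role of ``Sender 1''. Because $f$ is DCB, it is both $Y$- and $X$-conditionally bijective, so the bounds hold symmetrically; we then take the better of the two. The inputs are uniform and independent, so $p_{XY}=\pi_X\otimes\pi_Y$ and $p_g(X)=p_g(Y)=1/d$ by Item 3 of Proposition \ref{prop:guessing-prob-properties}.

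\textbf{Quantum case.} Both families are SMP MNs, so Lemma \ref{lem:reduction-to-guessing-probability} gives
\begin{align}
P_S^f(\mbf{Q}(d_1,d_2))=\sup_{\cE,\cF}p_g(Z\vert C_1C_2)_{(\id_Z\otimes\cE\otimes\cF)(\rho_{ZXY})}.
\end{align}
Here $\cE_{X\to C_1}$ and $\cF_{Y\to C_2}$ are classical-to-quantum channels with $\vert C_i\vert=d_i$. We may assume this form because encoding into a $d_i$-dimensional system followed by $\id_{d_i}$ is exactly such a channel.

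Since $f$ is $Y$-conditionally bijective, Theorem \ref{thm:unassisted-QMN-bound} gives the bound $d_1\,p_g(X)=d_1/d$. Since $f$ is also $X$-conditionally bijective, relabeling the senders and applying the same theorem to $f'(y,x)\coloneq f(x,y)$ (the roles of $C_1,C_2$ swap) gives $d_2/d$. Hence $P_S^f(\mbf{Q}(d_1,d_2))\le\min\{d_1,d_2\}/d$.

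\textbf{Non-signaling case.} By the argument in the proof of Proposition \ref{prop:no-need-for-SR}, a strategy in $\mbf{C}^N(d_1,d_2)$ is, without loss of generality, a fully classical non-signaling box $E_{X'Y'\vert XY}$ with $\vert X'\vert=d_1$ and $\vert Y'\vert=d_2$, followed by the receiver's decoder. Lemma \ref{lem:reduction-to-guessing-probability} again reduces the success probability to $\sup_E p_g(Z\vert X'Y')$.

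The box is non-signaling in both directions, so in particular $X\not\to Y$. With $f$ $Y$-conditionally bijective, Theorem \ref{thm:non-signaling-box-bound} gives $p_g(Z\vert X'Y')\le\vert X'\vert\,p_g(X)=d_1/d$. Using $Y\not\to X$ and $X$-conditional bijectivity with the senders' roles swapped, the same theorem gives $d_2/d$. Taking the minimum of the two bounds and then the maximum over the two families finishes the proof.

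\textbf{Main obstacle.} The step needing the most care is justifying the symmetric application of the theorems. Those theorems are stated with a distinguished input $Y$ that is given to the receiver. We have to check that swapping labels, i.e.\ using $f'$ and exchanging $X'\leftrightarrow Y'$, preserves the hypotheses: independence of the inputs, the direction of non-signaling, and the conditional bijectivity in the correct variable. A second point to check is that restricting $\mbf{C}^N$ strategies to boxes with output alphabets of size exactly $d_1,d_2$ loses no generality. Both of these appear routine given the proof of Proposition \ref{prop:no-need-for-SR}.
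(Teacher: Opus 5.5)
Your proposal is correct and matches the paper's proof. Like the paper, you apply Theorem \ref{thm:unassisted-QMN-bound} to $\mbf{Q}(d_1,d_2)$ and Theorem \ref{thm:non-signaling-box-bound} to $\mbf{C}^{N}(d_1,d_2)$ (after absorbing the channels into a two-way non-signaling box with output sizes $d_1,d_2$) once for each sender, using that a DCB function is conditionally bijective in both variables, and then take the minimum; your explicit use of Lemma \ref{lem:reduction-to-guessing-probability} and your relabeling check only spell out steps the paper leaves implicit.
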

\begin{proof}
    If the senders can send quantum signals but share no entanglement, they are in the model that Theorem \ref{thm:unassisted-QMN-bound} considers where $\vert C_{1} \vert = d_{1}$, $\vert C_{2} \vert = d_{2}$. As $f$ is $\cX$- and $Y$-conditionally bijective, we can apply Theorem \ref{thm:unassisted-QMN-bound} to each input variable, and thus minimize over the parties' dimension. Thus,
    \begin{align}
        P_{S}^{f}(\mbf{Q}(d_{1},d_{2})) \leq  \frac{\min\{d_{1},d_{2}\}}{d} \ .
    \end{align}
    
    If the senders may pre-process using a classical-input, classical-output non-signaling box (i.e. does not signal in either direction), then the composition of this box with their noiseless communication channels is a non-signaling box of input dimensions $\vert \cX \vert $ and $\vert \cY \vert$ and output dimensions of $d_{1}$ and $d_{2}$. This means that we are in the scenario of Theorem \ref{thm:non-signaling-box-bound} where both $X \not \to Y$ and $Y \not \to X$ are satisfied. As $f$ is doubly-conditionally bijective, we can minimize over the choice of party. Thus,
    \begin{align}
        P_{S}^{f}(\mbf{C}^{N}(d_{1},d_{2})) \leq \frac{\min\{d_{1},d_{2}\}}{d} \ . 
    \end{align}
\end{proof}

\subsection{An Achievable Strategy for \texorpdfstring{$\oplus_{d}$}{} using Classical Multiaccess Networks}
We now show the upper bounds in Proposition \ref{prop:impossibility-of-weaker-resource-MNs} cannot be generically improved as there exists a classical strategy for computing ditwise addition that matches the upper bounds whenever $d_{1}$ and $d_{2}$ both divide $d$. 
\begin{proposition}\label{prop:CMN-lower-bounds-for-xor-d} Let $d \coloneq \vert \cX \vert$. Let $d_{1}, d_{2} \in \mbb{N}$ such that $\max\{d_{1},d_{2}\} \leq d$. Let $q_{1}$ and $r_{1}$ (resp.~$q_{2}$ and $r_{2}$) be the quotient and remainder of dividing $d$ by $d_{1}$ (resp.~$d_{2}$). Then
    \begin{equation}
    \begin{aligned}
         P_{S}^{\oplus_{d}}(\mbf{C}(d_{1},d_{2})) &\geq \frac{1}{d^{2}} \Bigg[ (r_{1}r_{2} + (q_{1}-r_{1})(q_{2}-r_{2}))\min\{q_{1},q_{2}\} + r_{1}r_{2} \\
        & \hspace{2cm} + (q_{1}-r_{1})r_{2}\min\{q_{1},q_{2}+1\} + r_{2}(q_{2}-r_{2})\min\{q_{1}+1,q_{2}\} \Bigg] \ . 
    \end{aligned}
    \end{equation}
In particular, if $d_{1}$ and $d_{2}$ both divide $d$, i.e. there are no remainders,
\begin{align}\label{eq:CMN-lower-bounds-for-xor-d-when-divide}
    P_{S}^{\oplus_{d}}(\mbf{C}(d_{1},d_{2})) &\geq \frac{d_{1}d_{2}\min\{q_{1},q_{2}\}}{d^{2}} = \frac{\min\{d_{1},d_{2}\}}{d} \ , 
\end{align}
which matches the upper bound of Proposition \ref{prop:impossibility-of-weaker-resource-MNs}.
Furthermore, if $d = d_{1}^{2} = d_{2}^{2}$,
\begin{align}\label{eq:CMN-lower-bounds-for-xor-d-when-divide-and-squares}
    P_{S}^{\oplus_{d}}(\mbf{C}(d_{1},d_{2})) &\geq \frac{1}{\sqrt{d}} \ .
\end{align}
\end{proposition}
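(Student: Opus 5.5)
The plan is a direct construction. Proposition~\ref{prop:no-need-for-SR} shows that deterministic strategies suffice for $\mbf{C}(d_{1},d_{2})$, so I will exhibit explicit deterministic encoders $h_{1}:\mbb{Z}_{d}\to[d_{1}]$, $h_{2}:\mbb{Z}_{d}\to[d_{2}]$ and a decoder, and then count the input pairs on which they succeed.

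The encoders are interval quantizers. Sender $i$ partitions $\mbb{Z}_{d}=\{0,\dots,d-1\}$ into $d_{i}$ consecutive intervals. Writing $d=q_{i}d_{i}+r_{i}$, exactly $r_{i}$ of these intervals have length $q_{i}+1$ and the remaining $d_{i}-r_{i}$ have length $q_{i}$. Sender $i$ transmits the index of the interval containing its input.

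The decoder works as follows. On receiving indices $(j_{1},j_{2})$, with corresponding intervals $I_{j_{1}}, I_{j_{2}}$ of lengths $a$ and $b$, the receiver outputs the value $z\in\mbb{Z}_{d}$ that maximizes
\[
N(z)\coloneq\#\{(x,y)\in I_{j_{1}}\times I_{j_{2}} : x\oplus_{d}y=z\}.
\]
The key elementary fact is that the sumset count of two integer intervals has a trapezoidal profile with plateau height $\min\{a,b\}$. Reducing modulo $d$ can only merge counts, so it never decreases the maximum. Hence $\max_{z}N(z)\geq\min\{a,b\}$. Since the inputs are uniform, the success probability is at least
\[
\frac{1}{d^{2}}\sum_{j_{1},j_{2}}\min\{|I_{j_{1}}|,|I_{j_{2}}|\}.
\]

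To evaluate this sum, I group the block pairs by the four length combinations $(q_{1}+\epsilon_{1},\,q_{2}+\epsilon_{2})$ with $\epsilon_{1},\epsilon_{2}\in\{0,1\}$. I multiply the number of pairs of each type by the corresponding minimum and then simplify. This should give the stated bracketed expression. If my multiplicities come out organized differently from the displayed formula, I will check term by term that the displayed expression is dominated by mine. This suffices because only a lower bound is claimed, and one can always use $\min\{q_{1}+1,q_{2}+1\}\geq 1$ and similar crude bounds.

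The divisible case is then immediate. With $r_{1}=r_{2}=0$ there are $d_{1}d_{2}$ pairs, each contributing $\min\{q_{1},q_{2}\}$. Using $q_{i}=d/d_{i}$,
\[
\frac{d_{1}d_{2}\min\{q_{1},q_{2}\}}{d^{2}}=\frac{\min\{d_{1},d_{2}\}}{d},
\]
which matches Proposition~\ref{prop:impossibility-of-weaker-resource-MNs}. Specializing to $d_{1}=d_{2}=\sqrt d$ gives $1/\sqrt d$.

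I expect the main obstacle to be the exact bookkeeping of the remainder terms so that the general formula matches the statement precisely. The conceptual content, namely the interval encoders and the plateau bound $\min\{a,b\}$ for modular sums of intervals, is straightforward.
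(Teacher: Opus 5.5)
Your construction matches the paper's: consecutive-interval quantizers for each sender, a per-block-pair success count of $\min\{|I_{j_1}|,|I_{j_2}|\}$, and a sum over the four length types. The divisible case and the case $d=d_1^2=d_2^2$ then follow exactly as you say. Your decoder, an argmax of $N(z)$ justified by the trapezoidal profile of integer sumsets and the fact that reducing modulo $d$ only merges counts, is the correct way to reach the plateau. The paper's stated guess $iq_1\oplus_d jq_2$ is the sum of the two blocks' left endpoints. For $d_1,d_2\ge 2$ only the pair of left endpoints attains it, so your version is the one that actually delivers $\min\{a,b\}$.

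The gap is in your final step. You say that if your multiplicities come out organized differently, you will check that the displayed expression is dominated by yours. They do come out differently, and domination fails. Sender $i$ has $d_i-r_i$ short blocks, not $q_i-r_i$, and the long--short cross term has multiplicity $r_1(d_2-r_2)$. So your sum evaluates to
\[
\frac{1}{d^{2}}\Big[(r_1r_2+(d_1-r_1)(d_2-r_2))\min\{q_1,q_2\}+r_1r_2+(d_1-r_1)r_2\min\{q_1,q_2+1\}+r_1(d_2-r_2)\min\{q_1+1,q_2\}\Big].
\]
The printed bracket has $(q_1-r_1)(q_2-r_2)$, $(q_1-r_1)r_2$ and $r_2(q_2-r_2)$ in these places, and as printed it is not a valid lower bound at all. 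For $d=100$ and $d_1=d_2=2$ one has $q_i=50$ and $r_i=0$, and the printed expression evaluates to $50^3/100^2=12.5>1$. It is also inconsistent with the statement's own specialization $d_1d_2\min\{q_1,q_2\}/d^2$ at $r_1=r_2=0$, which is what your count gives.

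So no crude-bound argument can close this step. Instead, prove the corrected formula above and flag the general display as a misprint; the paper's own summation repeats the same slip. The divisible-case and square-case conclusions are unaffected.
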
 
\begin{proof}
    We first describe the strategy and then bound its success probability.
    
    (\textit{Strategy})  Let Alice partition $[d]$ into $d_{1}$ sets of the form $S^{A}_{i} = \{(iq_{1}, iq_{1}+1...,iq_{1}+(q_{1}-1) + \mbf{1}\{i \leq r_{1}\}\}$ for $i = [d_{1}]$. Alice then sends $i$ to the receiver when she sees an element in $S^{A}_{i}$, which defines her encoder $f_{A}$. Bob does similarly by defining the $d_{2}$ sets $\{S^{B}_{j}\}_{j \in [d_{2}]}$ similarly and sending $j$ when $y \in S^{B}_{j}$, which defines his encoder $f_{B}$. When the receiver receives $(i,j)$, it guesses the value $iq_{1} \oplus_{d} jq_{2}$, which defines the decoder.
    
    (\textit{Lower Bound}) First we re-express the success probability of this strategy in terms of the cardinality of specific sets:
    \begin{align}
        P_{S} &= \sum_{x,y} \Pr[x,y]\mbf{1}\{f(x,y)=f_{A}(x)q_{1} \oplus_{d} f_{B}(y)q_{2}\} \\
        &= \frac{1}{d^{2}}\sum_{x,y}\mbf{1}\{f(x,y)=f_{A}(x)q_{1} \oplus_{d} f_{B}(y)q_{2}\} \\
        &= \frac{1}{d^{2}} \sum_{i \in [d_{1}], j \in [d_{2}]} \sum_{x \in S^{A}_{i}, y \in S_{j}^{B}}\mbf{1}\{f(x,y)=iq_{1} \oplus_{d} jq_{2}\} \\
        &= \frac{1}{d^{2}} \sum_{i \in [d_{1}], j \in [d_{2}]} \vert\{(x,y) \in S^{A}_{i} \times S_{j}^{B} :  f(x,y) = iq_{1} \oplus_{d} jq_{2} \}  \vert \label{eq:specific-xor-d-guessing-strategy-step-1}
    \end{align}
    where the first equality the definitions of the encoders/decoders, the second is the uniform randomness of the inputs, the third is that the $S^{A}_{i}$ (resp.~$S^{B}_{j}$) sets partition $\cX$ (resp.~$\cY$), and the final equality is by definition of the set and the indicator function. 
    
    We now need to calculate the size of the sets in the final equality. To that end, note that for $x,y,x',y' \in [d]$, $x \oplus_{d} y = x' \oplus_{d} y'$ if and only if for some $t \in [d]$, $x' = x - t \mod d$ and $y' = y - t \mod d$, i.e. they are shifted by the same value. It follows  
    \begin{align}
        \vert\{(x,y) \in S^{A}_{i} \times S_{j}^{B} :  f(x,y) = iq_{1} \oplus_{d} jq_{2} \}  \vert &= \vert \{t \in [d]: iq_{1}+t \in S_{i}, \, , jq_{2}-t \in S_{j} \} \vert \\
        &= \begin{cases} \min\{q_{1},q_{2}\} & i > r_{1} \, , j > r_{2} \\ 
        \min\{q_{1}+1,q_{2}\} & i \leq r_{1} , j > r_{2} \\
        \min\{q_{1},q_{2}+1\} & i > r_{1} , j \geq r_{2} \\
        \min\{q_{1},q_{2}\}+1 & i \leq r_{1} , j \leq r_{2}
        \end{cases} \label{eq:specific-xor-d-guessing-strategy-step-2} \ .
    \end{align}
    The second equality is because $S_{i}^{A}$ (resp.~$S_{j}^{B}$) contains $(q_{1} +\mbf{1}\{i \leq r_{1}\})$ (resp.~$(q_{2} + \mbf{1}\{j \leq r_{2}\})$~) distinct values, so it is certainly an upper bound, and it is achieved for every $(i,j)$ by considering the set 
    $$\{iq_{1}+\min\{q_{1},q_{2}\}-1 +\mbf{1}\{i \leq r_{1}\}-t,jq_{2}+\min\{q_{1},q_{2}\}-1 + \mbf{1}\{j \leq r_{2}\}-t\}_{t \in [\min\{q_{1}+\mbf{1}\{i \leq r_{1}\},q_{2}+\mbf{1}\{j \leq r_{2}\}\}]} \subset S^{A}_{i} \times S^{B}_{j} \ . $$
    Finally, combining \eqref{eq:specific-xor-d-guessing-strategy-step-1} and \eqref{eq:specific-xor-d-guessing-strategy-step-2},
    \begin{align}
        P_{S} &= \frac{1}{d^{2}} \Bigg[ \left(\sum_{i \leq r_{1}, j \leq r_{2}}   \min\{q_{1},q_{2}\}+1\right) + \left(\sum_{i > r_{1}, j > r_{2}}  \min\{q_{1},q_{2}\}\right) \\
        &\hspace{2cm} + \left(\sum_{i > r_{1} , j \leq r_{2}} \min\{q_{1},q_{2}+1\}\right) + \left(\sum_{i \leq r_{1} , j > r_{2}} \min\{q_{1}+1,q_{2}\}\right)
        \Bigg] \\
        &=\frac{1}{d^{2}} \Bigg[ (r_{1}r_{2} + (q_{1}-r_{1})(q_{2}-r_{2}))\min\{q_{1},q_{2}\} + r_{1}r_{2} \\
        & \hspace{2cm} + \left(\sum_{i > r_{1} , j \leq r_{2}} \min\{q_{1},q_{2}+1\}\right) + \left(\sum_{i \leq r_{1} , j > r_{2}} \min\{q_{1}+1,q_{2}\}\right) \Bigg] \\
        &=\frac{1}{d^{2}} \Bigg[ (r_{1}r_{2} + (q_{1}-r_{1})(q_{2}-r_{2}))\min\{q_{1},q_{2}\} + r_{1}r_{2} \\
        & \hspace{2cm} + (q_{1}-r_{1})r_{2}\min\{q_{1},q_{2}+1\} + r_{2}(q_{2}-r_{2})\min\{q_{1}+1,q_{2}\} \Bigg]
    \end{align}
    The special case in \eqref{eq:CMN-lower-bounds-for-xor-d-when-divide} follow as in the case $d_{1}$ and $d_{2}$ divide $d$, then \eqref{eq:specific-xor-d-guessing-strategy-step-2} simplifies to $\min{q_{1},q_{2}}$, which when plugged into \eqref{eq:specific-xor-d-guessing-strategy-step-1} results in the bound. \eqref{eq:CMN-lower-bounds-for-xor-d-when-divide-and-squares} follows from $q_{1} = q_{2} = \sqrt{d}$ when $d = d^{2}_{1} = d^{2}_{2}$.
\end{proof}

We now extend the above strategy to computing $\oplus_{d}^{2}$ as this is a case where we showed perfect dense network coding in the main text.
\begin{corollary}\label{cor:CMAN-lb-for-2-xor-d}
    Let $d \in \mbb{N}$ and $d \geq d_{1},d_{2},d_{3},d_{4} \in \mbb{N}$. Then,
    \begin{align}\label{eq:2-xor-d-CMN-lowerbound}
        P_{S}^{\oplus_{d}^{2}}(\mbf{C}(d_{1}d_{2},d_{3}d_{4})) \geq P_{S}^{\oplus_{d}}(\mbf{C}(d_{1},d_{3})) \cdot P_{S}^{\oplus_{d}}(\mbf{C}(d_{2},d_{4})) \ . 
    \end{align}
    In particular, when $d_{1}$, $d_{2}$, $d_{3}$, and $d_{4}$ all divide $d$, $d_{1} \leq d_{3}$, and $d_{2} \leq d_{4}$, then this bound matches the upper bound in Proposition \ref{prop:impossibility-of-weaker-resource-MNs}.
\end{corollary}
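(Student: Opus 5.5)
The plan is to build a product strategy: each sender splits its $d_1d_2$-dimensional (resp.~$d_3d_4$-dimensional) classical message into two independent components and runs two copies of the classical strategy of Proposition \ref{prop:CMN-lower-bounds-for-xor-d} in parallel, one per coordinate of $\oplus_d^2$. Concretely, fix encoders $f_A^{(1)}:[d]\to[d_1]$, $f_B^{(1)}:[d]\to[d_3]$ and decoder $g^{(1)}:[d_1]\times[d_3]\to[d]$ that (nearly) attain $P_S^{\oplus_d}(\mbf{C}(d_1,d_3))$. Since the inputs and alphabets are finite, Proposition \ref{prop:no-need-for-SR} gives that the supremum is attained by deterministic encoders and decoders, so such a choice exists. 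Similarly fix $f_A^{(2)},f_B^{(2)},g^{(2)}$ for $\mbf{C}(d_2,d_4)$.

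On input $(q,r)$, Sender 1 sends the pair $(f_A^{(1)}(q),f_A^{(2)}(r))\in[d_1]\times[d_2]$. Identifying $[d_1]\times[d_2]$ with $[d_1d_2]$, this is a valid message over $\Delta_{d_1d_2}$. Sender 2 likewise sends $(f_B^{(1)}(s),f_B^{(2)}(t))$ over $\Delta_{d_3d_4}$. The receiver outputs $(g^{(1)}(\cdot,\cdot),g^{(2)}(\cdot,\cdot))$ applied to the matching components. This is manifestly an element of $\mbf{C}(d_1d_2,d_3d_4)$.

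For the success probability, uniform inputs on $\mbb{Z}_d^2\times\mbb{Z}_d^2$ make $(q,s)$ independent of $(r,t)$, each pair uniform on $\mbb{Z}_d^2$. The indicator of correctness factors as $\mbf{1}\{g^{(1)}=q\oplus_d s\}\cdot\mbf{1}\{g^{(2)}=r\oplus_d t\}$, and the two factors are functions of independent variables. The expectation of their product is therefore the product of expectations, $P_S^{\oplus_d}(\mbf{C}(d_1,d_3))\cdot P_S^{\oplus_d}(\mbf{C}(d_2,d_4))$. Taking the supremum on the left gives \eqref{eq:2-xor-d-CMN-lowerbound}.

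For the tightness claim, when all $d_i$ divide $d$, \eqref{eq:CMN-lower-bounds-for-xor-d-when-divide} gives the right-hand side $\frac{\min\{d_1,d_3\}}{d}\cdot\frac{\min\{d_2,d_4\}}{d}=\frac{d_1d_2}{d^2}$, using $d_1\le d_3$ and $d_2\le d_4$. Now apply Proposition \ref{prop:impossibility-of-weaker-resource-MNs} to $\oplus_d^2$, which is doubly conditionally bijective with $|\cX|=d^2$, at signaling dimensions $(d_1d_2,d_3d_4)$. It gives the upper bound $\min\{d_1d_2,d_3d_4\}/d^2=d_1d_2/d^2$, so the two bounds coincide.

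The only step needing care is the existence of optimal deterministic strategies, which is handled by Proposition \ref{prop:no-need-for-SR}. Even without it, a supremum argument with near-optimal strategies suffices. Everything else is bookkeeping.
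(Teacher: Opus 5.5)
Your proposal is correct and follows essentially the same route as the paper. Like the paper, you split each sender's $d_1d_2$- (resp.\ $d_3d_4$-) dimensional classical channel into two independent channels and run the two single-coordinate strategies in parallel. You then use independence of $(q,s)$ and $(r,t)$ under uniform inputs to factor the success probability, and obtain tightness from \eqref{eq:CMN-lower-bounds-for-xor-d-when-divide} together with Proposition~\ref{prop:impossibility-of-weaker-resource-MNs}.

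Your explicit appeal to Proposition~\ref{prop:no-need-for-SR} (or a near-optimal supremum argument) to fix optimal deterministic strategies for each factor adds a bit of rigor that the paper leaves implicit. One small point: \eqref{eq:CMN-lower-bounds-for-xor-d-when-divide} only gives a lower bound on the right-hand side, so the equality you wrote holds once it is combined with the matching upper bound.
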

\begin{proof}
    By defining a bijection from $[d_{1}d_{2}]$ to $[d_{1}] \times [d_{2}]$, we can treat Alice's classical channel as two independent classical channels, i.e. $\Delta_{d_{1}d_{2}} \cong \Delta_{d_{1}} \otimes \Delta_{d_{2}}$. The same idea works for Bob using dimensions $d_{3}$ and $d_{4}$. Moreover, as the inputs are uniform and independent, the joint input state is the same as two independent, copies of the input for computing $\oplus_{d}$ as may formally be verified:
    \begin{align*}
        &\rho_{Z_{1}Z_{2}X_{1}X_{2}Y_{1}Y_{2}} \\
        &\coloneq \frac{1}{d^{4}} \sum_{q,r,s,t \in \mbb{Z}_{d}} \dyad{q \oplus_{d} s}_{Z_{1}} \otimes \dyad{r \oplus_{d} t}_{Z_{2}} \otimes \dyad{q}_{X_{1}} \otimes \dyad{r}_{X_{2}} \otimes \dyad{s}_{Y_{1}} \otimes \dyad{t}_{Y_{2}} \\
        &= \left(\frac{1}{d^{2}} \sum_{q,s \in \mbb{Z}_{d}} \dyad{q \oplus_{d} s}_{Z_{1}} \otimes \dyad{q}_{X_{1}} \otimes \dyad{s}_{Y_{1}}  \right) \otimes \left(\frac{1}{d^{2}} \sum_{r,t \in \mbb{Z}_{d}} \dyad{r \oplus_{d} t}_{Z_{1}} \otimes \dyad{r}_{X_{1}} \otimes \dyad{t}_{Y_{1}}  \right) \ .
    \end{align*}
    Combining these points, a valid strategy is for Alice and Bob to try and compute $q \oplus_{d} s$ and $r \oplus_{d} s$ independently, and in particular they may use the strategy given in Proposition \ref{prop:CMN-lower-bounds-for-xor-d}. By independence, this success of computing $(q,r) \oplus^{2}_{d} (s,t)$ is just the product of computing $q \oplus_{d} s$ and $r \oplus_{d} t$. This proves \eqref{eq:2-xor-d-CMN-lowerbound}. That this lower bound matches the upper bound in Proposition \ref{prop:impossibility-of-weaker-resource-MNs} when $d_{1},d_{2},d_{3},$ and $d_{4}$ all divide $d$, $d_{1} \leq d_{3}$ and $d_{2} \leq d_{4}$, we use the following sequence of inequalities:
    \begin{align}
         P_{S}^{\oplus_{d}}(\mbf{C}(d_{1},d_{3})) \cdot P_{S}^{\oplus_{d}}(\mbf{C}(d_{2},d_{4})) &= \frac{\min\{d_{1},d_{3}\}}{d}\frac{\min\{d_{2},d_{4}\}}{d} \\
         &= \frac{d_{1}d_{2}}{d^{2}} \\
         &= \frac{\min\{d_{1}d_{2},d_{3}d_{4}\}}{d^{2}} \\
         &\geq \max\{P_{S}^{\oplus_{d}^{2}}(\mbf{Q}(d_{1}d_{2},d_{3}d_{4})), P_{S}^{\oplus_{d}^{2}}(\mbf{C}^{N}(d_{1}d_{2},d_{3}d_{4}))\} \ , 
    \end{align}
    where the first equality uses \eqref{eq:CMN-lower-bounds-for-xor-d-when-divide} which holds by our assumptions on the dimensions and the inequality is Proposition \ref{prop:impossibility-of-weaker-resource-MNs}.
\end{proof}
The above both shows that our analysis is tight and that, in general, classical resources should be used unless \textit{both} entanglement-assistance and quantum communication are available. We conjecture that classical resources achieve at least close to the same success probability as configurations that do not have both entanglement-assistance and quantum communication even when the conditions on the dimensions given in Corollary \ref{cor:CMAN-lb-for-2-xor-d} do not hold. As Corollary \ref{cor:CMAN-lb-for-2-xor-d} shows there always exists larger $d$ where classical communication is sufficient, we take the above as evidence for our conjecture.

\paragraph{On CMN Achievable Bounds for Other DCB Functions}
The above shows how to rather generically code over a CMN for $\oplus_{d}$ and $\oplus_{d}^{2}$. It is reasonable to ask what can be said for other DCB functions. Strategies clearly depend on defining specific choices of compression functions and the decoding function, e.g. the mappings defined by the sets $S_{i}^{A}$ and $S_{j}^{B}$ and the function that maps $(i,j)$ to a guess of the inputs in the proof of Proposition \ref{prop:CMN-lower-bounds-for-xor-d}. The specific compression functions will depend on the structure of the DCB function, hence computing lower bounds is an onerous task in general. Indeed, we do not derive lower bounds for computing any other DCB using a CMN. However, we note that one can reduce the problem to equivalence classes of DCB functions using the identification of their multiplication tables with Latin squares. To illustrate this, note that Alice and Bob may re-label their input alphabets using bijections and preserve the the success probability strategy as one simply re-defines the compression functions appropriately. Similarly, a bijective re-labeling of the outputs preserves the success probability as one can apply the same re-labeling to the output of the previous decoding function. Thus, the success probability of a DCB function is invariant under triples of bijections on the inputs and output. This is the same equivalence that defines `isotopy classes' of Latin squares, c.f.~\cite[Part III, 1.12]{dinitz2007handbook}. Thus, one need only establish lower bounds for one function 
\begin{proposition}
    Let $f$ and $g$ be DCB functions of order $n$ whose multiplication tables are Latin squares in the same isotopy class. Then for any $d_{1},d_{2} \in \mbb{N}$, $P_{S}^{f}(\mbf{C}(d_{1},d_{2})) = P_{S}^{g}(\mbf{C}(d_{1},d_{2}))$.
\end{proposition}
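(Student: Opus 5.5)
The plan is to show that any classical strategy for $f$ can be transformed into a classical strategy for $g$ with exactly the same success probability, and vice versa. By the paper's definition of isotopy, there exist bijections $\alpha:\cX\to\cX$, $\beta:\cY\to\cY$, $\gamma:\cZ\to\cZ$ (identifying all alphabets with $[n]$) such that $g(x,y)=\gamma\bigl(f(\alpha(x),\beta(y))\bigr)$ for all $x,y$. Since isotopy is an equivalence relation with the inverse triple $(\alpha^{-1},\beta^{-1},\gamma^{-1})$, it suffices to prove the inequality $P_{S}^{g}(\mbf{C}(d_{1},d_{2}))\geq P_{S}^{f}(\mbf{C}(d_{1},d_{2}))$; the reverse inequality then follows by symmetry.

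The first step is to pick an arbitrary classical MN in $\mbf{C}(d_{1},d_{2})$. By Proposition \ref{prop:no-need-for-SR}, we may restrict to deterministic encoders $h_{1}:\cX\to[d_{1}]$, $h_{2}:\cY\to[d_{2}]$ and a decoder $D:[d_{1}]\times[d_{2}]\to\cZ$; alternatively, we can work directly with conditional distributions $P_{X'|X}$, $Q_{Y'|Y}$, $R_{Z|X'Y'}$, and the argument is the same. Define new encoders $h_{1}'=h_{1}\circ\alpha$, $h_{2}'=h_{2}\circ\beta$ and a new decoder $D'=\gamma\circ D$. These use the same channels $\Delta_{d_{1}},\Delta_{d_{2}}$, so they define a valid MN in $\mbf{C}(d_{1},d_{2})$.

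The second step computes the success probability. By Definition \ref{def:function-simulation-success-prob} with uniform inputs,
\begin{align*}
P_{S}^{g}(Q') = \frac{1}{n^{2}}\sum_{x,y}\mbf{1}\{\gamma(D(h_{1}(\alpha(x)),h_{2}(\beta(y))))=\gamma(f(\alpha(x),\beta(y)))\}.
\end{align*}
Because $\gamma$ is injective, we can cancel it inside the indicator. Reindexing with $x'=\alpha(x)$, $y'=\beta(y)$, which is a bijection of $\cX\times\cY$ and preserves the uniform distribution, turns the sum into $P_{S}^{f}(Q)$. Taking the supremum over strategies gives the desired inequality.

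I do not expect a real obstacle. The only points needing care are that the bijections must be absorbed into the senders' local processing and the receiver's decoder, so that the resource constraints are unchanged, and that the input distribution is invariant under relabeling, which holds because it is uniform.
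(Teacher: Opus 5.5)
Your proposal is correct and follows the paper's own argument, which is given informally in the paragraph before the statement: the senders absorb the row and column relabelings into their compression functions and the receiver absorbs the symbol relabeling into its decoder, exactly as in your construction $h_{1}\circ\alpha$, $h_{2}\circ\beta$, $\gamma\circ D$. Your write-up also makes explicit two steps the paper leaves implicit: cancelling $\gamma$ by injectivity, and noting that the uniform input distribution is invariant under the reindexing $(x,y)\mapsto(\alpha(x),\beta(y))$.
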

\noindent While structurally pleasing, the number of isotopy classes of Latin squares grows super-exponentially in the order (see \cite[Part III, 1.16]{dinitz2007handbook}), limiting the practicality of the above.

\subsection{General Statement of Communication Advantage}
To present our main result, we state two versions of the theorem for different communities. First, we state the form relevant to one-shot quantum information theory.
\begin{theorem}\label{thm:one-shot-info-advantage}
    Let $d \in \mbb{N}$ and $f:\cX \times \cY \to \cZ$ be a DCB function of order $d^{2}$ whose multiplication table is the multiplication table of a TNC group (Definition \ref{def:TNC}). For any $d \leq d_{1},d_{2} < d^{2} \eqqcolon \vert \cX \vert$,
    \begin{equation}
    \begin{aligned}
        \max\{P^{f}_{S}(\mbf{C}(d_{1},d_{2})),P^{f}_{S}(\mbf{C}^{E}(d_{1},d_{2})),P^{f}_{S}(\mbf{C}^{N}(d_{1},d_{2})),P^{f}_{S}(\mbf{Q}(d_{1},d_{2}))\} & \leq \frac{\min\{d_{1},d_{2}\}}{d^{2}} \\
        &<1 = P_{S}^{f}(\mbf{Q}^{E}(d_{1},d_{2}))  \ .
    \end{aligned}
    \end{equation}
    Moreover, if $f = \oplus_{d}^{2}$ and there exist integers $d_{1}' \leq d_{3}'$$d_{2}' \leq d_{4}'$ that all divide $d$ such that $d_{1} = d_{1}'d_{2}'$, $d_{2} = d_{3}'d_{4}'$, the maximum and the first inequality can be replaced with an equality, thereby showing that without both entanglement-assistance and quantum communication, one generally may as well just use classical communication under these signaling dimension constraints.
\end{theorem}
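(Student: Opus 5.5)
We will assemble Theorem \ref{thm:one-shot-info-advantage} from results already proven, in three steps.

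\emph{Upper bound.} Since $f$ is DCB of order $d^{2}$, Proposition \ref{prop:impossibility-of-weaker-resource-MNs} (applied with $|\cX| = d^{2}$) directly gives
\[
\max\{P^{f}_{S}(\mbf{Q}(d_{1},d_{2})),P^{f}_{S}(\mbf{C}^{N}(d_{1},d_{2}))\} \leq \frac{\min\{d_{1},d_{2}\}}{d^{2}}.
\]
By the containments $\mbf{C}(d_{1},d_{2}) \subseteq \mbf{C}^{E}(d_{1},d_{2}) \subseteq \mbf{C}^{N}(d_{1},d_{2})$ of Proposition \ref{prop:MN-containments}, the suprema over the smaller sets are at most the supremum over $\mbf{C}^{N}$. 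Hence the bound covers all four quantities. Strictness $<1$ is immediate from $\min\{d_{1},d_{2}\} < d^{2}$.

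\emph{Achievability of $1$.} Since the multiplication table of $f$ coincides with that of a TNC group $G$ of order $d^{2}$, relabel the inputs and outputs by group elements (this relabeling does not change success probabilities). Theorem \ref{thm:dense-network-coding} then gives $P_{S}^{f}(\mbf{Q}^{E}(d,d))=1$. For $d_{1},d_{2}\geq d$, the protocol embeds into $\mbf{Q}^{E}(d_{1},d_{2})$ by having each sender place its $d$-dimensional system into a $d_{i}$-dimensional space, with the receiver projecting back. This monotonicity in dimension is the one point needing a sentence of justification. It holds because noiseless channels of larger dimension simulate smaller ones via an isometric embedding followed by the receiver's compressed measurement. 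Thus $P_{S}^{f}(\mbf{Q}^{E}(d_{1},d_{2}))=1$.

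\emph{Tightness for $f=\oplus_{d}^{2}$.} Suppose $d_{1}=d_{1}'d_{2}'$ and $d_{2}=d_{3}'d_{4}'$, where $d_{1}',d_{2}',d_{3}',d_{4}'$ all divide $d$, $d_{1}'\le d_{3}'$ and $d_{2}'\le d_{4}'$. Then Corollary \ref{cor:CMAN-lb-for-2-xor-d} gives
\[
P_{S}^{\oplus_{d}^{2}}(\mbf{C}(d_{1},d_{2})) \geq \frac{d_{1}'d_{2}'}{d^{2}} = \frac{\min\{d_{1},d_{2}\}}{d^{2}},
\]
where the equality uses $d_{1}'d_{2}'\le d_{3}'d_{4}'$. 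Combined with the upper bound and the containments, the whole chain
\[
\mbf{C}\le \mbf{C}^{E}\le\mbf{C}^{N}\le \frac{\min\{d_{1},d_{2}\}}{d^{2}}, \qquad \mbf{Q}\le \frac{\min\{d_{1},d_{2}\}}{d^{2}}
\]
collapses to equalities. Equality for $\mbf{Q}$ additionally uses $\mbf{C}\subseteq\mbf{Q}$. We also need $\oplus_{d}^{2}$ to be a TNC group operation of order $d^{2}$ on $\mbb{C}^{d}$, which is supplied by the discrete Weyl operators (Section \ref{sec:alg-struc-of-TNC}).

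The main obstacle is bookkeeping rather than new mathematics. One must confirm that the DCB order here is $d^{2}$, so the bound reads $\min\{d_{1},d_{2}\}/d^{2}$. One must also check that the divisibility hypotheses exactly match those of Corollary \ref{cor:CMAN-lb-for-2-xor-d}.
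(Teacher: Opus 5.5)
Your proposal is correct and follows essentially the same route as the paper: the upper bound comes from Proposition \ref{prop:impossibility-of-weaker-resource-MNs} together with the containments of Proposition \ref{prop:MN-containments}, achievability of $1$ comes from embedding Protocol \ref{prot:dense-network-coding-of-some-type} (Theorem \ref{thm:dense-network-coding}) into the larger noiseless quantum channels, and tightness for $\oplus_d^2$ comes from Corollary \ref{cor:CMAN-lb-for-2-xor-d}. Your version is only more explicit in its bookkeeping, for example in noting that equality for $\mbf{Q}$ uses $\mbf{C}\subseteq\mbf{Q}$ and in checking that $d_1'd_2'=\min\{d_1,d_2\}$.
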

\begin{proof}
    The first inequality follows from Proposition \ref{prop:impossibility-of-weaker-resource-MNs} and the containments in Proposition \ref{prop:MN-containments}. The strict inequality follows from the assumptions on $d,d_{1},d_{2}$. The final equality follows from Theorem \ref{thm:dense-network-coding} and the assumptions on $d,d_{1},d_{2}$ as one may embed the strategy given in Protocol \ref{prot:dense-network-coding-of-some-type} into the specified quantum channels. The  remaining claim about when equality holds follows from Corollary \ref{cor:CMAN-lb-for-2-xor-d}.
\end{proof}

Next, we formally state the asymptotic claims likely more relevant to communication complexity.
\begin{theorem}\label{thm:asymptotic-comm-complexity-adv}
    For $n \in \mbb{N}$, let $d_{n} = 2^{n}$. Consider sequences $(d_{1,n})_{n}, (d_{2,n})_{n}$ satisfying $d_{n} \leq d_{1,n}, d_{2,n} \leq d_{n}^{2\alpha}$ for some $\alpha \in [1/2,1)$ for all $n$. Then $P_{S}^{\oplus_{d_{n}}^{2}}(\mbf{Q}^{E}(d_{1,n},d_{2,n})) = 1$ for all $n$, but for
    $$\mbf{S}(d_{1,n},d_{2,n}) \in \{\mbf{C}(d_{1,n},d_{2,n}),\mbf{C}^{E}(d_{1,n},d_{2,n}),\mbf{C}^{N}(d_{1,n},d_{2,n}),\mbf{Q}(d_{1,n},d_{2,n})\} \ , $$
    $P_{S}^{\oplus_{d_{n}}^{2}}(\mbf{S}(d_{1_{n}},d_{2_{n}})) = O(\exp(-n))$, i.e. without both entanglement and quantum communication, the error goes to $0$ exponentially fast in $n$. \\
    
    \noindent In particular, there exists a sequence of DCB functions such that there is a linear advantage in the communication complexity of computing in the simultaneous message passing model using shared entanglement and quantum communication, but no advantage over classical communication without both.
\end{theorem}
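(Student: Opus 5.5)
The plan is to derive this as an asymptotic corollary of Theorem \ref{thm:one-shot-info-advantage} applied with $d = d_{n} = 2^{n}$ and $f = \oplus^{2}_{d_{n}}$. The first step is to check that $f$ satisfies the hypotheses there. It is a DCB function of order $d_{n}^{2}$ by Example \ref{ex:DCB-functions}. Its multiplication table is that of $\mbb{Z}_{d_{n}} \times \mbb{Z}_{d_{n}}$, and this group is TNC with the discrete Weyl operators on $\mbb{C}^{d_{n}}$ as its TNC representation. This was observed after Proposition \ref{prop:Abelian-TNC-groups}: orthogonality is \cite[Eq.~4.78]{WatrousBook} and the projective relation is \cite[Eq.~4.75]{WatrousBook}.

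For the quantum side, since $d_{1,n}, d_{2,n} \geq d_{n}$, Protocol \ref{prot:dense-network-coding-of-some-type} embeds into $\mbf{Q}^{E}(d_{1,n},d_{2,n})$: each sender places its $d_{n}$-dimensional system into a subspace of the larger channel. Theorem \ref{thm:dense-network-coding} then gives success probability $1$ for every $n$.

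For the classical and weaker side, the upper bound $\min\{d_{1,n},d_{2,n}\}/d_{n}^{2}$ comes from Proposition \ref{prop:impossibility-of-weaker-resource-MNs}, together with the containments of Proposition \ref{prop:MN-containments} to cover $\mbf{C}$ and $\mbf{C}^{E}$. Here I would invoke Proposition \ref{prop:impossibility-of-weaker-resource-MNs} directly rather than the one-shot theorem, because the latter requires $d_{i} < d^{2}$ and I want to allow $\alpha$ up to just below $1$. Using $d_{i,n} \leq d_{n}^{2\alpha}$, this gives
\begin{align}
P_{S}^{\oplus^{2}_{d_{n}}}(\mbf{S}(d_{1,n},d_{2,n})) \leq d_{n}^{2\alpha - 2} = 2^{-2(1-\alpha)n} .
\end{align}
This is $O(\exp(-n))$ because $\alpha < 1$ is fixed, so $2(1-\alpha) > 0$. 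Thus, without both resources the success probability vanishes exponentially; equivalently, the error approaches $1$, which is what the stated claim about the error means.

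For the communication-complexity reading, take $\alpha = 1/2$, so that $d_{i,n} = 2^{n}$. The entanglement-assisted quantum strategy then uses $n$ qubits per sender and is perfect. Any classical strategy achieving constant success probability $c$ needs $\min\{d_{1},d_{2}\} \geq c\, d_{n}^{2}$ by the same bound, i.e.\ about $2n - O(1)$ bits per sender. The same lower bound holds for $\mbf{Q}$, $\mbf{C}^{E}$ and $\mbf{C}^{N}$. This is the linear (factor two) advantage, and the family of DCB functions is $(\oplus^{2}_{2^{n}})_{n}$.

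I expect the main obstacle to be bookkeeping rather than mathematics. One point is making sure the bound of Proposition \ref{prop:impossibility-of-weaker-resource-MNs} is applied with input size $|\cX| = d_{n}^{2}$, not $d_{n}$. The other is interpreting ``no advantage over classical communication without both'': this relies on the matching classical lower bounds in Corollary \ref{cor:CMAN-lb-for-2-xor-d}, which make the weaker configurations no better than $\mbf{C}$ at the relevant power-of-two dimensions.
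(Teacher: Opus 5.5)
Your proposal is correct and matches the paper's proof: perfect success comes from embedding Protocol~\ref{prot:dense-network-coding-of-some-type} via Theorem~\ref{thm:dense-network-coding}, and the bound $\min\{d_{1,n},d_{2,n}\}/d_{n}^{2}\le 2^{-2(1-\alpha)n}$ is what the paper takes from Theorem~\ref{thm:one-shot-info-advantage}, which is itself just Proposition~\ref{prop:impossibility-of-weaker-resource-MNs} plus the containments of Proposition~\ref{prop:MN-containments}. Your reason for bypassing that theorem does not hold up, though the bypass is harmless: its hypothesis $d_{i}<d^{2}$ is already satisfied, since $d_{i,n}\le d_{n}^{2\alpha}<d_{n}^{2}$ for every fixed $\alpha<1$.
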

\begin{proof}
    Using our assumptions and applying Theorem \ref{thm:dense-network-coding}, $P_{S}^{\oplus_{d_{n}}^{2}}(\mbf{Q}^{E}(d_{1,n},d_{2,n})) \geq P_{S}^{\oplus_{d_{n}}^{2}}(\mbf{Q}^{E}(d_{n},d_{n})) = 1$ for all $n$. Similarly, by applying Theorem \ref{thm:one-shot-info-advantage}, $P_{S}^{\oplus_{d_{n}}^{2}}(\mbf{S}(d_{1_{n}},d_{2_{n}})) \leq \frac{d_{1,n}}{d_{n}^{2}} = d_{n}^{2(\alpha-1)} = 2^{-2n(1-\alpha)} \in O(\exp(-n))$. The final statement about linear advantage follows from choosing $\alpha = 1/2$, so that the signaling dimension of each party is the square root of the size of the alphabet.
\end{proof}

\begin{remark}
The theorem in the main text (Theorem \ref{thm:main-text-communication-advantage}) is recovered by letting $d_{1,n} = 2^{n} = d_{2,n}$ and considering $\log(d_{1,n}),\log(d_{2,n})$ so that it is measured in qubits.
\end{remark}

\section{Noise-Robustness of Communication Advantage}\label{app:noise-robustness}
In this section we formalize the noise-robustness of the communication advantage in Theorem \ref{thm:one-shot-info-advantage}. 
Evidence of such noise-robustness was numerically observed while computing the bitwise XOR of bitstrings of length 2 in \cite{doolittle2024operational_nonclassicality}. A mathematical argument that noise-robustness should hold in general proceeds as follows. By Lemma \ref{lem:reduction-to-guessing-probability} and \eqref{eq:guess-prob-and-min-ent}, the ability to guess the function is controlled by the output min-entropy. The min-entropy is uniformly continuous \cite{Marwah_2022,Bluhm_2024}. Thus if noise is injected into the encoders, transmission state, or initial shared entangled state continuously, the probability of guessing should degrade in a continuous manner. However, this argument does not account for noise in the decoder. Here, in Theorem \ref{thm:noise-robustness}, we prove a form of noise-robustness that allows us to also account for noise in the decoder.

As already alluded to, there can be noise in the shared state, the encoder, the transmission, and the decoder. These noisy components remain modeled by a quantum state and quantum channels. To quantify noise in the state we use the trace distance 
\begin{align}
    \text{TD}(\rho,\sigma) \coloneq \frac{1}{2} \left\Vert \rho - \sigma \right\Vert_{1} \ , 
\end{align} 
and to quantify noise in the channels we use the diamond distance \cite{Kitaev_1997}
\begin{align}
    \frac{1}{2}\Vert \cE_{A \to B} - \cF_{A \to B} \Vert_{\diamond} 
    &=\sup_{\rho_{RA} \in \Density(R \otimes A)} \frac{1}{2} \left\Vert (\id_{R} \otimes (\cE - \cF))(\rho) \right\Vert_{1} \\
    &= \max_{\dyad{\psi}_{A'A} \in \Density(A' \otimes A)} \frac{1}{2} \left\Vert (\id_{R'} \otimes (\cE - \cF))(\dyad{\psi}_{A'A}) \right\Vert_{1}\ , \label{eq:diamond-norm-state-optimization}
\end{align}
where $A' \cong A$. For clarity, we summarize the properties of the diamond norm to which we will appeal.
\begin{proposition}[Diamond Norm Properties]\label{prop:diamond-norm-properties}
Let $\cE^{1}_{A \to B}$, $\cF^{1}_{A \to B}$, $\cE^{2}_{B \to C}$, and $\cG^{2}_{B \to C}$ be quantum channels.
    \begin{enumerate}
        \item $\Vert \cE^{2} \circ \cE^{1} - \cF^{2} \circ \cF^{1}\Vert_{\diamond} \leq \Vert \cE^{1} - \cF^{1} \Vert_{\diamond} + \Vert \cE^{2} - \cF^{2} \Vert_{\diamond}$.
        \item $\Vert \id_{R} \otimes \cE^{1} - \id_{R} \otimes \cF^{1} \Vert_{\diamond} = \Vert \cE^{1} - \cF^{1} \Vert_{\diamond}$ for any dimension of $R$.
        \item $\Vert \cE^{1} \otimes \cE^{2} - \cF^{1} \otimes \cF^{2} \Vert_{\diamond} \leq \Vert \cE^{1} - \cF^{1} \Vert_{\diamond} + \Vert \cE^{2} - \cF^{2} \Vert_{\diamond}$.
    \end{enumerate}
\end{proposition}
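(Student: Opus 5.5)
We prove the three items in the order stated, since Item 3 follows from Items 1 and 2. Throughout we write $\Vert \cdot \Vert_{\diamond}$ for the (unnormalized) diamond norm of a Hermiticity-preserving map. We read the channel $\cG^{2}$ in the hypothesis as the $\cF^{2}$ appearing in the items. Two standard facts will be used repeatedly.
\begin{itemize}
    \item \textbf{Submultiplicativity:} $\Vert \Psi \circ \Phi \Vert_{\diamond} \leq \Vert \Psi \Vert_{\diamond} \Vert \Phi \Vert_{\diamond}$. This follows directly from the definition: for any input $\rho_{RA}$, $(\id_{R} \otimes \Psi\Phi)(\rho) = (\id_{R} \otimes \Psi)\big((\id_{R} \otimes \Phi)(\rho)\big)$, and the inner operator has trace norm at most $\Vert \Phi \Vert_{\diamond}$.
    \item \textbf{Channels have norm one:} $\Vert \cE \Vert_{\diamond} = 1$ for every quantum channel, because $\id_{R} \otimes \cE$ maps states to states.
\end{itemize}

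\textbf{Item 1.} We use the telescoping identity
\begin{align}
    \cE^{2}\circ\cE^{1} - \cF^{2}\circ\cF^{1} = \cE^{2}\circ(\cE^{1}-\cF^{1}) + (\cE^{2}-\cF^{2})\circ\cF^{1} \ .
\end{align}
Applying the triangle inequality, then submultiplicativity, then $\Vert \cE^{2}\Vert_{\diamond} = \Vert \cF^{1}\Vert_{\diamond} = 1$ gives the claim.

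\textbf{Item 2.} This is the only step requiring care, namely the stabilization of the diamond norm.
\begin{itemize}
    \item \emph{Lower bound ($\geq$).} Take inputs of the form $\rho_{R'RA} = \sigma_{R} \otimes \psi_{R'A}$ with $\psi$ optimal for $\Vert \cE^{1}-\cF^{1}\Vert_{\diamond}$. The trace norm is multiplicative over tensor products, so the trace-norm distance achieved on this input equals $\Vert \cE^{1}-\cF^{1}\Vert_{\diamond}$.
    \item \emph{Upper bound ($\leq$).} Any input to $\id_{R'} \otimes \id_{R} \otimes (\cE^{1}-\cF^{1})$ is an input to $\id_{R'R} \otimes (\cE^{1}-\cF^{1})$, where the composite system $R'R$ plays the role of the reference. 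The supremum defining $\Vert \cE^{1}-\cF^{1}\Vert_{\diamond}$ already ranges over references of arbitrary dimension. Equivalently, by \eqref{eq:diamond-norm-state-optimization}, it suffices to consider pure states on $A' \otimes A$, and every state of $R'RA$ is handled by purifying it and using contractivity of the trace norm under partial trace. Hence the left-hand side is at most $\Vert \cE^{1}-\cF^{1}\Vert_{\diamond}$.
\end{itemize}
We expect this to be the main obstacle. Its resolution rests entirely on the fact that the definition takes the supremum over all reference dimensions, and we invoke that fact (rather than reprove it via \eqref{eq:diamond-norm-state-optimization}) if needed.

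\textbf{Item 3.} We factor each tensor product as a composition:
\begin{align}
    \cE^{1}\otimes\cE^{2} = (\id_{B}\otimes\cE^{2})\circ(\cE^{1}\otimes\id_{B'}) \ , \qquad \cF^{1}\otimes\cF^{2} = (\id_{B}\otimes\cF^{2})\circ(\cF^{1}\otimes\id_{B'}) \ ,
\end{align}
where $B'$ denotes the input space of $\cE^{2}$. Item 1 applies because all four factors are channels, and it bounds the difference by
\begin{align}
    \Vert \cE^{1}\otimes\id - \cF^{1}\otimes\id\Vert_{\diamond} + \Vert \id\otimes\cE^{2} - \id\otimes\cF^{2}\Vert_{\diamond} \ .
\end{align}
Item 2 then reduces each term to $\Vert \cE^{1}-\cF^{1}\Vert_{\diamond}$ and $\Vert \cE^{2}-\cF^{2}\Vert_{\diamond}$ respectively. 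Permuting tensor factors is a unitary conjugation, which leaves trace norms invariant, so it does not matter which side the identity sits on.
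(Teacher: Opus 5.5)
Your proof is correct. For Items 1 and 2 the paper only cites Watrous (Proposition 3.48 and Corollary 3.47); your arguments are the standard ones behind those citations, and for Item 2 you rest on the same stabilization fact.

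The real difference is in Item 3. The paper telescopes directly at the level of tensor products, $\cE^{1}\otimes\cE^{2}-\cF^{1}\otimes\cF^{2}=\cE^{1}\otimes(\cE^{2}-\cF^{2})+(\cE^{1}-\cF^{1})\otimes\cF^{2}$. It applies the triangle inequality and then invokes multiplicativity of the diamond norm over tensor products (Watrous, Theorem 3.49), together with $\Vert\cE^{1}\Vert_{\diamond}=\Vert\cF^{2}\Vert_{\diamond}=1$. You instead factor each tensor product as $(\id\otimes\cE^{2})\circ(\cE^{1}\otimes\id)$ and obtain Item 3 as a corollary of Items 1 and 2. Unwound, your argument bounds the telescoped sum $(\cE^{1}-\cF^{1})\otimes\cE^{2}+\cF^{1}\otimes(\cE^{2}-\cF^{2})$. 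Each term is handled by submultiplicativity plus stabilization, which is exactly how the inequality $\Vert\Phi\otimes\Psi\Vert_{\diamond}\leq\Vert\Phi\Vert_{\diamond}\Vert\Psi\Vert_{\diamond}$ is proved in the first place.

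So the paper's route is shorter, but it imports the full multiplicativity theorem when only that one inequality is used. Yours is self-contained, needs nothing beyond Items 1 and 2, and makes explicit that Item 3 is not an independent fact.

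One small point concerns your submultiplicativity step. The intermediate operator $X=(\id_{R}\otimes\Phi)(\rho)$ is Hermitian but not a state. With the paper's state-based definition you therefore need $\Vert(\id_{R}\otimes\Psi)(X)\Vert_{1}\leq\Vert\Psi\Vert_{\diamond}\Vert X\Vert_{1}$ for Hermitian $X$. This follows from the Jordan decomposition $X=X_{+}-X_{-}$ with $\Vert X\Vert_{1}=\Tr[X_{+}]+\Tr[X_{-}]$. In Item 1 it is only needed for the channel $\Psi=\cE^{2}$, where it is ordinary trace-norm contractivity.
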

\begin{proof}[Proof Summary]
    Item 1 is Item 2 of \cite[Proposition 3.48]{WatrousBook}. Item 2 is \cite[Corollary 3.47]{WatrousBook}. To establish Item 3, we use the series of inequalities
    \begin{align}
        \Vert \cE^{1} \otimes \cE^{2} - \cF^{1} \otimes \cF^{2} \Vert_{\diamond} &= \Vert \cE^{1} \otimes \cE^{2} - \cF^{1} \otimes \cF^{2} \Vert_{\diamond} \\
        &= \Vert \cE^{1} \otimes \cE^{2} - \cE^{1} \otimes \cF^{2} + \cE^{1} \otimes \cF^{2} - \cF^{1} \otimes \cF^{2} \Vert_{\diamond} \\
        &\leq \Vert \cE^{1} \otimes \cE^{2} - \cE^{1} \otimes \cF^{2} \Vert_{\diamond} + \Vert \cE^{1} \otimes \cF^{2} - \cF^{1} \otimes \cF^{2} \Vert_{\diamond} \\
        &= \Vert \cE^{1} \otimes (\cE^{2} - \cF^{2}) \Vert_{\diamond} + \Vert (\cE^{1} - \cF^{1}) \otimes \cF^{2} \Vert_{\diamond} \\
        &= \Vert \cE^{2}-\cF^{2}\Vert_{\diamond} + \Vert \cE^{1} - \cF^{1} \Vert_{\diamond} \ ,
    \end{align}
    where the inequality is the triangle inequality and the fourth equality uses the multiplicativity of the completely bounded $1$-norm over tensor products \cite[Theorem 3.49]{WatrousBook}.
\end{proof}

We also recall the following well-known fact that follows from the triangle inequality of the Schatten $1$-norm and the definition of diamond and trace distance.
\begin{proposition}\label{prop:decompose-processing-and-shared-state-error}
    For any states $\rho_{A},\sigma_{A}$ and channels $\cE_{A \to B}, \cF_{A \to B}$, 
    \begin{align}
    \frac{1}{2}\Vert \cE(\rho) - \cF(\sigma) \Vert_{1} \leq \frac{1}{2}\Vert \cE - \cF \Vert_{\diamond} + \text{TD}(\rho,\sigma) \ . 
    \end{align}
\end{proposition}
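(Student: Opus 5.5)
The plan is to insert an intermediate term and apply the triangle inequality for the Schatten $1$-norm. Write
\begin{align}
    \cE(\rho) - \cF(\sigma) = \left(\cE(\rho) - \cF(\rho)\right) + \left(\cF(\rho) - \cF(\sigma)\right) \ ,
\end{align}
so that
\begin{align}
    \tfrac{1}{2}\Vert \cE(\rho) - \cF(\sigma) \Vert_{1} \leq \tfrac{1}{2}\Vert (\cE - \cF)(\rho) \Vert_{1} + \tfrac{1}{2}\Vert \cF(\rho - \sigma) \Vert_{1} \ .
\end{align}
Each of the two terms is then bounded separately.

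For the first term, I would observe that $\rho_{A}$ is a valid input in the supremum defining the diamond norm. Concretely, take a trivial (one-dimensional) reference system $R$ in the definition, or equivalently note that $\rho_{A} = \Tr_{R}[\rho_{RA}]$ for some state $\rho_{RA}$ and use monotonicity of the trace norm under partial trace. Either route gives $\frac{1}{2}\Vert (\cE-\cF)(\rho)\Vert_{1} \leq \frac{1}{2}\Vert \cE - \cF\Vert_{\diamond}$.

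For the second term, I would use that the trace distance contracts under quantum channels, so that $\frac{1}{2}\Vert \cF(\rho) - \cF(\sigma)\Vert_{1} \leq \text{TD}(\rho,\sigma)$. This holds because $\rho-\sigma$ is Hermitian: writing $\rho - \sigma = P - N$ with $P,N\geq 0$ orthogonal, linearity and positivity of $\cF$ give $\Vert \cF(P) - \cF(N)\Vert_{1} \leq \Tr[\cF(P)] + \Tr[\cF(N)]$, and trace preservation turns the right-hand side into $\Tr P + \Tr N = \Vert\rho-\sigma\Vert_{1}$. Adding the two bounds gives the claim.

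No step is a genuine obstacle. The only point needing care is the justification that the diamond norm dominates the unstabilized trace-norm difference on a particular input. I would state this explicitly via the trivial-reference choice, so as not to rely on the purified form in \eqref{eq:diamond-norm-state-optimization}.
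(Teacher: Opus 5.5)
Your proof is correct and follows essentially the argument the paper intends. The paper only says the bound follows from the triangle inequality for the Schatten $1$-norm and the definitions of diamond and trace distance; your intermediate term $\cF(\rho)$, the trivial-reference bound $\frac{1}{2}\Vert(\cE-\cF)(\rho)\Vert_{1}\leq\frac{1}{2}\Vert\cE-\cF\Vert_{\diamond}$, and the contractivity of trace distance under $\cF$ supply exactly those steps, and your explicit Jordan-decomposition proof of contractivity fills in a step that the definitions alone do not give.
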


We will denote the noisy components of an implemted protocol with superscripts $r$ for `real' and the noiseless components with $id$ for `ideal,' e.g. the real and noisy encoder are denoted $\cE^{r}$ and $\cE^{id}$ respectively. We define the following notions of error: 
\begin{align}
    \text{TD}(\sigma^{r},\sigma^{id}) \coloneq \ve_{st} \quad \frac{1}{2}\Vert \cE^{r}_{i} - \cE^{id}_{i} \Vert_{\diamond} \coloneq \ve_{enc,i} \quad \frac{1}{2}\Vert \cT^{r}_{i} - \cT^{id}_{i} \Vert_{\diamond} \coloneq \ve_{tr,i} \quad \frac{1}{2}\Vert \cD^{r} -\cD^{id} \Vert_{\diamond} \coloneq \ve_{dec} \ , 
\end{align}
which respectively correspond to the difference between the ideal and noisy shared entanglement, the difference between party $i$'s ideal and noisy encoder and transmission channel, and the difference between the receiver's ideal and noisy decoder.

Next we show we can measure how often $Z' = Z$ according to the joint distribution $p_{ZZ'}$ generated from $p_{Z}$ and some effective channel $P_{Z'\vert Z}$ where $\cZ' \cong \cZ$ is the same as the trace distance between $p_{ZZ'}$ and the perfect correlation $\chi^{\vert p}_{ZZ'}$. 
\begin{lemma}\label{lem:distance-from-perfect-correlation}
    Consider classical-classical states 
    \begin{align}
        p_{ZZ'} = \sum_{z} p(z) \sum_{z'} p(z'\vert z) \dyad{z}_{Z} \otimes \dyad{z'}_{Z'} \quad \chi^{\vert p}_{ZZ'} = \sum_{z} p(z) \dyad{z}_{Z} \otimes \dyad{z}_{Z'} \ . 
    \end{align}
    Then $\Pr_{p}[Z \neq Z'] = \text{TD}(p_{ZZ'},\chi^{\vert p}_{ZZ'})$, i.e. the trace distance between these states is the probability of $Z'$ not taking the same value as $Z$.
\end{lemma}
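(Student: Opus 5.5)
The plan is a direct computation with the definition $\text{TD}(\rho,\sigma)=\frac{1}{2}\Vert\rho-\sigma\Vert_{1}$. The key observation is that both $p_{ZZ'}$ and $\chi^{\vert p}_{ZZ'}$ are diagonal in the same product basis $\{\ket{z}\ket{z'}\}$. Hence their difference is diagonal, and its trace norm is just the $\ell_{1}$ distance between the two joint distributions. So I will write
\begin{align}
    \text{TD}(p_{ZZ'},\chi^{\vert p}_{ZZ'}) = \frac{1}{2}\sum_{z,z'} \left\vert p(z)p(z'\vert z) - p(z)\delta_{z,z'}\right\vert \ ,
\end{align}
and then evaluate the sum by splitting it into diagonal and off-diagonal terms.

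\emph{Off-diagonal terms} ($z'\neq z$). Here $\chi^{\vert p}$ contributes nothing, so each term is simply $p(z)p(z'\vert z)$. Their total is $\sum_{z}p(z)\sum_{z'\neq z}p(z'\vert z)=\Pr_{p}[Z\neq Z']$.

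\emph{Diagonal terms} ($z'=z$). Each term is $\vert p(z)p(z\vert z)-p(z)\vert = p(z)(1-p(z\vert z))$, using $p(z\vert z)\leq 1$ to remove the absolute value. Since $1-p(z\vert z)=\sum_{z'\neq z}p(z'\vert z)$, these terms also sum to $\Pr_{p}[Z\neq Z']$.

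Adding the two parts gives $2\Pr_{p}[Z\neq Z']$, and the factor $\frac{1}{2}$ yields the claim. Alternatively, one could invoke the classical identity $\text{TD}(p,q)=\sum_{w}\max\{p(w)-q(w),0\}$. Only the diagonal entries contribute to this positive part, which gives $\sum_{z}p(z)(1-p(z\vert z))$ directly.

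I do not expect a real obstacle. The only points needing care are justifying that the trace norm of a diagonal operator equals the $\ell_{1}$ norm of its diagonal, and handling the sign on the diagonal terms, which follows from $p(z\vert z)\leq 1$. I also need $\cZ'\cong\cZ$, so that $\ket{z}_{Z'}$ is meaningful for the perfectly correlated state.
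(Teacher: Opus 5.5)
Your proposal is correct and matches the paper's proof: both reduce the trace norm of the diagonal difference to an $\ell_{1}$ sum, split it into $z'=z$ and $z'\neq z$ terms, and use $1-p(z\vert z)=\sum_{z'\neq z}p(z'\vert z)$ to show that each part contributes $\Pr_{p}[Z\neq Z']$. One small fix to your alternative remark: the positive part of $p_{ZZ'}-\chi^{\vert p}_{ZZ'}$ lies on the off-diagonal entries, while the positive part of $\chi^{\vert p}_{ZZ'}-p_{ZZ'}$ lies on the diagonal, but by symmetry of the trace distance either ordering gives the same value.
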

\begin{proof}
    This is a direct calculation that we provide for completeness:
    \begin{align}
        \text{TD}(p_{ZZ'},\chi^{\vert p}_{ZZ'}) &= \frac{1}{2} \left\Vert  \sum_{z} p(z) \sum_{z'} p(z'\vert z) \dyad{z}_{Z} \otimes \dyad{z'}_{Z'} - \sum_{z} p(z) \dyad{z}_{Z} \otimes \dyad{z}_{Z'} \right\Vert_{1} \\ 
        &= \frac{1}{2} \sum_{z} p(z) \sum_{z'} \left\vert p(z' \vert z) - \delta_{z',z}\right\vert \\
        &= \frac{1}{2} \sum_{z} p(z) \left[\left\vert p(z'=z \vert z) - 1\right\vert + \sum_{z' \neq z} \vert p(z'|z) \vert \right] \\
        &=\frac{1}{2} \sum_{z} p(z) \left[\left\vert 1 - \sum_{z' \neq z} p(z'|z)  - 1\right\vert + \sum_{z' \neq z} p(z'|z) \right] \\
        &= \frac{1}{2} \sum_{z} p(z) 2\left[\sum_{z' \neq z} p(z' \vert z) \right] \\
        &= \sum_{z} \sum_{z' \neq z} p(z,z') \\
        &= \Pr_{p}[Z' \neq Z] \ ,
    \end{align}
    where the second equality is pulling out $p(z)$ and simplifying the trace norm on these diagonal operators,
\end{proof}

Using the above facts and known properties of the diamond distance, we prove our main noise-robustness result.
\begin{theorem}\label{thm:noise-robustness}
    Consider a noisy implementation of Protocol \ref{prot:dense-network-coding-of-some-type} using an entanglement-assisted quantum multiaccess network described by channels and resource state ($\cE^{r}_{1}$,$\cE^{r}_{2}$,$\cT^{r}_{1}$,$\cT^{r}_{2}$,$\cD^{r}$,$\sigma^{r}_{AB}$). Then the probability that the output $Z'$ is the correct value of the function,  $\Pr[Z = Z']$, satisfies 
    \begin{align} 
       \Pr[Z \neq Z'] \geq 1 - (\ve_{st} + \ve_{enc,1} + \ve_{enc,2} + \ve_{tr,1} +\ve_{tr,2} + \ve_{dec}) \ . 
    \end{align}
\end{theorem}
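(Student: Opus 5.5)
We will in fact prove the intended statement $\Pr[Z = Z'] \geq 1 - (\ve_{st} + \ve_{enc,1} + \ve_{enc,2} + \ve_{tr,1} +\ve_{tr,2} + \ve_{dec})$. The displayed ``$\Pr[Z \neq Z']$'' appears to be a typo, since the text before it and Theorem \ref{thm:main-text-noise-robustness} concern the probability of a correct output. Equivalently, we show $\Pr[Z \neq Z'] \leq \ve_{st} + \ve_{enc,1} + \ve_{enc,2} + \ve_{tr,1} +\ve_{tr,2} + \ve_{dec}$; combined with $\Pr[Z=Z']\geq 0$, this gives the $\max\{\cdot,0\}$ form of Theorem \ref{thm:main-text-noise-robustness}.

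\emph{Step 1: reduce to a trace distance.} Let $\rho_{ZXY}$ be the classical state holding the true function value $Z = g\cdot h$ together with the inputs $X,Y$. Write the joint distribution of $Z$ and the receiver's output $Z'$ as
\[
p^{r}_{ZZ'} = \bigl(\id_{Z} \otimes \cN^{r}\bigr)\bigl(\rho_{ZXY} \otimes \sigma^{r}_{AB}\bigr),
\qquad
\cN^{r} \coloneq \cD^{r} \circ (\cT^{r}_{1} \otimes \cT^{r}_{2}) \circ (\cE^{r}_{1} \otimes \cE^{r}_{2}),
\]
where $\cE^{r}_{1}$ acts on $XA$ and $\cE^{r}_{2}$ acts on $YB$. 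Define $p^{id}_{ZZ'}$ the same way, using the ideal channels and $\sigma^{id}=\Phi$. By Theorem \ref{thm:dense-network-coding}, the ideal protocol always outputs $g\cdot h$, so $p^{id}_{ZZ'} = \chi^{\vert p}_{ZZ'}$. Lemma \ref{lem:distance-from-perfect-correlation} then gives
\[
\Pr_{p^{r}}[Z\neq Z'] = \mathrm{TD}\bigl(p^{r}_{ZZ'},\, p^{id}_{ZZ'}\bigr).
\]

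\emph{Step 2: separate state error from channel error.} Apply Proposition \ref{prop:decompose-processing-and-shared-state-error} with the channels $\id_Z\otimes\cN^{r}$ and $\id_Z\otimes\cN^{id}$ and the states $\rho_{ZXY}\otimes\sigma^{r}$ and $\rho_{ZXY}\otimes\sigma^{id}$. This bounds the trace distance by
\[
\tfrac12\bigl\Vert \id_Z\otimes(\cN^{r}-\cN^{id})\bigr\Vert_\diamond + \mathrm{TD}(\rho\otimes\sigma^{r},\rho\otimes\sigma^{id}).
\]
The second term equals $\ve_{st}$, because the trace norm is multiplicative under tensoring with a fixed normalized state.

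\emph{Step 3: split the channel error by component.} By Item 2 of Proposition \ref{prop:diamond-norm-properties}, the $\id_Z$ factor can be dropped. Applying Item 1 twice splits $\cN^{r}-\cN^{id}$ into the decoder, transmission and encoder stages. Applying Item 3 to the transmission and encoder stages splits each into its two senders. Halving each term then yields exactly $\ve_{dec}+\ve_{tr,1}+\ve_{tr,2}+\ve_{enc,1}+\ve_{enc,2}$, and summing with Step 2 finishes the proof.

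\emph{Expected obstacle.} The delicate step is the bookkeeping in Step 1. The $Z$ register must be carried alongside the network without being acted on, and the encoders must be regarded as channels on (classical input)$\otimes$(share). This is what allows each $\ve_{enc,i}$ to be a single diamond distance, and it is consistent with the paper's definition of $\ve_{enc,i}$. Items 1 and 3 of Proposition \ref{prop:diamond-norm-properties} hold for channels with arbitrary input and output spaces, so the remaining steps are routine.
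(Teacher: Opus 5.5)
Your proposal is correct and follows the paper's proof essentially step for step: Lemma~\ref{lem:distance-from-perfect-correlation} turns the error probability into a trace distance from $\chi^{\vert p}_{ZZ'}$, Proposition~\ref{prop:decompose-processing-and-shared-state-error} peels off $\ve_{st}$, and Items 2, 1 and 3 of Proposition~\ref{prop:diamond-norm-properties} split the remaining diamond distance into its components. You are also right that the displayed $\Pr[Z \neq Z']$ is a typo: the paper's proof bounds $\Pr[Z\neq Z']$ from above and then passes to $\Pr[Z=Z']=1-\Pr[Z\neq Z']$.
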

\begin{proof}
    To simplify the notation, let 
    \begin{align}\label{eq:simplified-encoder-trans-notation}
        \cE^{x} \coloneq \cE^{x}_{1} \otimes \cE^{x}_{2} \quad \cT^{x} \coloneq \cT^{x}_{1} \otimes \cT^{x}_{2} \text{ for } x \in \{r,id\} \ . 
    \end{align} 
    First, as we are trying to implement the perfect strategy given in Protocol \ref{prot:dense-network-coding-of-some-type}, we know the ideal strategy results in the global state being perfectly correlated, i.e.
    \begin{align}
        \id_{Z} \otimes (\cD^{id} \circ \cT^{id} \circ \cE^{id})(\rho_{ZXY} \otimes \sigma^{id}_{AB}) = \chi^{\vert p}_{ZZ'} \ . 
    \end{align}
    Moreover the real strategy results in a joint classical-classical state 
    \begin{align}
        (\id_{Z} \otimes \cD^{r} \circ \cT^{r} \circ \circ \cE^{r})(\rho_{ZXY} \otimes \sigma^{r}_{AB}) \eqqcolon p_{ZZ'} \ . 
    \end{align}
    Thus, by Lemma \ref{lem:distance-from-perfect-correlation}, the probability of guessing the value of $Z$ incorrectly is
    \begin{align}
        \Pr_{p}[Z \neq Z'] &= \text{TD}(p_{ZZ'},\chi_{ZZ'}) \\
        &= \text{TD}(\id_{Z} \otimes (\cD^{id} \circ \cT^{id} \circ \cE^{id})(\rho_{ZW^{2}_{1}} \otimes \sigma^{id}_{AB}),(\id_{Z} \otimes \cD^{r} \circ \cT^{r} \circ \circ \cE^{r})(\rho_{ZXY} \otimes \sigma^{r}_{AB})) \ . \label{eq:error-robust-bound-step-1}
    \end{align}
    We aim to upper bound Eq.~\eqref{eq:error-robust-bound-step-1} so that we can obtain a lower bound on the probability of correctly guessing the value of $Z$.

    For notational simplicity, let $\cG^{x} \coloneq \cD^{x} \circ \cT^{x} \circ \cE^{x}$ for $x \in \{r,id\}$. We then have
    \begin{align}
        & \text{TD}(\id_{Z} \otimes \cG^{id}(\rho_{ZW^{2}_{1}} \otimes \sigma^{id}_{AB}),(\id_{Z} \otimes \cG^{r})(\rho_{ZW^{2}_{1}} \otimes \sigma^{r}_{AB})) \\
        &\leq \frac{1}{2}\Vert \id_{Z} \otimes \cG^{r} - \id_{Z} \otimes \cG^{id} \Vert_{\diamond} + \text{TD}(\rho_{ZW^{2}_{1}} \otimes \sigma^{id}_{AB},\rho_{ZW^{2}_{1}} \otimes \sigma^{r}_{AB}) \\
        &= \frac{1}{2} \Vert \id_{Z} \otimes \cG^{r} - \id_{Z} \otimes \cG^{id} \Vert_{\diamond} + \text{TD}(\sigma^{id}_{AB}, \sigma^{r}_{AB}) \\
        &= \frac{1}{2}\Vert \cG^{r} - \cG^{id} \Vert_{\diamond} + \text{TD}(\sigma^{id}_{AB}, \sigma^{r}_{AB}) \\
        &= \frac{1}{2}\Vert \cG^{r} - \cG^{id} \Vert_{\diamond} + \ve_{st} \label{eq:error-robust-bound-step-2}
    \end{align}
    where the inequality is Proposition \ref{prop:decompose-processing-and-shared-state-error}, the first equality follows from the multiplicativity of the Schatten $1$-norm over tensor products, the second equality is Item 2 of Proposition \ref{prop:diamond-norm-properties}, and the final equality is our definition of error in the state. Next, by our definitions and Item 1 of Proposition \ref{prop:diamond-norm-properties},  
    \begin{align}
        \frac{1}{2} \Vert \cG^{r} - \cG^{id} \Vert_{\diamond} &= \frac{1}{2}\Vert \cD^{r} \circ \cT^{r} \circ \cE^{r} -  \cD^{id} \circ \cT^{id} \circ \cE^{id} \Vert_{\diamond} \\ 
        &\leq \frac{1}{2} \Vert \cE^{r} - \cE^{id} \Vert_{\diamond} + \frac{1}{2}\Vert \cT^{r} - \cT^{id}\Vert_{\diamond} + \frac{1}{2} \Vert \cD^{r} - \cD^{id} \Vert_{\diamond} \ . \label{eq:error-robust-bound-step-3} 
    \end{align}
    Finally, by our definitions and Item 3 of Proposition \ref{prop:diamond-norm-properties},
    \begin{gather}
       \frac{1}{2}\Vert \cE^{r} - \cE^{id} \Vert_{\diamond} = \frac{1}{2}\Vert \cE^{r}_{1} \otimes \cE^{r}_{2} - \cE^{id}_{1} \otimes \cE^{id}_{2} \Vert_{\diamond}  \leq \frac{1}{2}\Vert \cE^{r}_{1} - \cE^{id}_{1} \Vert_{\diamond} + \frac{1}{2}\Vert \cE^{r}_{2} - \cE^{id}_{2} \Vert_{\diamond} = \ve_{enc,1}+\ve_{enc,2} \notag \\
       \frac{1}{2} \Vert \cT^{r} - \cT^{id} \Vert_{\diamond} = \frac{1}{2} \Vert \cT^{r}_{1} \otimes \cT^{r}_{2} - \cT^{id}_{1} \otimes \cT^{id}_{2} \Vert_{\diamond} \leq \frac{1}{2} \Vert \cT^{r}_{1} - \cT^{id}_{2} \Vert_{\diamond} + \frac{1}{2} \Vert \cT^{r}_{2} - \cT^{id}_{2} \Vert_{\diamond} = \ve_{tr,1} + \ve_{tr,2} \ . \label{eq:error-robust-bound-step-4} 
    \end{gather}
    Combining Eqs.~\eqref{eq:error-robust-bound-step-1}, \eqref{eq:error-robust-bound-step-2}, \eqref{eq:error-robust-bound-step-3}, \eqref{eq:error-robust-bound-step-4}, and using our definitions,
    \begin{align}
        \Pr_{p}[Z \neq Z'] \leq  \ve_{st} + \ve_{enc,1} +\ve_{enc,2} + \ve_{tr,1} +\ve_{tr,2} + \ve_{dec} \ . 
    \end{align}
    Noting the probability of guessing the correct value of $Z$ is $\Pr_{p}[Z=Z'] = 1 - \Pr_{p}[Z \neq Z']$ completes the proof.
\end{proof}
\begin{remark}
    This analysis can straightforwardly be extended to establish noise robustness in  similar network computation settings.
\end{remark}

\paragraph{Derivation of Concrete Example} 
Here we present the derivation of a concrete noise model that is used for Fig.~\ref{fig:main-text-noise-robustness}. The noise model is as follows:
\begin{enumerate}
    \item The ideal state is replaced with an isotropic state $\rho_{\lambda}$ (recall \eqref{eq:isotropic-state-defn}).
    \item The encoder is ideal.
    \item \sloppy Each ideal transmission $\id_{d}$ is replaced with the composition of an erasure and depolarizing channel $\cD^{p,e}_{\mbb{C}^{d} \to \mbb{C}^{d+1}} \coloneq \cE^{e} \circ \cD^{p}_{\mbb{C}^{d} \to \mbb{C}^{d}}$ where
    \begin{align}
        \cD^{p}(K) = (1-p)K + p\Tr[K]\pi \quad \cE^{e}(K) = (1-e)K + e\Tr[K]\dyad{\perp}  \ .
    \end{align}
    We note this model can be interpreted optically as a channel with transmittivity $\eta = (1-e)$ where the loss is heralded and, when the signal is not lost, has undergone depolarization.
    \item The decoder is a probabilistic mixture of the ideal Bell state measurement $\cM^{\BSM}$ and a measurement whose outcome is uniformly random independent of the input, which can be represented as the replacer channel that traces out the input and prepares the maximally mixed state:
    \begin{align}
        \cM^{\omega}(K) = (1-\omega)\cM^{\BSM}(K) + \omega \cR^{\pi}(K) \ .
    \end{align}
\end{enumerate}
Using this model, we find the following bound.
\begin{proposition}
    Using the noise model given above to implement Protocol \ref{prot:dense-network-coding-of-some-type} for an $f$ defined by the multiplication table of a TNC group of order $d^{2}$ (Definition \ref{def:TNC}), we have 
    \begin{align}
        P^{f}_{S}(\lambda,p,e,\omega) \geq 1 - \lambda\frac{d^{2}-1}{d^{2}} - 2\left[e + p \frac{d^{2}-1}{d^{2}} \right] - \omega(1-1/d) \ . 
    \end{align}
\end{proposition}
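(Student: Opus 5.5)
The plan is to apply Theorem \ref{thm:noise-robustness} directly: bound each of the component errors $\ve_{st}$, $\ve_{tr,i}$, and $\ve_{dec}$ for the concrete noise model, and set $\ve_{enc,i}=0$ because the encoders are ideal. Since $P^{f}_{S} = \Pr[Z=Z'] \geq 1-\sum \ve$, it suffices to show
\begin{align}
\ve_{st} \leq \lambda\frac{d^{2}-1}{d^{2}}, \qquad \ve_{tr,i} \leq e + p\frac{d^{2}-1}{d^{2}}, \qquad \ve_{dec} \leq \omega\left(1-\frac{1}{d}\right).
\end{align}
If the decoder estimate is loose, I would instead compute the decoder term directly from the output distribution.

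\textbf{State error.} Write $\rho_\lambda - \Phi = \lambda(\pi_{AA'} - \Phi)$. The operator $\pi_{AA'}-\Phi$ has eigenvalue $1/d^2 - 1$ on $\ket{\Phi}$ and $1/d^2$ on the orthogonal $(d^2-1)$-dimensional complement. Its trace norm is therefore $2(d^2-1)/d^2$, which gives $\text{TD}=\lambda (d^2-1)/d^2$.

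\textbf{Transmission error.} The ideal transmission is $\id_d$, viewed as embedded into $\mbb{C}^{d+1}$ with the flag $\ket{\perp}$ orthogonal to the signal space. Then
\begin{align}
\cD^{p,e} - \id = (1-e)(\cD^{p}-\id) + e(\cR^{\perp} - \id),
\end{align}
where $\cR^{\perp}$ denotes the replacer channel onto $\dyad{\perp}$. The triangle inequality gives $\tfrac12\Vert\cdot\Vert_\diamond \leq (1-e)\tfrac12\Vert \cD^p - \id\Vert_\diamond + e$. The second term uses $\tfrac12\Vert\cR^{\perp}-\id\Vert_\diamond\le 1$; in fact it equals $1$ because the outputs have orthogonal supports.

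The depolarizing term is where I would be careful. Using the state characterization \eqref{eq:diamond-norm-state-optimization}, the input $\Phi$ gives $\tfrac12\Vert p(\pi\otimes\pi - \Phi)\Vert_1 = p(d^2-1)/d^2$. I would verify that this choice is optimal via covariance of $\cD^p$: the difference $\cD^p - \id$ is unitarily covariant, so the maximally entangled input attains the diamond norm. The result is $\ve_{tr,i}\le (1-e)p\frac{d^2-1}{d^2}+e \le e + p\frac{d^2-1}{d^2}$.

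\textbf{Decoder error.} The difference is $\omega(\cR^{\pi} - \cM^{\BSM})$. This is the main obstacle. A crude diamond bound gives $\omega(1-1/d^2)$, because the ideal output is a point mass while the uniform distribution is over $d^2$ outcomes. This already suffices, since $1-1/d^2 \ge 1-1/d$ only goes the wrong way. So to reach the stated $\omega(1-1/d)$, I expect either that $\pi$ in $\cR^\pi$ is meant on a $d$-outcome register, or that the stated bound is meant loosely.

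I would therefore prefer the direct computation. Because the decoder is a convex mixture, the success probability is exactly $(1-\omega)P_{\BSM} + \omega\cdot\frac{1}{|\cZ|}$, where $P_{\BSM}$ is the success probability with the ideal measurement on the noisy state. Combined with $P_{\BSM}\ge 1-\ve_{st}-\ve_{tr,1}-\ve_{tr,2}$ from Theorem \ref{thm:noise-robustness} with $\ve_{dec}=0$, this yields a loss of at most $\omega(1-1/|\cZ|)$ from decoder noise. That loss is at most $\omega(1-1/d)$ whenever the random guess covers at most $d$ outcomes. I would reconcile this normalization with the paper's convention for $\cR^\pi$ before finalizing. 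Summing the three pieces then gives the claimed bound.
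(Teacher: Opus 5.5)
\textbf{Verdict: the proposal has a genuine gap at the decoder step, and the gap cannot be closed, because the statement is false as written.}

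\textbf{State and transmission terms.} Your bounds on $\ve_{st}$ and $\ve_{tr,i}$ are correct and agree with the paper's. Your convex decomposition $\cD^{p,e}-\id=(1-e)(\cD^{p}-\id)+e(\cR^{\perp}-\id)$ is valid. It even gives the slightly sharper bound $e+(1-e)p\frac{d^{2}-1}{d^{2}}$, compared with the paper's composition bound.

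\textbf{Decoder term.} Your closing sentence, ``summing the three pieces then gives the claimed bound,'' is not justified.
\begin{itemize}
\item The value $\tfrac12\Vert\cR^{\pi}-\cM^{\BSM}\Vert_\diamond=1-1/d^{2}$ is exact, not crude: the input $\dyad{\Phi_g}$ attains it.
\item Your sentence ``this already suffices'' is backwards, since $1-1/d^{2}\ge 1-1/d$.
\item The escape route you suggest does not exist. Here $Z=g\cdot h$ is uniform on $G$ with $|G|=d^{2}$. So any input-independent guess succeeds with probability exactly $1/d^{2}$, whatever its distribution and whether it is spread over $d$ outcomes or $d^{2}$. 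The ``$1/|\cZ|$'' in your formula is the size of the range of $f$, not the support of the random guess.
\end{itemize}

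\textbf{Counterexample.} Your exact expression $P_S=(1-\omega)P_{\BSM}+\omega/d^{2}$ refutes the statement. Take $\lambda=p=e=0$. Then $P_{\BSM}=1$, so $P_S=1-\omega(1-1/d^{2})$. This is strictly below the claimed $1-\omega(1-1/d)$ for every $\omega>0$ and $d\ge 2$. For instance, $d=2$ and $\omega=1$ give $1/4<1/2$.

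\textbf{What can be proven.} Either your route or Theorem \ref{thm:noise-robustness} with the exact $\ve_{dec}$ gives the bound with $\omega(1-1/d^{2})$ in place of $\omega(1-1/d)$. Your mixture computation even yields the additional term $+\omega(\ve_{st}+\ve_{tr,1}+\ve_{tr,2})$.

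\textbf{Where the paper's proof goes wrong.} The paper reaches $\omega(1-1/d)$ by applying Proposition \ref{prop:diamond-dist-of-noisy-PVM} with the local dimension $d$. In that proposition, however, $d$ is the number of outcomes $|Z|=\dim A$. Its primal witness $\dyad{\psi_z}\otimes\dyad{z}$ gives the value $1-1/|Z|$, and for the $G$-Bell measurement $|Z|=d^{2}$. So your suspicion was right, but the fix is to correct the constant, not to reinterpret $\cR^{\pi}$. The main-text curves are unaffected, since they set $\omega=0$.
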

\begin{remark}
    One recovers the function $P_{S}^{f}(e,p,n) \coloneq 1 - 2(e + p\frac{2^{2n}-1}{2^{2n}})$ in the main text by letting $\lambda = 0$, $\omega =0$, and using $d = 2^{n}$.
\end{remark}
\begin{proof}
    As the ideal state is the maximally entangled state, by Proposition \ref{prop:trace-dist-isotropic-states}, $\ve_{st} = \text{TD}(\rho_{\lambda},\rho_{0}) =  \lambda \frac{d^{2}-1}{d^{2}}$. As the ideal transmission channel is the identity channel, we have
    \begin{align}
        \ve_{tr,i} = \frac{1}{2} \Vert \cD^{p,e} - \cD^{0,0} \Vert_{\diamond} &= \frac{1}{2}\Vert \cE^{e} \circ \cD^{p}_{\mbb{C}^{d} \to \mbb{C}^{d}} - \cE^{0} \circ \id_{d} \Vert_{\diamond} \\
        &\leq \frac{1}{2}\Vert \cE^{e} - \cE^{0} \Vert_{\diamond} + \frac{1}{2} \Vert \cD^{p} - \id_{d} \Vert_{\diamond} = e + p\frac{d^{2}-1}{d^{2}} \ ,
    \end{align}
    where the second equality uses that $\cD^{0,0}$ is $\id_{d}$ embedded into acting on $\mbb{C}^{d} \oplus \ket{\perp}$, the inequality is Item 1 of Proposition \ref{prop:diamond-norm-properties}, and the last equality is Propositions \ref{prop:diamond-dist-erasure-channels} and \ref{prop:diamond-dist-of-depol-channels}. Finally, as the ideal decoder is the group Bell state measurement which is a projective measurement, by Proposition \ref{prop:diamond-dist-of-noisy-PVM},
    \begin{align}
        \ve_{dec} = \frac{1}{2} \Vert \cM^{w} - \cM^{BSM} \Vert_{\diamond} = \omega(1 - \frac{1}{d}) \ .
    \end{align}
    Thus, as we have established upper bounds on the terms being subtracted in Theorem \ref{thm:noise-robustness}, applying Theorem \ref{thm:noise-robustness} completes the proof.
\end{proof}

\paragraph{Auxiliary Propositions} The rest of this section are proofs of the trace and diamond distance for various standard families of states/channels for which we could not find references. This can be skipped with no loss unless one is interested in these types of calculations.

The following is surely known, but we could not find it, so we provide it. 
\begin{proposition}\label{prop:trace-dist-isotropic-states}
    For $\lambda,\lambda' \in [0,1]$, the distance between two isotropic states on $\mbb{C}^{d} \otimes \mbb{C}^{d}$ is
    \begin{align}
        \text{TD}(\rho_{\lambda},\rho_{\lambda'}) = \vert \lambda - \lambda' \vert \frac{d^{2}-1}{d^{2}} \ . 
    \end{align}
\end{proposition}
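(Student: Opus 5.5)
The plan is to compute the trace norm of $\rho_{\lambda}-\rho_{\lambda'}$ directly, using the fact that both isotropic states are diagonal in a common eigenbasis. From \eqref{eq:isotropic-state-defn}, the difference is
\begin{align}
    \rho_{\lambda}-\rho_{\lambda'} = (\lambda'-\lambda)\dyad{\Phi^{+}} + (\lambda-\lambda')\pi_{AA'} = (\lambda-\lambda')\left(\pi_{AA'} - \dyad{\Phi^{+}}\right) \ ,
\end{align}
so everything reduces to computing $\Vert \pi_{AA'} - \dyad{\Phi^{+}}\Vert_{1}$ and then using homogeneity of the norm to pull out $\vert\lambda-\lambda'\vert$.

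For the remaining norm, I will write $\pi_{AA'} = \frac{1}{d^{2}}\left(\dyad{\Phi^{+}} + (I - \dyad{\Phi^{+}})\right)$. This gives
\begin{align}
    \pi_{AA'} - \dyad{\Phi^{+}} = -\frac{d^{2}-1}{d^{2}}\dyad{\Phi^{+}} + \frac{1}{d^{2}}\left(I-\dyad{\Phi^{+}}\right) \ .
\end{align}
The two projectors are orthogonal, and the second has rank $d^{2}-1$. The eigenvalues are therefore $-\frac{d^{2}-1}{d^{2}}$ with multiplicity one and $\frac{1}{d^{2}}$ with multiplicity $d^{2}-1$. Summing their absolute values gives $\Vert\cdot\Vert_{1} = 2\frac{d^{2}-1}{d^{2}}$, and halving yields the claimed $\vert\lambda-\lambda'\vert\frac{d^{2}-1}{d^{2}}$.

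One alternative is to note that both states commute and are diagonal in the generalized Bell basis $\{\ket{\Phi_{q,r}}\}$ from the main text. The trace distance then becomes the total variation distance between the distributions $(1-\lambda+\lambda/d^{2}, \lambda/d^{2},\dots)$ and the corresponding ones for $\lambda'$. The computation is identical.

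There is no real obstacle here. The only points requiring care are identifying $\dyad{\Phi^{+}}$ with the maximally entangled state $\dyad{\Phi}$ on $\mbb{C}^{d}\otimes\mbb{C}^{d}$, and counting the multiplicity $d^{2}-1$ of the orthogonal complement correctly.
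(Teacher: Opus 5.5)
Your proof is correct and follows the paper's route: factor out $\vert\lambda-\lambda'\vert$ to reduce to $\text{TD}(\dyad{\Phi},\pi)$, then split $\pi-\dyad{\Phi}$ into orthogonal pieces and add their trace norms. The only difference is cosmetic: the paper resolves $I-\dyad{\Phi}$ into the $d^{2}-1$ non-trivial generalized Bell projectors $\dyad{\Phi_{x,y}}$, while you treat it as a single rank-$(d^{2}-1)$ projector, which makes the argument independent of the Bell basis.
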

\begin{proof}
    First we reduce this to computing $\text{TD}(\dyad{\Phi},\pi)$. As trace distance is symmetric in the arguments, without loss of generality, $\lambda' \geq \lambda$ so
    \begin{align}
        \text{TD}(\rho_{\lambda},\rho_{\lambda'}) &= \frac{1}{2} \Vert (1-\lambda)\dyad{\Phi} + \lambda \pi - (1-\lambda')\dyad{\Phi} - \lambda' \pi \Vert_{1} \\ 
        &= \frac{1}{2} \Vert (\lambda' - \lambda)\Phi^{+} - (\lambda'-\lambda)\pi \Vert_{1} \\
        &= \vert \lambda' - \lambda \vert  \cdot \text{TD}(\dyad{\Phi},\pi) \ . 
    \end{align}
    Next,
    \begin{align}
        \text{TD}(\dyad{\Phi},\pi) = \frac{1}{2}\Vert \pi - \dyad{\Phi} \Vert_{1} &= \frac{1}{2} \left\Vert \frac{1}{d^{2}} \sum_{x,y} \dyad{\Phi_{x,y}} - \dyad{\Phi} \right\Vert_{1} \\
        &= \frac{1}{2} \left\Vert \frac{1}{d^{2}} \sum_{(x,y) \neq (0,0)} \dyad{\Phi_{x,y}} - \left(1-\frac{1}{d^{2}}\right) \dyad{\Phi} \right\Vert_{1} \\
        &= \frac{1}{2} \left[ \frac{1}{d^{2}} \sum_{(x,y) \neq (0,0)}  \left\Vert \dyad{\Phi_{x,y}} \right\Vert_{1} + \left\vert - \left(1-\frac{1}{d^{2}}\right) \right\vert  \right] \\
        &= \frac{1}{2} \left[ \frac{d^{2}-1}{d^{2}} + \frac{d^{2}-1}{d^{2}} \right] \\
        &= \frac{d^{2}-1}{d^{2}} \ ,
    \end{align}
    where we decomposed the maximally mixed state into Bell states and then used pairwise orthogonality to split the $L_{1}$-norm.
\end{proof}

\begin{proposition}\label{prop:diamond-dist-erasure-channels}
    For a fixed dimension $d$ and erasure probabilities $e,e' \in [0,1]$, $\frac{1}{2}\Vert\cE^{e} - \cE^{e'} \Vert_{\diamond} = \vert e' - e \vert$.
\end{proposition}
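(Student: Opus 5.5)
We must show $\frac{1}{2}\Vert\cE^{e}-\cE^{e'}\Vert_{\diamond} = \vert e'-e\vert$, where $\cE^{e}(K) = (1-e)K + e\Tr[K]\dyad{\perp}$ acts from $\mbb{C}^{d}$ into $\mbb{C}^{d}\oplus\mbb{C}\ket{\perp}$. The plan is to compute the difference map explicitly, bound it from above by a triangle inequality, and exhibit an input attaining the bound.

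By linearity,
\begin{align}
    \cE^{e}-\cE^{e'} = (e'-e)\left(\id_{d} - \cR^{\perp}\right) \ ,
\end{align}
where $\cR^{\perp}(K) = \Tr[K]\dyad{\perp}$ is the replacer channel and $\id_{d}$ is viewed as embedded into the larger output space. Homogeneity of the diamond norm then reduces the claim to showing $\frac{1}{2}\Vert \id_{d} - \cR^{\perp}\Vert_{\diamond} = 1$.

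\emph{Upper bound.} Both $\id_{d}$ and $\cR^{\perp}$ are channels, so the triangle inequality gives $\Vert \id_{d} - \cR^{\perp}\Vert_{\diamond} \leq \Vert\id_{d}\Vert_{\diamond}+\Vert\cR^{\perp}\Vert_{\diamond} = 2$.

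\emph{Lower bound.} We use the state formulation \eqref{eq:diamond-norm-state-optimization} with any pure input $\ket{\psi}_{A'A}$, for example the maximally entangled state. Then
\begin{align}
    (\id_{A'}\otimes(\id_{d}-\cR^{\perp}))(\dyad{\psi}) = \dyad{\psi}_{A'A} - \psi_{A'}\otimes\dyad{\perp} \ .
\end{align}
The first term is supported on $A'\otimes\mbb{C}^{d}$ and the second on $A'\otimes\mbb{C}\ket{\perp}$. These subspaces are orthogonal, so the trace norm splits as $1+1=2$, and half of this is $1$.

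Combining the two bounds gives $\frac{1}{2}\Vert\cE^{e}-\cE^{e'}\Vert_{\diamond} = \vert e'-e\vert$.

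The only step requiring care is the orthogonal-support splitting of the trace norm. It holds because $\ket{\perp}$ is orthogonal to $\mbb{C}^{d}$ by the definition of the erasure flag, which is what makes the bound tight.
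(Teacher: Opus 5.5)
Your proof is correct and takes essentially the same route as the paper. Both factor out $(e'-e)$ and split the trace norm of $\dyad{\psi}_{A'A}-\psi_{A'}\otimes\dyad{\perp}$ using the orthogonality of $A'\otimes\mbb{C}^{d}$ and $A'\otimes\mbb{C}\ket{\perp}$. The only difference is that the paper notes this splitting gives exactly $2$ for every pure input, so the supremum is immediate and your separate triangle-inequality upper bound, while valid, is not needed.
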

\begin{proof}
    As the diamond distance is symmetric in its arguments, we may without loss of generality assume $e' \geq e$. It is a direct calculation using \eqref{eq:diamond-norm-state-optimization}:
    \begin{align}
        \frac{1}{2}\Vert \cE^{e} - \cE^{e'} \Vert_{\diamond} &= \sup_{\dyad{\psi}_{A'A}} \frac{1}{2} \Vert (\id_{A'} \otimes (\cE^{e} - \cE^{e'}))(\dyad{\psi}) \Vert_{1} \\
        &= \sup_{\psi_{A'A}} \frac{1}{2} \Vert (1-e)\dyad{\psi}_{A'A} - (1-e')\dyad{\psi}_{A'A} + e \psi_{A'} \otimes \dyad{\perp} - e' \psi_{A'} \otimes \dyad{\perp} \Vert_{1} \\ 
        &= \sup_{\psi_{A'A}} \frac{1}{2} \Vert (e'-e)\dyad{\psi}_{A'A} \Vert_{1} + \frac{1}{2}\Vert (e'-e) \psi_{A'} \otimes \dyad{\perp} \Vert_{1} \\ 
        &= \frac{1}{2} \cdot 2 \vert e' - e \vert \ , 
    \end{align}
    where we used that $\dyad{\psi}_{A'A}$ and $\psi_{A'} \otimes \dyad{\perp}$ are in orthogonal subspaces to decompose the trace distance.
\end{proof}

In the following two computations we will use the SDP for the diamond distance.
\begin{proposition}\cite{khatri-book}\label{prop:diamond-dist-SDP}
    Let $\cN_{A \to B}$, $\cM_{A \to B}$ be quantum channels. Then
    \begin{align}
        \frac{1}{2}\Vert \cN - \cM \Vert_{\diamond} &= \sup_{\rho_{A'} \geq 0, \Omega_{A'A} \geq 0} \{ \Tr[\Omega_{A'B}(\Gamma^{\cN}-\Gamma^{\cM})] : \Omega_{A'B} \leq \rho_{A'} \otimes I_{B} \, , \, \Tr[\rho_{A'}] = 1 \} \label{eq:diamond-dist-primal} \\
        &= \inf_{\mu \geq 0 \, , \, Z_{A'B} \geq 0} \{ \mu : Z_{A'B} \geq  \Gamma^{\cN} - \Gamma^{\cM} \, , \, \mu I_{A'} \geq \Tr_{B}[Z] \} \label{eq:diamond-dist-dual} \ ,
    \end{align}
    where $\Gamma^{\cN} \coloneq d_{A} \cdot (\id_{A'} \otimes \cN)(\dyad{\Phi^{+}})$ is the Choi operator.
\end{proposition}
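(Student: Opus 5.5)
The plan is to derive the primal form \eqref{eq:diamond-dist-primal} directly from the pure-state expression \eqref{eq:diamond-norm-state-optimization}, and then obtain the dual \eqref{eq:diamond-dist-dual} by weak duality together with a Slater-type strong duality argument. Throughout, set $\Delta \coloneq \Gamma^{\cN} - \Gamma^{\cM}$, which is Hermitian and satisfies $\Tr_{B}[\Delta] = 0$ because both maps are trace preserving. (I read the variable $\Omega_{A'A}$ in the statement as $\Omega_{A'B}$, which is what the constraint requires.)

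\emph{Primal.} Any pure state on $A' \otimes A$ with $A' \cong A$ can be written as $\ket{\psi} = \sqrt{d_{A}}\,(\sqrt{\rho_{A'}}\,U \otimes I_{A})\ket{\Phi^{+}}$ for some state $\rho_{A'}$ and unitary $U$; this is the polar decomposition of the operator obtained by reshaping $\ket{\psi}$. The unitary $U$ acts only on $A'$, so it does not change the trace norm and can be dropped, with $\rho$ absorbing any transpose. This gives
$$(\id_{A'} \otimes (\cN-\cM))(\dyad{\psi}) \simeq (\sqrt{\rho}\otimes I_{B})\,\Delta\,(\sqrt{\rho}\otimes I_{B}) \eqqcolon H_{\rho},$$
where $\simeq$ denotes unitary equivalence. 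The operator $H_{\rho}$ is Hermitian and traceless, so $\frac{1}{2}\Vert H_{\rho}\Vert_{1} = \max_{0 \le P \le I}\Tr[P H_{\rho}]$: take $P$ to be the projector onto the positive part. Set $\Omega \coloneq (\sqrt{\rho}\otimes I)P(\sqrt{\rho}\otimes I)$. Then $0 \le \Omega \le \rho \otimes I_{B}$ and $\Tr[\Omega\Delta] = \Tr[PH_{\rho}]$, so the diamond distance is at most the primal value. For the converse, take any feasible pair $(\rho,\Omega)$. Since $\Omega \le \rho\otimes I$, the support of $\Omega$ lies in $\operatorname{supp}(\rho)\otimes B$, so $P \coloneq (\rho^{-1/2}\otimes I)\,\Omega\,(\rho^{-1/2}\otimes I)$, with $\rho^{-1/2}$ the pseudoinverse, satisfies $0 \le P \le I$ and reproduces $\Omega$. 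Hence $\Tr[\Omega\Delta] \le \frac{1}{2}\Vert H_{\rho}\Vert_{1}$, and the two optimizations coincide.

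\emph{Dual.} Weak duality is a short chain. Take $\mu, Z$ feasible for \eqref{eq:diamond-dist-dual} and $(\rho,\Omega)$ feasible for the primal. Using $\Omega \ge 0$ and $Z \ge \Delta$, then $Z \ge 0$ and $\Omega \le \rho\otimes I$, then $\Tr_{B}[Z] \le \mu I$,
$$\Tr[\Omega\Delta] \le \Tr[\Omega Z] \le \Tr[(\rho\otimes I)Z] = \Tr[\rho\,\Tr_{B}Z] \le \mu .$$
For the reverse inequality I will write down the Lagrangian with multipliers $Z \ge 0$ for the constraint $\rho\otimes I - \Omega \ge 0$ and $\mu$ for $\Tr\rho = 1$. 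Stationarity in $\Omega$ and in $\rho$ then produces exactly the dual constraints $Z \ge \Delta$ and $\mu I \ge \Tr_{B}[Z]$.

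The main obstacle is strong duality, i.e.\ that the infimum actually equals the supremum. I would invoke Slater's condition: the point $\rho = \pi_{A'}$, $\Omega = \tfrac{1}{2}\pi_{A'}\otimes I_{B}$ is strictly feasible, and the primal value is bounded above by $1$, so strong duality holds and the dual optimum is attained. A secondary subtlety is rank-deficient $\rho$ in the primal, which is already handled by the pseudoinverse argument above; alternatively one can restrict to full-rank $\rho$ and pass to the limit using continuity of the objective.
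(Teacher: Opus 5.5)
Your proposal is correct and follows essentially the standard derivation in the cited source (the paper gives no proof of its own, only the citation to Khatri--Wilde). That derivation reduces to input states $\sqrt{d_{A}}(\sqrt{\rho}\otimes I)\ket{\Phi^{+}}$, takes the positive-part projector of the traceless operator $(\sqrt{\rho}\otimes I)(\Gamma^{\cN}-\Gamma^{\cM})(\sqrt{\rho}\otimes I)$, changes variables to $\Omega=(\sqrt{\rho}\otimes I)P(\sqrt{\rho}\otimes I)$, and obtains the dual by Lagrange duality with Slater's condition. The only cosmetic fix is to write the polar decomposition as $X=U\sqrt{X^{\dagger}X}$, so that the unitary is the outermost factor on $A'$ and can be discarded by unitary invariance of the trace norm.
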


\begin{proposition}\label{prop:diamond-dist-of-depol-channels}
    For dimension $d$ and noise parameters $p,p' \in [0,1]$, 
    $$ \frac{1}{2}\Vert \cD_{\mbb{C}^{d} \to \mbb{C}^{d}}^{p} - \cD_{\mbb{C}^{d} \to \mbb{C}^{d}}^{p'} \Vert_{\diamond}  = \vert p' - p \vert \frac{d^{2}-1}{d^{2}} \ . $$
\end{proposition}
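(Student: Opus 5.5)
The plan is to factor out the noise parameters and reduce the claim to a single diamond distance between the identity channel and the replacer channel. Let $\cR^{\pi}(K) \coloneq \Tr[K]\pi$. From the definition $\cD^{p} = (1-p)\id_{d} + p\cR^{\pi}$, so
\begin{align}
    \cD^{p} - \cD^{p'} = (p'-p)\left(\id_{d} - \cR^{\pi}\right) .
\end{align}
By homogeneity of the diamond norm, $\frac{1}{2}\Vert \cD^{p} - \cD^{p'}\Vert_{\diamond} = \vert p'-p\vert \cdot \frac{1}{2}\Vert \id_{d} - \cR^{\pi}\Vert_{\diamond}$. It therefore suffices to show $\frac{1}{2}\Vert \id_{d} - \cR^{\pi}\Vert_{\diamond} = \frac{d^{2}-1}{d^{2}}$. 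I will prove matching lower and upper bounds.

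For the lower bound, I would feed the maximally entangled state $\dyad{\Phi}_{A'A}$ into \eqref{eq:diamond-norm-state-optimization}. The two outputs are $\dyad{\Phi}$ and $\pi_{A'}\otimes\pi_{A} = \pi_{A'A}$, whose trace distance is $\frac{d^{2}-1}{d^{2}}$. This is exactly the computation $\text{TD}(\dyad{\Phi},\pi)$ carried out in the proof of Proposition \ref{prop:trace-dist-isotropic-states}, equivalently the case $\lambda=0,\lambda'=1$ there.

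For the upper bound, I would use the dual SDP \eqref{eq:diamond-dist-dual} of Proposition \ref{prop:diamond-dist-SDP}. The Choi operators are $\Gamma^{\id} = d\dyad{\Phi}$ and $\Gamma^{\cR^{\pi}} = d\,\pi_{A'}\otimes\pi_{B} = I/d$. Their difference is
\begin{align}
    \Gamma^{\id}-\Gamma^{\cR^{\pi}} = \left(d - \tfrac{1}{d}\right)\dyad{\Phi} - \tfrac{1}{d}\left(I - \dyad{\Phi}\right) ,
\end{align}
which is diagonal in the decomposition $\dyad{\Phi} \oplus (I-\dyad{\Phi})$.
\begin{itemize}
    \item Choose $Z$ to be its positive part, $Z = \frac{d^{2}-1}{d}\dyad{\Phi} \geq 0$. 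Then $Z \geq \Gamma^{\id}-\Gamma^{\cR^{\pi}}$ holds.
    \item Compute $\Tr_{B}[Z] = \frac{d^{2}-1}{d}\cdot\frac{I_{A'}}{d} = \frac{d^{2}-1}{d^{2}} I_{A'}$.
    \item Hence $\mu = \frac{d^{2}-1}{d^{2}}$ is dual feasible, which gives the matching upper bound.
\end{itemize}

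The only step needing care is the normalization convention in Proposition \ref{prop:diamond-dist-SDP}: the Choi operator carries the factor $d_{A}$, and this must be consistent with the $\frac{1}{2}$ in the diamond distance. I would sanity-check this on the lower-bound witness. Taking $\rho_{A'}=\pi$ and $\Omega = \frac{1}{d}\dyad{\Phi}$ in the primal \eqref{eq:diamond-dist-primal} gives $\Tr[\Omega(\Gamma^{\id}-\Gamma^{\cR^{\pi}})] = \frac{1}{d}\left(d-\frac{1}{d}\right) = \frac{d^{2}-1}{d^{2}}$. This agrees with the dual value, confirming the normalization and closing the gap.
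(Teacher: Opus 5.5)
Your proof is correct, but it evaluates $\frac{1}{2}\Vert \id_{d} - \cR^{\pi}\Vert_{\diamond}$ by a different route than the paper. The factoring step is the same in both.

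\textbf{The paper's route.} The paper works only with the primal SDP \eqref{eq:diamond-dist-primal}. It twirls a feasible pair $(\Omega,\rho)$ under $U\otimes\bar{U}$. It then uses two facts: both Choi operators are invariant under this twirl, and $\cT(\rho_{A'}\otimes I_{A})=\pi_{A'}\otimes I_{A}$. From these it argues that an optimizer can be taken of the form $\Omega = x\dyad{\Phi} + y(I-\dyad{\Phi})$ with $\rho=\pi$. This reduces the problem to a two-variable linear program. Its optimum, $x=1/d$, $y=0$, is exactly your sanity-check primal witness.

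\textbf{Your route.} You bound the value from both sides with explicit objects. For the lower bound you use the maximally entangled input, reusing the trace-distance computation of Proposition \ref{prop:trace-dist-isotropic-states}. For the upper bound you use the dual certificate $Z=(\Gamma^{\id}-\Gamma^{\cR^{\pi}})_{+}$. This is the same witness-plus-certificate strategy the paper itself uses later for Proposition \ref{prop:diamond-dist-of-noisy-PVM}.

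\textbf{What each buys.} Your argument is shorter and entirely elementary. It needs no Haar integral and no Schur--Weyl description of the $U\otimes\bar{U}$-commutant, and each bound is a short direct check. The paper's symmetry reduction needs only the primal formulation. It also derives the optimizer rather than requiring you to guess both witnesses. It transfers directly to other pairs of $U\otimes\bar{U}$-covariant channels, where the SDP likewise collapses to a small linear program.
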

\begin{proof}
    First we reduce the problem to determining the diamond distance between the identity channel and the replacer channel $\cR^{\pi}(X) = \Tr[X]\pi$. As the diamond norm is symmetric in the arguments, we can without loss of generality assume $p' \geq p$. Now,
    \begin{align}
       \frac{1}{2}\Vert \cD^{p} - \cD^{p'} \Vert_{\diamond} &= \sup_{\psi_{A'A}} \frac{1}{2} \Vert (1-p)\psi + p\psi_{A'} \otimes \pi_{A} - (1-p')\psi + p' \psi_{A'} \otimes \pi_{A} \Vert_{1} \\
        &= \sup_{\psi_{A'A}} \frac{1}{2} \Vert (p'-p)\psi - (p'-p) \psi_{A'} \otimes \pi_{A} \Vert_{1} \\
        &= \vert p' - p \vert \sup_{\psi_{A'A}} \frac{1}{2} \Vert \psi - \psi_{A'} \otimes \pi_{A} \Vert_{1} \\
        &= \vert p' - p \vert \frac{1}{2}\Vert\id - \cR^{\pi} \Vert_{\diamond} \ ,
    \end{align}
    where the final equality just follows from the action of these channels. 

    Now we compute $\frac{1}{2}\Vert \id - \cR^{\pi} \Vert_{\diamond}$. Using Proposition \ref{prop:diamond-dist-SDP},
    \begin{align}\label{eq:diamond-dist-SDP-for-id-vs-rep}
        \frac{1}{2}\Vert \id - \cR^{\pi} \Vert_{\diamond} &= \sup_{\rho_{A'} \geq 0, \Omega_{A'A} \geq 0} \{ \Tr[\Omega_{A'A}(d\dyad{\Phi} - I \otimes \pi)] : \Omega_{A'A} \leq \rho_{A'} \otimes I_{A} \, , \, \Tr[\rho_{A'}] = 1 \} \ .
    \end{align}
     We now use symmetry to reduce this to a simple linear program. Note $d\dyad{\Phi}$ and $I \otimes \pi$ are invariant under conjugation by $U \otimes \ol{U}$, and thus are invariant under the twirling map $\cT(\cdot) \coloneq \int (U \otimes \ol{U}) \cdot (U \otimes \ol{U})^{\dagger} dU$. Noting for any feasible $\rho_{A'}$, $\Omega_{A'A}$ we have 
    \begin{gather}
        \Tr[\Omega_{A'A}(d\dyad{\Phi} - I \otimes \pi)] = \Tr[\Omega_{A'A}\cT(d\dyad{\Phi} - I \otimes \pi)] = \Tr[\cT(\Omega_{A'A})(d\dyad{\Phi} - I \otimes \pi)] \\
        \cT(\Omega_{A'A}) \leq \cT(\rho_{A'} \otimes I_{A}) = \pi_{A'} \otimes I_{A} \ ,
    \end{gather}
    where the first follows from the self-adjointedness of $\cT$ and the second follows from $\cT$ being a completely positive map, we may conclude that without loss of generality there is an optimizer of the form $(\cT(\Omega_{A'A}),\pi_{A'})$. As is well-known, by Schur-Weyl duality, $\dyad{\Phi}$ and $I \otimes I - \dyad{\Phi} \coloneq \Pi^{\perp}$ form a basis for $U \otimes \ol{U}$-invariant operators. Thus, we have
    \begin{align}
        \frac{1}{2}\Vert \id - \cR^{\pi} \Vert_{\diamond} &= \sup_{x,y \geq 0} \left\{ \Tr[(x\dyad{\Phi}+y\Pi^{\perp})(d\dyad{\Phi} - I \otimes \pi)] : x\dyad{\Phi}+y\Pi^{\perp} \leq \pi_{A'} \otimes I_{A} \right\} \ . 
    \end{align}
    We further simplify this optimization problem. We simplify the objective function by direct calculation:
    \begin{align*}
        \Tr[(x\dyad{\Phi}+y\Pi^{\perp})(d\dyad{\Phi} - I \otimes \pi)] &= xd + yd \Tr[\Pi^{\perp}\dyad{\Phi}] - x\Tr[\dyad{\Phi}I \otimes \pi] - y \Tr[\Pi^{\perp}I \otimes \pi] \\ 
        &= xd  - x\Tr[\pi^{2}] - y \Tr[(I \otimes I) I \otimes \pi] + y \Tr[\dyad{\Phi} I \otimes \pi] \\ 
        &= xd - x/d - yd + y/d \\
        &= (d-1/d)(x-y) \ . 
    \end{align*}
    We simplify the constraint by re-writing $\pi_{A'} \otimes I_A = \frac{1}{d} I_{A'} \otimes I_{A} = \frac{1}{d}(\dyad{\Phi} + \Pi^{\perp})$, so that by orthogonality of these operators, the constraint becomes two constraints: $x \leq 1/d$ and $y \leq 1/d$. Thus, we have
    \begin{align}
         \frac{1}{2}\Vert \id - \cR^{\pi} \Vert_{\diamond} &= (d-1/d)\sup\{x-y: 0 \leq x \leq 1/d \, , 0 \leq y \leq 1/d \} \ . 
    \end{align}
    Clearly this is achieved when $y = 0$ and $x = 1/d$. Thus, $\frac{1}{2}\Vert \id - \cR^{\pi} \Vert_{\diamond} = \frac{d^{2}-1}{d^{2}}$. 
\end{proof}
\begin{remark}
    Note that as a depolarizing channel generates an isotropic state when acting on half of a maximally entangled state, Propositions \ref{prop:diamond-dist-of-depol-channels} and \ref{prop:trace-dist-isotropic-states} imply $\frac{1}{2}\Vert \cD^{p} - \cD^{p'} \Vert_{\diamond}$ as expressed in \eqref{eq:diamond-norm-state-optimization} is optimized by the maximally entangled state. 
\end{remark}

\begin{proposition}\label{prop:diamond-dist-of-noisy-PVM}
    Let $\{\ket{\psi_{z}}\}_{z}$ be a complete set of orthonormal vectors. Consider the corresponding projective measurement $\cM_{A \to Z}(X) = \sum_{z} \bra{\psi_{z}} X \ket{\psi_{z}} \dyad{z}$ and define its convex combination with the replacer channel $\cM^{p} \coloneq (1-p)\cM + p\cR^{\pi}$ for $p \in [0,1]$. Then
    \begin{align}
        \frac{1}{2}\Vert \cM^{p} - \cM \Vert_{\diamond} = p \cdot (1-1/d) \ . 
    \end{align}
\end{proposition}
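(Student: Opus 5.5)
The plan is to reduce the claim to the diamond distance between the projective measurement and the replacer channel. I then evaluate that distance by a matching lower bound (an explicit input state) and upper bound (a block-diagonal argument). Throughout, $d$ denotes the number of outcomes, so $d=\vert A\vert$ because $\{\ket{\psi_{z}}\}_{z}$ is a complete orthonormal set. Since $\cM^{p}-\cM = p(\cR^{\pi}-\cM)$, homogeneity of the diamond norm gives
\begin{align}
\frac{1}{2}\Vert \cM^{p}-\cM\Vert_{\diamond} = p\cdot\frac{1}{2}\Vert \cR^{\pi}-\cM\Vert_{\diamond} \ ,
\end{align}
so it suffices to show $\frac{1}{2}\Vert \cR^{\pi}-\cM\Vert_{\diamond}=1-1/d$.

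Using the pure-state form \eqref{eq:diamond-norm-state-optimization}, fix an input $\dyad{\psi}_{A'A}$. Write $\sigma_{z}\coloneq \bra{\psi_{z}}\psi_{A'A}\ket{\psi_{z}}$ for the subnormalized operator on $A'$, with $q_{z}\coloneq\Tr[\sigma_{z}]$. Then $\sum_{z}\sigma_{z}=\psi_{A'}$, and the difference of the two outputs is block diagonal:
\begin{align}
(\id_{A'}\otimes(\cM-\cR^{\pi}))(\dyad{\psi}) = \sum_{z}\dyad{z}\otimes X_{z} \ , \qquad X_{z}\coloneq \sigma_{z}-\psi_{A'}/d \ .
\end{align}
Hence the trace norm equals $\sum_{z}\Vert X_{z}\Vert_{1}$.

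For the lower bound, take a product input $\ket{\phi}_{A'}\ket{\psi_{0}}_{A}$. Then $X_{0}=(1-1/d)\dyad{\phi}$ and $X_{z}=-\dyad{\phi}/d$ for $z\neq 0$. The total trace norm is $2(1-1/d)$, so $\frac{1}{2}\Vert \cR^{\pi}-\cM\Vert_{\diamond}\geq 1-1/d$.

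For the upper bound, note that $\sum_{z}\Tr[X_{z}]=0$. Therefore $\sum_{z}\Vert X_{z}\Vert_{1} = 2\sum_{z}\Tr[X_{z}^{+}]$, where $X_{z}^{+}$ denotes the positive part. Write $\Tr[X_{z}^{+}]=\max_{0\leq P_{z}\leq I}\Tr[P_{z}X_{z}]$ and set $s_{z}\coloneq\Tr[P_{z}\sigma_{z}]$. Because $0\leq\sigma_{z}\leq\psi_{A'}$, we have $\Tr[P_{z}\psi_{A'}]\geq s_{z}$. It follows that
\begin{align}
\Tr[P_{z}X_{z}]\leq s_{z}-s_{z}/d = s_{z}(1-1/d)\leq q_{z}(1-1/d) \ .
\end{align}
Summing over $z$ with $\sum_{z}q_{z}=1$ gives $\frac{1}{2}\sum_{z}\Vert X_{z}\Vert_{1}\leq 1-1/d$ for every input, and the upper bound follows.

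The main obstacle is the upper bound, specifically controlling the negative contribution $\Tr[P_{z}\psi_{A'}]/d$ from below. The operator inequality $\sigma_{z}\leq\psi_{A'}$ resolves this. It holds because each $\sigma_{z}$ is one term of a sum of positive operators equal to $\psi_{A'}$.
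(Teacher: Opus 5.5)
Your proof is correct, and it takes a more elementary route than the paper for the upper bound. After the same homogeneity reduction, the paper invokes the Choi-operator SDP for the diamond distance. It exhibits a primal point $\rho_{A'}=\dyad{\psi_{z}}$, $\Omega=\dyad{\psi_{z}}\otimes\dyad{z}$ and a dual point $Z_{A'A}=(1-1/d)\sum_{z}\dyad{\psi_{z}}\otimes\dyad{z}$, $\mu=1-1/d$, both of value $1-1/d$, and concludes by weak duality. Your lower bound is that same primal witness written as a product input state. Your upper bound instead argues directly from the definition, using block-diagonality in the outcome register, the trace-zero identity that turns the trace norm into twice the sum of positive parts, and the domination $\sigma_{z}\leq\psi_{A'}$. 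The operator inequality you obtain, $X_{z}\leq(1-1/d)\sigma_{z}$, i.e.\ $(\id_{A'}\otimes(\cM-\cR^{\pi}))(\dyad{\psi})\leq(1-1/d)(\id_{A'}\otimes\cM)(\dyad{\psi})$, is exactly the paper's dual certificate $Z_{A'A}=(1-1/d)\Gamma^{\cM}$ transported to the output state, so both proofs rest on the same bound. Your version is self-contained: it needs no SDP duality and no Choi conventions. This matters slightly here, because the paper's displayed $\Gamma^{\cR}-\Gamma^{\cM}$ actually has the opposite sign, which is harmless only because the diamond norm is symmetric. Your argument also works verbatim for mixed inputs on any reference system. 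The paper's version is shorter once the SDP is available and gives a mechanically checkable certificate, in the same style as its depolarizing-channel computation.
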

\begin{proof}
    First, we reduce the problem to the diamond distance between the perfect measurement and the replacer channel.
    \begin{align}
       \frac{1}{2}\Vert \cM^{p} - \cM \Vert_{\diamond} &\coloneq \sup_{\dyad{\psi}_{A'A} \in \Density(A' \otimes A)} \frac{1}{2} \Vert (\id_{A'} \otimes [(1-p)\cM + p\cR^{\pi} - \cM])(\dyad{\psi}) \Vert_{1} \\
        &= p \cdot \sup_{\dyad{\psi}_{A'A} \in \Density(A' \otimes A)} \frac{1}{2} \Vert (\id_{A'} \otimes [\cR^{\pi} - \cM])(\dyad{\psi}) \Vert_{1} \\
        &= p \cdot \frac{1}{2}\Vert \cR^{\pi} - \cM \Vert_{\diamond} \ .
    \end{align}
    where we just used the definitions and simplified terms. Now, we use weak duality of SDPs to determine the optimal value. By defining the Choi operator in the basis of $\{\ket{\psi_{z}}\}_{z}$, $\Gamma^{\cM} = \sum_{z} \dyad{\psi_{z}} \otimes \dyad{z}$. Thus, $\Gamma^{\cR}-\Gamma^{\cM} = \sum_{z} \dyad{\psi_{z}} \otimes (\dyad{z} - \pi)$. By direct calculations, 
    \begin{align}
        \Omega_{A'Z} \coloneq \dyad{\psi_{z}} \otimes \dyad{z} \quad \rho_{A'} \coloneq \dyad{\psi_{z}}
    \end{align}
    is feasible for the primal problem of $\frac{1}{2}\Vert \cR^{\pi} - \cM \Vert_{\diamond}$ according to \eqref{eq:diamond-dist-primal} and obtains a value $1-1/d$. Similarly, direct calculations will confirm
    \begin{align}
        Z_{A'A} \coloneq \sum_{z} \dyad{\psi_{z}} \otimes (1-1/d)\dyad{z} \quad \mu = 1 - 1/d
    \end{align}
    is feasible for the primal problem of $\frac{1}{2}\Vert \cR^{\pi} - \cM \Vert_{\diamond}$ according to \eqref{eq:diamond-dist-dual} and obtains a value $1-1/d$. As we have matching upper and lower bounds of $1-1/d$, by weak duality (see e.g. \cite[Proposition 2.27]{khatri-book}), we conclude the value is $1-1/d$.
\end{proof}

\section{Amplification of Advantage for More Senders}\label{app:amplification-of-advantage}
In this section we show how to amplify the advantage between EAQMNs and CMNs in Theorem \ref{thm:one-shot-info-advantage} by including more senders in MNs. Under the relevant conditions on the SMP MNs, this is an immediate consequence of our reduction of simulation to guessing probability (Lemma \ref{lem:reduction-to-guessing-probability}) and the multiplicativity of the guessing probability over independent pairs of states (Item 2 of Proposition \ref{prop:guessing-prob-properties}). For clarity of presentation, we first introduce the general type of SMP MN for which our amplification holds and use it to establish our generic amplification lemma (Lemma \ref{lem:gen-structural-amplification}), we then identify the relevant instances of SMP MNs and show how these give rise to our quantitative separation (Theorem \ref{thm:product-theorem}), and finally we show how this recovers the version of the theorem given in the main text.

We begin with the general structure we need for our decomposition lemma.
\begin{definition}[$\cP$-Independent SMP MNs]\label{eq:partition-independent-SMP-MN}
    Let $m \leq k \in \mbb{N}$, $\vec{d} \in \mbb{N}^{\times k}$. Let $\cP \coloneq (S_{1},S_{2},...,S_{m})$ be a partitioning of $[k]$ into $m$ pieces. For each sender $i \in \{1,...,m\}$, fix an input space $Q_{i}$. Let $\mbf{S}(\vec{d})$ be a set of $k$-sender SMP MNs (Definition \ref{def:SMP-MN}) with specified signaling dimension (Definition \ref{def:signaling-dimension}). Let $\msf{E}_{\text{tot}} \subseteq \Channel(\otimes_{i \in \{1,...,k\}} Q_{i}, \otimes_{i \in \{1,...,k\}} C_{i})$ denote the set of encoding-then-transmission maps possible given the input spaces and the structure of $\mbf{S}(\vec{d})$. We say $\mbf{S}(\vec{d})$ is a set of $\cP$-independent SMP MNs if there exist sets of channels $\{\msf{E}_{j} \subseteq \Channel(\otimes_{i \in S_{j}} Q_{i}, \otimes_{i \in S_{j}} C_{i})\}_{j \in \{1,...,m\}} $ such that
    \begin{align}\label{eq:partition-independent-total-channels}
        \msf{E}_{\text{tot}} = \{\otimes_{j \in [m]} \cE_{j} : \cE_{j} \in \msf{E}_{j} \ , \forall j \in \{1,...,m\}\} \ . 
    \end{align}
    That is, the set of possible encoding-then-transmission maps as determined by $\mbf{S}(\vec{d})$ decompose into channels that are independent according to the partition $\cP$.
\end{definition}
Before continuing, we make some remarks about the above definition. First, a simple example of a set of IC SMP MNs is $4$ senders, where sender $i$ is connected to the receiver by $\id_{d_{i}}$, senders $1$ and $2$ can share arbitrary entanglement with each other, senders $3$ and $4$ can share arbitrary entanglement with each other, and no other resources are available. This is a $\cP$-independent SMP MNs where $\cP = (\{1,2\},\{3,4\})$ because every encoding-then-transmission map decomposes into the tensor product of two encoding-then-transmission maps: one built using senders $1$ and $2$ and one built using senders $3$ and $4$. Second, the specification of the input space was important in the above definition as it allowed us to define the tensor product decomposition of the encoding-then-transmission maps, which was not specified in Definition \ref{def:SMP-MN}.

With the main definition introduced, we state the general decomposition lemma.
\begin{lemma}\label{lem:gen-structural-amplification}
    Let $m \leq k \in \mbb{N}$ and $\vec{d} \in \mbb{N}^{\times k}$. Let $\cP \coloneq (S_{1},S_{2},...,S_{m})$ be a partitioning of $[k]$ into $m$ pieces. For $j \in \{1,...,m\}$, let $\wt{\cW}_{j} \coloneq \bigtimes_{i \in S_{j}} \cW_{i}$ and $f_{j}: \wt{\cW}_{j} \to \cZ_{j}$ be a function. Define the product function $f:\cW^{k} \to \cZ^{m}$ by $f = \bigtimes_{j \in \{1,...,m\}} f_{j}$. Let $\mbf{S}(\vec{d})$ be a set of $\cP$-independent SMP MNs. For $j \in \{1,...,m\}$, define $\wt{C}_{j} \coloneq \otimes_{i \in S_{j}} C_{i}$. Then
    \begin{align}
        P_{S}^{f}(\mbf{S}(\vec{d})) = \prod_{j \in \{1,...,m\}} \sup_{\cE_{j} \in \msf{E}_{j}} p_{g}(Z_{j} \vert \wt{C}_{j})_{ }\ , 
    \end{align}
\end{lemma}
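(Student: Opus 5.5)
The plan is to combine Lemma \ref{lem:reduction-to-guessing-probability} with the product structure of both the input state and the allowed encoding-then-transmission maps, and then apply Item 2 of Proposition \ref{prop:guessing-prob-properties}. Since $\mbf{S}(\vec{d})$ is a set of SMP MNs, Lemma \ref{lem:reduction-to-guessing-probability} gives
\begin{align}
P_{S}^{f}(\mbf{S}(\vec{d})) = \sup_{\cE^{\net} \in \msf{E}_{\text{tot}}} p_{g}(Z^{m} \vert C^{k})_{(\id \otimes \cE^{\net})(\rho_{Z^{m}W^{k}})} \ .
\end{align}
First I will check that the uniform input state factorizes along the partition. Because $f = \bigtimes_{j} f_{j}$ and the uniform distribution on $\cW^{k}$ is the product of the uniform distributions on the blocks $\wt{\cW}_{j}$, we get $\rho_{Z^{m}W^{k}} = \bigotimes_{j} \rho_{Z_{j}\wt{W}_{j}}$ with $\rho_{Z_{j}\wt{W}_{j}} = \frac{1}{\vert \wt{\cW}_{j}\vert}\sum_{\wt{w}_{j}} \dyad{f_{j}(\wt{w}_{j})} \otimes \dyad{\wt{w}_{j}}$, up to reordering registers. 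The computation is the same as the one displayed in the proof of Corollary \ref{cor:CMAN-lb-for-2-xor-d}.

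Second, by Definition \ref{eq:partition-independent-SMP-MN}, every $\cE^{\net} \in \msf{E}_{\text{tot}}$ has the form $\bigotimes_{j}\cE_{j}$ with $\cE_{j} \in \msf{E}_{j}$. Conversely, every such tensor product lies in $\msf{E}_{\text{tot}}$. So the output state is the product $\bigotimes_{j} (\id_{Z_{j}} \otimes \cE_{j})(\rho_{Z_{j}\wt{W}_{j}})$. Applying Item 2 of Proposition \ref{prop:guessing-prob-properties} inductively over $j$ gives
\begin{align}
p_{g}(Z^{m}\vert C^{k}) = \prod_{j} p_{g}(Z_{j}\vert \wt{C}_{j})_{(\id \otimes \cE_{j})(\rho_{Z_{j}\wt{W}_{j}})} \ .
\end{align}
The supremum over $\msf{E}_{\text{tot}}$ is then a supremum over the independent choices $(\cE_{1},\dots,\cE_{m}) \in \msf{E}_{1}\times\cdots\times\msf{E}_{m}$ of a product of nonnegative factors, each depending on only one coordinate. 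Hence it equals the product of the individual suprema: the product is at most the product of the suprema, and choosing near-optimal $\cE_{j}$ independently approaches that value.

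The main obstacle is the bookkeeping for Item 2. It is stated for two CQ states, so it must be iterated, and the registers must be regrouped so that each $Z_{j}$ is paired with exactly $\wt{C}_{j}$. This regrouping is a permutation of subsystems, and the guessing probability is invariant under it. A minor subtlety is exchanging the supremum with the product when the suprema are not attained. This is harmless because every factor lies in $[0,1]$, so an $\ve$-argument suffices.
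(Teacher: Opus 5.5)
Your proposal is correct and matches the paper's proof essentially step for step. The paper also factorizes $\rho_{Z^{m}W^{k}}$ along the partition, uses $\cP$-independence to write the output state as $\bigotimes_{j}(\id_{Z_{j}}\otimes\cE_{j})(\rho_{Z_{j}\wt{W}_{j}})$, combines Lemma \ref{lem:reduction-to-guessing-probability} with Item 2 of Proposition \ref{prop:guessing-prob-properties}, and splits the supremum over the independent choices; your remarks on iterating the two-party multiplicativity and on exchanging the supremum with a product of $[0,1]$-valued factors just make explicit steps the paper leaves implicit.
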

\begin{proof}
    First, using the definition of the product function and the uniformity and independence of the inputs according to Definition \ref{def:function-simulation-success-prob},
    \begin{align}
        \rho_{Z^{m}W^{k}} &= \frac{1}{\vert \cW^{k} \vert} \sum_{w^{k} \in \cW^{k}} \dyad{f(w^{k})} \otimes \dyad{w^{k}} \\
        &= \bigotimes_{j \in \{1,...,m\}} \left( \frac{1}{\vert \wt{\cW}_{j} \vert} \sum_{\wt{w}_{j} \in \wt{\cW}_{j}} \dyad{f_{j}(\wt{w}_{j})} \otimes \dyad{w^{k}} \right) \eqqcolon \bigotimes_{j \in \{1,...,m\}} \rho_{Z_{j}\wt{W}_{j}} \ . 
    \end{align}
    Next, using that $\mbf{S}(\vec{d})$ is a set of $\cP$-independent SMP MNs, for any encoding-then-transmission map $\cE^{\net} \in \msf{E}_{\text{tot}}$, it may be expressed as $\bigotimes_{j \in \{1,...,m\}} \cE_{j}$ for some $\{\cE_{j} \in \msf{E}_{j}\}_{j \in \{1,...,m\}}$. It follows from these identifications that
    \begin{align}
        \rho_{Z^{m}C^{k}} = (\id_{Z^{m}} \otimes \cE^{\net})(\rho_{Z^{m}W^{k}}) &= \left(\bigotimes_{j \in \{1,...,m\}} \id_{Z_{j}} \otimes \cE_{j} \right)\left(\bigotimes_{j \in \{1,...,m\}} \rho_{Z_{j}\wt{W}_{j}}\right) \\ 
        &=\otimes_{j \in \{1,...,m\}} (\id_{Z_{j}} \otimes \cE_{j})(\rho_{Z_{j}\wt{W}_{j}}) \ . \label{eq:independent-state-decomp}
    \end{align}
    Finally, starting from Lemma \ref{lem:reduction-to-guessing-probability} with the relabelings $Z \to Z^{m}$ and $C \to C^{k}$,
    \begin{align}
        P_{S}^{f}(\mbf{S}(\vec{d})) &= \sup_{\cE^{\net} \in \msf{E}_{\text{tot}}} p_{g}(Z^{m} \vert C^{k})_{(\id_{Z} \otimes \cE^{\net}_{W^{k} \to C^{k}})(\rho_{ZW^{k}})} \\
        &= \sup_{\cE_{j} \in \msf{E}_{j} \forall j} p_{g}(Z^{m} \vert C^{k})_{\otimes_{j} (\id_{Z_{j}} \otimes \cE_{j})(\rho_{Z_{j}\wt{W}_{j}})} \\
        &= \sup_{\cE_{j} \in \msf{E}_{j} \forall j} \prod_{j \in \{1,...,m\}} p_{g}(Z_{j} \vert \wt{C}_{j})_{(\id_{Z_{j}} \otimes \cE_{j})(\rho_{Z_{j}\wt{W}_{j}})} \\
        &= \prod_{j \in \{1,...,m\}} \sup_{\cE_{j} \in \msf{E}_{j}} p_{g}(Z_{j} \vert \wt{C}_{j})_{(\id_{Z_{j}} \otimes \cE_{j})(\rho_{Z_{j}\wt{W}_{j}})} \ , 
    \end{align}
    where the second equality is \eqref{eq:independent-state-decomp}, the third equality is Item 2 of Proposition \ref{prop:guessing-prob-properties}, and the fourth is \eqref{eq:partition-independent-total-channels}. This completes the proof.
\end{proof}

The goal of the remainder of this section is to apply the previous lemma to show we can amplify the gap between a classical network and an entanglement-assisted quantum network in Theorem \ref{thm:one-shot-info-advantage} by considering an appropriate product function and increasing the number of senders. Note that EACMNs and FCNSMNs as defined in Definition \ref{def:main-families-of-MNs} will not be $\cP$-Independent SMP MNs for $k \geq 3$ and any partitioning $\cP$ as the shared entanglement or total non-signaling box does not allow the total encoding-and-transmission map to decompose over tensor products. Therefore, to get the strongest amplification result possible with the results of this work, we introduce a more restricted form of assistance for more senders.
\begin{definition}[Pairwise-Assisted MNs]\label{def:pairwise-assisted-MNs}
    Let $k \in \mbb{N}$ and $\vec{d} \in \mbb{N}^{\times 2k}$.
    \begin{enumerate}
        \item $\mbf{C}^{P-E}(\vec{d})$ denotes the same setting as $\mbf{C}(\vec{d})$ except for all $i \in \{1,...,k\}$, sender $2i-1$ and $2i$ may share arbitrary entanglement. We call these \textbf{pairwise-entanglement-assisted classical MNs, (P-EACMNs).} 
        \item $\mbf{Q}^{P-E}(\vec{d})$ denotes the same setting as $\mbf{Q}(\vec{d})$ except for all $i \in \{1,...,k\}$, sender $2i-1$ and $2i$ may share arbitrary entanglement. We call these \textbf{pairwise-entanglement-assisted quantum MNs, (P-EAQMNs).} 
        \item $\mbf{C}^{P-N}(\vec{d})$ denotes the set of MNs where for all $i \in \{1,...,k\}$, sender $2i-1$ and $2i$ share a fully classical non-signaling box. We call these \textbf{pairwise fully classical non-signaling MNS (P-FCNSMNs)}. 
    \end{enumerate} 
\end{definition}

\begin{theorem}[Product Theorem]\label{thm:product-theorem}
    Let $k \in \mbb{N}$, $\vec{d} \in \mbb{N}^{2k}$, and $\{f_{i}: \cW_{2\cdot i-1} \times \cW_{2 \cdot i} \to \cZ_{i}\}_{i \in \{1,...,k\}}$ be a set of functions. Define the product function $f: \times_{i \in \{1,...,k\}} f_{i}$. Then
    \begin{align}
        P_{S}^{f}(\mbf{C}(\vec{d})) &= \prod_{i \in \{1,...,k\}} P_{S}^{f_{i}}(\mbf{C}(d_{2\cdot i -1},d_{2 \cdot i})) \\
        P_{S}^{f}(\mbf{C}^{P-E}(\vec{d})) &= \prod_{i \in \{1,...,k\}} P_{S}^{f_{i}}(\mbf{C}^{E}(d_{2\cdot i -1},d_{2 \cdot i})) \\
        P_{S}^{f}(\mbf{C}^{P-N}(\vec{d})) &= \prod_{i \in \{1,...,k\}} P_{S}^{f_{i}}(\mbf{C}^{N}(d_{2\cdot i -1},d_{2 \cdot i})) \ . 
    \end{align}
\end{theorem}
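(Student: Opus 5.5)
The plan is to derive all three equalities from Lemma \ref{lem:gen-structural-amplification}. Take the partition $\cP = (\{1,2\},\{3,4\},\dots,\{2k-1,2k\})$ of the $2k$ senders into $m=k$ consecutive pairs. For each of the three families it then suffices to show two things. First, the family is a set of $\cP$-independent SMP MNs (Definition \ref{eq:partition-independent-SMP-MN}) whose block sets $\msf{E}_{i}$ are exactly the encoding-then-transmission maps of the corresponding two-sender family. Second, the supremum of $p_{g}(Z_{i}\vert \wt{C}_{i})$ over $\msf{E}_{i}$ equals the two-sender success probability $P_{S}^{f_{i}}$ of that family.

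The second point is exactly Lemma \ref{lem:reduction-to-guessing-probability} applied to the two-sender family $\mbf{S}(d_{2i-1},d_{2i})$ and the function $f_{i}$. It requires that the input distribution of the pair be uniform on $\cW_{2i-1}\times\cW_{2i}$. This holds because the global input is uniform on $\cW^{2k}$, which factors as in the first step of the proof of Lemma \ref{lem:gen-structural-amplification}. Once both points are in place, the product formula of Lemma \ref{lem:gen-structural-amplification} gives each of the three claimed identities.

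Next I verify $\cP$-independence family by family.

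\textbf{Unassisted $\mbf{C}(\vec{d})$.} Every encoder is $\bigotimes_{j} (\Delta_{d_{j}}\circ h_{j})$. Grouping consecutive pairs gives a tensor product over blocks, and each block is an arbitrary element of the two-sender classical set. Conversely, any tuple of block strategies is realizable. So \eqref{eq:partition-independent-total-channels} holds with $\msf{E}_{i}$ the maps of $\mbf{C}(d_{2i-1},d_{2i})$.

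\textbf{$\mbf{C}^{P-E}(\vec{d})$.} By Definition \ref{def:pairwise-assisted-MNs}, the only shared resource is a state $\rho^{(i)}$ held by senders $2i-1,2i$, and the global resource state is $\bigotimes_{i}\rho^{(i)}$. The encoding-then-transmission map for inputs $w^{2k}$ is therefore a tensor product over blocks of $w_{2i-1}w_{2i}\mapsto (\Delta\otimes\Delta)\circ(\cM_{2i-1}\otimes\cM_{2i})(\cdot\otimes\rho^{(i)})$. Each factor is an arbitrary element of the two-sender $\mbf{C}^{E}(d_{2i-1},d_{2i})$ set. Since the entanglement dimension is unbounded, I will be careful to phrase everything with suprema; both sides of the lemma already use $\sup$.

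\textbf{$\mbf{C}^{P-N}(\vec{d})$.} The pairwise boxes compose with the classical channels exactly as in the proof of Proposition \ref{prop:no-need-for-SR}. As a result, each block is a non-signaling box with output alphabets of sizes $d_{2i-1},d_{2i}$, which is precisely the encoder set of $\mbf{C}^{N}(d_{2i-1},d_{2i})$.

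The main obstacle is making sure the SMP decomposition assumed in Lemma \ref{lem:reduction-to-guessing-probability} is legitimate. The receiver acts jointly on all $2k$ messages, and the lemma's left side optimizes a single joint decoder. This is handled by Lemma \ref{lem:gen-structural-amplification} itself: it optimizes the joint decoder through $p_{g}$ of the product state and uses multiplicativity (Item 2 of Proposition \ref{prop:guessing-prob-properties}). So the multiplicativity of guessing probability under tensor products is doing the real work, namely ruling out that a global decoder could correlate errors across blocks. I will also remark that no shared randomness across blocks is present in these definitions, which is what keeps $\msf{E}_{\text{tot}}$ a genuine product set rather than its convex hull. If one wanted to include such randomness, Proposition \ref{prop:no-need-for-SR} would be invoked first for the classical and non-signaling cases.
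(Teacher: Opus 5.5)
Your proposal is correct and follows the paper's proof. The paper likewise takes the pairwise partition and checks, for each of the three families, that the encoding-then-transmission maps factor over the pairs into exactly the corresponding two-sender encoders; it then concludes from Lemma \ref{lem:gen-structural-amplification}. The one difference is that you explicitly invoke Lemma \ref{lem:reduction-to-guessing-probability} on each block to identify its supremum with $P_{S}^{f_{i}}$ of the two-sender family, a step the paper leaves implicit.
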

\begin{proof}
    \sloppy In all three cases, the encoding-then-transmission channels decompose over the partition $\cP = (\{1,2\},\{2,3\},...,\{2k-1,k\})$ in such a manner as to induce the relevant encoding-then-transmission channels for the subfunctions and the bipartite simulation setting. To see this:
    \begin{itemize}
        \item For CMNs, this is because the encoding-then-transmission channel is of the form 
        $$\otimes_{i \in \{1,...,2k\}} (\Delta_{d_{i}} \circ \cE_{\cW_{i} \to \mbb{C}^{d_{i}}}) \ , $$
        where $\cE_{\cW_{i} \to \mbb{C}^{d_{i}}}$ is party $i$'s encoding map, which we remark without loss of generality is a classical-to-classical channel.
        \item For P-EACMNs, this is because the encoding-then-transmission channel is of the form 
        $$ \otimes_{i \in \{1,...,k\}} (\Delta_{d_{2i-1}} \otimes \Delta_{d_{2i}}) \circ ((\cM_{W_{2i-1}Q_{2i-1} \to \mbb{C}^{d_{2i-1}}} \otimes \cM_{W_{2_{i}}Q_{2_{i}} \to \mbb{C}^{d_{i}}})(\sigma_{Q_{2i-1}Q_{2i}})) \ ,  $$
        where $\cM_{W_{i}Q_{i} \to \mbb{C}^{d_{i}}}$ denotes party $i$'s measurement channel and $\sigma_{Q_{2i-1}Q_{2i}}$ is the state shared by parties $2i-1$ and $2i$ for $i \in \{1,...,k\}$.
        \item For P-FCNSMNs, this is because the encoding-then-transmission channel is of the form 
        \begin{align}
            \otimes_{i \in \{1,...,k\}} N_{W_{2i-1}W_{2i} \to C_{2i-1}C_{2i}} \ ,
        \end{align}
        where $N_{W_{2i-1}W_{2i} \to C_{2i-1}C_{2i}}$ is a fully classical non-signaling box between parties $2i-1$ and $2i$ where this is without loss of generality as non-signalling boxes are closed under local pre-processing so we have absorbed the parties' local encodings into the definition of the box.
    \end{itemize}
    Thus, the results follow from Lemma \ref{lem:gen-structural-amplification}.
\end{proof}

One can of course apply the above theorem in generality for conditionally bijective functions by combining it with the bounds in Theorems \ref{thm:unassisted-QMN-bound} and \ref{thm:non-signaling-box-bound}. Here, we present a slight abstraction of Theorem \ref{thm:main-text-amplification-theorem}. We provide the same result parameterized in terms of the dimension rather than the log of the dimension and in terms of arbitrary conditionally bijective function for consistency with the supplementary material. Theorem \ref{thm:main-text-amplification-theorem} thus follows by taking the log of the dimension and setting each function to $\oplus^{2}_{2^{n}}$.
\begin{corollary}
    Let $n,k \in \mbb{N}$. Let $\vec{d} = 2^{n} \cdot \vec{1}$. For all $i \in \{1,...,2k\}$, let $\vert \cW_{i} \vert = 2^{2n}$ , and $f_{i}: \cW_{2i-1} \times \cW_{2i} \to \cZ_{i}$ be a DCB defined by the multiplication of a TNC group of order $2^{2n}$ (Recall Definition \ref{def:TNC}). Define the product function $f \coloneq \bigtimes_{i \in \{1,...,k\}} f_{i}$. Then
    \begin{align}
        1 = P_{S}^{f}(\mbf{Q}^{P-E}(\vec{d})) > \frac{1}{2^{n \cdot k}} \geq  \max\{P_{S}^{f}(\mbf{Q}(\vec{d})),P_{S}^{f}(\mbf{C}^{P-N}(\vec{d}))\} \geq P_{S}^{f}(\mbf{C}^{P-E}(\vec{d})) \geq P_{S}^{f}(\mbf{C}(\vec{d})) \ . 
    \end{align}
    In other words, there is an exponential suppression in the success probability of simulating the function in the number of senders.
\end{corollary}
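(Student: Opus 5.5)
The plan is to establish the chain of (in)equalities one link at a time, factoring every term through the Product Theorem (Theorem \ref{thm:product-theorem}), or through Lemma \ref{lem:gen-structural-amplification} directly, and then bounding each pairwise factor with the two-sender results of Appendix \ref{app:establishing-communication-advantage}.

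For the leftmost equality, I would have each pair $(2i-1,2i)$ run Protocol \ref{prot:dense-network-coding-of-some-type} on its own maximally entangled state of local dimension $2^{n}$. The pair then computes $f_{i}$ with certainty by Theorem \ref{thm:dense-network-coding}. This is a legitimate strategy in $\mbf{Q}^{P-E}(\vec{d})$, and the receiver simply outputs the tuple of Bell-measurement outcomes, so the success probability is $1$. It cannot exceed $1$.

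The strict inequality $1 > 2^{-nk}$ holds because $n,k \geq 1$. The two rightmost inequalities come from containments. Pairwise entanglement can be ignored, which gives $\mbf{C}(\vec{d}) \subseteq \mbf{C}^{P-E}(\vec{d})$. Pairwise shared entanglement followed by classical messages produces a classical non-signaling box for each pair, by the same argument as in Proposition \ref{prop:MN-containments}, which gives $\mbf{C}^{P-E}(\vec{d}) \subseteq \mbf{C}^{P-N}(\vec{d})$. Taking suprema then gives $P_{S}^{f}(\mbf{C}) \leq P_{S}^{f}(\mbf{C}^{P-E}) \leq P_{S}^{f}(\mbf{C}^{P-N})$.

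For the middle inequality I treat the two terms of the maximum separately.

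For $\mbf{C}^{P-N}$, the Product Theorem gives $P_{S}^{f}(\mbf{C}^{P-N}(\vec{d})) = \prod_{i} P_{S}^{f_{i}}(\mbf{C}^{N}(2^{n},2^{n}))$. Each $f_{i}$ is the multiplication of a group, so it is DCB of order $2^{2n}$. Proposition \ref{prop:impossibility-of-weaker-resource-MNs} then bounds each factor by $2^{n}/2^{2n} = 2^{-n}$, and the product is at most $2^{-nk}$.

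For $\mbf{Q}(\vec{d})$, the Product Theorem does not list this case, so I would instead check that unassisted QMNs are $\cP$-independent SMP MNs for the pair partition. Every encoding-then-transmission map there is a tensor product of local classical-to-quantum channels $\bigotimes_{i}\cE_{i}$. Grouping these into pairs gives exactly the two-sender QMN sets, so Lemma \ref{lem:gen-structural-amplification} yields $P_{S}^{f}(\mbf{Q}(\vec{d})) = \prod_{i} P_{S}^{f_{i}}(\mbf{Q}(2^{n},2^{n}))$. Proposition \ref{prop:impossibility-of-weaker-resource-MNs} bounds each factor by $2^{-n}$ as before.

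I expect the main obstacle to be verifying carefully that the sets are $\cP$-independent in the sense of Definition \ref{eq:partition-independent-SMP-MN}, that is, that the set of total encoding maps equals the set of all tensor products of the pairwise sets. The delicate case is $\mbf{C}^{P-E}$ with its unbounded-dimensional entanglement and supremum rather than maximum. Fortunately that case is only used through the containment into $\mbf{C}^{P-N}$, so equality is needed only for $\mbf{C}^{P-N}$ and $\mbf{Q}$, where it is clear. Even there, only the inequality direction of the lemma is required, and that direction follows from multiplicativity of guessing probability together with bounding each factor by its supremum.
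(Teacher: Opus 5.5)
Your proposal is correct and follows essentially the same route as the paper. Achievability comes from running Protocol~\ref{prot:dense-network-coding-of-some-type} pairwise; the bounds come from factoring through Theorem~\ref{thm:product-theorem} (i.e.\ Lemma~\ref{lem:gen-structural-amplification}) and bounding each pairwise factor by $2^{n}/2^{2n}$ via Proposition~\ref{prop:impossibility-of-weaker-resource-MNs}, which is what underlies the paper's appeal to Theorem~\ref{thm:one-shot-info-advantage}. Your explicit check that $\mbf{Q}(\vec{d})$ is $\cP$-independent fills a step the paper leaves implicit, since Theorem~\ref{thm:product-theorem} as stated does not cover that family. The same holds for your use of the containments $\mbf{C}(\vec{d}) \subseteq \mbf{C}^{P-E}(\vec{d}) \subseteq \mbf{C}^{P-N}(\vec{d})$ to order the last terms.
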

\begin{proof} We explain the entanglement-assisted quantum achievability and then the limitations of the other network setups. \\

\noindent \textit{Achievability}: By the definition of $\vec{d}$, each sender shares a $\id_{2^{n}}$ channel with the receiver. By the definition of pairwise-entanglement-assisted QMNs, each pair of senders $2i-1$ and $2i$ can share arbitrary entanglement. As such, as follows from Theorem \ref{thm:dense-network-coding}, each appropriate pair of senders can use the encoding given in Protocol \ref{prot:dense-network-coding-of-some-type} and then the receiver can use the appropriate generalized Bell state measurement for each pair of signals independently to correctly determine the value $Z_{i}$. In this way, the receiver correctly guesses the value of $Z_{i}$ for all $i \in \{1,...,k\}$. Thus, $P_{S}^{f}(\mbf{Q}^{P-E}(\vec{d}))=1$. \\

\noindent \textit{Limitations of Other Setups:} By applying Theorem \ref{thm:product-theorem} followed by Theorem \ref{thm:one-shot-info-advantage} and using that our choices of dimension imply for all $i \in \{1,...,k\}$ $d_{2i-1} = d_{2i} = 2^{n}$, so $\frac{\min\{d_{2i-1},d_{2i}\}}{\vert \cW_{i} \vert} = \frac{1}{2^{n}}$ for all $i \in \{1,...,k\}$ completes the proof.
\end{proof}

\section{Measurement-Device-Independent Secret Key Growing}\label{app:MDI-QKG}
In this section we formalize the claims made in Section \ref{sec:MDI-QKG} of the main text.

\subsection{Formalization of Private Distributed Computation Claim}
 In the main text, we observe that one can use a trusted, ideal 2-sender EAQMN to compute $\oplus_{d}^{2}$ in such a manner that it is private from the receiver (Observation \ref{obs:private-distributed-computation}). To formalize this, we first define notions of private distributed computation.\footnote{For simplicity, we focus on 2-input functions, but the ideas straightforwardly generalize.}
\begin{definition}\label{def:priv-against-outsiders}
    Let $f:\cX \times \cY \to \cZ$ be a function. Let $\wt{\cE} \in \Channel(X \otimes Y, \hat{Z} \otimes E)$ be a classical-classical to classical-quantum channel. For input joint distribution $p_{XY}$, define the joint output state
    \begin{align*}
        \rho_{X'Y'Z\hat{Z}E} \coloneq \left( \id_{X'Y'Z} \otimes \wt{\cE} \right)\left(\sum_{x,y} p_{XY}(x,y) \dyad{x}_{X'} \otimes \dyad{y}_{Y'} \otimes \dyad{f(x,y)}_{Z} \otimes \dyad{x}_{X} \otimes \dyad{y}_{Y} \right) \ . 
    \end{align*}
    Let $\ve,\delta \in [0,1]$. We say the channel $\cE$ is a $(1-\delta)$-correct distributed computation of $f$ on input $p_{XY}$ if 
    \begin{align}
        \Pr_{\rho}[Z \neq \hat{Z}] \leq \delta \ . 
    \end{align}
    We say $\wt{\cE}$ is $\ve$-private-against-outsiders on input $p_{XY}$ if
    \begin{align}\label{eq:priv-against-outsider}
        \text{TD}(p_{X'Y'\hat{Z}E}, p_{X'Y'\hat{Z}} \otimes \rho_{E}) \leq \ve \ . 
    \end{align}
    We say $\wt{\cE}$ is a $(1-\delta)$-correct distributed computation of $f$ (resp.~$\ve$-private-against-outsiders) if the property holds for all input $p_{XY}$.
\end{definition}
%Justification of definitions
We remark \eqref{eq:priv-against-outsider} guarantees approximate privacy against an outsider, because if the adversary holds $E$, their side-information (the register $E$) is approximately independent of the inputs and output of the function. Note that as we define the channel $\cE$ as generating the register $E$, this definition presumes the inputs are initially private from the outsiders. 

If the receiver is also not trusted, then we provide the following definition, which captures both the inevitability that function $f$ may leak information about the inputs to the receiver and that the receiver need not use the expected measurement.
\begin{definition}\label{def:priv-against-receiver}
    Let $f:\cX \times \cY \to \cZ$ be a function. Let $\omega_{A'B'E}$ be a shared state, $\cE_{XA' \to A}$, $\cF_{YB' \to B}$ be encoding channels, and $\cN_{AB \to AB}$ be a noise channel. For input joint distribution $p_{XY}$, define the joint state
    \begin{align*}
        \rho_{X'Y'ZABE} \coloneq \left( \id_{X'Y'Z} \otimes \cN \circ \cE \otimes \cF \right)\left( p_{X'Y'ZXY} \otimes \omega \right) \ , 
    \end{align*}
    where $\sum_{x,y} p_{XY}(x,y) \dyad{x}_{X'} \otimes \dyad{y}_{Y'} \otimes \dyad{f(x,y)}_{Z} \otimes \dyad{x}_{X} \otimes \dyad{y}_{Y}$. We say $(\cE,\cF,\cN,\omega)$ is $\ve$-private-against-receiver on input $p_{XY}$ if
    \begin{align}\label{eq:priv-against-receiver}
         p_{g}(X'Y' \vert ABE)_{\rho} - p_{g}(X'Y' \vert Z)_{\rho} \leq \ve \ . 
    \end{align}
    We say the channel $(\cE,\cF,\cN,\omega)$ is $\ve$-private-against-receiver if it is  $\ve$-private-against-receiver on all input $p_{XY}$.
\end{definition}

We now give the formal variation of Observation \ref{obs:private-distributed-computation}.
\begin{proposition}
    Let $(G,\cdot)$ be a TNC group (Definition \ref{def:TNC}) and $f: \cX \times \cY \to \cZ$ be the function defined by the multiplication table of $(G,\cdot)$. Let $(\cE,\cF,\id \otimes \id,\dyad{\Phi})$ denote the channels and resource state defining the encoding and transmission in Protocol \ref{prot:dense-network-coding-of-some-type}. Let $\wt{\cE}_{XY \to ZE}$ be the channel induced by the receiver applying the honest measurement. Then $(\cE,\cF,\id \otimes \id,\dyad{\Phi})$ is $0$-private-against-receiver, $\wt{\cE}_{XY \to ZE}$ is $0$-private-against-outsiders, and $\wt{\cE}_{XY \to ZE}$ is $1$-correct.
\end{proposition}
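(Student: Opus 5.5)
The argument rests on one structural fact from the proof of Theorem \ref{thm:dense-network-coding}. Since the shared state $\ket{\Phi}$ is pure, any purification has the form $\dyad{\Phi}_{A'B'}\otimes\omega_{E}$, so $E$ is independent of everything else. By \eqref{eq:receivers-received-state}, on inputs $(g,h)$ the transmitted state is $\omega(g,h)\ket{\Phi_{g\cdot h}}$, and the phase drops out of the density operator. Hence the joint state is
\begin{align*}
\rho_{X'Y'ZABE}=\sum_{g,h}p_{XY}(g,h)\dyad{g}_{X'}\otimes\dyad{h}_{Y'}\otimes\dyad{g\cdot h}_{Z}\otimes\dyad{\Phi_{g\cdot h}}_{AB}\otimes\omega_{E},
\end{align*}
where $\omega_E$ is a fixed state on $E$, trivial in the ideal case. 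Each claim will be read off from this expression.

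\textbf{Correctness.} The honest receiver applies $\cM^{G}$. By the Born rule computation in the proof of Theorem \ref{thm:dense-network-coding}, $\hat Z=g\cdot h$ with certainty, so $\Pr[Z\neq\hat Z]=0$. This gives $1$-correctness with $\delta=0$ for every $p_{XY}$.

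\textbf{Privacy against outsiders.} The outsider's register $E$ factors out of the joint state above. After the honest measurement we have $\rho_{X'Y'\hat ZE}=p_{X'Y'\hat Z}\otimes\omega_E$ exactly, so the trace distance in \eqref{eq:priv-against-outsider} is zero.

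\textbf{Privacy against the receiver.} Here I mirror the proof of Proposition \ref{prop:DNC-does-not-transmit-inputs}.
\begin{itemize}
\item The isometry $V_{Z\to AB}=\sum_{g}\ket{\Phi_g}\bra{g}$ is well defined because Proposition \ref{prop:TNC-coding} makes $\{\ket{\Phi_g}\}$ orthonormal. Applying $V$ together with appending $\omega_E$ maps $\rho_{X'Y'Z}$ to $\rho_{X'Y'ABE}$.
\item Appending a product state is a channel that is reversed by partial trace. By data processing in both directions (Item 1 of Proposition \ref{prop:guessing-prob-properties}) and isometric invariance, $p_g(X'Y'\vert ABE)_\rho=p_g(X'Y'\vert Z)_\rho$.
\item The difference in \eqref{eq:priv-against-receiver} is therefore $0$ for every $p_{XY}$.
\end{itemize}

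The one point needing care is the quantifier in Definition \ref{def:priv-against-receiver}. It refers to a dishonest receiver that may use any measurement. Because $p_g(\cdot\vert ABE)$ already maximizes over all POVMs on $ABE$, bounding this guessing probability covers every receiver strategy. The main obstacle is the bookkeeping of the global phase $\omega(g,h)$ and of the product purification. Once both are handled, each claim is an exact identity rather than an inequality.
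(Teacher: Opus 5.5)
Your proposal is correct and follows essentially the same route as the paper: the extension of the pure state $\ket{\Phi}$ is a product with $E$; the isometry $V=\sum_g\ket{\Phi_g}\bra{g}$ identifies $AB$ with $Z$; and correctness and outsider privacy are read directly off the joint state. The only cosmetic difference is that you remove $E$ by data processing in both directions (appending a state and tracing it out), whereas the paper uses the independence of $E$ via multiplicativity of the guessing probability (Item 2 of Proposition \ref{prop:guessing-prob-properties}).
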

\begin{proof}
    We let the encoding channels $\cE,\cF$ be as implicitly defined in Protocol \ref{prot:dense-network-coding-of-some-type} so that in the notation of Definition \ref{def:priv-against-receiver} we are considering $(\cE,\cF,\id \otimes \id, \dyad{\Phi})$. In the case that the honest measurement $\cM^{G}$ (i.e. the one specified in Protocol \ref{prot:dense-network-coding-of-some-type}) is implemented, we can define the total map
    \begin{align}\label{eq:honest-map-for-private-comp}
        \wt{\cE}_{XY \to ZE} \coloneq (\cM^{G} \circ \cE \otimes \cF)(p_{X'Y'ZXY} \otimes \dyad{\Phi}) \ .
    \end{align} 
    The $1$-correctness of $\wt{\cE}$ then follows from Theorem \ref{thm:dense-network-coding}.
    
    We now prove privacy-against-receiver. As the shared state $\ket{\Phi^{+}}_{AB}$ is pure, it's purification is $\ket{\Phi^{+}}_{AB} \otimes \ket{\psi}_{E}$ for some state $\ket{\psi}_{E}$. The noiseless identity channels preserve this, i.e. after the transmission of the encoded state, the state is $\dyad{\Phi_{g \cdot h}}_{AB} \otimes \dyad{\psi}_{E}$ as follows from \eqref{eq:receivers-received-state}. Using that $\vert \cX \vert = \vert \cY \vert = \vert \cZ \vert = \vert G \vert$, it follows the joint state after encoding and transmission is
    \begin{align}
        \rho_{X'Y'ZABE} = \sum_{g,h \in G} p_{XY}(g,h) \dyad{g}_{X'} \otimes \dyad{h}_{Y'} \otimes \dyad{g\cdot h}_{Z} \otimes \dyad{\Phi_{g\cdot h}}_{AB} \otimes \dyad{\psi}_{E} \ . 
    \end{align}
    Note that $V \coloneq \sum_{g \in G} \ket{\Phi_{g}}\dyad{g}$ is an isometry by the orthornormality of the states (Proposition \ref{prop:TNC-coding}) so that
    \begin{align}
        \rho_{X'Y'ZABE} = \sum_{g,h \in G} p_{XY}(g,h) \dyad{g}_{X'} \otimes \dyad{h}_{Y'} \otimes V\dyad{g\cdot h}_{Z} V^{\dagger} \otimes \dyad{\psi}_{E} \ . 
    \end{align}
    It follows
    \begin{align}\label{eq:proof-of-priv-against-rec}
        p_{g}(X'Y'\vert ABE)_{\rho} = p_{g}(X'Y' \vert ZE)_{\rho} = p_{g}(X'Y' \vert Z)_{\rho} \ ,
    \end{align}
    where the first equality uses \eqref{eq:guess-prob-and-min-ent} and the isometric invariance of min-entropy \cite{Tomamichel-Book} and the second equality uses the independence of $E$ and Item 2 of Proposition \ref{prop:guessing-prob-properties}. \eqref{eq:proof-of-priv-against-rec} proves $(\cE,\cF,\id \otimes \id, \dyad{\Phi})$ is $0$-private-against-receiver by Definition \ref{def:priv-against-receiver} and that we were agnostic about the distribution $p_{XY}$.

    Finally, when the honest measurement $\cM^{G}$ is applied,
    \begin{align}
        \rho_{X'Y'\hat{Z}E} = \sum_{g,h \in G} p_{XY}(g,h) \dyad{g}_{X'} \otimes \dyad{h}_{Y'} \otimes \dyad{g\cdot h}_{\hat{Z}} \otimes \dyad{\psi}_{E} \ ,
    \end{align}
    so $\rho_{X'Y'\hat{Z}E} = p_{X'Y'\hat{Z}} \otimes \rho_{E}$ for any input distribution $p_{XY}$, and thus $\wt{\cE}_{XY \to ZE}$ is $0$-private-against-outsiders by Definition \ref{def:priv-against-outsiders}. 
\end{proof}
\begin{remark} The amount of information $f$ leaks about the inputs is dependent on the distribution over the inputs. An example of this is in fact formalized in Proposition \ref{prop:DNC-does-not-transmit-inputs}. Because of this dependence on the input distribution, we do not introduce a notion of how much information about the inputs a function $f$ leaks.
\end{remark}

\subsection{Security of MDI Secret Key Growing Scheme}
We now prove the security of the MDI secret key growing scheme presented in the main text. We introduce the definition of security, then prove a one-shot secure key length bound, then use this to derive the asymptotic rate given in the main text, and finally present further details on the example plotted in the main text.

\subsubsection{Security Definition and Preliminaries} While the MDI quantum key growing scheme is presented in the main text (Protocol \ref{prot:main-text-MDI-Q-Key-Scheme}), here we present a slightly more detailed version that makes assumptions more explicit.
\begin{protocoldesc}[H]
\caption{MDI Quantum Key Growing Scheme}\label{prot:app-MDI-Q-Key-Scheme}
		\textbf{Inputs:} \\
		\hspace{0.5cm}
		\begin{tabular}{ l l}
			$n \in \mathbb{N}$ & Number of rounds  \\
            $\rho^{0}_{A^{n}B^{n}}$ & Initial Quantum State \\
            $r$ & Bits of the error correction (EC) syndrome \\
            $t$ & Bits of EC verification 2-universal hash
		\end{tabular}
		
		\vspace{0.5cm}
		
		\textbf{Protocol:} 
		\begin{enumerate}
            \item \textbf{Growing Phase:}
            \begin{enumerate}[label=\alph*.]
			     \item For $i \in \{1,...,n\}$, Alice draws uniformly random bits $(x_{2i-1},x_{2i}) \leftarrow \{0,1\}^{\times 2}$, applies qubit discrete Weyl operator $W_{x_{2i-1},x_{2i}}$ to the $A_{i}$ system, and sends it to the receiver over an insecure quantum channel.
                \item For $i \in \{1,...,n\}$, Bob draws uniformly random bits $(y_{2i-1},y_{2i}) \leftarrow \{0,1\}^{\times 2}$, applies qubit discrete Weyl operator $W_{y_{2i-1},y_{2i}}$ to the $B_{i}$ system, and sends it to the receiver over an insecure quantum channel.
                \item Either Alice receives $\hat{\mbf{z}} \in \{0,1\}^{\times 2n}$ over an insecure classical channel of pre-defined computational basis or she aborts. She computes the bitwise XOR of what she received and her local randomness, $\mbf{w} := \hat{\mbf{z}} \oplus^{2n}_{n} \mbf{x}$.
            \end{enumerate}
            \item \textbf{Post-Processing Phase:} Using an authenticated classical channel shared by Alice and Bob,
            \begin{enumerate}[label=\alph*.]
                \item \textbf{Error Correction:} Alice and Bob apply some error correcting code so that Bob now holds $\hat{\mbf{w}} \in \{0,1\}^{2n}$.
                \item \textbf{Error Verification:} Alice and Bob apply a function from a 2-universal hash function to their $\mbf{w}$ and $\hat{\mbf{w}}$ respectively. Bob communicates the output to Alice. If $f(\mbf{w}) \neq f(\hat{\mbf{w}})$ , they abort.
                \item \textbf{Privacy Amplification:} Alice chooses $f$ at random from a two-universal hash function family $\cF$ and announces it to Bob. Alice and Bob apply $f$ to $\mbf{w}$ and $\hat{\mbf{w}}$ respectively, resulting in $K_{A}, K_{B}$ respectively, which is their key.
            \end{enumerate}
		\end{enumerate}
	\end{protocoldesc}

We define the security conditions in the following manner, which are identical to the standard security definitions in quantum key distribution (c.f.~\cite{Portmann_2022}).
\begin{definition}\label{def:MDI-Q-Key-Scheme-security} Let $\ve_{s},\ve_{c} \in [0,1]$. Let $\Omega$ denote the event that the protocol does not abort. 
\begin{enumerate}
    \item Protocol \ref{prot:main-text-MDI-Q-Key-Scheme} is $\ve_{s}$-sound on input $\rho^{0}_{A^{n}_{1}B^{n}_{1}}$ if the output of the protocol satisfies
    \begin{align}\label{eq:MDI-grow-soundness}
        \frac{1}{2}\Pr[\Omega]\Vert \rho_{K_{A}K_{B}E'}^{\Omega} - \chi_{K_{A}K_{B}} \otimes \rho^{\Omega}_{E'} \Vert_{1} \leq \ve_{s} \ ,
    \end{align}
    where $\chi_{K_{A}K_{B}} \coloneq \frac{1}{\vert \cK \vert}  \sum_{k \in \cK} \dyad{k} \otimes \dyad{k}$ is the perfectly correlated state, $\rho^{\Omega}$ denotes the state conditioned on not aborting, and $E'$ denotes all the side-information the eavesdropper obtains during the protocol, including starting with the purifying space of $\rho^{0}$.
    \item We say Protocol \ref{prot:main-text-MDI-Q-Key-Scheme} is $\ve_{c}$-complete on $\rho^{0}_{A^{n}B^{n}E_{0}}$ and honest measurement $\cM_{AB \to Z}$ if the probability of aborting is at most $1-\ve_{c}$ when the receiver performs $\cM_{AB \to Z}$ per round.\footnote{Specifying the map $\cM$ allows one to take into account there may be noise in the actual honest implementation.}
\end{enumerate}
\end{definition}
\noindent As these definitions are no different from standard quantum key distribution, we refer the reader to \cite{Portmann_2022} for the justification of these security definitions.

Next, we recall some further standard definitions. Further information on these quantities may be found in \cite{Tomamichel-Book}. For subnormalized states $\rho,\sigma \in \Density_{\leq}(A)$, the purified distance is
\begin{align}
  P(\rho,\sigma) \coloneq \sqrt{1-F_{\ast}(\rho,\sigma)} \ ,   
\end{align}
where 
\begin{align*}
    F_{\ast}(\rho,\sigma)  \coloneq \left( \Tr[\sqrt{\sqrt{\rho}\sigma\sqrt{\rho}}] + \sqrt{(1-\Tr[\rho])(1-\Tr[\sigma])} \right)^{2} \ 
\end{align*}
is the generalized fidelity.

For subnormalized state $\rho_{AB} \in \Density_{\leq}(AB)$ and parameter $\ve \in (0,\Tr[\rho])$, the (purified-distance-smoothed) smooth min-entropy of $\rho_{AB}$ is 
\begin{align}\label{eq:purified-smooth-distance-smoothed-min-entropy}
    H^{\ve}_{\min}(A \vert B)_{\rho} \coloneq \max_{\wt{\rho} \in \Density_{\leq}(AB): P(\wt{\rho},\rho)\leq \ve} H_{\min}(A \vert B)_{\wt{\rho}} \ . 
\end{align}
For our purposes, the reason the smooth min-entropy is useful is it roughly captures the ability to extract private randomness from $\rho_{XE}$ using 2-universal hash functions (a.k.a.~the leftover hashing lemma) as the following captures.
\begin{proposition}(Straightforward Corollary of \cite[Proposition 9]{Tomamichel-2017a})\label{prop:LHL}
    Let $\rho_{XE} \in \Density_{\leq}(XE)$. Let $\ve_{\text{sec}} \in (0,1)$ and $\ve \in (0, \min\{\sqrt{\Tr(\rho)}, \ve_{\text{sec}} \})$. Then, using a family of two-universal hash functions from $\cX \to \{0,1\}^{\ell} \eqqcolon Z$ that are chosen from uniformly, it is the case
    \begin{align}\label{eq:decoupled-sec}
        \frac{1}{2}\Vert \rho_{f(X)FE} - \pi_{Z} \otimes \rho_{FE} \Vert_{1} \leq \ve_{\text{sec}} \ , 
    \end{align}
    so long as
    \begin{align}
        H^{\ve}_{\min}(X|E)_{\rho} - 2\log(\frac{1}{4(\ve_{\text{sec}}-\ve)}) \geq \ell \ .
    \end{align}
\end{proposition}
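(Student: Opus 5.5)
The plan is to obtain this as a direct specialization of Proposition 9 of \cite{Tomamichel-2017a}, the leftover hashing lemma stated for purified-distance-smoothed min-entropy. That result says roughly the following. For a subnormalized classical-quantum state $\rho_{XE}$ and a uniformly chosen function from a two-universal family $\cF$ mapping into $\ell$ bits, the trace distance between $\rho_{f(X)FE}$ and $\pi_{Z}\otimes\rho_{FE}$ is bounded by
\begin{align}
2\ve + \tfrac{1}{2}\,2^{-\frac{1}{2}(H^{\ve}_{\min}(X|E)_{\rho}-\ell)} .
\end{align}
This holds for any smoothing parameter $\ve$ with $0<\ve<\sqrt{\Tr\rho}$; the precise constants in front of $\ve$ and of the exponential term need to be checked against the cited statement.

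First, I will state this bound in the normalization used here, with $\frac{1}{2}\Vert\cdot\Vert_{1}$ and $F$ held in a classical register. Then I will split $\ve_{\text{sec}}$ into a smoothing part and a hashing part. I want to choose $\ve$ so that the smoothing contribution is absorbed by $\ve$, leaving the exponential term to be bounded by $\ve_{\text{sec}}-\ve$. This is why the hypothesis requires $\ve<\ve_{\text{sec}}$. The condition $\ve<\sqrt{\Tr\rho}$ is the one needed for the smoothing ball in \eqref{eq:purified-smooth-distance-smoothed-min-entropy} to be well defined as used in \cite{Tomamichel-2017a}.

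Next, I solve
\begin{align}
\tfrac12\, 2^{-\frac12(H^{\ve}_{\min}-\ell)} \le 4(\ve_{\text{sec}}-\ve)\cdot(\text{const})
\end{align}
for $\ell$. Taking logarithms gives
\begin{align}
\ell \le H^{\ve}_{\min}(X|E)_\rho - 2\log\frac{1}{4(\ve_{\text{sec}}-\ve)} ,
\end{align}
which matches the stated condition. Under that condition the total distance is at most $\ve_{\text{sec}}$, which is \eqref{eq:decoupled-sec}.

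The main obstacle is the bookkeeping of constants. Specifically, I have to check that the factor $4$ and the coefficient on the smoothing term in \cite{Tomamichel-2017a}, which may appear there as $2\ve$ or $\ve$ depending on whether fidelity or trace distance is used, combine to exactly $\ve_{\text{sec}}-\ve$ as stated. If the cited proposition instead bounds the distance by $2\ve + 2^{-\frac12(H-\ell)-1}$, I will redefine the split accordingly, using the fact that trace distance is at most purified distance. I expect the statement as written to be a rescaling of that form. No other steps from the paper are needed.
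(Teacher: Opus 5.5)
You are following the only sensible route, and it is the one the paper intends (it gives nothing beyond the citation). The step you postponed as ``bookkeeping'' is exactly where the argument fails, and no rescaling rescues it. Take the cited bound in the standard form you recalled:
\[
\tfrac12\Vert \rho_{f(X)FE}-\pi_{Z}\otimes\rho_{FE}\Vert_{1}\le 2\ve+\tfrac12\,2^{-\frac12\left(H^{\ve}_{\min}(X|E)_{\rho}-\ell\right)} .
\]
The smoothing costs $2\ve$, since there is one triangle inequality on each side of the smoothed state. So the budget left for the hashing term is $\ve_{\text{sec}}-2\ve$, not $\ve_{\text{sec}}-\ve$.

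Worse, the stated hypothesis $\ell\le H^{\ve}_{\min}-2\log\frac{1}{4(\ve_{\text{sec}}-\ve)}$ is equivalent to $2^{-\frac12(H^{\ve}_{\min}-\ell)}\le 4(\ve_{\text{sec}}-\ve)$. This bounds the hashing term only by $2(\ve_{\text{sec}}-\ve)$, and the total by $2\ve+2(\ve_{\text{sec}}-\ve)=2\ve_{\text{sec}}$. Your step ``solve $\frac12 2^{-\frac12(H-\ell)}\le 4(\ve_{\text{sec}}-\ve)\cdot(\text{const})$'' conceals exactly this factor of $2$. No choice of constant makes the stated condition yield $\ve_{\text{sec}}$. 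Your fallback, that trace distance is bounded by purified distance, is already spent inside the $2\ve$ term.

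In fact the statement as written is false, so no argument can close the gap. Take $E$ trivial, $\rho_{X}=\pi_{X}$ with $\vert\cX\vert=4$, and $\ell=2$. Let $\cF$ be the family of all functions $\cX\to\{0,1\}^{2}$, which is two-universal, and set $\ve=0.01$, $\ve_{\text{sec}}=0.26$. Then $4(\ve_{\text{sec}}-\ve)=1$, so the hypothesis reads $H^{\ve}_{\min}(X)_{\rho}\ge 2$. This holds because $H^{\ve}_{\min}\ge H_{\min}=2$.

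Since $F$ is classical, the left-hand side of \eqref{eq:decoupled-sec} is the average over $f$ of the total variation distance between the law of $f(X)$ and uniform. That average is the expected fraction of output values with empty preimage, namely $(3/4)^{4}\approx 0.316>0.26$.

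What \cite[Proposition 9]{Tomamichel-2017a} actually yields is the following. For $\ve\in(0,\min\{\sqrt{\Tr(\rho)},\ve_{\text{sec}}/2\})$, \eqref{eq:decoupled-sec} holds whenever $\ell\le H^{\ve}_{\min}(X|E)_{\rho}-2\log\frac{1}{2(\ve_{\text{sec}}-2\ve)}$. Alternatively, keeping the paper's hypothesis, one gets only the weaker bound $2\ve_{\text{sec}}$. The proposition should be corrected rather than proved as stated, and so should the constants it feeds into Lemma~\ref{lem:one-shot-bound}.
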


\subsubsection{One-Shot MDI QKG Security Proof}
We now provide the security proof for one-shot MDI QKG. We remark that other than some of the reasoning of the steps in the protocol and using Lemma \ref{lem:cond-Y-func-entropy}, the proof uses standard tools in proving the security of QKD (see e.g.~\cite{Tomamichel-2017a}).
\begin{lemma}\label{lem:one-shot-bound}
    Consider Protocol \ref{prot:app-MDI-Q-Key-Scheme} with input $\rho^{0}_{A^{n}B^{n}}$. Let $\ve_{\text{sec}} \in (0,1)$, $\ve \in (0,\ve_{\text{sec}})$. One, obtains a $(2^{-t}+\sqrt{\ve_{\text{sec}}})$-sound key so long as the protocol is defined to hash to a length of key $\ell$ where
    \begin{equation}
    \begin{aligned}
        \ell & \leq H^{\ve}_{\min}(X^{2n}_{1}\vert A^{n}_{1}B^{n}_{1}E_{0})_{\rho^{1}} - t - r - 2\log(\frac{1}{4(\ve_{\text{sec}}-\ve)})\ ,
    \end{aligned}
    \end{equation}
    where $E_{0}$ is an arbitrary purifying system of $\rho^{0}_{A^{n}B{n}}$ and $\rho^{1}$ is the state after Alice and Bob apply their encoding to $\rho^{0}$ (see \eqref{eq:rho-1-prot-process}).
\end{lemma}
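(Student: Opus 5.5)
The plan is the standard QKD-style security argument: leftover hashing (Proposition \ref{prop:LHL}) applied to Alice's raw string $\mbf{w}$ against all of Eve's side information, with the leakage from error correction and verification removed by the classical chain rule. The one new ingredient is Lemma \ref{lem:cond-Y-func-entropy}, which turns the entropy of $\mbf{w}$ into the entropy of Alice's local randomness $X^{2n}_{1}$.

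\textbf{Setting up the states.} Let $\rho^{1}$ be the joint state after Alice and Bob encode:
\[\rho^{1} = \sum_{\mbf{x},\mbf{y}} 2^{-4n}\,\dyad{\mbf{x}}\otimes\dyad{\mbf{y}}\otimes (W_{\mbf{x}}\otimes W_{\mbf{y}})\,\rho^{0}_{A^{n}B^{n}E_{0}}\,(W_{\mbf{x}}\otimes W_{\mbf{y}})^{\dagger}.\]
In the worst case the adversary controls the receiver and the insecure channels. It therefore holds $A^{n}_{1}B^{n}_{1}E_{0}$ and produces the announcement $\hat{\mbf{z}}$ by some channel acting on these systems. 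After the protocol, Eve's total side information $E'$ consists of $\hat{\mbf{z}}$, the purification and post-measurement systems, the error-correction syndrome ($r$ bits), the verification hash ($t$ bits), and the choice of hash functions.

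\textbf{Step 1: reduce $\mbf{w}$ to $\mbf{x}$.} Since $\mbf{w} = \hat{\mbf{z}} \oplus \mbf{x}$, the function $(\mbf{x},\hat{\mbf{z}})\mapsto \mbf{w}$ is $\hat{Z}$-conditionally bijective. Because $\hat{Z}$ is classical and Eve retains a copy of it, Lemma \ref{lem:cond-Y-func-entropy} gives
\[H^{\ve}_{\min}(W\vert \hat{Z}E)=H^{\ve}_{\min}(X\vert \hat{Z}E),\]
where $E$ is Eve's quantum register after she generates $\hat{Z}$. By data processing, with $\hat{Z}E$ obtained from $A^{n}_{1}B^{n}_{1}E_{0}$ by a channel, this is at least $H^{\ve}_{\min}(X^{2n}_{1}\vert A^{n}_{1}B^{n}_{1}E_{0})_{\rho^{1}}$. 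The abort/no-abort branching from step 1c can be absorbed into Eve's map.

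\textbf{Step 2: account for leakage.} The syndrome and the verification hash are classical strings of $r$ and $t$ bits. The chain rule for smooth min-entropy with classical side information (the smooth analogue of Item 5 of Proposition \ref{prop:guessing-prob-properties}, \cite[Lemma 6.18]{Tomamichel-Book}) therefore costs at most $r+t$.

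\textbf{Step 3: conditioning on non-abort.} Conditioning on the non-abort event $\Omega$ is handled in the usual way: work with the subnormalized state $\Pr[\Omega]\rho^{\Omega}$, whose smooth min-entropy is no smaller than that of the unconditioned state.

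\textbf{Step 4: privacy amplification.} Apply Proposition \ref{prop:LHL} to Alice's $\mbf{w}$ against $E'$. This gives $\frac{1}{2}\Pr[\Omega]\Vert \rho_{K_{A}FE'} - \pi \otimes \rho_{FE'}\Vert_{1}\le \ve_{\text{sec}}$ whenever $\ell$ obeys the stated bound.

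\textbf{Step 5: correctness and soundness.} Correctness follows from 2-universality of the verification hash: $\Pr[K_A\ne K_B \wedge \Omega]\le 2^{-t}$. Combining this with secrecy by the triangle inequality yields soundness \eqref{eq:MDI-grow-soundness}. The $\sqrt{\ve_{\text{sec}}}$ rather than $\ve_{\text{sec}}$ presumably comes from the conversion between the purified and trace distances used in the smoothing or conditioning. I would simply take the weaker bound $\ve_{\text{sec}}\le\sqrt{\ve_{\text{sec}}}$.

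\textbf{Main obstacle.} The hardest part is Step 1. Lemma \ref{lem:cond-Y-func-entropy} requires the conditioning variable $\hat{Z}$ to be a classical register present in the conditioning system, even though $\hat{Z}$ is adversarially generated from the quantum systems. I would handle this by modelling Eve's attack as an isometry that produces a classical $\hat{Z}$ together with a copy kept by Eve, and then applying data processing to return to $A^{n}_{1}B^{n}_{1}E_{0}$.
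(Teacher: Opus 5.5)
Your Steps 1, 2, 4 and 5 match the paper's proof nearly step for step: Eve is modelled as an isometry followed by dephasing of the announcement, Lemma~\ref{lem:cond-Y-func-entropy} is applied with $\hat{Z}$ as the $Y$ register, the classical chain rule charges $r+t$, data processing returns to $\rho^{1}$, and leftover hashing is combined with the $2^{-t}$ verification bound. The gap is in Step~3. The inequality
\[
H^{\ve}_{\min}(W\vert CE)_{\rho_{\wedge\Omega}}\geq H^{\ve}_{\min}(W\vert CE)_{\rho}
\]
for the subnormalized conditioned state is \cite[Lemma 10]{Tomamichel-2017a}, and it holds only when $\ve<\sqrt{\Pr[\Omega]}$. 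Proposition~\ref{prop:LHL} has the same hypothesis, since it needs $\ve<\sqrt{\Tr\rho}=\sqrt{\Pr[\Omega]}$. Soundness must hold against every attack, including attacks that make acceptance very unlikely. When $\Pr[\Omega]\leq\ve^{2}$, the purified-distance ball around $\rho_{\wedge\Omega}$ contains the zero operator. That is why the smooth min-entropy is only defined, and these lemmas only hold, below this threshold. In that regime your chain cannot be run, so ``handled in the usual way'' skips a necessary case distinction on $\Pr[\Omega]$.

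The missing step is short. If $\Pr[\Omega]$ is below some threshold, the left-hand side of \eqref{eq:MDI-grow-soundness} is at most $\Pr[\Omega]$, because the trace distance between normalized states is at most one. Above the threshold, the threshold must guarantee $\sqrt{\Pr[\Omega]}>\ve$, and then your argument goes through. This case split is exactly where the paper's $\sqrt{\ve_{\text{sec}}}$ comes from. The paper uses the threshold $\sqrt{\ve_{\text{sec}}}$ and gets $\max\{\sqrt{\ve_{\text{sec}}},\,2^{-t}+\ve_{\text{sec}}\}\leq 2^{-t}+\sqrt{\ve_{\text{sec}}}$. Your guess that it comes from converting purified distance to trace distance is wrong: that loss is already absorbed in the $\ve_{\text{sec}}-\ve$ term of Proposition~\ref{prop:LHL}. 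Once you add the split, your plan to prove $2^{-t}+\ve_{\text{sec}}$ and then weaken it is valid. The threshold $\ve_{\text{sec}}$ suffices, because $\Pr[\Omega]>\ve_{\text{sec}}$ gives $\sqrt{\Pr[\Omega]}>\sqrt{\ve_{\text{sec}}}>\ve_{\text{sec}}>\ve$.
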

\begin{proof}
    By the standard arguments for QKD, e.g. \cite[Lemma 1]{Tomamichel-2017a}, to guarantee \eqref{eq:MDI-grow-soundness}, it suffices to guarantee that the probability that the keys do not agree and error correction and error verification fail is bounded by $\ve_{\text{ec}}$ and that \eqref{eq:decoupled-sec} is satisfied when the event that error verification passes holds. That $\ve_{\text{ec}} \leq 2^{-t}$ follows from the choice of error verification via 2-universal hash functions in Protocol \ref{prot:app-MDI-Q-Key-Scheme} and \cite[Theorem 2]{Tomamichel-2017a}. Thus, we focus on guaranteeing \eqref{eq:decoupled-sec}. To this end it suffices to bound the smooth min-entropy of the bitstring to which Alice applies privacy amplification in Protocol \ref{prot:app-MDI-Q-Key-Scheme}, $W^{2n} \coloneq \hat{Z} \oplus^{2n}_{2} X^{2n}_{1}$, conditioned on all information available to Eve the end of Step 2.b. of Protocol \ref{prot:main-text-MDI-Q-Key-Scheme} and joint with error correction and error verification passing. This is because then we may apply Proposition \ref{prop:LHL} to said state.

    For clarity, we begin with an explanation of the global state as a function of the quantum channels applied during Protocol \ref{prot:app-MDI-Q-Key-Scheme}. In Steps 1.a. and 1.b. of Protocol \ref{prot:app-MDI-Q-Key-Scheme}, Alice and Bob each generate local randomness and use this randomness to encode their local systems using qubit discrete Weyl operators. This process can be represented by the map $\cE^{1} = \cE^{\otimes n}$ where $\cE$ is defined via Kraus operators $\{W_{x_{1},x_{2}} \otimes W_{y_{1},y_{2}} \otimes \dyad{x_{1},x_{2}} \otimes \dyad{y_{1},y_{2}}\}_{(x_{1},x_{2}), (y_{1},y_{2}) \in \{0,1\}^{\times 2}}$. This results in the the global encoded state $\rho^{1}$ being
    \begin{align}
        \rho^{1}_{X^{2n}_{1}Y^{2n}_{1}A^{n}_{1}B^{n}_{1}E_{0}} \coloneq \cE^{1}(\pi_{X^{2n}_{1}} \otimes \pi_{Y^{2n}_{1}} \otimes \rho^{0}_{A^{n}_{1}B^{n}_{1}E_{0}}) \label{eq:rho-1-prot-process} \ .
    \end{align}
    Without loss of generality, once the $A^{n}_{1}B^{n}_{1}$ systems are released to the adversary, the adversary performs an isometric channel $\cV_{A^{n}_{1}B^{n}_{1}E_{0} \to Z^{2n}_{1}E}$, so $\rho^{2}_{X^{2n}_{1}Y^{2n}_{1}Z^{2n}_{1}E} := \cV(\rho^{1})$.\footnote{This is without loss of generality as Alice will abort if she does not receive a $\hat{z}$ and any quantum channel that generates a $Z^{2n}$ register can be expressed as such an isometry by letting all other information be held by Eve.} The adversary then sends the $Z^{2n}_{1}$ portion of the resulting state to Alice over a classical channel that is insecure from Alice's perspective. This classical channel completely dephases the $Z^{2n}_{1}$ register in the pre-specified computational basis, so the global state is $\rho^{3}_{X^{2n}_{1}Y^{2n}_{1}\hat{Z}^{2n}_{1}E} = \Delta^{\otimes 2n}_{2}(\rho^{2})$. To make clear the register has changed, we label the outcoming registers by $\hat{Z}_{i}$ rather than $Z_{i}$. Alice then computes $\mbf{w}$ which is represented by a channel $\cF$ so that $\rho^{4}_{X^{2n}_{1}Y^{2n}_{1}\hat{Z}^{2n}_{1}W^{2n}_{1}F_{EC}E} = \cF_{X^{2n}_{1}\hat{Z}^{2n}_{1} \to X^{2n}_{1}W^{2n}_{1}\hat{Z}^{2n}_{1}}(\rho^{3})$. Note this channel does not alter the $X^{2n}_{1}$ and $\hat{Z}^{2n}_{1}$ registers. Finally, Alice and Bob perform error correction and error verification, which results in a transcript register $C$ and a random variable $F_{EC} \in \{\checkmark,\times\}$ which stores if error correction failed (denoted $\times$) or not (denoted $\checkmark$). In total $\rho^{5}_{X^{2n}_{1}Y^{2n}_{1}\hat{Z}^{2n}_{1}W^{2n}_{1}\hat{W}^{2n}_{1}CF_{EC}E}  = \cE^{\text{post-proc}}_{W^{2n}_{1}Y^{2n}_{1} \to W^{2n}_{1}\hat{W}^{2n}_{1}CF_{EC}}(\rho^{4})$. Note that Alice's register $W^{2n}_{1}$ does not change under error correction. 
    
    With this account of the joint state, we can bound the smooth min-entropy of the state conditioned on passing error correction and verification, i.e. the conditional state $\rho^{5}_{X^{2n}_{1}Y^{2n}_{1}\hat{Z}^{2n}_{1}W^{2n}_{1}\hat{W}^{2n}_{1}CE \wedge F_{EC} = \checkmark}$ where we are using the notation for the joint state with an event from \cite{Tomamichel-2017a,tan2024prospectsdeviceindependentquantumkey}. If $\Pr_{\rho}[F_{EC} = \checkmark] \leq \sqrt{\ve_{\sec}}$, then if privacy amplification is applied to $\rho^{5}_{\wedge F_{EC} = \checkmark}$, the resulting state will satisfy \eqref{eq:MDI-grow-soundness}. This is simply because $\Pr[\Omega] = \Pr[F_{EC} = \checkmark]$, so under these assumptions the soundness is bounded by $\Pr[\Omega] < \sqrt{\ve_{\sec}} < 2^{-t} + \sqrt{\ve_{\sec}}$ as promised. Thus, we only need to analyze the case when $\Pr_{\rho}[F_{EC} = \checkmark] \geq \sqrt{\ve_{\sec}}$. Importantly, this means $\sqrt{\Pr_{\rho}[F_{EC} = \checkmark]} \geq \sqrt{\ve_{\sec}} > \ve$, so that we can apply \cite[Lemma 10]{Tomamichel-2017a}. With these considerations handled, we have
    \begin{align}
        H^{\ve}_{\min}(W^{2n}_{1} \vert CE)_{\rho^{5}_{\wedge F_{EC} = \checkmark} }
        &\geq H^{\ve}_{\min}(W^{2n}_{1} \vert CE)_{\rho^{5}} \\ 
        &\geq 
        H^{\ve}_{\min}(W^{2n}_{1} \vert \hat{Z}^{2n}_{1}CE)_{\rho^{5}} \\
        &\geq H^{\ve}_{\min}(W^{2n}_{1} \vert \hat{Z}^{2n}_{1} E)_{\rho^{4}} - \log(C) \\ 
        &= H^{\ve}_{\min}(X^{2n}_{1} \oplus^{2n}_{2} \hat{Z}^{2n}_{1} \vert \hat{Z}^{2n}_{1} E)_{\rho^{4}} - \log(C) \\
        &= H^{\ve}_{\min}(X^{2n}_{1} \vert \hat{Z}^{2n}_{1} E)_{\rho^{4}} - \log(C) \\
        &= H^{\ve}_{\min}(X^{2n}_{1} \vert \hat{Z}^{2n}_{1} E)_{\rho^{3}} - \log(C) \\
        &\geq H^{\ve}_{\min}(X^{2n}_{1} \vert Z^{2n}_{1} E)_{\rho^{2}} - \log(C) \\
        &= H^{\ve}_{\min}(X^{2n}_{1} \vert A^{n}_{1}B^{n}_{1}E_{0})_{\rho^{1}} - \log(C) \ ,
    \end{align}
     where the first inequality is \cite[Lemma 10]{Tomamichel-2017a}, the second is data-processing \cite[Theorem 6.19]{Tomamichel-Book}, the third is a chain rule for classical registers \cite[Lemma 6.18]{Tomamichel-Book}, the first equality is the definition of $\mbf{w}$, which applies as Alice does not change her string under error correction, the second equality is Lemma \ref{lem:cond-Y-func-entropy} as the bitwise XOR is a $Y$-conditionally bijective function with $\hat{Z}$ acting as the $Y$ register, the third equality is because $\cF$ does not alter the registers being evaluated in the min-entropy, the fourth inequality is the data-processing of the classical channel measuring the $Z$ register, and the final equality is invariance under an isometry acting on the conditioning systems. We can then replace the error correction transcript via $\log(C) = r + t$ as $r$ bits are announced in error correction and $t$ bits are announced in the verification hash. Finally, one applies Proposition \ref{prop:LHL} to complete the proof.
\end{proof}

\subsubsection{Asymptotic Rate of MDI QKG} 
With the one-shot rate established, we establish the asymptotic rate. This is the formal version of Theorem \ref{thm:main-text-MDI-QKG} in the main text.
\begin{theorem}\label{thm:MDI-Grow-Rate}
   There exists a sequence of MDI quantum key growing schemes following Protocol \ref{prot:app-MDI-Q-Key-Scheme} where for each $n \in \mbb{N}$ the input is $\rho^{0} = \rho_{AB}^{\otimes n}$, it is $\ve_{s,n}$-sound, and $\ve_{c,n}$-complete such that
    \begin{enumerate}
       \item The probability of aborting on the honest implementation vanishes, $ \lim_{n \to \infty} \ve_{c,n} = 0$
       \item The insecurity of the distilled key vanishes, $\lim_{n \to \infty} \ve_{s,n} = 0$, and
       \item the amount of key extracted per round (the rate) is
       \begin{align*}
        R 
        &\geq H(X \vert ABE)_{(\cE \otimes \id_{E})(\psi)} - f_{\text{ec}}H(W\vert Y)_{q} \ , 
        \end{align*}
        where $f_{\text{ec}} \geq 1$ and $q_{WY}$ is the joint distribution induced by the protocol when the receiver is honest (i.e. applies an expected measurement $\cM_{AB \to Z}$ each round).
    \end{enumerate}
\end{theorem}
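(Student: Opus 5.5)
We take the one-shot bound of Lemma~\ref{lem:one-shot-bound} as the starting point and derive the asymptotic rate from it with standard tools. Fix $\ve_{\text{sec}} = \ve_{\text{sec},n}$ and $\ve = \ve_n$, both decaying polynomially in $n$ (for instance $\ve_{\text{sec},n} = 1/n$ and $\ve_n = 1/(2n)$). Take $t = t_n = \lceil \log n \rceil$, so that the soundness parameter $2^{-t_n} + \sqrt{\ve_{\text{sec},n}}$ goes to zero. Lemma~\ref{lem:one-shot-bound} then allows a key length
\begin{align*}
\ell_n = H^{\ve_n}_{\min}(X^{2n}_{1}\vert A^{n}_{1}B^{n}_{1}E_{0})_{\rho^{1}} - t_n - r_n - 2\log\tfrac{1}{4(\ve_{\text{sec},n}-\ve_n)} \ .
\end{align*}
Only the first two terms can be linear in $n$; the others are $O(\log n)$. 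Item 2 of the theorem follows directly.

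\emph{Privacy term.} With input $\rho^{0} = \rho_{AB}^{\otimes n}$, we choose the purification $E_0 = E^{\otimes n}$ as $n$ copies of a purification $\psi_{ABE}$. Because the encoding $\cE^{1} = \cE^{\otimes n}$ acts round by round, $\rho^{1} = ((\cE\otimes\id_E)(\psi))^{\otimes n}$ is an i.i.d.\ state. Any other purification is related to this one by an isometry on $E_0$, and isometries on the conditioning system leave the min-entropy unchanged. The fully quantum asymptotic equipartition property \cite{Tomamichel-2009a} then gives
\begin{align*}
\tfrac{1}{n}H^{\ve_n}_{\min}(X^{2n}\vert A^nB^nE^n) \geq H(X\vert ABE)_{(\cE\otimes\id_E)(\psi)} - O\!\left(\sqrt{\log(1/\ve_n)/n}\right) \ ,
\end{align*}
where $X$ denotes Alice's two bits for one round. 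The correction vanishes provided $\ve_n$ decays only polynomially.

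\emph{Error-correction term and completeness.} In the honest run, the receiver applies $\cM_{AB\to Z}$ in every round, so the pairs $(W_i, Y_i)$ are i.i.d.\ with distribution $q_{WY}$. Here $Y$ is Bob's two-bit string. In the ideal case $W = Y$, because XOR with $x$ turns $\hat z = x\oplus y$ into $y$. We then use a one-way Slepian--Wolf code \cite{slepian1973noiseless}, in which Bob decodes $W$ from $Y$, with syndrome length $r_n = n f_{\text{ec}} H(W\vert Y)_q + o(n)$. Its decoding failure probability $\delta_n$ tends to zero. The verification hash passes whenever decoding succeeds, so the honest abort probability is at most $\delta_n$, which gives Item 1. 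Item 3 is then $R = \lim_n \ell_n/(2n)\cdot 2$: dividing by the number of rounds $n$, the two bits per round are already absorbed into the single-round entropies, and we obtain $R \geq H(X\vert ABE) - f_{\text{ec}}H(W\vert Y)_q$.

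\emph{Main obstacle.} The delicate step is applying the AEP. The one-shot bound must be evaluated on the \emph{pre-measurement} i.i.d.\ state $\rho^1$, not on anything the adversary produces. Lemma~\ref{lem:one-shot-bound} guarantees this, since it has already removed the adversarial isometry $\cV$ and the dephasing through data processing. The remaining care is to check that the AEP applies to the classical-quantum state over $X^{2n}$ grouped as $n$ two-bit blocks, and that the choices of $\ve_n$ and $\ve_{\text{sec},n}$ meet the constraint $\ve < \min\{\sqrt{\Tr\rho},\ve_{\text{sec}}\}$ used in Lemma~\ref{lem:one-shot-bound}.
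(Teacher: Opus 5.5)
Your proposal is correct and follows the paper's proof essentially step for step. You reduce to the minimal i.i.d.\ purification via isometric invariance on the conditioning system. You then apply Lemma~\ref{lem:one-shot-bound} with the fully quantum AEP for the privacy term, take vanishing $\ve_n,\ve_{\text{sec},n}$ and a sublinear $t_n$ for soundness, and use Slepian--Wolf codes to handle both $r_n$ and completeness. Your parameter choices ($\ve_n=1/(2n)$, $\ve_{\text{sec},n}=1/n$, $t_n=\lceil\log n\rceil$) differ from the paper's ($1/\sqrt{n}$, $2/\sqrt{n}$, $\sqrt{n}$), but they work equally well.
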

\begin{proof}
    First, we show it suffices to work with our choice of purification of $\rho_{AB}$. By assumption $\rho_{A^{n}B^{n}} = \rho_{AB}^{\otimes n}$. There exists a purification $\psi_{ABE}$ of $\rho_{AB}$ where $\vert E \vert = \vert A \vert \vert B \vert$. As purifications are unitarily equivalent and we can embed them in larger spaces, for any choice of purification $\wt{\psi}_{A^{n}B^{n}\wt{E}}$, there is an isometry $V_{E^{n} \to E}$ such that $V\psi^{\otimes n}V^{\dagger} = \wt{\psi}$. Thus, letting $\rho^{1}$ denote \eqref{eq:rho-1-prot-process} when the input is $\psi^{\otimes n}$ and $\wt{\rho}^{1}$ when the input is $\wt{\psi}$, we have
    \begin{align}
        H^{\ve}_{\min}(X^{2n}_{1} \vert A^{n}_{1}B^{n}_{1}E^{n}_{1})_{\rho^{1}}
        = H^{\ve}_{\min}(X^{2n}_{1} \vert A^{n}B^{n}\tilde{E})_{V\rho^{1}V^{\dagger}} 
        = H^{\ve}_{\min}(X^{2n}_{1} \vert A^{n}B^{n}\tilde{E})_{\wt{\rho}^{1}} \ ,
    \end{align} 
    where we used that $\cE^{1}$ and $V \cdot V^{\dagger}$ are CPTP maps that act on different spaces and that the smooth min-entropy is invariant under isometries that just act on the conditioning space. Thus, without loss of generality, we can consider $H^{\ve}_{\min}(X^{2n}_{1} \vert A^{n}_{1}B^{n}_{1}E^{n}_{1})_{\rho^{1}}$ for our choice of $\psi_{ABE}$ that purifies $\rho_{AB}$.

    Next, allow parameters $\ve_{n},\ve_{\text{sec},n}$ to now depend on $n \in \mbb{N}$ as this will be used subsequently. Also recall from the proof of Lemma \ref{lem:one-shot-bound} that $\cE^{1} = \cE^{\otimes n}$, so $\rho^{1} = [(\cE \otimes \id_{E})(\psi)]^{\otimes n}$. Thus,
    \begin{align}   
        H^{\ve_{n}}_{\min}(X^{2n}_{1} \vert A^{n}_{1}B^{n}_{1}E^{n}_{1})_{\rho^{1}}
        &= H^{\ve_{n}}_{\min}(X^{2n}_{1} \vert A^{n}_{1}B^{n}_{1}E^{n}_{1})_{[(\cE \otimes \id_{E})(\psi)]^{\otimes n}} \\ 
        &\geq nH(X^{2}_{1} \vert ABE)_{(\cE \otimes \id_{E})(\psi)}  - 4\log(C_{1})\sqrt{\log(2/\ve_{n}^{2})}  \ ,
    \end{align}
    where the inequality is the quantum asymptotic equipartition property \cite[Theorem 9]{Tomamichel-2009a} and $C_{1} > 0$. Thus, using Lemma \ref{lem:one-shot-bound}, we obtain an achievable rate
    \begin{align}
        R = \lim_{n \to \infty} \frac{\ell_{n}}{n} 
        \geq  H(X^{2}_{1} \vert ABE)_{(\cE \otimes \id_{E})(\psi)} 
        - \lim_{n \to \infty} \frac{z_{n}}{n} \ , 
    \end{align}
    where
    \begin{align}
        z_{n} &\coloneq 2\log(\frac{1}{4(\ve_{\text{sec},n}-\ve_{n})})  + 4\log(C_{1})\sqrt{\log(2/\ve_{n}^{2})} + t_{n} + r_{n} \ , 
    \end{align}
    and $t_{n}$ and $r_{n}$ are the bits of error correction verification and error correction syndrome respectively when the input is $n$-fold.
    
    We now just need to control $\ve_{n}$, $\ve_{\text{sec},n}$, $t_{n}$, and $r_{n}$. First, let $\ve_{n} = \frac{1}{\sqrt{n}}$, $\ve_{\text{sec},n} = \frac{2}{\sqrt{n}}$ so that $\ve_{\sec,n} - \ve_{n} = \frac{1}{\sqrt{n}}$. Then the first term of $z_{n}$ is $2\log(\frac{4}{\sqrt{n}}) = o(n)$ and thus will vanish in the limit when divided by $n$. Second, let $t_{n} = \sqrt{n}$ so that by \cite[Theorem 2]{Tomamichel-2017a} $\ve_{ec} \leq 2^{-t_{n}} = 2^{-\sqrt{n}} \to 0$, but $\frac{t_{n}}{n} \to 0$. Taking what we have so far, this implies $\lim_{n \to \infty} \ve_{s,n} \leq \lim_{n \to \infty} [\ve_{\text{sec},n} + \ve_{\ve_{ec,n}}] = 0$. This establishes the soundness error goes to zero as claimed. Note this did not depend on our choice of the sequence $(r_{n})$.
    
    Finally, we need to show there exist choices for the error correction that achieve the claimed rate and guarantee completeness error goes to zero. Note that in the honest case, the random variable $Z$ is drawn according to $q_{Z} = \cR_{AB \to Z}(\rho_{AB})$ per round. Thus there is joint distribution $q_{WY}$ between Alice's $W_{2i-1}W_{2i} \oplus Z_{2i-1}Z_{2i}$ and Bob's $Y_{2i-1}Y_{2i}$ per round $i$ and it is i.i.d. in total. By the Slepian-Wolf coding theorem \cite{slepian1973noiseless}, for all $f_{\text{ec}} \geq 1$, there exists a sequence of one-way error correcting codes such that $\lim_{n \to \infty} \frac{r_{n}}{n} = f_{\text{ec}}H(W \vert Y)_{q}$ and the probability of failing to successfully correct the errors vanishes. Thus, we can use any such a sequence of one-way error correcting codes. As Alice and Bob only abort in Protocol \ref{prot:main-text-MDI-Q-Key-Scheme} if error correction fails, this implies $\ve_{c,n} \to 0$. Moreover, putting these points together, we have $\lim_{n \to \infty} \frac{z_{n}}{n} = f_{\text{ec}}H(W \vert Y)_{q}$. Combining terms completes the proof.
\end{proof}

\subsubsection{Calculations for Fig.~\ref{fig:KeyGrowingVsKeyDistillation}}
Finally, we explain the derivations relevant for Fig.~\ref{fig:KeyGrowingVsKeyDistillation}. The code that makes the calculations based on the derivations is available at \href{https://github.com/qit-george/QuantumDenseNetworkCoding}{this GitHub repository}.

We first derive the special case of the rate for Werner states.
\begin{corollary}\label{cor:Werner-State-Key-Growing}
    In Theorem \ref{thm:MDI-Grow-Rate}, if $\rho_{AB} = \rho_{\lambda} = (1-\lambda) \dyad{\Phi^{+}}_{AB} + \lambda \pi_{AB} \ , $, i.e. the input is many copies of a Werner state, then the rate bound is given by
    \begin{align}
        R \geq 2 - \frac{1}{4}\sum_{x_{1},x_{2} \in \{0,1\}} D(\sigma^{x_{1},x_{2}} \Vert \ol{\sigma}) - H(W|Y)_{q} \ ,
    \end{align}
    where $\sigma_{ABE} = \frac{1}{4} \sum_{y_{1},y_{2} \in \{0,1\}} (I_{A} \otimes W_{y_{1},y_{2}} I_{E}) \psi_{\lambda}(I_{A} \otimes W_{y_{1},y_{2}}^{\dagger} \otimes I_{E})$, $\sigma_{ABE}^{x_{1},x_{2}} = U_{x,y}\sigma U_{x}^{\dagger}$, $\ol{\sigma} = \frac{1}{4}\sum_{x_{1},x_{2} \in \{0,1\}} \sigma^{x_{1},x_{2}}$, and $\psi_{\lambda}$ is a purification of $\rho_{\lambda}$. In particular, this means when $\lambda = 0$, the rate is $R = 2$.
\end{corollary}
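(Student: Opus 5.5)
We will specialize the rate bound of Theorem \ref{thm:MDI-Grow-Rate} (taking $f_{\text{ec}}=1$) to $\rho_{AB}=\rho_{\lambda}$ and evaluate the first term $H(X^{2}_{1}\vert ABE)_{(\cE\otimes\id_{E})(\psi_{\lambda})}$ explicitly. The error-correction term $H(W\vert Y)_{q}$ is left as is. The computation proceeds in stages.

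\textbf{Step 1: Bob's register.} In $(\cE\otimes\id_E)(\psi_\lambda)$, Bob's bits $(y_1,y_2)$ are traced out when we condition on $ABE$. Averaging Bob's Weyl operator over uniform $(y_1,y_2)$ yields exactly the state $\sigma_{ABE}$ defined in the statement.

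\textbf{Step 2: the cq-state.} Alice's uniformly random Weyl operator then produces the classical-quantum state
\[
\frac14\sum_{x_1,x_2}\dyad{x_1x_2}\otimes\sigma^{x_1,x_2}_{ABE},
\qquad \sigma^{x_1,x_2}=(W_{x_1,x_2}\otimes I\otimes I)\,\sigma\,(W_{x_1,x_2}\otimes I\otimes I)^{\dagger}.
\]

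\textbf{Step 3: Holevo identity.} For a cq-state with uniform classical register,
\[
H(X\vert Q)=H(X)-I(X:Q)=2-\frac14\sum_x D(\sigma^x\Vert\ol\sigma).
\]
This is the standard identity $\chi=\sum_x p_x D(\sigma^x\Vert\ol\sigma)$, which we can verify directly using $H(X)=2$ and $\ol\sigma=\sum_x p_x\sigma^x$. It gives the stated formula.

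\textbf{Step 4: the case $\lambda=0$.} Here $\psi_0=\Phi^+_{AB}\otimes\dyad{\phi}_E$ for some pure $\dyad{\phi}_E$. Averaging over all Pauli operators on $B$ fully depolarizes Bob's half, so $\sigma=\pi_{AB}\otimes\dyad{\phi}_E$. This state is invariant under Alice's unitaries, so $\sigma^{x}=\ol\sigma$ and every relative entropy vanishes, giving first term $2$.

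For the error-correction term, the honest measurement is the Bell measurement. It outputs $\hat z = x\oplus y$ exactly, since $W_xW_y\propto W_{x\oplus y}$ and the Bell states are orthonormal by Proposition~\ref{prop:TNC-coding}. Hence $w=\hat z\oplus x=y$, so $H(W\vert Y)_q=0$ and $R\ge 2$.

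\textbf{Step 5: upper bound.} For equality $R=2$ we also need $R\le 2$, which holds because at most $2n$ key bits can be extracted from $W^{2n}_1$.

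The main obstacle is bookkeeping: making sure Bob's randomness appears only as a mixture in $\sigma$ and is not kept as a conditioning register. Since the conditioning systems in the theorem are $ABE$, averaging it out is legitimate.
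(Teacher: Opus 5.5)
Your proposal is correct and follows the paper's proof essentially step for step. You trace out Bob's classical register to obtain $\sigma_{ABE}$, write the resulting cq-state over Alice's two bits, expand $H(X^{2}_{1}\vert ABE)$ as $2$ minus the Holevo quantity $\frac14\sum_{x}D(\sigma^{x}\Vert\ol{\sigma})$, and observe that $\sigma=\pi_{AB}\otimes\dyad{\phi}_{E}$ at $\lambda=0$. Your extra arguments that $H(W\vert Y)_{q}=0$ at $\lambda=0$ and that $R\le 2$ supply details the paper leaves implicit; the first tacitly uses the transpose trick $I\otimes W_{y}\ket{\Phi^{+}}=W_{y}^{T}\otimes I\ket{\Phi^{+}}$ together with $W_{y}^{T}=\pm W_{y}$ for qubits.
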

\begin{proof}
 We start with our choice of purification of $\rho_{\lambda}$ to simplify calculations. We concisely express the qubit Bell states as
    \begin{align}\label{eq:bell-states}
        \begin{matrix}
        \ket{B_0} = \ket{\Phi^+} & \ket{B_1} = \ket{\Phi^-}  & \ket{B_{2}} = \ket{\Psi^+} & \ket{B_{3}} = \ket{\Psi^-}
        \end{matrix} \ . 
    \end{align} 
Then a purification of $\rho_{\lambda}$ is given  by
\begin{align}\label{eq:Werner-purification}
    \ket{\psi_{\lambda}}_{ABE} \coloneq \sum_{i} \sqrt{q_{i}} \ket{B_{i}}\ket{i} \ , 
\end{align}
where $q_{0}= \frac{1-3\lambda}{4}$ and $q_{1} = q_{2} = q_{3} = \frac{\lambda}{4}$. It follows after the encoding
\begin{align}
    \rho_{X^{2}_{1}Y^{2}_{1}ABE} 
    = \frac{1}{16} \sum_{x^{2}_{1},y^{2}_{1}} \dyad{x^{2}_{1}} \otimes \dyad{y^{2}_{1}} \otimes W_{x_{1},x_{2}} \otimes W_{y_{1},y_{2}} \dyad{\psi_{\lambda}} W_{x_{1},x_{2}}^{\dagger} \otimes W_{y_{1},y_{2}}^{\dagger} \ .
\end{align}
We now just need to determine this state's form so that we may evaluate the conditional entropy on it. For notational simplicity, let $x = x^{2}_{1}$, $y = y^{2}_{1}$, so
\begin{align}
    \rho_{X^{2}_{1}ABE}
    = \frac{1}{4} \sum_{x} \dyad{x} \otimes W_{x_{1},x_{2}} \left[ \sum_{y} \frac{1}{4} W_{y_{1},y_{2}} \dyad{\psi_{\lambda}} W_{y_{1},y_{2}}^{\dagger} \right]W_{x_{1},x_{2}}^{\dagger}
    \coloneq \frac{1}{4} \sum_{x} \dyad{x} \otimes \sigma^{x}_{ABE}  \ .
\end{align}
We then have
\begin{align*}
    H(X^{2}_{1} \vert ABE)_{\rho}
    &= -D(\rho_{X^{2}_{1}ABE} \Vert I_{X^{2}_{1}} \otimes \rho_{ABE}) \\
    &= -\left\{D\left(\sum_{x} \frac{1}{4}\dyad{x} \otimes \rho_{ABE}^{x} \Vert \pi_{X} \otimes \ol{\sigma}\right) - \log(\vert \cX \vert) \right\} \\
    &=  -\left\{ \sum_{x} \frac{1}{4} D(\sigma^{x} \Vert \ol{\sigma}) + D(\pi \Vert \pi) - \log(|\cX|)   \right\} \\
    &=  -\left\{ \frac{1}{4}\sum_{x} D(\sigma^{x} \Vert \ol{\sigma}) - 2   \right\} \\
    &=  2 - \frac{1}{4}\sum_{x} D(\sigma^{x} \Vert \ol{\sigma})
\end{align*}
where the second equality follows from the normalization property of relative entropy \cite{Tomamichel-Book}, the third follows from \cite[Exercise 11.8.8]{Wilde-Book} and that the distribution on $X$ is uniform, and the fourth uses $\vert \cX \vert = 4$. This obtains the corollary. The claimed rate when $\lambda = 0$ follows from the fact a direct calculation will verify, when $\lambda = 0$, $\sigma_{ABE} = \pi_{AB} \otimes \dyad{0}_{E}$ and $\rho_{X^{2}_{1}ABE} = \pi_{X} \otimes \sigma_{ABE}$.
\end{proof}

To calculate the $H(W \vert Y)_{q}$ term for Fig.~\ref{fig:KeyGrowingVsKeyDistillation}, we assume the receiver implements the ideal Bell measurement, $\{\dyad{B_{i}}\}_{i \in \{0,1,2,3\}}$ in the notation of \eqref{eq:bell-states}, but that Alice and Bob are connected to the receiver by depolarizing channels $\cD_{p}(X) = (1-p)X + \frac{\Tr[X]}{2}I$. Thus,
\begin{align*}
    p_{XYZ}
    &= \sum_{x,y,z \in \{0,1\}^{2}} \bra{B_{z}}W_{x_{1},x_{2}} \otimes W_{y_{1},y_{2}}\rho_{\lambda}W_{x_{1},x_{2}}^{\dagger} \otimes W_{y_{1},y_{2}}^{\dagger}\ket{B_{z}} \dyad{x} \otimes \dyad{y} \otimes \dyad{z} \ . 
\end{align*}
One may then compute $p_{WY}$ from this by determining which $(x,z)$ are such that $w = x \oplus z$. This allows one to then calculate $H(W \vert Y)_{q}$. This completes the explanation of the key growing curves in Fig.~\ref{fig:KeyGrowingVsKeyDistillation}.

\paragraph{Key Distillation Rate} The key distillation rate curve computes the Devetak-Winter formula \cite{devetak2005distillation} on the chosen Werner state, which we note is also the asymptotic rate of a sequence of QKD protocols that only accept the specific Werner state in question. In other words, we compute
\begin{align*}
    H(X \vert E)_{\cM_{A \to X}(\rho_{AE})}- H(X \vert Y)_{(\cM_{A \to X} \otimes \cN_{B \to Y})(\rho_{\lambda})} \ ,
\end{align*}
where $\rho_{AE} = \Tr_{B}[\psi_{\lambda}]$, $\psi_{\lambda}$ is the purification in \eqref{eq:Werner-purification}, and the measurements $\cM_{A \to X}$, $\cN_{B \to Y}$ both are just projective measurements in the computational basis $\{\dyad{0},\dyad{1}\}$. To compute $H(X \vert E)$, one computes $\rho_{XE}$, $\rho_{E}$, and then computes $H(X \vert E)_{\rho} = H(XE) - H(E)$ using the eigenvalues of $\rho_{XE}$ and $\rho_{E}$. For simplicity, we numerically compute $\rho_{XE}$, $p_{XY} \coloneq (\cM_{A \to X} \otimes \cN_{B \to Y})(\rho_{\lambda})$, and the corresponding entropic quantities. The code used may be found in \href{https://github.com/qit-george/QuantumDenseNetworkCoding}{this GitHub repository}. This completes the account of how Fig.~\ref{fig:KeyGrowingVsKeyDistillation} is generated.

\end{document}